\documentclass{lmcs}

\usepackage{amssymb,amsmath,amsfonts}
\usepackage[utf8]{inputenc}
\DeclareUnicodeCharacter{00A0}{ }
\usepackage{paralist}
\usepackage{xspace}
\usepackage{bbm}
%\usepackage{url}
%\usepackage[theorems=false,mnsymbol=false,stmaryrd=true,textwidth=1.25cm,textsize=scriptsize,shadow=false]{generic}

% Metadata Information

\newcommand{\cA}{\mathcal A}
\newcommand{\powerset}[1]{\mathcal{P}(#1)}
\newcommand{\then}{\rightarrow}
\newcommand{\et}{\wedge}
\newcommand{\vel}{\vee}
\newcommand{\sat}{\models}
\newcommand{\nsat}{\not \models}
\newcommand{\bbN}{\mathbbm{N}}
\newcommand{\bbD}{\mathbbm{D}}
\newcommand{\nin}{\not \in}
\newtheorem{theorem}{Theorem}[section]
\newtheorem{proposition}[theorem]{Proposition}
\newtheorem{lemma}[theorem]{Lemma}

\newtheorem{corollary}[theorem]{Corollary}

% These will be typeset in Roman
%\theoremstyle{definition}
\newtheorem{definition}[theorem]{Definition}

\newtheorem{example}{Example}
\newcommand{\canondb}{\mathsf{CanonInst}}
% corrected to include use without argument
\newcommand{\rankcq}[1]{\operatorname{rank}_{\mathrm{CQ}}%
                        \if\relax\detokenize{#1}\relax\else(#1)\fi}
\newcommand{\sizecq}[1]{\operatorname{size}_{\mathrm{CQ}}%
                        \if\relax\detokenize{#1}\relax\else(#1)\fi}

\newcommand{\psbof}{\mathsf{PQItoGNF}}
\newcommand{\nsbof}{\mathsf{NQItoGNF}}
\newcommand{\visinst}{\mathcal{V}}
\newcommand{\fullinst}{\mathcal{F}}

\newcommand{\size}[1]{\lvert #1 \rvert}
\newcommand{\width}[1]{\operatorname{width}( #1 )}
\newcommand{\gnnf}{\text{\textup{GN}-normal form}\xspace}

% comments (unfortunately margins are not allowed)
%\newcommand{\michael}[1]{{\let\par\relax\color{black} \todo[color=magenta!25] {{\bf Michael:} #1}}}
%\newcommand{\pierre}[1]{{\let\par\relax\color{black} \todo[color=green!30] {{\bf Pierre:} #1}}}
%\newcommand{\gabriele}[1]{{\let\par\relax\color{black} \todo[color=yellow!30] {{\bf Gabriele:} #1}}}
%\newcommand{\balder}[1]{{\let\par\relax\color{black} \todo[color=cyan!30] {{\bf Balder:} #1}}}

% paragraphs & other environments
\newcommand{\myparagraph}[1]{\noindent{\bf #1. }}
%\newenvironment{statement}[1][]{\par\smallskip\noindent\begin{em}}{\end{em}}

% others
\newcommand{\kw}[1]{\mathsf{#1}\xspace}
\let\oldlhd\lhd
\let\oldrhd\rhd
\renewcommand{\lhd}{{\mathbin{\oldlhd}}}
\renewcommand{\rhd}{{\mathbin{\oldrhd}}}

\newcommand{\bfS}{\mathbf{S}}
\newcommand{\NSB}{\mathsf{NQI}}
\newcommand{\PSB}{\mathsf{PQI}}
\newcommand{\OWQ}{\mathsf{OWQ}}
\newcommand{\true}{\mathsf{true}}
\newcommand{\false}{\mathsf{false}}

\newcommand{\err}{\mathsf{Err}}

\newcommand{\gnfo}{\text{GNFO}}
\newcommand{\gnf}{\gnfo}
\newcommand{\el}{\mathcal{EL}}
\newcommand{\dllite}{\text{DL-LITE}}
\newcommand{\tgd}{\mathsf{TGD}}
\newcommand{\fgtgd}{\text{FGTGD}}
\newcommand{\cf}{\text{NoConst}}

%SimpLin}}

%\newcommand{\ltgd}{\text{LTGD}}

%Put  all the xquery specific macros in the beginning

%these two are formerly used, needs to be modified now
%\newcommand{\foint}[3]{\phi_{#1}^{(#2;\, #3)}}
%\newcommand{\fointb}[2]{\phi_{#1}^{#2;\, (\alpha_1, \dots, \alpha_m)}}

%logics
\newcommand{\appoint}{\mathsf{Appointment}}
\newcommand{\patient}{\mathsf{Patient}}

\newcommand{\visible}{\mathsf{Visible}}

\newcommand{\tree}{\ensuremath{\mathcal T}}

\newcommand{\calC}{\Sigma}
%\ensuremath{\mathcal C}}

\newcommand{\domain}{\mathsf{dom}}
\newcommand{\adomain}{\mathsf{adom}}

%\newcommand{\inpdoc}{{\$\mbox{inp}}}

%for selection automata

%relational ops

%xpath languages

%atomic types

%function composition

%query equivalences

%logics and vocabularies

\def\ac0{\text{\sc AC$^0$}\xspace}

\def\pspace{\text{\sc PSpace}\xspace}

\def\ptime{\text{\sc PTime}\xspace}
\def\exptime{\text{\sc ExpTime}\xspace}
\def\twoexp{\text{\sc 2ExpTime}\xspace}
\def\expspace{\text{\sc ExpSpace}\xspace}

\def\conp{\text{\sc co}\mbox{-}\text{\sc NP}\xspace}
\def\np{\text{\sc NP}\xspace}

% NEED TO CLEAN OLD AND UNUSED MACROS....

\newcommand{\chase}{\mathrm{chase}_{\mathsf{vis}}}
\newcommand{\Chase}{\mathrm{Chases}_{\mathsf{vis}}}

% \leftward{\phantom{\Large $|$}}}
\newcommand{\myeat}[1]{}

\newcommand{\arity}{\mathrm{ar}}
\newcommand{\confchall}{\mathsf{ConfChallenge}}
\newcommand{\vertchallenge}{\mathsf{VertChallenge}}
\newcommand{\horchallenge}{\mathsf{HorChallenge}}
\newcommand{\hfuncchallenge}{\mathsf{HorFuncChallenge}}

\newcommand{\errhfun}{\mathsf{ErrHorFun}}

\newcommand{\hemptyerror}{\mathsf{HorEmptyError}}
\newcommand{\hemptyhiddenerror}{\mathsf{HorEmptyHiddenError}}

\newcommand{\hlabelerror}{\mathsf{HorLabelError}}
\newcommand{\hlabelhiddenerror}{\mathsf{HorLabelHiddenError}}

\newcommand{\herrorsucc}{\mathsf{HorSuccError}}
\newcommand{\herrorhiddensucc}{\mathsf{HorSuccHiddenError}}

\newcommand{\twolabel}{\mathsf{TwoLabelsError}}
\newcommand{\cerror}{\mathsf{ConstraintError}}
\newcommand{\good}{\mathsf{Good}}
\newcommand{\goal}{\mathsf{Goal}}

% line breaks

\sloppy

\begin{document}

\title{Inference from  Visible Information and Background Knowledge}
\date{}

\author[M. Benedikt]{Michael Benedikt}
\address{Department of Computer Science, Oxford University, Parks Rd, Oxford OX1 3QD, UK}
\author[P. Bourhis]{Pierre Bourhis}
\address{CNRS/CRIStAL, Parc scientifique de la Haute Borne 40, avenue Halley. Bat. B, Park Plaza 59650 Villeneuve d'Ascq}
\author[B. ten Cate]{Balder ten Cate}
\address{Google Inc., Mountainview CA}
\author[G. Puppis]{Gabriele Puppis}
\address{CNRS / LaBRI, 351 Cours de la Libération, Talence 33405, France}
\author[M. Vanden Boom]{Michael Vanden Boom}
\address{Department of Computer Science, Oxford University, Pars Rd, Oxford OX1 3QD, UK}

\begin{abstract}
We provide a wide-ranging study of the scenario where a subset of
the relations in a relational vocabulary are visible to a user --- that is, their complete
contents are known --- while the remaining relations are invisible.
We also have a background theory ---  invariants given by logical
sentences --- which may  relate the visible relations to invisible ones, and 
also may constrain both the visible and invisible relations in isolation. 
We want to determine whether some other information, 
given as a positive existential formula,  can be inferred using only the visible information and
the background theory. This formula whose inference we are concered with is denoted as the \emph{query}.
We consider whether positive information about
the query can be inferred, and also whether negative information  -- the sentence does not hold --
can be inferred. We further consider both the instance-level version of the problem, where both the query
and the visible instance  
are given, and the schema-level version, where we want to know whether truth or falsity of the query can be inferred
in \emph{some} instance of the schema.

\end{abstract}

\maketitle

\section{Introduction} \label{sec:intro}
This paper concerns  a setting where there is a  collection
of relations,
% representing information of interest to a set of users, 
but a given user or class of users has access to only a subset of these relations.
This could arise, for example in a database setting, where
 a data owner  restricts access to a subset of the stored relations for privacy reasons.
Another example comes from information integration,
where the integrated schema exposed to users  contains both stored relations
and ``virtual relations'',  whose  contents are not directly accessible, but which
 have a meaning defined
by logical relationships with stored relations.
Both of these scenarios can be subsumed by considering a schema consisting of a set of relations that must satisfy a 
\emph{background theory}
(invariants specified by sentences in some logic) with
only a subset of the relations visible.
A basic computational problem is to determine what questions can be answered by means of reasoning
with the background theory and access to the visible relations.
Can someone use the content of the visible relations along with
background knowledge to answer a given question about the invisible relations?

We study this scenario, where a set of 
semantically-related relations are hidden
while for another set the complete contents are visible.
We will consider background theories specified in a variety 
of logical languages that are 
rich enough to capture complex relationships between relations, including 
relationships that arise in information integration and restrictions
on a single source that have been studied in the database research community (``integrity constraints'').
The basic analysis problem will be as follows.
We are given
a relational vocabulary
partitioned into visible and invisible
relations,
 a logical sentence $Q$ (the \emph{query}, representing information whose inference we want to check),
and a background theory $\calC$, again consisting of logical sentences.
Our goal is to determine whether we can infer using 
the visible relations and the background theory some properties about the evaluation of $Q$.
We will be considering  variations of the problem in two dimensions:
\begin{itemize}
\item [{\bf Instance-level vs. Schema-level}] Can $Q$ be inferred  from $\Sigma$ and the extensions of the visible relations in a particular instance, where
the extension of visible relations are given as input to the problem?  Can $Q$ be inferred on \emph{some} instance?
\item [{\bf Positive vs. Negative}] Can it be inferred that $Q$ is true? Can it be inferred that $Q$ is false?
\end{itemize}

\begin{example} \label{ex:one}
Just in order to give intuition for the problems we study in the paper,
we give an example from logical analysis of information disclosure, in the spirit of prior works such as
\cite{securityawarevldb14}.

Consider a medical datasource with relation $\appoint(p, a, \ldots)$ containing  patient names
$p$, appointment ids $a$, and  other information about the appointment, such as the name $d$ of the doctor. 
A dataowner makes available one projection of $\appoint$ by creating a relation $\patient(p)$, defined by 
the following
logical sentences $\Sigma$:
\begin{align*}
  \forall ~p ~ \patient(p) ~\rightarrow~ \exists~ a ~ d ~ \appoint(p, a, d) 
  \\
  \forall ~p ~ a ~ d ~ \appoint(p, a, d) ~\rightarrow~ \patient(p) \ . \ 
\end{align*}
The query $Q= \exists ~a ~ \appoint(\mbox{``Smith''}, a, \mbox{``Jones''})$ 
asking whether patient Smith made an appointment with Dr. Jones can not be inferred under this 
schema in one sense:
an external user with access to $\patient$ will never be sure that the query is true, in
any instance. We say that there can be no Positive Query Implication on any instance
for this query, schema, and background theory. But suppose we consider whether a user can infer
the query to be false?
On many instances, such an inference is not possible.
But on instances where the visible relation $\patient$ does not contain the patient name Smith,  
an external user will know that the query is false.
We say that there is a Negative Query Implication on the visible instances
where $\patient$ does not contain Smith.
\end{example}

\myparagraph{Our results}
As mentioned above, we will consider the instance-based problems: given a query and instance, 
can a user determine that the query is true (Positive Query Implication) 
or that the query is false (Negative Query Implication)? 
We also look at the corresponding schema-level 
problem: given a query and a schema, is there some instance where 
a query implication of one of the above types occurs?

We start by observing that the instance-level problems, both positive and negative, are decidable for a very
rich logical language for background
theories, using the same technique: a reduction to the guarded negation fragment of first-order logic (see below).
However, when we analyze the complexity of the decision problem as the size of the 
instance increases,
we see surprisingly different behavior between the positive and negative case.
For very simple background theories %, such as inclusion dependencies, 
the negative query implication
problems are %very 
well-behaved as the 
instance changes, namely, in polynomial time and definable within a well-behaved logic.
For the same class of background theories, the corresponding positive query implication questions are 
hard even when the schema and query are fixed.  Our most significant hardness result is
that even for simple background theories
the positive query inference problem is $\exptime$-hard in \emph{data complexity}:
that is, when everything except the visible instance is fixed. This is a big jump
in complexity for the complexity for special cases of the  problem studied in the description
logic \cite{lutzhybrid2} and database community \cite{abiteboulduschka} in the past.

When we turn to the schema-level problems, even decidability is not obvious. 
We prove a set of ``critical instance''
results, showing that whenever there is an instance where information about the query can be implied,
the ``obvious instance'' witnesses this.
Thus, schema-level problems reduce to special cases of the instance-level problems.  
Although we  use this technique to obtain decidability and complexity
results both for positive and for negative query implication, the classes of background knowledge to which
they apply are different. 
We give undecidability results that    show that  when the  classes are even slightly enlarged,
decidability of the existence of an 
instance %schema 
with a query implication is lost.

\smallskip
\myparagraph{Our techniques}
%In the process, 
We make use of a number of tools for 
reasoning on  %use in querying of 
mixtures of complete and incomplete information.
\begin{itemize}
\item {\bf Connection to Guarded negation.} 
      Our first technique involves showing that a large class of instance-level problems
      can be solved by translating them into satisfiability problems for a rich fragment 
      of first-order logic, the guarded negation fragment ($\gnf$).
In fact, we show that there is a natural
connection between these inference problems and  $\gnf$, in that  the ``visibility restriction''
can be expressed in $\gnf$.
      This allows us to exploit powerful prior decidability results for $\gnf$ ``off-the-shelf''. 
      However, to get tight complexity bounds, 
      we  need a new analysis of the complexity of satisfiability for $\gnf$. This analysis
is of interest outside of these inference problems, in that we give a self-contained
reduction from $\gnf$ satisfiability to tree automata,  a reduction which allows us to give a finer-grained
analysis of the sources
of complexity in $\gnf$ satisfiability.
\item {\bf Decidability via canonical counterexamples.}   
      The schema-level analysis asks if there is some instance on which 
      information about the query can be derived. As mentioned above, we show that whenever there is 
      some instance, this can be taken to be the ``simplest possible instance''. While this 
      idea has been used before to simplify  analysis of undecidability 
      (e.g.~\cite{criticalinst1}), and for decidability of Datalog satisfiability  \cite{schmueliundecid},
we provide a significant extension of the technique, and provide new applications of it
       for decidability. 
\item {\bf Tractability via greatest fixed-point.} 
      We show that some of our 
      instance-level implication problems  
      can be reduced to evaluating a certain query of \emph{greatest fixedpoint Datalog} (GFP-Datalog) 
      on the given visible instance. Since
      GFP-Datalog queries can be evaluated in  polynomial time, this shows tractability 
      in the instance size. 
      The reduction to GFP-Datalog requires a new analysis of  when these inference 
      problems are ``active-domain controllable'' (it suffices to see that the query 
      value is invariant over all  hidden instances that lie within the active domain 
      of the visible instance).
\item {\bf Relationships between problems.} 
      In the paper we explain how the 4 inference problems we consider
(combinations of positive/negative and instance-level/schema-level)
differ from previously-studied problems, such as the ``open
world query answering problem''.
However, we also provide reductions  between open world querying
and some of our schema-level problems.
      In addition to clarifying the relationship of the problems,
we can use these reductions to derive 
      complexity bounds.
%\item {\bf Coding techniques for lower bounds.}
%We introduce methods for coding computation in query implication problems. Among the most
%both undecidability results and complexity lower bounds.
\end{itemize}
%In addition to the techniques above, our lower bound results 
%involve a number of techniques for coding  computation in query inference problems.

%\ifdefined\longversion
%\else
%Full proofs are available in the technical report \url{www.cs.ox.ac.uk/michael.benedikt/papers/implication.pdf}.
%\fi

\myparagraph{Organization} After a review of related work in Section \ref{sec:related},
we formally define the problem in Section \ref{sec:def}.
Section \ref{sec:positive} presents our results on whether we can infer
the truth of a CQ or UCQ: the problems $\PSB$ and $\exists \PSB$.
Section \ref{sec:negative} turns to the problems $\NSB$ and $\exists \NSB$, concerning
inferring the negation of a CQ or UCQ. Section \ref{sec:extspec} deals with some
small extensions of the framework, and some special cases of the problems of particular interest.
We close in Section \ref{sec:conc} with conclusions.

\myparagraph{Acknowledgements}
This is a long version of the extended abstract that appeared in \cite{lics16}.
We are quite grateful to the referees of LICS for their helpful comments.

Benedikt's work was sponsored by the Engineering and Physical Sciences
Research Council of the United Kingdom, grants EP/M005852/1
and EP/L012138/1. 
Bourhis was supported by CPER Nord-Pas de Calais/FEDER DATA Advanced data science 
and technologies 2015-2020 and ANR  Aggreg project  ANR-14-CE25-0017.

\section{Related Work} \label{sec:related}
Two different communities have studied 
the  problem  of determining which information can be inferred from
complete access to data in a subset of the relations, %in a relational schema   
using background knowledge in the form of logical sentences relating the subset to the full vocabulary.

%\michael{There are a large number of DB papers mentioning information leakage -- e.g. \cite{authorizationviews}}
%\michael{Also should be some relationship between NSB and consistent query answering}

In the database community, the focus has been  on 
%``information leakage'' from 
views. The schema is divided into the ``base tables'' and ``view tables'', with the
latter being defined by queries (typically conjunctive queries) in terms of the former.
Given a query over the schema, the basic computational problem is determining which answers
can be inferred using only the values of the views. 
  Abiteboul and Duschka \cite{abiteboulduschka} isolate the complexity of this problem in the
case where views are defined by conjunctive queries; in their terminology, it is ``querying under the Closed World Assumption'',
emphasizing the fact that the possible worlds revealed by the views are those where the
view tables have exactly their visible content.
In our terminology, this corresponds
exactly to the  ``Positive Query Implication'' ($\PSB$) problem
in the case where the  background theory  consists entirely of  conjunctive query view definitions.
Chirkova and Yu \cite{chirkovayu} extend to the case where conjunctive query views are supplemented
by weakly acyclic dependencies. 
Another subcase of $\PSB$ that has received considerable attention is the case where the 
background theory consists  \emph{only} of ``completeness assertions''
between the invisible and visible portions of the schema. 
A series of papers by Fan and Geerts 
\cite{floriswenfeiconf,floriswenfeijournal1}
isolate the complexity for several variations of the problem, with particular attention to the case where the completeness assertions are via inclusion dependencies from
 the invisible to the visible part.

The $\PSB$ problem we study in the first part of this work is also related to research on
instance-based determinacy (see in particular the results of Howe et al. in \cite{pricing}) while
the  ``Negative Query Implication'' ($\NSB$) problem in the second half of the paper is examined in the view context 
by Mendelzon and Zhang \cite{authorizationviews}, under the name of ``conditional emptiness''.  
As in the other work mentioned above, the emphasis  has been on view definitions rather than
more general background knowledge which may restrict both the visible and invisible instance.
In contrast, in our work
we deal with  logical languages for the background theory that can restrict the visible and invisible data in ways
incomparable to view definitions  (see also the comparison in Section \ref{sec:extspec}).

%The database literature also includes numerous works on policies for \emph{enforcing} privacy
%(e.g. \cite{securityawarevldb14}). 

%maybe ADD REFERENCE TO MENDELZON on AUTHORIZATION VIEWS

In the description logic community, the emphasis has not been on views, but on querying incomplete
information in the presence of a logical theory.
Our positive query implication problems 
relate to work
in the description logic community on \emph{hybrid closed and open world query answering} or \emph{DBoxes}, in which
the schema is divided into closed-world and open-world relations.
Given a Boolean CQ, we want to find out if it 
holds
in all  instances that can add facts to the open-world relations but do not change the closed-world
relations. In the non-Boolean case, the generalization is to consider which tuples from the initial instance
are in the query answer on all such instances.
Thus closed-world and open-world relations match our notion of visible and invisible, and the hybrid closed and open
world query answering  problem matches our notion of positive query implication, except that we restrict to the
case where the open-world/visible relations of the instance are empty. It is easy to see that
this restriction is actually without loss of generality:
one can reduce the general case to the case we study with a simple linear time reduction, making a closed-world copy $R'$ of
 each open-world relation $R$, and adding an inclusion dependency from $R'$ to $R$. 
As with the database community, the main distinction between our study of the  Positive Query Implication problem  and the prior work in the DL community concerns
the classes of background theories considered.
Lutz et al. \cite{lutzetalmixing,carstenfrank2,lutzhybrid2}  study the complexity of this problem for background knowledge
for several description logics.
For example, for the description logics  $\el$ and $\dllite$ they provide a dichotomy between $\conp$-hard and first-order 
rewritable theories. 
They also show that in all the tractable cases, the problem coincides with the classical 
open-world query answering problem.
Franconi et al. \cite{FranconiIS11} show $\conp$-completeness for a disjunction-free description logic.
Our results on the data complexity of $\PSB$ consider  the same problem, but for background theories
that are more expressive and, in particular, can handle relations of arbitrary arity, 
rather than arity at most $2$ as in \cite{lutzhybrid2,carstenfrank2,FranconiIS11}.

In summary, both the database and DL communities considered the 
$\PSB$ questions addressed in this paper,
but for background theories that are different from those we consider.
The Negative Query Implication problems are not well-studied in the prior literature,
and we know of no work at all dealing with the  schema-level questions (asking for the existence  
of an instance with a query implication) in prior work.  
However, in this paper we show (see Subsection \ref{subsec:negexists})
that there is a close relation between these schema-level questions 
and the works of Lutz et al.~that concern conservativity and 
modularity of ontologies \cite{conservextdl,conserv2}. 

Note that our schema-level analysis considers the existence of \emph{some} instance 
where the query result can be inferred. The converse problem
is to determine  whether the query result can be inferred
on \emph{all} instances. This is exactly the problem of
\emph{determinacy} \cite{NSV}, which is closely related
to the notion of  \emph{implicit definability} in classical
logic \cite{beth}. Determinacy has been extensively
studied for both views \cite{NSV,redspider} and for background
theories and visible relations \cite{ustods,thebook}.

Another contrast is to the work of Miklau and 
Suciu \cite{miklausuciu} considers whether such an inference is valid probabilistically,
looking asymptotically at the uniform distribution over models of increasing size. 

Recently \cite{aaai17} analyzed the complexity
of query implication in the presence of information disclosure methods based on query answering interfaces --- where
an external user can query under the certain answer semantics ---  rather than the model of disclosure based
on exporting a subset of the data, as in our setting.  The analysis in \cite{aaai17} builds on the techniques presented in
this paper.
%We do not deal with probabilistic modelling in this work.

%\balder{``ideas'' sounds vague -- do we mean view determinacy?}
%michael: eliminated mention of db lit

\section{Definitions} \label{sec:def}

We consider partitioned schemas (or simply, schemas)
$\bfS=\bfS_h \cup \bfS_v$, where the partition elements $\bfS_h$ and $\bfS_v$ are finite
sets of relation names (or simply, relations), each with an associated arity.
%michael: tried to make clear that ``schema'' includes the partition into visible and invisible
These are the \emph{hidden} and \emph{visible} relations, respectively.
An \emph{instance} of a schema maps each relation to a set of tuples of the associated arity.
Instances will be used as inputs to the computational problems that are the focus of this work 
-- in this case the instances must be finite. Our computational problems also quantify over 
instances, and  they are also  well-defined when the quantification is over all (finite or
infinite) instances. 
For simplicity, by default \emph{instances are always finite}. However, as we will show, taking any of the quantification over all instances will never impact our results, and this will
allow us to make use of infinite instances freely in our proofs.
The \emph{active domain} of an instance is the set of values occurring 
within the interpretation of some relation in the instance.

As a suggestive notation, we write $\visinst$ (Visible) for instances over $\bfS_v$ and $\fullinst$ 
(Full) for instances over $\bfS$.
Given an instance $\fullinst$ for $\bfS$, its restriction to the $\bfS_v$ relations will be referred to
as its \emph{visible part}, denoted $\visible(\fullinst)$.
%We will familiarity with basic notions of first-order logic with the active-domain semantics \cite{AHV}.
%In this work,  we will only be interested in logical formulas which are \emph{domain independent} -- those
%whose truth value on a relational structure (i.e. an instance and domain extending the active domain)
%is independent of the domain. With this restriction, a formula over a schema
%$\bfS$ has a well-defined set of satisfying tuples in any instance of $\bfS$. \looseness=-1

We will look at background theories defined in a number of logics.
One class of logical sentences that we will focus on
are  Tuple-generating Dependencies (TGDs)m
which are first-order logic sentences of the form
$$
  \forall \bar x ~ \phi(\bar x) ~\rightarrow~ \exists \bar y ~ \rho(\bar x, \bar y)
$$
where $\phi$ and $\rho$ are conjunctions of atoms, which may contain variables and/or constants,
and where all the universally quantified variables $\bar x$ appear in $\phi(\bar x)$.
For all the problems considered in this work, one can take w.l.o.g. the right-hand side $\rho$ to consist of a single atom, and we will assume this henceforth. 
We will often omit the universal quantifiers, writing just $\phi(\bar x) ~\rightarrow~ \exists \bar y ~ \rho(\bar x, \bar y)$.
The main feature of TGDs we will exploit is the lack of disjunction, which will allow for cleaner
characterizations of our query inference problems.

For TGDs we will be able to obtain clean semantic characterizations
for our inference problems.  But most  inference problems involving TGDs are undecidable \cite{AHV}, including
all those we study here.
Thus for our decidability and complexity results we will look at  classes of TGDs that are computationally better behaved:
\begin{compactitem}
\item \emph{Linear  TGDs}: those where $\phi$ consists of a single atom.
\item \emph{Inclusion Dependencies} (IDs), linear TGDs where each of $\phi$ 
      and $\rho$ have no constants and no repeated variables. These correspond to traditional referential integrity constraints in databases.
\item Many of our results on inclusion dependencies will hold for two more general classes.
\emph{Frontier-guarded TGDs} ($\fgtgd$s) \cite{frontier} are TGDs where one of the conjuncts of $\phi$ 
    is an atom that includes every universally quantified variable $x_i$ occurring in $\rho$.
\emph{Connected TGDs} require only that the \emph{co-occurrence graph}  of $\phi$ is connected. 
      The nodes of this graph  are the variables
      $\bar x$, and  variables are connected by an edge if they co-occur in an atom of $\phi$.
\end{compactitem}
Note that  every ID  is a linear TGD, and every linear TGD is frontier-guarded.
We will also consider two logical languages that are generalizations of $\fgtgd$s.
\begin{compactitem}
\item We allow disjunction, by considering \emph{Disjunctive Frontier-guarded TGDs}, which are of the form
      $$
        \forall \bar x ~ \phi(\bar x) ~\rightarrow~ 
        \exists \bar y ~ \bigvee\nolimits_{\!\!i} ~ \rho_i(\bar x, \bar y)
      $$
%      where each $\rho_i$ is a conjunction of atoms and there is one atom  conjoined in $\phi$ that includes every variable $x_i$ included in some $\rho_i$.
      where, for each $i$, $\rho_i$ is a conjunction of atoms and there is an atom in $\phi$ 
      that includes all the variables $x_j$ occurring in $\rho_i$.

\item A key role will be played by an even richer logic, one containing Disjunctive $\fgtgd$s ,
      the \emph{Guarded Negation Fragment}.
      $\gnfo$ is built up inductively according to the grammar:
      \begin{align*}
        \phi ~::=~ & R(\bar t) ~~|~~ t_1=t_2 ~~|~~ \exists x ~ \phi ~~|~~
                     \phi \vee \phi ~~|~~ \phi \wedge \phi ~~|~~  \\
                   & R(\bar t, \bar y) \wedge \neg \phi(\bar y)
      \end{align*}
      where $R$ is either a relation symbol or the equality relation $x=y$,
      and the $t_i$ represent either variables or  constants.
      Notice that any use of negation must occur conjoined with an atomic 
      relation that contains all the free variables of the negated formula 
      -- such an atomic relation is a \emph{guard} of the formula.
In database terms, $\gnfo$ is  equivalent to relational algebra where \emph{the difference operator can only be used to subtract
query results from a relation}. The VLDB paper \cite{bbo} gives both  Relational algebra and SQL-based syntax for $\gnfo$, and
argues that it covers useful queries and database integrity constraints in practice.
\end{compactitem}
%  The purpose of allowing equalities as guards is to ensure that every formula
%  with at most one free variable can be considered guarded.
For simplicity (so that all of our sentences are well-defined on instances) %, we  maintain our focus on instances,  
we will always assume that  our $\gnfo$ formulas are domain-independent;
to enforce this we can use the relational algebra  syntax  for capturing these queries, mentioned above.

For many of the results in the paper, the reader  only needs to know a few facts about   $\gnfo$.
The first is that it is quite expressive, so in proving things about $\gnfo$ 
sentences we immediately
get the results for many classes of theories that we have mentioned above.
$\gnfo$ contains every positive existential formula, is closed under
Boolean combinations of sentences, and it
subsumes disjunctive frontier-guarded TGDs up to equivalence.
That is, by simply writing out a disjunctive frontier-guarded TGD 
using $\exists, \neg, \wedge$, one sees that these are expressible in $\gnfo$.

Secondly, we will use that $\gnfo$ is ``tame'', encapsulated in the following result
from \cite{gnficalp}:

\begin{theorem}[\cite{gnficalp}]\label{thm:gnfsat}
Satisfiability for $\gnfo$ sentences can be tested  effectively, and is %in fact
$\twoexp$-complete. Furthermore, every satisfiable sentence has a finite satisfying model.
\end{theorem}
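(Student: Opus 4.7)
The plan is to follow the tree-automata route that is standard for modern guarded logics. First I would put every $\gnfo$ sentence into a normal form in which each subformula under a negation is guarded by a single relational atom, and in which every quantifier is either an existential guarded atom or a single unary projection; call this the \gnnf. This step is routine: it costs at most a polynomial blow-up using the usual introduction of auxiliary relation symbols for subformulas (the transformation can be read off directly from the grammar for $\gnfo$ and only the outermost layer of each negated block needs to be inspected).

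The heart of the argument is a \emph{tree-like model property}. The point is that, although $\gnfo$ is not a guarded logic in the strict sense of the guarded fragment, the only way a subformula can talk about a tuple of elements is via a guarding atom. I would therefore define a notion of ``guarded interface'' decomposition of an instance (a tree whose nodes are labelled by guarded sets of elements, with the constraint that any atom is contained in some node), and prove, by a back-and-forth unravelling argument, that every $\gnfo$-sentence satisfied in some instance is also satisfied in one that admits such a decomposition of width bounded by the maximum arity in the signature. This step is the main obstacle, because the $\gnfo$-equivalence has to be shown by induction on the formula, and the clauses involving negation under a guard are exactly what force the width bound.

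Once tree-like models are available, I would encode them as labelled trees over a finite alphabet $\Sigma$ whose letters describe the guarded set sitting at each node together with the atomic type over it. Using the \gnnf, I would construct a two-way alternating parity tree automaton $\mathcal{A}_\phi$ that accepts exactly the codes of tree-like models of the input sentence $\phi$. Each alternating state tracks a subformula of $\phi$ (possibly with a handful of parameters naming elements in the current guarded set), and disjunction, conjunction, existential quantification and guarded negation translate directly into the corresponding automaton constructs. The state space of $\mathcal{A}_\phi$ is polynomial in $|\phi|$ and the alphabet is exponential, so non-emptiness of $\mathcal{A}_\phi$ can be decided in $\twoexp$ by the standard emptiness algorithm for two-way alternating parity automata on trees.

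This immediately yields decidability in $\twoexp$ and also the finite model property: alternating parity automata on trees are non-empty iff they accept some regular tree, and from a regular tree satisfying the guarded decomposition constraints one extracts a finite model by folding the tree along its periodic part. For the matching lower bound I would reduce from satisfiability in the guarded fragment of first-order logic, which is already $\twoexp$-hard \cite{gnficalp} and which $\gnfo$ clearly subsumes (a guarded quantifier $\exists \bar y\, (R(\bar x,\bar y) \wedge \phi)$ is expressible using the $\gnfo$ guarded-negation clause). Putting the upper and lower bounds together gives $\twoexp$-completeness.
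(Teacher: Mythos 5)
First, a point of reference: the paper does not prove this statement at all --- it is imported wholesale from \cite{gnficalp} and used as a black box, and the only place the paper opens up the underlying machinery is in the proof of Theorem \ref{thm:expdatacomplexitypsb}. Your overall route (normal form, tree-like model property, two-way alternating parity tree automata, emptiness test, finite model property from regularity, lower bound via the guarded fragment) is indeed the route taken in \cite{gnficalp} and refined by the paper, so the architecture is right.

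However, your complexity accounting has a genuine gap, and it sits exactly at the point where $\gnfo$ differs from the guarded fragment. You claim the normal-form conversion is polynomial and that the automaton's state space is polynomial in $|\phi|$; combined with the standard exponential-time emptiness test for two-way alternating parity tree automata, that would yield an $\exptime$ upper bound, contradicting the $\twoexp$-hardness you yourself invoke. The missing second exponential comes from the CQ-shaped subformulas $\exists\bar x\,\bigwedge_i\psi_i$ with unguarded $\bar x$: unlike in GF, an automaton state cannot simply be ``a subformula with parameters from the current guarded set,'' because a match of such a conjunction is spread across many bags of the tree decomposition and the automaton must track which subsets of conjuncts have already been witnessed. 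This forces the closure set, and hence the state space, to be exponential --- precisely what the paper records in Proposition \ref{prop:auttrans}, where the automaton size is $f(s')\cdot 2^{f(w'r')}$ with $r'$ the CQ-rank, and what the normal-form step pays for with a $2^{f(s)}$ blow-up kept manageable only via a DAG representation and a width bound. Relatedly, your tree-like model property with bags bounded by the maximum arity of the signature is too strong for $\gnfo$: the relevant bound is the width of the formula (the maximum number of free variables of a subformula), since a CQ over many variables cannot be evaluated inside a single arity-bounded guarded set. The finite model property and the reduction from GF are fine as sketched, though the $\twoexp$-hardness of GF is due to Gr\"adel rather than to \cite{gnficalp}.
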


Note that $\gnf$ does \emph{not} subsume the theories corresponding to CQ view definitions
(e.g. $A(x,y) \wedge B(y,z) \leftrightarrow V(x,z)$ cannot be expressed in GNFO).
However we will cover this special class of theories in Section \ref{sec:extspec}.

Finally, we will consider \emph{Equality-generating Dependencies} (EGDs), 
of the form
$$
  \forall \bar x ~ \phi(\bar x) ~\rightarrow~ x_i=x_j
$$
where $\phi$ is a conjunction of atoms and $x_i, x_j$ are variables.
As with TGDs, EGDs generalize some well-known relational database integrity constraints, such as functional 
dependencies and key constraints. \emph{EGDs with constants} further allow 
equalities between variables and constants, e.g. $x_i = a$, in the right-hand side.

Our problems take as input a background theory and
also a logical sentence whose inference  we want to study,
the query. In this work we will  consider queries specified as 
\emph{conjunctive queries} (CQs), first-order 
formulas built up from relational atoms via conjunction and existential quantification (equivalently,
relational algebra queries built via selection, projection, join, and rename operations), 
and also \emph{unions of CQs} (UCQs), which are disjunctions (relational algebra 
unions) %UNIONs) 
of CQs.
\emph{Boolean UCQs} are simply UCQs with no free variables.
Every CQ $Q$ is associated with a \emph{canonical instance} $\canondb(Q)$, 
where the domain consists of variables and constants of $Q$ and the facts 
are the atoms of $Q$.
%A  Boolean CQ is satisfied in an instance $I$ iff there is a homomorphism 
%from $\canondb(Q)$ to $I$.

We will always assume that we have associated with each value a corresponding constant,
and we will identify the constant with its value. 
Thus distinct constants will always be forced to denote distinct domain elements 
-- this is often called the ``unique name assumption'' (UNA) \cite{AHV}.
While the presence or absence of constants will often make no difference in our results, 
there are several problems where their presence adds significant complications. 
In contrast, it is easy to show that the presence of constants without the UNA will 
never make any difference in any of our results.
\emph{Note that in our background theories and query languages above, with the exception 
of IDs, constants are allowed by default.}
When we want to restrict to formulas without constants, we add the prefix $\cf$;
for example, $\cf$-$\fgtgd$ denotes the frontier-guarded TGDs that do not contain constants.

The crucial definition for our work is the following:

\begin{definition}
Let $Q$ be a Boolean UCQ over schema $\bfS$, $\calC$ a logical theory over $\bfS$, 
and $\visinst$ an instance over a visible schema $\bfS_v\subseteq\bfS$.
\begin{compactitem}
\item $\PSB(Q,\calC, \bfS, \visinst) = \true$ if for every finite instance 
      $\fullinst$ satisfying $\calC$, if $\visinst=\visible(\fullinst)$
      then $Q(\fullinst)=\true$.
\item $\NSB(Q,\calC, \bfS, \visinst) = \true$  if for every finite instance 
      $\fullinst$ satisfying $\calC$, if $\visinst=\visible(\fullinst)$ 
      then $Q(\fullinst)=\false$.
\end{compactitem}
\end{definition}

% We extend the definition of $\PSB$ and $\NSB$ to arbitrary instances  $\visinst$ by allowing
% the output to be arbitrary when $\visinst$ is not realizable.

We call an $\bfS_v$-instance $\visinst$ \emph{realizable} w.r.t. $\calC$ if 
there is an $\bfS$-instance $\fullinst$ satisfying $\calC$ such that 
$\visinst=\visible(\fullinst)$.
If an instance $\visinst$ is not realizable w.r.t. $\calC$, then, trivially,
$\PSB(Q,\calC,\bfS, \visinst)=\NSB(Q, \calC, \bfS, \visinst)=\true$.
%
% In giving our complexity results for $\PSB(Q,\calC,\bfS, \visinst)$ and $\NSB(Q, \calC, \bfS, \visinst)$, we will
% assume that the input $\visinst$ is a realizable $\bfS_v$-instance. 
%  That is, when we say that  $\PSB(Q, \calC, \bfS, \visinst)$ is in complexity class $X$ as
% $\calC$ ranges over  collection
% of constraints $\calC$, we will mean that there is a function in class $X$ that agrees
% with $\PSB(Q, \calC, \bfS, \visinst)$ on realizable $Q$, and similarly for $\NSB$.
In practice, realizable instances
are  the only $\bfS_v$-instances we should ever
encounter. For simplicity we state our instance-level results for the $\PSB$ and $\NSB$ problems
that take as input an arbitrary instance of $\bfS_v$.
But since our lower bound arguments  will only involve realizable instances, an alternative definition that assumes realizable inputs
 yields the same complexity bounds.

%\myparagraph{Finite and unrestricted instances}
%\gabriele{Here we should make a bit more clear that the standards semantics is 
%          with infinite models, but that the two options are in fact irrelevant 
%          (I would prefer avoiding talking of weak and strong versions, since
%          the problems are a priori not reducible one to the other)}
$\PSB(Q, \calC, \bfS,\visinst)$ states something about every finite instance, 
in line with our default assumption that instances are finite.
We can also talk about an ``unrestricted version'' where the quantification 
is over every (finite or infinite) instance.
\emph{For the logical sentences we deal with for our background theories, there will be no difference 
between these notions.}
That is, we will show that the  finite and unrestricted versions of $\PSB$
coincide for a given class of arguments $Q, \calC, \bfS, \visinst$. We express this by saying
that ``$\PSB(Q,\calC, \bfS, \visinst)$ is \emph{finitely controllable}'', and similarly for $\NSB$.

We need a definition of the size of the input.
In our case, an input consists of 
a query $Q$, a set of sentences $\calC$, a relational schema $\bfS$, and 
an instance $\visinst$, and the size is defined by taking the length of the 
binary encoding of such objects. Other intuitive notions of size (e.g. number of symbols)
would also suffice for our results, since they differ from the bit-encoding notion only up to a polynomial
factor.

Often we will be interested in studying the behavior of the $\PSB$ and $\NSB$ 
problems when $Q$, $\calC$, and $\bfS$ are fixed, e.g.~looking at how the computation 
time varies in the size of $\visinst$ only.  
We refer to this as the \emph{data complexity} of the $\PSB$ (resp.~$\NSB$) problem.

The $\PSB$ problem contrasts with the usual \emph{Open-World Query Answering} 
or \emph{Certain Answer} problem, denoted here $\OWQ(Q, \calC, \fullinst)$,  
which is studied extensively in databases and description logics. 
The latter problem takes as input a Boolean query $Q$, an instance $I$, and 
a set of sentences $\calC$, and returns $\true$ iff the query holds in any 
finite instance $I'$ containing all facts of $I$. In $\PSB$ (and $\NSB$) we 
further constrain the instance to be fixed on the visible part while requiring 
the invisible part of the input instance to be empty. This is the mix of ``Closed World'' and ``Open World'',
and we will see that this 
Closed World restriction can make the complexity significantly higher.

\begin{example} \label{ex:psbvscertain}
Consider a scenario where the background theory consists of inclusion dependencies
$F_1(x) \rightarrow \exists y ~ U(x,y)$ and $U(x,y) \rightarrow F_2(y)$. In the schema, the relations
$F_1$ and $F_2$ are visible but $U$ is not.
Consider the query
$Q= \exists x ~ U(x,x)$ and instance consisting only of facts  $F_1(a), F_2(a)$.

There is a $\PSB$ on this instance, since $F_1(a)$ implies that $U(a,c)$ holds for some
$c$, but the other constraint and the fact that $F_2$ must hold only on $a$ 
means that $c=a$, and hence $Q$ holds.

In contrast, one can easily see that $Q$ is not certain  in the usual sense, where $F_1$ and $F_2$ can
be freely extended with additional facts.
\end{example}

%\michael{Maybe move definition of Open World Query Answering here, and give a one-sentence distinction, saying that it is not
%the same as PSB,  but that there will still turn out to be  a close relationship}

Our schema-level problems concern determining if there is a realizable instance 
that admits a query implication:

\begin{definition}
For $Q$  a Boolean conjunctive query over schema $\bfS$, 
and $\calC$ a set of sentences over $\bfS$, we let:
\begin{compactitem}
\item $\exists\PSB(Q, \calC, \bfS)=\true$  if there is a realizable 
      $\bfS_v$-instance $\visinst$ such that $\PSB(Q,\calC,\bfS, \visinst)=\true$;
\item $\exists\NSB(Q, \calC, \bfS)=\true$   if there is a realizable 
      $\bfS_v$-instance $\visinst$ such that $\NSB(Q,\calC,\bfS, \visinst)=\true$.
\end{compactitem}
\end{definition}

Note that these problems now quantify over instances twice,
and hence there are alternatives  depending on whether the instance
$\visinst$ is restricted to be finite, and whether the hidden instances 
$\fullinst$ are restricted to be finite. For a class of input $Q, \calC, \bfS$,
we say that ``$\exists\PSB(Q, \calC, \bfS)$ is \emph{finitely controllable}'' if 
in both quantifications, quantification over finite instances can be freely 
replaced with quantification over arbitrary instances without changing
the truth value of the statement.

\section{Positive Query Implication} \label{sec:positive}

\subsection{Instance-level problems} \label{subsec:posinst}

Here we study the problem 
$\PSB(Q,\calC, \bfS, \visinst)$. Recall that this  asks whether $Q(\fullinst)=\true$ 
for every full instance $\fullinst$  satisfying $\calC$ which agrees with 
$\visinst$ in the visible part.
The section is organized in three parts: in the first part we prove upper bounds
for the $\PSB$ problem, establishing a connection to the Guarded Negation  Fragment.
In the second part we present  a technique tailored to background theories of Horn logic,
showing that instances
witnessing the failure of  $\PSB$ can be taken to be tree-like.
In the third part we use this technique
to prove tight lower bounds for the instance-level $\PSB$ problem.

\medskip
\myparagraph{Upper bounds and the connection to Guarded Negation}
We begin by showing that $\PSB$ is decidable when background
theories are in the  logic $\gnfo$,  
the guarded negation fragment. This is interesting first of all since
$\gnfo$ is a very expressive logic. It  subsumes the other decidable logics
that we consider here, such as  guarded TGDs, disjunctive guarded TGDs, 
and Boolean combinations of Boolean CQs. Further, it highlights 
the fact
that  $\gnfo$ suffices to capture the fact that an instance has a particular restriction to the
visible relations. This is
exploited  in
in the following reduction 
to the satisfiability problem for $\gnfo$:

\begin{theorem}\label{thm:gnfoposinstancedecid}
The problem $\PSB(Q,\calC,\bfS,\visinst)$, as $Q$ ranges over  Boolean UCQs and 
$\calC$ over $\gnfo$ sentences, is in $\twoexp$.
\par\noindent
Furthermore, for such sentences the problem is finitely controllable,
that is, $\PSB(Q,\calC,\bfS,\visinst)=\true$ iff for every 
instance $\fullinst$ (of any size) satisfying $\calC$, if 
$\visinst=\visible(\fullinst)$, then $Q(\fullinst)=\true$. 
\end{theorem}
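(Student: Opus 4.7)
The plan is to reduce $\PSB$ to satisfiability in $\gnfo$ and then invoke Theorem~\ref{thm:gnfsat}. Observe that $\PSB(Q,\calC,\bfS,\visinst)=\false$ iff there is a (finite) $\bfS$-instance $\fullinst$ with $\fullinst\models\calC$, $\visible(\fullinst)=\visinst$, and $\fullinst\not\models Q$. I would encode the existence of such an $\fullinst$ as satisfiability of a single $\gnfo$ sentence $\psi$ of size polynomial in $|Q|+|\calC|+|\visinst|$, so that deciding $\PSB$ reduces to deciding satisfiability of $\psi$.

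The sentence $\psi$ will be $\calC\wedge\neg Q\wedge\varphi_\visinst$. The first conjunct is already a $\gnfo$ sentence by assumption. For the second, $Q$ is a Boolean UCQ, hence a $\gnfo$ sentence, so $\neg Q$ is available by closure of $\gnfo$ under Boolean combinations of sentences. The third conjunct, $\varphi_\visinst$, must encode $\visible(\fullinst)=\visinst$. The inclusion $\visinst\subseteq\visible(\fullinst)$ is captured by the conjunction of atomic facts $\bigwedge_{R\in\bfS_v}\bigwedge_{\bar a\in R^\visinst} R(\bar a)$. For the reverse inclusion, for each visible $R$ of arity $k$ I include the sentence
\[
  \neg\,\exists\bar x\,\Bigl(R(\bar x)\wedge\neg\bigvee_{\bar a\in R^\visinst}\bigwedge_{i=1}^k x_i=a_i\Bigr)
\]
which rules out any ``surprise'' visible fact. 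The inner negation is guarded by the atom $R(\bar x)$, which contains every free variable of the negated subformula, so it fits the guarded-negation grammar of $\gnfo$; the outer negation is at the sentence level. Putting the pieces together, $\psi$ lies in $\gnfo$ and has the claimed polynomial size.

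By Theorem~\ref{thm:gnfsat}, satisfiability of $\psi$ is decidable in $\twoexp$, and that bound transfers to $\PSB$. Finite controllability follows from the finite model property in the same theorem: if the finite version of $\PSB$ holds but the unrestricted version fails, then an infinite counterexample $\fullinst$ witnesses satisfiability of $\psi$; by Theorem~\ref{thm:gnfsat} there is also a finite model of $\psi$, which is a finite counterexample — contradicting the finite version. The converse implication is immediate since finite instances are instances.

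The main obstacle is expressing $\visible(\fullinst)\subseteq\visinst$ within $\gnfo$'s restrictive use of negation. The key observation is that the atom $R(\bar x)$ can itself serve as the guard for a negated disjunction of equalities over the free variables $\bar x$, which is exactly the shape that the guarded-negation clause of the grammar permits; once this encoding is in place, the rest is a routine assembly of conjuncts and an off-the-shelf appeal to Theorem~\ref{thm:gnfsat}.
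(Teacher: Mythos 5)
Your proposal is correct and follows essentially the same route as the paper: both reduce $\PSB(Q,\calC,\bfS,\visinst)$ to unsatisfiability of the $\gnfo$ sentence $\calC\wedge\neg Q\wedge\varphi_\visinst$, where the ``no surprise visible facts'' condition is written as a guarded implication $\forall\bar x\,(R(\bar x)\then\bigvee_{R(\bar a)\in\visinst}\bar x=\bar a)$, and then invoke Theorem~\ref{thm:gnfsat} for both the $\twoexp$ bound and finite controllability. Your explicit justification that the atom $R(\bar x)$ guards the inner negation, and your spelled-out finite-controllability argument via the finite model property, are just more detailed renderings of what the paper leaves implicit.
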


\begin{proof}
One easily sees that $\PSB(Q,\calC,\bfS,\visinst)$ translates to unsatisfiability of the following formula:
%$\nsbof(Q, \calC, \bfS, \visinst)$ defined as:
$$
\begin{aligned}
  \phi^\psbof_{Q,\calC,\bfS,\visinst} ~=~
  & \neg Q ~\wedge~ \calC ~\wedge~ \\[-1ex]
  & \!\!\!\!\!
    \bigwedge_{R\in \bfS_v} \!\! \Big( \!\!\bigwedge_{R(\bar{a})\in\visinst}\!\!\!\!\! R(\bar{a}) 
                                  ~~\wedge~~ 
                                  \forall \bar{x} ~ \big( R(\bar{x}) \rightarrow \!\!\!
                                                          \bigvee_{R(\bar{a})\in\visinst} \!\!\!\!
                                                          \bar{x}=\bar{a} \big) \Big)
\end{aligned}
$$
Intuitively, the formulas requires that the instance on which it is evaluated (which includes
visible and hidden relations) satisfies the background theory, but not the query, and in addition 
the visible part of the instance agrees with $\visinst$. Note that the formula has size
linear in the inputs to $\PSB$, and thus this gives a polynomial time reduction.

If the sentences in the background theory are in $\gnfo$, then the formula above is also in $\gnfo$.
Indeed, the only places where negation is used, either explicitly or implicitly, 
are $\neg Q$, which is guarded since $Q$ has no free variables, and the universal 
quantification $\forall \bar x ~ (R(\bar x) \rightarrow \dots)$, which translates to 
$\neg \exists \bar x ~ (R(\bar x) \wedge \neg \dots)$, 
with the inner negation guarded by $R(\bar x)$ and the outer negation 
involving no free variables.

The finite controllability of $\PSB(Q,\calC,\bfS,\visinst)$ 
comes from the finite controllability of $\gnfo$ formulas (Theorem~\ref{thm:gnfsat}).
\end{proof}

Above we are using results on satisfiability of $\gnfo$ as a ``black-box''. 
Satisfiability tests for $\gnfo$ work by translating a satisfiability problem 
for a formula into a tree automaton which must be tested for non-emptiness.
By a finer analysis of this translation of $\gnfo$ formulas to automata, 
we can see that the \emph{data complexity} 
of the problem is only singly-exponential.

\begin{theorem} \label{thm:expdatacomplexitypsb}
If $Q$ is a Boolean UCQ and $\calC$ is a conjunction of $\gnfo$ sentences
over a schema $\bfS$, then the data complexity of $\PSB(Q,\calC,\bfS,\visinst)$ 
(that is, as $\visinst$ varies over instances) is in $\exptime$.
\end{theorem}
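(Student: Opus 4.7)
The plan is to revisit the translation from Theorem~\ref{thm:gnfoposinstancedecid}, where $\PSB(Q,\calC,\bfS,\visinst)$ was reduced to the unsatisfiability of the GNFO sentence $\phi^\psbof_{Q,\calC,\bfS,\visinst}$, and to read off a sharper bound from the structure of that sentence. With $Q$, $\calC$, and $\bfS$ fixed, only the closure conditions on the visible relations grow with $\visinst$; the rest of the formula has constant size. Thus the overall size of $\phi^\psbof_{Q,\calC,\bfS,\visinst}$ is polynomial in $|\visinst|$.

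The key observation is that the resulting GNFO sentence has \emph{bounded width}: the maximum number of free variables in any subformula, once placed in \gnnf, depends only on $Q$, $\calC$, and the maximum arity of $\bfS$, all of which are held constant. Indeed, the closure condition for a visible relation $R$ is expressible in GNFO as
$$
  \bigwedge_{R(\bar a)\in\visinst}\!\!\! R(\bar a) ~\et~ \neg \exists \bar x\,\Big(R(\bar x) ~\et~ \neg \!\!\!\!\bigvee_{R(\bar a)\in\visinst}\!\!\!\! \bar x = \bar a\Big),
$$
and every subformula here has at most $\arity(R)$ free variables, all guarded by the atom $R(\bar x)$. The large disjunction over $|\visinst|$ equalities contributes linearly to the formula size but does not increase its width.

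The next step is to invoke a finer complexity analysis of GNFO satisfiability that is implicit in the automata-theoretic proof of Theorem~\ref{thm:gnfsat}. The algorithm translates the input into \gnnf\ of polynomial size and then constructs a tree automaton whose states are indexed by ``types'' over tuples of free variables of the subformulas. When the width is held to a fixed constant, the number of such types---and hence the size of the automaton---is only singly exponential in the formula size, rather than doubly exponential as in the width-unbounded case. Testing non-emptiness of the resulting automaton is polynomial in its size, which combines to give an \exptime\ procedure in $|\visinst|$.

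The main obstacle is justifying this last quantitative claim carefully. The black-box statement of Theorem~\ref{thm:gnfsat} yields only a 2\exptime\ upper bound, so one must re-examine the GNFO-to-tree-automaton construction, verify that the state space associated with each intermediate subformula is singly (not doubly) exponential when width is held constant, and check that the bounds compose correctly under conjunction, disjunction, existential quantification, and guarded negation. Once this refined analysis is in place, applying it to $\phi^\psbof_{Q,\calC,\bfS,\visinst}$ and running the polynomial-time non-emptiness test yields the claimed \exptime\ data complexity.
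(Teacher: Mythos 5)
Your proposal follows essentially the same route as the paper: reduce to unsatisfiability of $\phi^\psbof_{Q,\calC,\bfS,\visinst}$, observe that with $Q,\calC,\bfS$ fixed this sentence is polynomial in $\size{\visinst}$ and has width bounded by a constant, and then refine the GNFO-to-automaton translation so that the overall cost is singly rather than doubly exponential. The differences are in the accounting. The paper's Proposition~\ref{prop:auttrans} produces an alternating two-way parity automaton with $f(s')\cdot 2^{f(w'r')}$ states --- polynomially many once width \emph{and} CQ-rank are constants --- and then charges the exponential to the \exptime{} emptiness test for such automata \cite{Vardi98}; you instead posit a type-based automaton with exponentially many states and a polynomial emptiness test. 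These are two views of the same computation (the exponential step in Vardi's algorithm is precisely the conversion to a nondeterministic automaton), so your version is not wrong, but note that it requires the emptiness test you invoke to really be polynomial, which holds for one-way nondeterministic tree automata and not for the alternating two-way automata the construction naturally produces. Two further points you should be aware of: first, the paper's bound depends on a second parameter, the CQ-rank $\rankcq{}$, because CQ-shaped subformulas force the closure to contain exponentially many ``partial'' conjunctions; your width-only analysis happens to survive because your coarser ``exponential in formula size'' bound absorbs this, and because the parts of $\phi^\psbof$ that grow with $\visinst$ (the big conjunction of ground atoms and the big disjunction of equalities) do not increase CQ-rank --- but this needs to be said. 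Second, the equalities $\bar x=\bar a$ in the closure axioms violate the equality-normalization conditions of the normal form, and the paper spends a step grouping them by repetition pattern and guarding them before the automaton construction applies; your formula sidesteps some of this by guarding the negation with $R(\bar x)$, but the normalization still has to be checked. Since you explicitly defer the ``refined analysis,'' which is exactly the content of Proposition~\ref{prop:auttrans}, your write-up is a correct plan rather than a complete proof; to finish it you would either prove that proposition or cite it.
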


\begin{proof}[Sketch]
In the body of the paper, we provide a proof outline for this. What we  omit is a fine-grained analysis
of the translation of  $\gnf$ to automata, extending the translation to automata found in 
\cite{uslics15bounded}.
The reader interested in this conversion can find the details in the appendix.

We start by stating a satisfiability result for $\gnf$ formulas $\phi$ in a normal form,
 \emph{\gnnf}, similar to one
introduced in \cite{gnfj}.
%We assume for now that the formulas do not use equality or constants,

Formulas  in  \gnnf
can be generated using the following grammar:
\begin{align*}
\phi ::= \
&{\textstyle \bigvee_i \exists \vec{x} ~ \bigwedge_j \psi_{ij}} \\
\psi ::= \,
&\alpha
\, \mid \,
\alpha \wedge \phi
, \mid \,
\alpha \wedge \neg \phi
\end{align*}
where
$\alpha$ is an atomic formula
and free variables of $\phi$ are contained in free
variables of $\alpha$.
As with $\gnf$, in the second production rule we also allow  $\alpha$ to be omitted
if $\phi$
 has at most one free variable (thus allowing free negation of such formulas).
The $\phi$ are referred to as \emph{UCQ-shaped formulas}, with
each of the disjuncts being a \emph{CQ-shaped formula}.   UCQ-shaped formulas
are only used to define the normal form and the related notion of CQ-rank below.
They are clearly as expressive as general $\gnfo$ formulas.

Note that if $\phi_i$ for $i=1 \ldots n$ are \emph{sentences} in normal form then
their conjunction $\bigwedge_i \phi_i$ is also in normal form.

The \emph{width} of $\phi$, denoted $\width{\phi}$,
is the maximum number of free variables of any subformula of $\phi$.

The \emph{CQ-rank} of a formula $\phi$ in $\gnnf$,
denoted $\rankcq{\phi}$, is the maximum number of conjuncts $\psi_i$
in any CQ-shaped subformula
$\exists \vec{x} ~ \bigwedge_i \psi_i$ of $\phi$
for non-empty $\vec{x}$.
For the purposes of CQ-rank,
$\alpha(\vec{x}) \wedge \neg \phi(\vec{x})$
and $\alpha \wedge \phi$
are treated as  subformulas with $1$ conjunct.

\begin{theorem}\label{thm:gnfsatrefined}
For every fixed numbers $r$, $m$, and $w$, there is an $\exptime$
algorithm that determines whether
 a $\gnf$ formula $\phi$  in \gnnf over a schema with relations
of arity at most $m$, with  $\rankcq{\phi} \leq r$ and $\width{\phi} \leq w$ is satisfiable.
\end{theorem}

Theorem \ref{thm:gnfsatrefined} is proven by creating an 
alternating two-way parity automaton whose state set consists of a collection
of formulas derived from $\phi$.  The automaton runs on a tree whose nodes represent
collections of elements in a tree-like model. 
If the formula $\phi$ were in the guarded fragment, rather than $\gnf$,
it would suffice to use the subformulas of $\phi$
as states, where the subformulas would have additional annotations associating
variables with elements of a guarded set. The bound on the arity would suffice to keep the number
of annotations low. In the presence of CQ-shaped formulas, the vertices will not be associated
with a guarded set, but with a set whose size is controlled by $\width{\phi}$. Thus
by bounding $\width{\phi}$ we keep the number of annotations low. A further  problem
is that for CQ-shaped subformulas, one will
have to throw in all subformulas, representing guesses as to which of the conjuncts were true of
the elements associated to a given node of a tree-like structure.  The bound on $\rankcq{\phi}$ guarantees that this need to throw in subformulas does
not blow up the number of states. 

It is important for our application that the result
applies to $\gnf$ formulas that have equality and constants, which are treated 
by adding  additional cases for equality atoms in the automata, and conjoining with an additional
automata that enforces that the facts involving constants are consistent across the tree.
The details of this, as well as other subtleties in the proof of Theorem \ref{thm:gnfsatrefined}, are given in the appendix.

Now fix a Boolean UCQ $Q$ and a conjunction $\calC$ of $\gnf$ sentences over a schema $\bfS$. 
Without loss of generality, we can assume that the sentences in $\calC$ are already in \gnnf.
Consider the formula $\phi^{\kw{PSBtoGNF}}_{Q, \calC, \bfS, \visinst}$ in the proof of Theorem 
\ref{thm:gnfoposinstancedecid}:
$$
\begin{aligned}
    \neg Q ~\wedge~ \calC ~\wedge~ %\\[1ex] 
    \!\!\!
    \bigwedge_{R\in \bfS_v} \!\! \Big( \!\!\bigwedge_{R(\bar{a})\in\visinst}\!\!\!\!\! R(\bar{a}) 
                                  \:\wedge\: 
                                  \forall \bar{x} ~ \big( R(\bar{x}) \rightarrow \!\!\!
                                                          \bigvee_{R(\bar{a})\in\visinst} \!\!\!\!
                                                          \bar{x}=\bar{a} \big) \Big) \ .
\end{aligned}
$$
This formula can be rewritten to eliminate the universally-quantified implication, replacing 
this subformula with the negation
of the sentence
\begin{align*}
 \exists \bar{x} ~  R(\bar{x}) \wedge  \bigwedge_{R(\bar{a})\in\visinst} \bigvee_i x_i \neq a_i
\end{align*}

We can add equality guards on the formulas $x_i \neq a_i$, and 
guards of the form $R(\vec x)$ on the disjunctions $\bigvee_i x_i \neq a_i$.
With these changes, which do not impact the size of the formula,
  the conditions of  \gnnf are satisfied.

Thus the formula $\phi^{\kw{PSBtoGNF}}_{Q, \calC, \bfS, \visinst}$ 
can be normalized in polynomial time, and the schema arity and $\rankcq{}$ of 
$\phi^{\kw{PSBtoGNF}}_{Q, \calC, \bfS, \visinst}$ %$\psbof(Q, \calC, \bfS, \visinst)$
are fixed when $Q$, $\calC$, and $\bfS$ are fixed.
Applying Theorem \ref{thm:gnfsatrefined}
%auttrans}, we get a polynomial-sized two-way alternating automaton. 
%Since emptiness of such automata can be checked in $\exptime$ 
%(see M. Vardi \emph{Reasoning about The Past with Two-Way Automata} in ICALP 98)
%\cite{Vardi98},
the bound claimed in Theorem \ref{thm:expdatacomplexitypsb} now follows.
\end{proof}

\medskip
\myparagraph{A characterization of  $\PSB$ for Horn logics}
We have shown above that $\PSB$ can be reduced to satisfiability of a $\gnf$ formula,
and it is known that a satisfiable $\gnf$ formula can always be taken
to be ``tree-like'' --- indeed, this is what allows automata-theoretic techniques to be applied.
We can give a more concrete algorithm in the special case of  background theories in a certain
family related to the Horn fragment of first order logic; specifically
for EGDs and TGDs.
This will not get us  better worst-case upper bounds for the cases we consider in this work: indeed, for general TGDs and EGDs it is not even effective. But it will prove useful
for  showing stronger lower bounds on the combined and data complexity 
of $\PSB$, since it will allow us to show them when the background theories are in
these restricted  classes. It will also be essential for the schema-level problems considered
in Section \ref{sec:existsPSB}.

We show that for TGDs and EGDs, the $\PSB$ problem can be characterized using a variant of the chase procedure
\cite{onet,fagindataex}. 
Our procedure
receives as input a relational schema $\bfS$, a background theory $\calC$
consisting of TGDs and EGDs, and an initial instance $\fullinst_0$ for the schema $\bfS$, 
which does not need to satisfy  the background theory $\calC$.
The goal of the procedure is to produce a collection of instances (not necessarily finite)
that satisfy  $\calC$, extend the initial instance $\fullinst_0$, and 
agree with this instance on the visible part.
The goal is achieved by repeatedly adding new facts to the initial instance $\fullinst_0$
so as to satisfy the sentences in $\calC$, in a way similar to the classical chase procedure for TGDs.
However, non-deterministic choices are sometimes needed to map the newly generated 
tuples in a visible relation to some existing facts in $\fullinst_0$. Our technique
is actually a variant of the ``disjunctive chase'' of \cite{chaserevisited}, which produces
multiple instances. 

We now describe in detail how the variant for the chase procedure works, since we will need it
in the remainder of the paper. We start with an explanation in the case
where $\calC$ contains only  TGDs, and later extend it to handle EGDs.

Recall that w.l.o.g.~TGDs are assumed to have exactly one atom in the right-hand side.
Due to the unique name assumption (UNA), functions between domain elements are 
tacitly assumed to preserve all the constants that appear in the sentences of the 
background theory
and in the query (i.e.~$h(a)=a$ whenever $a$ appears as a constant in $\calC$ or $Q$).
As usual, such functions are homomorphically extended to relational instances 
(i.e.~$h(R(x_1,\dots,x_n))=R(h(x_1),\dots,h(x_n))$ for all relations $R$).
The procedure builds a \emph{chase tree} of instances, starting with 
the singleton tree consisting of the input $\bfS$-instance $\fullinst_0$ and 
extending the tree by repeatedly applying the following steps.
It chooses an instance $K$ at some leaf of the current tree, a 
TGD $R_1(\bar x_1) \wedge \ldots \wedge R_m(\bar x_m) \rightarrow \exists \bar y ~ S(\bar z)$ in 
$\calC$, where $\bar z$ is a sequence of (possibly repeated) variables from 
$\bar x_1,\ldots,\bar x_m,\bar y$, and a homomorphism $f$ that maps 
$R_1(\bar x_1)$, \dots, $R_m(\bar x_m)$ to some facts in $K$. 
Then, the procedure constructs a new instance from $K$ by adding the 
fact $S(f'(\bar z))$, where $f'$ is an extension of $f$ that maps, in an injective
way, the existentially quantified variables in $\bar y$ to some  values that are not in
$K$.
In the usual terminology of the chase, such an added  value is called
a ``null'', and adding this fact is called ``performing a  chase step''.
Immediately after this step, and only when the relation $S$ is visible, the procedure 
replaces the instance $K'=K\cup\big\{S(f'(\bar z))\big\}$ with copies of it
of the form $g(K')$ such that $\visible(g(K')) = \visible(\fullinst_0)$,
for all possible homomorphisms $g$ that map the variables $\bar z$ 
to some values in the active domain $\{a_1,\ldots,a_n\}$ of 
the visible instance $\visible(\fullinst_0)$. Note that the active domain
of $\{a_1,\ldots,a_n\}$ of $\visible(\fullinst_0)$ does not contain null values.
In the language of prior papers on the chase \cite{chaserevisited},
this step would be a sequence of ``disjunctive chase step'',
for disjunctive EGDs of the form 
$S(\bar z) \rightarrow  z_i = a_1 \vee \ldots \vee z_i = a_n$).
The resulting instances $g(K')$ are then appended as new children of 
$K$ in the tree-shaped collection. 
In the special case where there are no homomorphisms $g$ such that
$\visible(g(K')) = \visible(\fullinst_0)$, we append a ``dummy instance''
$\bot$ as a child of $K$: this is used to represent the fact that 
the chase step from $K$ led to an inconsistency (the dummy node will 
never be extended during the subsequent chase steps).
If $S$ is not visible, then the instance $K'$ is simply appended as 
a new child of $K$.

This process continues iteratively using a strategy that is ``fair'',
namely, that guarantees that whenever a dependency is applicable in 
a node on a maximal path of the chase tree, then it will be fired
at some node (possibly later) on that same maximal path (unless the 
path ends with $\bot$).
In the limit, the process generates a possibly infinite tree-shaped collection
of instances. It remains to complete the collection with ``limits'' in order to 
guarantee that the sentences in the background theory are satisfied. Consider any infinite 
path $K_0,K_1,\ldots$ in the tree (if there are any). It follows from 
the construction of the chase tree that the instances on the path form a 
chain of homomorphic embeddings $K_0 \xrightarrow{h_0} K_1 \xrightarrow{h_1} \ldots$.
Such chains of homomorphic embeddings admit a natural notion of limit, 
which we denote by $\lim_{n\in\bbN}K_n$. We omit the details of this construction
here, which can be found, for instance, in \cite{ChangKeisler}.
%  that contains all the facts of the form $R(t)$ 
% such that $R(t)\in K_n$ for all but finitely many $n\in\bbN$. 
The limit instance $\lim_{n\in\bbN}K_n$ satisfies the background theory $\calC$.
%\balder{I've changed this -- the limit construction is more complicated as it 
%involves a quotient so we don't want to explain it here}
We denote by 
$\Chase(\calC,\bfS,\fullinst_0)$ 
the collection of all non-dummy instances that occur at the leaves of the chase tree, 
plus all limit instances of the form $\lim_{n\in\bbN}K_n$, where $K_0,K_1,\ldots$ is 
an infinite path in the chase tree.  
This is well-defined only once the ordering of steps is chosen, but for the results 
below, which order is chosen will not matter, so we abuse notation by referring to 
$\Chase(\calC,\bfS,\fullinst_0)$ 
as a single object.

We now indicate how we can modify the  procedure
for $\Chase(\calC,\bfS,\visinst)$ so as to take into account also the EGDs
in $\calC$ that can be triggered on the instances that emerge in the chase tree.
Formally, chasing an EGD of the form
$R_1(\bar x_1)\wedge\ldots\wedge R_m(\bar x_m) \rightarrow x=x'$,
where $x,x'$ are two variables from $\bar x_1,\ldots,\bar x_m$,
amounts at applying a suitable homomorphism that identifies the
two values $h(x)$ and $h(x')$ whenever the facts
$R_1(h(\bar x_1)), \ldots, R_m(h(\bar x_m))$
belong to the instance under consideration.
Note that this operation leads to a failure (i.e.~a dummy instance)
when $h(x)$ and $h(x')$ are distinct values from the active
domain of the visible part $\visinst$.

It is clear that every instance in $\Chase(\calC,\bfS,\fullinst_0)$, except the special
``failure instance'', 
satisfies the sentences in $\calC$ and, in addition, agrees with 
$\fullinst_0$ on the visible part of the schema.
Below, we prove that $\Chase(\calC,\bfS,\fullinst_0)$ 
satisfies the following %universal 
property:

\begin{lemma}\label{lem:disjunctive-chase-universal}
Let $\calC$ consist of EGDs and TGDs without constants.
Let $\fullinst_0$ be an instance of a schema $\bfS$ and let $\fullinst$ 
be another instance over the same schema that contains $\fullinst_0$, 
agrees with $\fullinst_0$ on the visible part 
(i.e. $\visible(\fullinst) = \visible(\fullinst_0)$),
and satisfies all sentences of $\calC$.
Then, there exist an instance $K \in \Chase(\calC,\bfS,\fullinst_0)$ 
and a homomorphism from $K$ to $\fullinst$.
\end{lemma}

\begin{proof}
For brevity we prove the result for TGDs only.
We consider the chase tree for $\Chase(\calC,\bfS,\fullinst_0)$ and, 
based on the full instance $\fullinst$, we identify inside this chase tree 
a suitable path $K_0,K_1,\ldots$ and a corresponding sequence of homomorphisms $h_0,h_1,\ldots$ 
such that, for all $n\in\bbN$, $h_n$ maps $K_n$ to $\fullinst$. % and $h_{n+1}$ extends $h_n$. 
Once these sequences are defined, the lemma will follow easily by 
letting $K=\lim_{n\in\bbN}K_n$ and $h=\lim_{n\in\bbN}h_n$, that is, 
$h(\bar a) = \bar b$ if $h_n(\bar a)=\bar b$ for all but finitely
many $n\in\bbN$.

The base step is easy, as we simply let $K_0$ be the initial instance $\fullinst_0$,
which appears at the root of the chase tree, and let $h_0$ be the identity.
As for the inductive step, suppose that $K_n$ and $h_n$ are defined for some 
step $n$, and suppose that 
$R_1(\bar x_1)\wedge\ldots\wedge  R_m(\bar x_m) \rightarrow \exists \bar y ~ S(\bar z)$ 
is the dependency that is applied at node $K_n$, where $\bar z$ is a sequence 
of variables from $\bar x_1,\ldots,\bar x_m,\bar y$.
Let $R_1\big(f(\bar x_1)\big)$, \dots, $R_m\big(f(\bar x_m)\big)$ be the facts 
in the instance $K_n$ that have triggered the chase step, where $f$ is a 
homomorphism from the variables in $\bar x_1,\ldots,\bar x_m$ to the domain 
of $K_n$.
Since $\fullinst$ satisfies the same dependency and contains the facts 
$R_1\big(h_n(f(\bar x_1))\big)$, \dots, $R_m\big(h_n(f(\bar x_m))\big)$, 
it must also contain a fact of the form
$S\big(h'(f'(\bar z))\big)$, where $f'$ is the extension of $f$ that is
the identity on the existentially quantified variables $\bar y$ and 
$h'$ is some extension of $h_n$ that maps the variables $\bar y$ to some 
values in the domain of $\fullinst$.

Now, to choose the next instance $K_{n+1}$, we distinguish two cases, depending
on whether $S$ is visible or not. If $S$ is not visible, then we know that the
chase step appends a single instance $K'=K_n\cup\big\{S(f'(\bar z))\big\}$
as a child of $K_n$; accordingly, we let $K_{n+1}=K'$ and $h_{n+1}=h'\circ f'$.
Otherwise, if $S$ is visible, then we observe that $h'$ is a homomorphism from
$K'=K_n\cup\big\{S(f'(\bar z))\big\}$ to $\fullinst$. In particular, $h'$ maps 
the variables $\bar z$ to some values in the active domain of the visible part 
$\visible(\fullinst_0)$ and hence $h'(K')$ agrees with $\fullinst_0$ on the 
visible part of the schema. 
This implies that the chase step adds at least the instance $h'(K')$ as a child 
of $K_n$. Accordingly, we can define $K_{n+1}=h'(K')$ and $h_{n+1}=f'$.
Given the above constructions, it is easy to see that the homomorphism $h_{n+1}$
maps $K_{n+1}$ to $\fullinst$.

Proceeding in this way, we either arrive at a leaf, and in this case we are done,
 or we obtain an infinite 
path of the chase tree $K_0 \xrightarrow{h_0} K_1 \xrightarrow{h_1} \ldots$, with homomorphisms
$h'_i:K_i\to\fullinst$, such that $h_i\circ h'_{i+1}$ extends $h'_i$, for all $i\in\bbN$. 
In the latter case it can be shown that the limit $\lim_{n\in\bbN} K_n$ also homomorphically 
maps to $\fullinst$. 
\end{proof}

The following proposition characterizes the instances of the $\PSB$ problem 
when the sentences in the background theory are TGDs without constants:

\begin{proposition}\label{prop:PSB-characterizion}
If $Q$ is a Boolean UCQ, $\calC$ is a set of TGDs or EGDs without constants
over a schema $\bfS$, and $\visinst$ is a visible instance, then
$\PSB(Q,\calC,\bfS,\visinst)=\true$ iff every instance $K$ in
$\Chase(\calC,\bfS,\visinst)$ satisfies $Q$.
\end{proposition}

\begin{proof}
Suppose that $\PSB(Q,\calC,\bfS,\visinst)=\true$ and recall that every instance in
$\Chase(\calC,\bfS,\visinst)$
satisfies the sentences in $\calC$ and agrees with $\visinst$ on the visible part.
In particular, this means that every instance in
$\Chase(\calC,\bfS,\visinst)$
satisfies the query $Q$.

Conversely, suppose that $\PSB(Q,\calC,\bfS,\visinst)=\false$.
This means that there is an $\bfS$-instance $\fullinst$ that has
$\visinst$ as visible part, satisfies the sentences in $\calC$,
but not the query $Q$.
By Lemma \ref{lem:disjunctive-chase-universal}, letting $\fullinst_0=\visinst$,
we get an instance
%$K \in \Chase_\visinst(\calC,\bfS,\visinst)$
$K \in \Chase(\calC,\bfS,\visinst)$
and a homomorphism from $K$ to $\fullinst$. Since $Q$ is preserved
under homomorphisms, $K$ does not satisfy $Q$.
\end{proof}

\medskip
\myparagraph{Lower bounds}
Below we show that the data complexity bound in Theorem \ref{thm:expdatacomplexitypsb} 
is tight even for inclusion dependencies (IDs). The proof proceeds by showing that a ``universal machine''
for alternating $\pspace$ can be constructed by fixing appropriate $Q, \calC, \bfS$ in a $\PSB$ problem.

\begin{theorem}\label{th:psb-instance-hard}
There are a Boolean CQ $Q$ and a set $\calC$ of IDs over a schema $\bfS$
for which the problem $\PSB(Q,\calC,\bfS,\visinst)$ is $\exptime$-hard 
in data complexity.
\end{theorem}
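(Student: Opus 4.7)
The plan is to reduce from the acceptance problem of a fixed universal alternating polynomial-space Turing machine $M$, a problem complete for $\exptime$. We will design a fixed schema $\bfS$, a fixed set $\calC$ of IDs, and a fixed Boolean UCQ $Q$, together with a polynomial-time map $w \mapsto \visinst(w)$, with the property that $M$ accepts $w$ iff $\PSB(Q,\calC,\bfS,\visinst(w))=\true$. Since $\bfS$, $\calC$, and $Q$ are fixed, all the complexity lies in the instance, yielding the claimed data-complexity lower bound.

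\textbf{Encoding configurations.}
Because $M$ runs in polynomial space $p(|w|)$, a configuration is determined by a state, a head position, and $p(|w|)$ tape cells. The visible instance $\visinst(w)$ supplies polynomially-sized scaffolding: a linear order on tape positions $1,\dots,p(|w|)$, the initial tape contents derived from $w$, and a ``root'' tuple that seeds the computation. The alphabet, transition table, and designation of accepting/universal/existential states are fixed and can therefore be hard-wired into $\calC$ and $Q$. Hidden relations represent configurations: each configuration is named by a null value created by IDs, together with hidden facts recording its state, head position, and per-cell symbols.

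\textbf{Forcing the computation tree and witnessing acceptance.}
The IDs in $\calC$ force, for every hidden configuration, the existence of appropriate successor configurations: at universal-state configurations, one ID per allowed transition produces a distinct child, while at existential-state configurations a single ID creates one child whose content the adversary -- the $\forall\fullinst$ quantifier in the $\PSB$ definition -- is free to choose. Thus the completion encodes a strategy for the existential Turing player, while the UCQ match we look for traces through the universal player's branches that must all still lead to acceptance. The query $Q$ is a disjunction whose principal disjunct witnesses acceptance along the forced tree, and whose remaining disjuncts catch cheating: since IDs cannot enforce tape consistency or correctness of transitions, the adversary could in principle populate hidden nulls arbitrarily, and the auxiliary disjuncts fire whenever two symbols disagree at the same (configuration, cell) pair, a child's tape differs from its parent's at a position away from the head, or state/head labels are inconsistent. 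Consequently any cheating completion makes $Q$ trivially true, and the only completions that could falsify $Q$ are honest ones, which falsify $Q$ precisely when $M$ fails to accept $w$.

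\textbf{Main obstacle.}
The delicate point is implementing both the cheating-detection and the acceptance-witnessing disjuncts within the highly restricted language of IDs (no constants, no repeated variables, no equality) together with fixed-size CQs. In particular, propagating tape contents across polynomially many time steps using only forced existentials requires a careful chaining of fresh hidden relations whose inclusion dependencies carry symbols cell-by-cell from one configuration to the next, in such a way that every possible deviation is witnessed by a fixed-size CQ pattern. The closed-world character of the visible relations is essential here: the polynomial-size tables of positions and transitions are frozen from the outside, denying the adversary any opportunity to invent extra ``positions'' or ``states'' in which to hide inconsistencies. Once this encoding is in place, checking that the reduction runs in polynomial time and that ``$M$ accepts $w$ iff $\PSB(Q,\calC,\bfS,\visinst(w))=\true$'' is routine.
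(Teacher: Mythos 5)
Your high-level plan (reduce from an alternating \pspace machine, encode the computation tree in hidden relations built by IDs, put the polynomial-size scaffolding --- tape positions, initial contents, transition/error lookup tables --- into the visible instance, and use the UCQ to catch malformed encodings) is the same as the paper's. But there is a genuine gap in the logical shape of your reduction: you claim $M$ accepts $w$ iff $\PSB(Q,\calC,\bfS,\visinst(w))=\true$, with a ``principal disjunct'' of $Q$ that ``witnesses acceptance along the forced tree.'' This cannot work. Acceptance of an alternating machine is the statement ``there exists an existential strategy such that \emph{all} universal branches accept'' --- a $\exists\forall$ condition over a tree of exponential depth --- and a fixed-size positive existential query has no way to witness it. Moreover your own setup makes the hidden completion encode the existential player's strategy, so the $\forall\fullinst$ quantifier of $\PSB$ ranges over \emph{all} strategies, including losing ones; on an honest completion encoding a losing strategy your acceptance-witnessing disjunct would fail and no cheating disjunct would fire, so $Q$ would be false and $\PSB$ would be false even when $M$ accepts. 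The quantifiers simply do not line up in the direction you chose.

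The correct alignment, which is what the paper does, is to reduce acceptance to the \emph{negation} of $\PSB$: the existential quantifier ``there is a full instance violating $Q$'' is matched with ``there is an existential strategy whose computation tree is well-formed and accepting.'' Concretely, $Q$ is a pure error detector: besides the disjuncts catching encoding violations (adjacency, initial configuration, transition consistency, all checked against visible error tables), there is one disjunct $Q_{\kw{rej}}$ that fires on \emph{any} occurrence of the rejecting state anywhere in the tree. A completion violating $Q$ is then exactly a correct computation tree none of whose branches ever reaches $q_{\kw{rej}}$, i.e., an accepting run, so $M$ accepts iff $\PSB=\false$. Since \exptime is closed under complement this still gives the stated hardness. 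Your ``main obstacle'' paragraph correctly identifies that inequality/consistency tests must be outsourced to visible lookup tables frozen by the closed-world semantics, but without fixing the polarity and replacing the positive acceptance witness by the absence of a $q_{\kw{rej}}$ match, the reduction as described does not establish either direction of the claimed equivalence.
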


\begin{proof}
%\gabriele{Super-simplified proof version, no tree-like}
We first prove the hardness result using a UCQ $Q$; later, we show how to 
generalize this to a CQ.
We reduce the acceptance  problem for an alternating $\pspace$ Turing machine $M$ 
to the negation of $\PSB(Q,\calC,\bfS,\visinst)$.

A configuration of $M$ is defined, as usual, by a control state, a position of the 
head on the tape, and a finite string representing the content on the tape.
The input of the machine is assumed to be a string of blanks $\sqcup\dots\sqcup$ 
(thus only its length matters). Moreover, special symbols $\vdash,\dashv$ are added 
at the extremities of the input to mark the endpoints of the working tape.
%In fact, for technical reasons (namely, to ease detecting badly formed encodings of computations), 
%it is convenient to repeat twice each marker, and assume that the head of $M$ never visits 
%the most external markers, and never removes a marker. 
%Accordingly, the initial configuration of $M$ has tape content 
%of the form $\vdash\vdash\sqcup\dots\sqcup\dashv\dashv$ and the
%head on the second occurrence of $\vdash$.
Accordingly, the initial configuration of $M$ has tape content 
of the form $\vdash\sqcup\dots\sqcup\dashv$ and the
head on the first position.

The transition function of $M$ describes a set of target configurations on 
the basis of the current configuration.
We distinguish between existential and universal control states of $M$, 
and we assume that there is a strict alternation between existential and 
universal states along every sequence of transitions.
Without loss of generality, we also assume that there are exactly $2$ target 
configurations for each transition that departs from a universal state.
A computation of $M$ is thus represented by a tree of configurations, where 
the root represents the initial configuration and every node with an 
existential (resp., universal) control state has exactly one (resp., two)
successor configuration(s).
Furthermore, to make the coding simpler, we adopt a non-standard acceptance condition.
Specifically, we assume that the Turing machine $M$ never halts, namely,
its transition function is defined on every configuration, and we distinguish
two special control states, $q_{\kw{acc}}$ and $q_{\kw{rej}}$. 
We further assume that every infinite path in a computation tree of $M$ 
eventually reaches a configuration with either $q_{\kw{acc}}$ or $q_{\kw{rej}}$ 
as control state, and from there onwards there is no change of configuration. 
Accordingly, we say that $M$ accepts (its input) if it admits a computation
tree where the state $q_{\kw{acc}}$ appears on all paths; symmetrically, 
we say that $M$ rejects if every computation tree has a path leading to
$q_{\kw{rej}}$.

The general idea of the reduction is to create schema, background theory, and query that together
represent a ``universal machine'' for alternating $\pspace$. Then, given an alternating 
$\pspace$ machine $M$ encoded in the visible instance, an accepting computation tree of $M$
will be encoded by an arbitrary full instance that satisfies the background theory and violates 
the query --- that is, a witness of the failure of $\PSB$.
We first devise the schema with hidden relations that will store the computation tree of 
a generic alternating $\pspace$ machine.
The background theory and (the negation of) the query will be used to restrict the hidden relations 
so as to guarantee that the encoding of the computation tree is correct.
By ``generic'' we mean that the hidden relations and corresponding
background theory will be independent of the tape size, number of 
control states, and transition function of the machine.
The visible instance will store the ``representation'' of an alternating $\pspace$ machine $M$
--- that is, an encoding of $M$ that can be calculated  efficiently once  
$M$ is known. This will include the  tape size and an encoding of the transition function. 
We will then give the reduction that takes an alternating polynomial space machine $M$
and instantiates all the  visible relations with the encoding.
The space bound on $M$  will allow us to create
the tape components in the visible instance efficiently. In contrast, the hidden relations
will store aspects of a computation that can \emph{not} be computed easily from $M$.
In summary, below we will be describe each part of the schema $\bfS$ for computation
trees of a machine, along with the polynomial mapping that transforms a machine $M$ 
into data filling up the visible parts of the schema.

\smallskip
To begin with, we explain how to encode the tape (devoid of its content) into a binary relation $T$.
The relation $T$ will be visible, and can be filled efficiently once the length of the tape of $M$ is known.
Given $M$, it will be filled  in the following natural way: 
it contains all the facts $T(y,y')$, where $y$ is the identifier of a cell 
and $y'$ is the identifier of the successor of this cell in the tape.
Recall that the Turing machine $M$ works on a tape of polynomial length,
and hence the visible instance for the relation $T$ has also size polynomial 
in $M$.
We also add unary visible relations $\kw{First}$ and $\kw{Last}$, 
that are intended to distinguish the first and last cells of the tape.
Given $M$, we will instantiate $\kw{First}$ (resp., $\kw{Last}$) 
with the singleton consisting of the identifier of the first (resp., last) cell.
Moreover, despite the fact that the tape length is finite, it is convenient to 
assume that every cell has a successor --- this assumption will be exploited 
later to ease the instantiation of new tape contents for each configuration. 
We will thus add to the visible relation $T$ also the ``dummy'' pair $(y,y)$, 
where $y$ is the identifier of the rightmost cell of the tape. 

As for the configurations of the machine, these are described
by specifying, for each configuration and each tape cell,
a suitable value that represents the content of that cell,
together with the information on whether the Turing machine
has its head on the cell, to the right, or to the left,
and what is the corresponding control state. 
%For a technical reason (specifically, to allow detecting 
%violations of the transition rules between pairs of subsequent 
%configurations), we adjoin to the labelling of a cell also that 
%of the adjacent cells whenever the head is within the neighbourhood.
Formally, the configurations of the machine are encoded by a hidden 
ternary relation $C$, where each fact $C(x,y,z)$ indicates 
that, in the configuration identified by $x$, the cell $y$ 
has value $z$.  
We will enforce that the cell values range over an appropriate 
domain, defined by a visible unary relation $V$. 
%As with all of our visible relations, we can fill $V$
%easily once we have a specific machine $M$.
%In our reduction from $M$, we will fill this relation $V$ with
%$(\Sigma_\lhd \times \Sigma_Q \times \Sigma_\rhd) \uplus
% (\Sigma_Q \times \Sigma_\rhd) \uplus
% (\Sigma_\lhd \times \Sigma_Q) \uplus
% \Sigma_\lhd \uplus \Sigma_\rhd$,
%where $\Sigma$ is the tape alphabet of $M$,
%$Q$ is the set of its control states, 
%$\lhd,\rhd$ are fresh symbols, and 
%$\Sigma_\lhd = \Sigma\times\{\lhd\}$, 
%$\Sigma_\rhd = \Sigma\times\{\rhd\}$, 
%$\Sigma_Q = \Sigma\times Q$.
%When a cell has value $\big((a,\lhd),(b,q),(c,\rhd)\big)$, this 
%means that its content is $b$, the Turing machine stores the 
%control state $q$, the head is precisely on this cell, and 
%the neighbouring cells to the left and to the right have labels 
%$a$ and $c$, respectively.
%Similarly, when a cell has value $\big((a,q),(b,\rhd)\big)$ 
%(resp., $\big((b,\lhd),(c,q)\big)$), 
%this means that its content is $b$, the Turing machine stores 
%the control state $q$, the head is on the predecessor
%(resp., successor) cell, and this latter cell carries the letter $a$ (resp., $c$).
%In all other cases, we simply store the content $b$ of the cell
%together with the information of whether the head is far to the 
%right or far to the left.
In our reduction from $M$, we will fill this relation $V$ with
$\Sigma_Q \uplus \Sigma_\lhd \uplus \Sigma_\rhd$,
where $\Sigma$ is the tape alphabet of $M$ (which includes
the markers $\vdash$ and $\dashv$),
$\Sigma_Q = \Sigma\times Q$,
$\Sigma_\lhd = \Sigma\times\{\lhd\}$, 
$\Sigma_\rhd = \Sigma\times\{\rhd\}$, 
$Q$ is the set of its control states, 
and $\lhd,\rhd$ are fresh symbols.
When a cell has value $(a,q)$, this means that its content is $a$,
the Turing machine stores the control state $q$, and the head is 
precisely on this cell. 
Similarly, when a cell has value $(a,\lhd)$ (resp., $(a,\rhd)$), 
this means that its content is $a$ and the cell is to the immediate left 
(resp., immediate right) with respect to the position of the head of the 
Turing machine.

Because we need to associate the same tape structure with several different configurations,
the content of the relations $T$ and $\kw{First}$ will end up being replicated within 
new hidden relations $T^C$ and $\kw{First}^C$, where it will be paired with the 
identifier of a configuration.
For example, a fact $T^C(x,y,y')$ will indicate that, in the configuration 
identified by $x$, the cell $y$ precedes the cell $y'$. 
Similarly, a fact $\kw{First}^C(x,y)$ will indicate that $y$ 
is the first cell of the tape of configuration $x$.
Of course, we will enforce the condition that the relations 
$T^C$ and $\kw{First}^C$, devoid of the first attribute, 
are contained in $T$ and $\kw{First}$, respectively.

We now turn to the encoding of the computation tree.
For this, we introduce a visible unary relation $I$ that contains the 
identifier of the initial configuration. We also introduce the hidden binary 
relations $S^\exists$, $S^\forall_1$, and $S^\forall_2$. 
We recall that every configuration $x$ with an existential control state has exactly one
successor $x'$ in the computation tree, so we represent this with the fact $S^\exists(x,x')$.
Symmetrically, every configuration $x$ with a universal control state has exactly two
successors $x_1$ and $x_2$ in the computation tree, and we represent this with
the facts $S^\forall_1(x,x_1)$ and $S^\forall_2(x,x_2)$.

So far, we have introduced the visible relations $T$, $\kw{First}$, $\kw{Last}$,
$V$, $I$, and the hidden relations $C$, $T^C$, $\kw{First}^C$, 
$S^\exists$, $S^\forall_1$, $S^\forall_2$.
These are sufficient to store an encoding of the computation tree of the machine.
However, the background theories are only allowed to contain inclusion dependencies, 
which are not powerful enough to guarantee that these relations indeed represent 
a correct encoding. To overcome this problem, we will later introduce a few additional 
relations and exploit a union of CQs to detect those violations of the background
theory
that are not captured by inclusion dependencies.

\smallskip
We now list some inclusion dependencies in $\calC$ that enforce basic 
restrictions on the relations.
\begin{itemize}
  \item We begin with some sentences that guarantee that the relations 
        $T$ and $T^C$ induce the same ``successor'' relation on the cells of the tape:
        $$
        \begin{array}{rclrcl}
          T^C(x,y,y')         &\rightarrow& T(y,y')        & \qquad
          \kw{First}^C(x,y)   &\then& \exists ~ y' ~ T^C(x,y,y') \\[0.5ex]
          \kw{First}^C(x,y)   &\then& \kw{First}(y)  & \qquad
          T^C(x,y,y')         &\then& \exists ~ y'' ~ T^C(x,y',y'') \ .
        \end{array}
        $$
        Note that, while we can easily enforce that $T$ contains the projection of 
        $T^C$ onto the last two attributes, and similar for $\kw{First}$ and $\kw{First}^C$.
        It is more difficult, instead, to enforce that $T^C$ contains copies of $T$ 
        annotated with each configuration identifier.
        This is done indirectly by requiring that every tuple $(x,y)$ in $\kw{First}^C$ 
        is the source of an infinite chain of successors inside $T^C$, all annotated 
        with the same configuration identifier. 
        Paired with the previous sentences, this will guarantee that $T^C$ contains 
        the annotated copy $\{x\}\times T$. 
        Further note that, for this to work, it is crucial to have assumed that 
        there is a ``dummy'' successor $T(y,y)$ on the last tape cell $y$.
        The existence of facts of the form $\kw{First}^C(x,y)$ for 
        each configuration $x$ will be enforced later.
  \item We proceed by enforcing the existence of values associated with each cell in each configuration:
        $$
        \begin{array}{rclrcl}
          T^C(x,y,y') &\then& \exists ~ z ~ C(x,y,z)  & \qquad\quad~ 
          C(x,y,z)    &\then& V(z) \ . 
          \qquad\qquad~~\ 
        \end{array}
        $$
        Note that the sentences in the background theory defined so far may allow a cell to be associated with 
        multiple values. We will show later how to detect this case using a suitable query.
  \item We finally enforce a graph structure representing the evolution of the configurations, 
        assuming that the machine starts with the existential configuration 
        contained in the visible relation $I$:
        $$
        \begin{array}{rcl}
          I(x)               &\then& \exists ~ x' ~ S^\exists(x,x') \\[0.5ex]
          S^\exists(x,x')    &\then& \exists ~ x_1 ~ S^\forall_1(x',x_1) \\[0.5ex]
          S^\exists(x,x')    &\then& \exists ~ x_2 ~ S^\forall_2(x',x_2) \\[0.5ex]
          S^\forall_1(x,x_1) &\then& \exists ~ x' ~ S^\exists(x_1,x') \\[0.5ex]
          S^\forall_2(x,x_2) &\then& \exists ~ x' ~ S^\exists(x_2,x')
        \end{array}
        \qquad
        \begin{array}{rcl}
          S^\exists(x,x')    &\then& \exists ~ y ~ \kw{First}^C(x,y) \\[2.5ex]
          S^\forall_1(x,x_1) &\then& \exists ~ y ~ \kw{First}^C(x,y) \\[2.5ex]
          S^\forall_2(x,x_2) &\then& \exists ~ y ~ \kw{First}^C(x,y) \ .
        \end{array}
        $$
        Note that the rules on the right side above trigger the creation of a first tape cell
        for each configuration, which in turn spawns copies of the entire tape.
\end{itemize}
Next, we explain how to detect badly-formed encodings of the computation tree. 
For this, we use additional visible relations 
$\err_C$, 
$\err_{I,\kw{first}}$, $\err_{I,\kw{last}}$, $\err_{I,\kw{adj}}$, 
$\err_{C,\kw{adj}}$, $\err_{S^\exists}$, $\err_{S^\forall_1}$, 
and $\err_{S^\forall_2}$, instantiated as follows.
\begin{itemize}
  \item The relation $\err_C$ is binary and contains all pairs
        of {\sl distinct} cell values from $V\times V$.
        This is used to check that every cell, in every 
        configuration, is associated with at most one value.
        The CQ below holds precisely when this latter property is violated:
        $$
          Q_C ~=~ \exists ~ x ~ y ~ z ~ z' ~~
                  C(x,y,z) ~\et~ C(x,y,z') ~\et~ \err_C(z,z') \ .
        $$
%  \item The relation $\err_{I,\kw{first}}$ is ternary and
%        contains all triples of values that cannot be associated 
%        with the first three cells in the initial configuration
%        (recall that the first two cells carry the symbol $\vdash$,
%         the third cell carries the symbol $\sqcup$, and $M$ starts 
%         with state $q_0$ on the second cell).
%        Formally, $\err_{I,\kw{first}}$ contains all the triples 
%        in $V\times V\times V$ except $(z_1,z_2,z_3)$, where 
%        $z_1=\big((\vdash,\lhd),(\vdash,q_0)\big)$, 
%        $z_2=\big((\vdash,\lhd),(\vdash,q_0),(\sqcup,\rhd)\big)$, 
%        and $z_3=\big((\vdash,q_0),(\sqcup,\rhd)\big)$.
%        Accordingly, we can detect whether the values of the first 
%        three cells in the initial configuration are badly-formed 
%        using the following CQ:
%        $$
%        \begin{aligned}
%          Q_{I,\kw{first}} ~=& ~ \exists ~ x ~ y ~ y' ~ y'' ~ z ~ z' ~ z'' \\
%                             & ~ I(x) ~\et~ C(x,y,z) ~\et~ C(x,y',z') ~\et~ C(x,y'',z'') ~\et~ \err_{I,\kw{first}}(z,z',z'') \\
%                             & ~ T_{\kw{first}}(y) ~\et~ T(y,y') ~\et~ T(y,y',y'') \ .
%        \end{aligned}
%        $$        
  \item The relation $\err_{I,\kw{first}}$ is also binary, and
        contains all pairs of values that cannot be associated 
        with the first two cells in the initial configuration
        (recall that the first two cells carry the symbols $\vdash$
         and $\sqcup$, and $M$ starts with state $q_0$ on the first cell).
        Formally, $\err_{I,\kw{first}}$ contains all the pairs
        in $V\times V$ except $(z_0,z_1)$, where 
        $z_0=(\vdash,q_0)$ and $z_1=(\sqcup,\rhd)$.
        Accordingly, we can detect whether the values of the first 
        two cells in the initial configuration are badly-formed 
        using the following CQ:
        $$
        \quad
        \begin{aligned}
          Q_{I,\kw{first}} ~= 
          & ~ \exists ~ x ~ y ~ y' ~ z ~ z' ~~ \\
          & ~ I(x) ~\et~ \kw{First}(y) ~\et~ T(y,y') ~\et~ C(x,y,z) ~\et~ C(x,y',z') ~\et~ \err_{I,\kw{first}}(z,z') \ .
        \end{aligned}
        $$        
%  \item The relation $\err_{I,\kw{last}}$ is also ternary, and
%        contains triples of values that cannot be associated 
%        with the last three cells in the initial configuration,
%        i.e., $\err_{I,\kw{last}} = (V\times V\times V) \setminus (z_{-2},z_{-1},z_{-1})$, 
%        where $z_{-2}=(\sqcup,\rhd)$ and $z_{-1}=(\dashv,\rhd)$.
%        We can detect whether the last three values in the initial configuration
%        are inconsistent with the CQ
%        $$
%        \begin{aligned}
%          Q_{I,\kw{last}} ~=& ~ \exists ~ x ~ y ~ y' ~ y'' ~ z ~ z' ~ z'' \\
%                            & ~ I(x) ~\et~ C(x,y,z) ~\et~ C(x,y',z') ~\et~ C(x,y'',z'') ~\et~ \err_{I,\kw{last}}(z,z',z'') \\
%                            & ~ T(y,y') ~\et~ T(y',y'') ~\et~ T_{\kw{last}}(y'') \ .
%        \end{aligned}
%        $$        
  \item Similarly, the relation $\err_{I,\kw{last}}$ contains pairs of values 
        that cannot be associated with the last two cells in the initial configuration,
        i.e., $\err_{I,\kw{last}} = (V\times V) \setminus (z_1,z_{-1})$, 
        where $z_1=(\sqcup,\rhd)$ is defined as before and $z_{-1}=(\dashv,\rhd)$.
        We can detect whether the last two values in the initial configuration
        are inconsistent using the CQ
        $$
        \quad
        \begin{aligned}
          Q_{I,\kw{last}} ~=
          & ~ \exists ~ x ~ y ~ y' ~ z ~ z' \\
          & ~ I(x) ~\et~ T(y,y') ~\et~ \kw{Last}(y') ~\et~ C(x,y,z) ~\et~ C(x,y',z') ~\et~ \err_{I,\kw{last}}(z,z') \ .
        \end{aligned}
        $$        
  \item The relation $\err_{I,\kw{adj}}$ contains pairs of 
        values that cannot appear on any two consecutive cells
        of the initial configuration, namely, 
        $\err_{I,\kw{adj}}$ contains all the pairs in $V\times V$, 
        but the following ones: $(z_0,z_1)$, $(z_1,z_1)$, $(z_1,z_{-1})$.
        This type of violation is checked with the CQ
        $$
          \qquad
          Q_{I,\kw{adj}} 
          ~=~ \exists ~ x ~ y ~ y' ~ z ~ z' ~~ 
              I(x) ~\et~ T(y,y') ~\et~ C(x,y,z) ~\et~ C(x,y',z') ~\et~ \err_{I,\kw{adj}}(z,z') \ .
        $$
  \item In a similar way we can check violations of labellings of consecutive cells 
        in every configuration. This is done with the binary visible relation 
        $\err_{C,\kw{adj}}$, instantiated with all pairs from $V\times V$ that 
        cannot be adjacent in an arbitrary configuration (for example, the pair
        $\big((a,\lhd),(b,\rhd)\big)$), and the CQ
        $$
          Q_{C,\kw{adj}} 
          ~=~ \exists ~ x ~ y ~ y' ~ z ~ z' ~~ T(y,y') ~\et~ C(x,y,z) ~\et~ C(x,y',z') ~\et~ \err_{C,\kw{adj}}(z,z') \ .
        $$
%  \item The relation $\err_{S^\exists}$ is used to check consistency along a transition
%        that departs from an existential configuration. 
%        It contains all pairs of cell values from $V\times V$ that cannot appear on the same
%        position of the tape at an existential configuration and its immediate successor
%        (of course, this relation depends on the transition function of the Turing machine).
%        A violation of the corresponding constraint can be exposed by the following CQ:
%        $$
%          Q_{S^\exists} ~=~ \exists ~ x ~ x' ~ y ~ z ~ z' ~~ 
%                            S^\exists(x,x') ~\et~ C(x,y,z) ~\et~ C(x',y,z') ~\et~ 
%                            \err_{S^\exists}(z,z') \ .
%        $$
  \item The relation $\err_{S^\exists}$ is used to check consistency along a transition
        that departs from an existential configuration. 
        It contains a quadruple of cell values $(z,z',z'',z''') \in V\times V\times V\times V$ 
        whenever it is {\sl not} possible to have an existential configuration where the labels
        $z,z',z''$ appear on three consecutive positions $y,y',y''$, together with
        a successor configuration that carries value $z'''$ at position $y'$.
        Of course, the content of this relation depends on the transition function of 
        the Turing machine.
        A violation of the corresponding constraint is exposed by the following CQ:
        $$
        \begin{aligned}
          Q_{S^\exists} ~=
          & ~ \exists ~ x ~ x' ~ y ~ y' ~ y'' ~ z ~ z' ~ z'' ~ z''' \\
          & ~ S^\exists(x,x') ~\et~ T(y,y') ~\et~ T(y',y'') ~\et~ \\
          & ~ C(x,y,z) ~\et~ C(x,y',z') ~\et~ C(x,y'',z'') ~\et~ C(x',y',z''') ~\et~ \\
          & ~ \err_{S^\exists}(z,z',z'',z''') \ .
        \end{aligned}
        $$
%  \item Similarly, the relation $\err_{S^\forall_1}$ (resp., $\err_{S^\forall_2}$)
%        contains those pairs of values that cannot appear on the same 
%        position of the tape of a universal configuration and that of 
%        the first (resp., second) successor. The corresponding CQs are
%        $$
%        \begin{aligned}
%          Q_{S^\forall_1} &~=~ \exists ~ x ~ x_1 ~ y ~ z ~ z' ~~ 
%                               S^\forall_1(x,x_1) \,\et\, C(x,y,z) \,\et\, C(x_1,y,z') \,\et\,
%                               \err_{S^\forall_1}(z,z')  \\[1ex]
%          Q_{S^\forall_2} &~=~ \exists ~ x ~ x_2 ~ y ~ z ~ z' ~~ 
%                               S^\forall_2(x,x_2) \,\et\, C(x,y,z) \,\et\, C(x_2,y,z') \,\et\,
%                               \err_{S^\forall_2}(z,z') \ .
%        \end{aligned}
%        $$
  \item Similarly, the relation $\err_{S^\forall_1}$ (resp., $\err_{S^\forall_2}$)
        contains quadruples of values that cannot appear on positions $y-1,y,y+1$
        of some universal configuration $x$, and at position $y$ of the
        first (resp., second) successor configuration. 
        The corresponding CQs $Q_{S^\forall_1},Q_{S^\forall_2}$ are defined by
        $$
        \begin{aligned}
          Q_{S^\forall_i} ~=
          & ~ \exists ~ x ~ x' ~ y ~ y' ~ y'' ~ z ~ z' ~ z'' ~ z''' \\
          & ~ S^\forall_i(x,x') ~\et~ T(y,y') ~\et~ T(y',y'') ~\et~ \\
          & ~ C(x,y,z) ~\et~ C(x,y',z') ~\et~ C(x,y'',z'') ~\et~ C(x_1,y',z''') ~\et~ \\
          & ~ \err_{S^\forall_i}(z,z',z'',z''') \ .
        \end{aligned}
        $$
\end{itemize}
It remains to check whether the Turing machine $M$ reaches the rejecting state $q_{\kw{rej}}$
along some path of the computation tree. This can be done by introducing a last visible 
relation $V_{\kw{rej}}$ that contains all cell values of the form $(a,q_{\kw{rej}})$,
for some $a\in\Sigma$. The CQ that checks this property is
$$
  Q_{\kw{rej}} ~=~ \exists ~ x ~ y ~ z ~~ C(x,y,z) ~\et~ V_{\kw{rej}}(z) \ .
$$
The final query is thus a disjunction of all the above CQs:
$$
  Q ~=~ Q_C \vel Q_{I,\kw{first}} \vel Q_{I,\kw{last}} \vel Q_{I,\kw{adj}} \vel Q_{C,\kw{adj}} 
            \vel Q_{S^\exists} \vel Q_{S^\forall_1} \vel Q_{S^\forall_2} \vel Q_{\kw{rej}} \ .
$$

\smallskip
We are now ready to give the reduction.
Denote by $\visinst_M$ the instance that captures the intended semantics of the
visible relations $T$, $\kw{First}$, $\kw{Last}$, $V$, $I$, 
$\err_C$, $\err_{I,\kw{first}}$, $\err_{I,\kw{last}}$, $\err_{I,\kw{adj}}$,
$\err_{C,\kw{adj}}$, $\err_{S^\exists}$, $\err_{S^\forall_1}$, and $\err_{S^\forall_2}$. 
We have described these semantics above, and argued why they
can be created in polynomial time.
Below, we prove that the Turing machine $M$ has a successful computation tree
where all paths visit the control state $q_{\kw{acc}}$ if and only if $\PSB(Q,\calC,\bfS,\visinst_M)=\false$.

Suppose that $M$ has a successful computation tree $\rho$. 
On the basis of $\rho$,
and by following the intended semantics of the hidden relations 
$C$, $T^C$, $\kw{First}^C$, $S^\exists$, $S^\forall_1$, $S^\forall_2$, 
we can easily construct a full instance $\fullinst$ that satisfies all the 
sentences in $\calC$, and agrees with $\visinst_M$ on the visible part. 
Furthermore, because we correctly encode a successful computation tree of $M$, 
the instance $\fullinst$ violates every disjunct of $Q$, and hence 
$\PSB(Q,\calC,\bfS,\visinst_M)=\false$.

Conversely, suppose that $\PSB(Q,\calC,\bfS,\visinst_M)=\false$.
Let $\fullinst$ be an $\bfS$-instance that agrees with $\visinst_M$
on the visible part, satisfies the sentences in $\calC$, and violates
every disjunct of the UCQ $Q$. We first construct from $\fullinst$ 
a graph, where every node encodes a configuration and, depending
on whether the configuration is existential or universal, it 
has either one or two outgoing edges that represent some 
transitions of $M$.
We will then argue that the unfolding of this graph from its 
initial node correctly represents an accepting computation 
tree of $M$.
The nodes of the graph are identified by the values $x$ that
appear in facts of $\fullinst$ of the form 
$S^\exists(x,x')$, $S^\forall_1(x,x')$, or $S^\forall_2$. 
The initial node is identified by the unique 
value $x_0$ in the singleton visible relation $I$. 

Thanks to the background theory $\calC$, every configuration identifier $x$ 
also appears in the first column of the hidden relation 
$\kw{First}^C$, and there exist similar occurrences in 
$T^C$ and $C$, one for each cell of the tape.
The content of $C$ can then be used to determine the labeling of
the tape cells, the control state, and the head position for 
each configuration, as indicated by the intended semantics. 
For example, we set the content of a tape cell $y$ in some configuration 
$x$ to be $a$ whenever there is a fact of the form $C(x,y,z)$, 
with $z$ among $(a,q)$, $(a,\lhd)$, or $(a,\rhd)$.
We observe this is well-defined (that is, every tape position 
$y$ at every configuration $x$ has exactly one associated value)
thanks to the sentences
$T^C(x,y,y') \then \exists ~ z ~ C(x,y,z)$ and
$C(x,y,z) \then V(z)$, and thanks to the fact 
that the query $Q_C$ is violated.
Moreover, because the CQs $Q_{I,\kw{first}}$,
$Q_{I,\kw{last}}$, and $Q_{I,\kw{adj}}$ are also violated, the 
configuration at the initial node $x_0$ is correct, that is, 
encodes the tape content $\vdash\sqcup\dots\sqcup\dashv$, 
with control state $q_0$, and head on the first position.

Next, the edges of the graph are constructed using the hidden relations 
$S^\exists$, $S_1^\forall(x,x_1)$, and $S_2^\forall(x,x_2)$ of $\fullinst$.
Formally, for every existential node $x$, the sentences constraining
$S^\exists(x,x')$ imply the existence of at least one node $x'$ 
forming a fact $S^\exists(x,x')$.
We can thus chose any such node $x'$ and declare $(x,x')$ to be
an edge of the graph. A similar argument applies to the universal
nodes, with the only difference that we now introduce two edges 
instead of one, and there is no choice.
Moreover, using the assumption that the CQs $Q_{C,\kw{adj}}$, 
$Q_{S^\exists}$, $Q_{S_1^\forall}$, and $Q_{S_2^\forall}$
are all violated, one can easily verify that the thus defined 
edges represent valid transitions between the encoded configurations.
The above arguments imply that the unfolding of the graph 
from the initial node $x_0$ results in a valid computation tree of $M$. 
Finally, because the CQ $Q_{\kw{rej}}$ is also violated, 
the computation tree must be accepting.

\smallskip
We have just shown the $\exptime$-hardness result for the data complexity 
of the $\PSB$ problem, using a UCQ as query. To finish the proof of Theorem 
\ref{th:psb-instance-hard}, we show that $\PSB$ problems for UCQs can be 
reduced to analogous problems for CQs.

\begin{lemma}\label{lem:ucqtocqpsbinstanceid}
Let $Q = \bigcup Q_i$ be a Boolean UCQ, let $\calC$ be a set of sentences over a 
schema $\bfS$, and let $\visinst$ be an instance for the visible part of $\bfS$. 
There exist a schema $\bfS'$, a CQ $Q'$, a set $\calC'$ of sentences, and an 
$\bfS'_v$-instance $\visinst'$, all having polynomial size with respect to the 
original objects $\bfS$, $Q$, $\calC$, and $\visinst$, such that 
$\PSB(Q,\calC,\bfS,\visinst) = \true$ iff $\PSB(Q',\calC',\bfS',\visinst') = \true$.
\par\noindent
Moreover, the transformation preserves all logical languages considered 
for background theories in our results (e.g., inclusion dependencies).
\end{lemma}

\begin{proof}
The general idea is as follows.
For every visible (resp., hidden) relation $R$ of $\bfS$ of arity $k$, 
we add to $\bfS'$ a corresponding visible (resp., hidden) relation $R'$ 
of arity $k+1$. 
The idea is that the additional attribute of $R'$ represents a truth value, 
e.g. $0$ or $1$, which indicates the presence of a tuple in the original relation $R$. 
For example, the fact $R'(\bar a,1)$ indicates the presence of the tuple 
$\bar a$ in the relation $R$, but $R'(\bar a,0)$ does not. 
The sentences $\calC$ will be rewritten accordingly, so as to propagate these truth values. 
We can then simulate the disjunctions in the query $Q$ by using conjunctions and an 
appropriate look-up table $\kw{Or}$.
This technique has been used in a number of previous works, for example \cite{georgchristos}, and
will also be used later in this paper.
However, due to the nature of the $\PSB$ problem, we also need to add dummy facts
$R'(\bot,\dots,\bot,0)$ in order to correctly transfer the validity from the UCQ 
$Q$ to the CQ $Q'$. We give below the full details.

As mentioned, the new schema $\bfS'$ contains a copy $R'$ of each relation $R$ in $\bfS$, 
where $R'$ is visible iff $R$ is visible, and $R'$ has arity $k+1$ iff $R$ has arity $k$.
In addition, the schema $\bfS'$ contains the visible relations 
$\kw{Or}$, $\kw{Zero}$, $\kw{One}$ of arities $3$, $0$, $0$, respectively,
and some other visible relations $\kw{Bottom}_k$ of arity $k+1$, for all $k$ 
ranging from $0$ to the maximal arity in $\bfS$.

Let us now describe the visible instance $\visinst'$ constructed from $\visinst$.
We choose some fresh values $0$, $1$, and $\bot$ that do not belong to the active 
domain of $\visinst$. 
First, we include in $\visinst'$ the facts $\kw{Or}(1,1,1)$, $\kw{Or}(1,0,1)$, 
$\kw{Or}(0,1,1)$, $\kw{Zero}(0)$, $\kw{One}(1)$, and $\kw{Bottom}_k(\bot,\dots,\bot,0)$
for all arities $k$.
Then, for each visible relation $R$ of $\bfS$, we add to $\visinst'$ the fact 
$R(\bar a,1)$ whenever $R(\bar a)$ is a fact in $\visinst$. 

As for the sentences in the background theory, we proceed as follows. If 
$$
  R(\bar x) ~\then~ \exists \bar y ~ S_1(\bar z_1) \et \ldots \et S_m(\bar z_m)
$$
is a sentence in $\calC$, with $\bar z_1,\ldots,\bar z_m$ sequences of variables or 
constants from $\bar x,\bar y$, then we add to $\calC'$ a corresponding sentence
$$
  R'(\bar x,b) ~\then~ \exists \bar y ~ S'_1(\bar z_1,b) \et \ldots \et S'_m(\bar z_m,b) \ .
$$
Furthermore, for each relation $R$ of arity $k$ in $\bfS$, we introduce the ID
$$
  \kw{Bottom}_{k+1}(x_1,\dots,x_k,y) ~\then~ R'(x_1,\dots,x_k,y) \ .
$$
Recall that $\kw{Bottom}_{k+1}$ is a visible relation of $\bfS'$ that contains the 
single fact $(\bot,\dots,\bot,0)$. Therefore, the effect of the above sentence
is to introduce dummy facts $R'(\bot,\dots,\bot,0)$ for each (visible or hidden)
relation $R'$. 

It now remains to transform the UCQ $Q$ into a CQ $Q'$.
Let $Q_1,\dots,Q_n$ be the disjuncts (CQs) in $Q$. 
We define
$$
  Q' ~=~ \exists \: b_1 \ldots b_n ~ b'_0 ~ b'_1 ~ \ldots ~ b'_n ~\:
         \bigwedge_i Q'_i(b_i) ~\et~
         \kw{Zero}(b'_0) ~\et~ \kw{One}(b'_n) ~\et~
         \bigwedge_i \kw{Or}(b'_{i-1},b_i,b'_i)
$$
where each $Q'_i$ is obtained from the $i$-th disjunct 
$Q_i = \exists \bar y ~ S_1(\bar z_1) ~\et~ \ldots ~\et~ S_m(\bar z_m)$ of $Q$
by letting $Q'_i(b_i) = \exists \bar y ~ S'_1(\bar z_1,b_i) ~\et~ \ldots ~\et~ S'_m(\bar z_m,b_i)$.
Note that the presence of the facts $R'(\bot,\ldots,\bot,0)$ in every instance that extends 
$\visinst'$ and satisfies $\calC'$ guarantees that the rewritten CQs $Q'_i(b_i)$ can 
always be satisfied by letting $b_i=0$. In particular, the sub-query 
$\bigwedge_i Q'_i(b_i)$ holds at least with all the $b_i$'s set to $0$.
The remaining part of the query $Q'$ precisely requires that at least one of those 
$b_i$'s is set to $1$.

We are now ready to prove that 
$\PSB(Q,\calC,\bfS,\visinst) = \true$ iff $\PSB(Q',\calC',\bfS',\visinst') = \true$.
Suppose that $\PSB(Q',\calC',\bfS',\visinst') = \true$ and consider an $\bfS$-instance 
$\fullinst$ that satisfies the sentences in $\calC$ and such that 
$\visible(\fullinst) = \visinst$. 
Without loss of generality, we can assume that the active domain of $\fullinst$ does not
contain the values $0$, $1$, and $\bot$.
We can easily transform $\fullinst$ into an $\bfS'$-instance $\fullinst'$ 
by expanding all facts with the additional attributed value $1$ 
and by adding new facts of the form
$R'(\bot,\ldots,\bot,0)$, for all relations $R'\in\bfS'$, 
together with the visible facts
$\kw{Or}(1,1,1)$, $\kw{Or}(1,0,1)$, $\kw{Or}(0,1,1)$, $\kw{Zero}(0)$, $\kw{One}(1)$,
and $\kw{Bottom}_k(\bot,\dots,\bot,0)$ for all arities $k$.
One easily verifies that $\fullinst'$ satisfies the sentences in $\calC'$
and agrees with $\visinst'$ on the visible part. Since $\PSB(Q',\calC',\bfS',\visinst') = \true$,
we know that $\fullinst'$ also satisfies the query $Q'$ and, in particular, 
it satisfies one of the conjuncts $Q'_i(b_i)$ of $Q'$ with $b_i=1$.
This implies that $\fullinst$ satisfies the corresponding Boolean CQ $Q_i$, and hence $Q$ as well.

Conversely, suppose that $\PSB(Q,\calC,\bfS,\visinst) = \true$ and 
consider an $\bfS'$-instance $\fullinst'$ that satisfies the sentences
in $\calC'$ and such that $\visible(\fullinst') = \visinst'$.
By selecting from $\fullinst'$ only the facts of the form $R'(\bar a,1)$, with $R\in\bfS$,
and by projecting away the last attribute, we obtain an $\bfS$-instance $\fullinst$
that satisfies the sentences in $\calC$ and such that $\visible(\fullinst) = \visinst$.
Since $\PSB(Q,\calC,\bfS,\visinst) = \true$, we know that $\fullinst$ satisfies 
at least one of the disjuncts $Q_i$ of $Q$. 
This immediately implies that $\fullinst'$ satisfies the CQ $Q'_i(b_i)$ with $b_i=1$.
As for the remaining conjuncts of the query $Q'$, we recall that $\fullinst'$ must 
contain facts of the form $R'(\bot,\dots,\bot,0)$ for all relations $R'$. 
Thanks to these facts, the CQs $Q'_j(b_j)$ hold on $\fullinst'$ with $b_j=0$,
for all $j\neq i$, and hence $Q'$ holds on $\fullinst'$ as well.
\end{proof}

Applying the lemma above, we have proven Theorem \ref{th:psb-instance-hard}.
\end{proof}

We note that the above lower bound for data complexity makes use of a schema with arity above $2$,
even for CQs. See, for example, the ternary relation  $C$.
 We do not know whether our lower bound still holds
for the arity $2$ case.
Our results contrasts with results of Franconi et al.~\cite{FranconiIS11}, which  show that
the data complexity lies in $\conp$ (and can be $\conp$-hard) for certain
description logics over arity $2$. 
%We believe that if move up from IDs to linear TGDs (which can not be captured in the formalism
%of \cite{FranconiIS11}, we can adapt the argument for the above 
%result to show $\exptime$-hardness even for arity $2$.

\medskip
We now turn to the combined complexity and show that the $\twoexp$ upper bound 
of Theorem \ref{thm:gnfoposinstancedecid} is tight even for IDs.

\begin{theorem}\label{thm:psbidtwoexphardcomb}
Checking $\PSB(Q,\calC,\bfS,\visinst)$, where $Q$ ranges over CQs and 
$\calC$ over sets of inclusion dependencies, is $\twoexp$-hard for 
combined complexity.
\end{theorem}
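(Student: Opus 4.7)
My plan is to reduce from the acceptance problem of alternating exponential-space Turing machines, which is $\twoexp$-complete. Given such a machine $M$ with tape bound $2^n$ and $n$ polynomial in $|M|$, I would construct in polynomial time a schema $\bfS_M$, a set $\calC_M$ of IDs, a Boolean CQ $Q_M$, and a visible instance $\visinst_M$ such that $\PSB(Q_M,\calC_M,\bfS_M,\visinst_M) = \false$ iff $M$ accepts the empty input. The construction would follow the blueprint of the proof of Theorem \ref{th:psb-instance-hard}, but now exploit the combined-complexity setting in which the schema, constraints, query, and visible instance may all depend on $M$, and in particular the relations may have arity polynomial in $n$.

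Tape positions would no longer be distinct elements enumerated in the visible instance, but $n$-bit binary addresses, i.e., $n$-tuples over the constants $0$ and $1$. Accordingly, the hidden relations would change arity: $C(x,\bar y,z)$ of arity $1{+}n{+}1$ would say that in configuration $x$ the cell at address $\bar y$ has value $z$, and $T^C, T^C_0$ would become of arity $1{+}2n$. The IDs that seed the tape of each configuration, create the tape chain, and connect parent and child configurations would be direct generalizations of those in Theorem \ref{th:psb-instance-hard}, with cell coordinates replaced by $n$-tuples.

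The main obstacle is that the visible instance can no longer list all $2^n$ successor pairs on $n$-bit addresses. My idea is to push the check of the binary-successor relation into the error-detecting UCQ: using the constants $0$ and $1$, which queries (unlike IDs) may reference, I would add a polynomial-size family of CQs that fire exactly when $T^C(x,\bar y,\bar y')$ violates binary increment at some position $i$. Concretely, for each $i \leq n$, one CQ would assert $y_j = 1$ for all $j<i$ together with $y'_i = y_i$ (the carry should have flipped $y_i$ but did not), while others would pick a witness $j<i$ with $y_j = 0$ and then mismatch $\bar y'$ against $\bar y$ at position $i$ or beyond (split by the two values of each bit, using repeated variables and the constants $0,1$ so that only equalities are needed). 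Together these catch every violation of binary increment, and analogous bit-level CQs pin down the initial address as the all-zeros tuple.

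The remaining error CQs --- initial tape content, cell-value consistency across adjacent cells, transition correctness for existential and universal configurations, and presence of the rejecting state --- would be straightforward generalizations of $Q_{\bot,0}, Q_\bot, Q_C, Q_{S^\exists}, Q_{S^\forall_1}, Q_{S^\forall_2}, Q_{\kw{rej}}$ with cell coordinates replaced by $n$-tuples. The correctness argument would mirror the previous proof: an accepting computation tree of $M$ yields a hidden instance satisfying $\calC_M$, agreeing with $\visinst_M$ on the visible part, and violating every disjunct of the error UCQ; conversely, any hidden instance witnessing $\PSB = \false$ can, by the chase characterization of $\PSB$ together with the failure of each error CQ, be decoded into a successful computation tree of $M$. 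A final application of Lemma \ref{lem:ucqtocqpsbinstanceid} would turn the UCQ $Q_M$ into a single CQ while preserving polynomial size and keeping the constraints within IDs.
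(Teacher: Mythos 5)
You take a genuinely different route to the one real obstacle in this reduction --- addressing an exponentially long tape with only polynomially many IDs. The paper represents the $2^n$ cells as the leaves of a full binary tree of height $n$, built from low-arity hidden relations $N_i$, $E_{i,\kw{left}}$, $E_{i,\kw{right}}$ seeded from a single visible root, and checks cell adjacency with a level-indexed family of CQs $Q_{\kw{adj},i}$ that locate the branching point of two root-to-leaf paths using visible direction markers $D_{\kw{left}}$, $D_{\kw{right}}$. You instead address cells by $n$-tuples of bits carried inside relations of arity $O(n)$ and verify binary increment with $O(n^2)$ constant-laden error CQs. Both encodings are legitimate in the combined-complexity setting, and your increment gadget (the repeated-variable trick for ``the carry should have flipped $y_i$ but did not'', plus constant substitutions for the copy cases) does characterize the successor relation correctly \emph{on bit-valued tuples}; the paper's tree keeps arities constant and keeps constants out of the query, at the price of more machinery.

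As written, though, your reduction has a gap that breaks the backward direction: nothing forces the address coordinates of $T^C$, $T^C_0$, $C$, etc.\ to actually take values in $\{0,1\}$. The IDs that grow the tape chain, such as $T^C(x,\bar y,\bar y') \then \exists \bar y''~T^C(x,\bar y',\bar y'')$, introduce fresh nulls in the address positions, and every one of your increment-violation CQs matches only when the relevant coordinates equal the constants $0$ or $1$; on a null they simply do not fire. A counterexample instance can therefore populate every configuration with garbage addresses, evade all the bit-level error CQs, and also defeat the transition checks $Q_{S^\exists}$, $Q_{S^\forall_1}$, $Q_{S^\forall_2}$ (which compare cells of consecutive configurations \emph{at the same address}, and distinct configurations would then use disjoint null addresses). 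Such an instance violates $Q$ without encoding any computation of $M$, so $\PSB$ would come out $\false$ even for rejecting machines. The fix is exactly the device the paper already uses for cell values via $C(x,y,z)\then V(z)$: introduce a visible unary relation $B$ with instance $\{0,1\}$ and add IDs projecting each address coordinate of each hidden relation into $B$. With that addition (and a decision about the all-ones address, e.g.\ letting the successor wrap around or fixing the last cell as its own successor, as the paper does for its rightmost cell), your argument goes through; the remaining pieces --- seeding $T^C_0$ at $0^n$, the generalized error CQs, and the final appeal to Lemma~\ref{lem:ucqtocqpsbinstanceid}, whose construction already tolerates constants in the CQs --- match the paper's structure.
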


\begin{proof}
%\gabriele{Super-simplified proof version, no tree-like}
This proof builds up on ideas from the previous proof for Theorem \ref{th:psb-instance-hard}.
Specifically, we reduce the acceptance problem for an alternating $\expspace$ Turing machine 
$M$ to the negation of $\PSB(Q,\calC,\bfS,\visinst)$, where $Q$ is a Boolean UCQ and
$\calC$ consists of inclusion dependencies. Note that to further reduce the problem to a 
Positive Query Implication problem with a Boolean CQ, one can exploit Lemma \ref{lem:ucqtocqpsbinstanceid}.

The additional technical difficulty here is to encode a tape of exponential size.
Of course, this cannot be done succinctly using an instance with visible relations. 
However, we can represent the exponential tape by a set of tuples of bits.
More precisely, given an alternating $\expspace$ Turing machine $M$ and an input
for $M$ of length $n$, we identify each cell of the tape of $M$ by an $n$-tuple of bits.
Note that, differently from the reduction in Theorem \ref{th:psb-instance-hard}, here 
we can let the schema, the sentences, and the query depend on $M$ and $n$, 
since the goal here is to prove a lower bound for combined complexity.

For the sake of simplicity, we first explain how to create a single tape of exponential
length, without being concerned about the content of the cells and the different configurations
that can be reached by $M$.
For this, we introduce three visible relations $\kw{Zero}$, $\kw{One}$, and $\kw{Bit}$, 
instantiated with $\{0\}$, $\{1\}$, and $\{0,1\}$, respectively. 
We also introduce hidden relations $T_i,T_{i,\kw{zero}},T_{i,\kw{one}}$ of arity $i$, 
for all $i=1,\dots,n$, and an additional hidden relation $T_0$ of arity $0$. 
Intuitively, the intended semantics of each relation $T_i$ is to contain all $i$-tuples of bits,
while $T_{i,\kw{zero}}$ (resp., $T_{i,\kw{one}}$) is the restriction of $T_i$ to the
tuples ending with $0$ (resp., $1$). We enforce this semantics using a simple induction 
on $i=1,\dots,n$ and the following inclusion dependencies:
$$
\begin{array}{l}
\begin{array}{rcl}
  \kw{true} &\then& T_0() \\[0.5ex]
  (\forall j\le i) \qquad T_i(y_1,\dots,y_i) &\then& \kw{Bit}(y_j)
\end{array}
\\[3.5ex]
\begin{array}{rcl}
  T_{i-1}(y_1,\dots,y_{i-1}) &\then& \exists y_i ~ T_{i,\kw{zero}}(y_1,\dots,y_i) \\[0.5ex]
  T_{i-1}(y_1,\dots,y_{i-1}) &\then& \exists y_i ~ T_{i,\kw{one}}(y_1,\dots,y_i)
\end{array}
\end{array}
\quad
\begin{array}{rcl}
  T_{i,\kw{zero}}(y_1,\dots,y_i) &\then& \kw{Zero}(y_i) \\[0.5ex]
  T_{i,\kw{one}}(y_1,\dots,y_i) &\then& \kw{One}(y_i) \\[1.75ex]
  T_{i,\kw{zero}}(y_1,\dots,y_i) &\then& T_i(y_1,\dots,y_i) \\[0.5ex]
  T_{i,\kw{one}}(y_1,\dots,y_i) &\then& T_i(y_1,\dots,y_i) \ .
\end{array}
$$
It is clear that every instance satisfying the above sentences will have $T_n=\kw{Bit}^n$,
so the tuples in $T_n$ can be used to represent the cells of a tape of exponential length.

Cells are naturally ordered in the tape, and so must be the tuples in $T_n$. We use the lexicographic 
order on $n$-tuples of bits, and show how to access this order by means of a formula.
Formally, we need to write a UCQ that checks whether two cells, identified by some $n$-tuples
$\bar y=(y_1,\dots,y_n)$ and $\bar y'=(y'_1,\dots,y'_n)$ in $T_n$, are adjacent according to 
the lexicographic ordering. 
A well-known technique consists in determining the smallest index $1\le i\le n$ such that $y_i\neq y'_i$.
Then, given such $i$, one verifies that $y_i=0$, $y'_i=1$, $y_j=1$, and $y'_j=0$ for all $j>i$.
We give beforehand the formula that checks these conditions.
The formula is the disjunction over all $i=1,\ldots,n$ of the following CQs:
%\gabriele{I guess it is not a problem to use equalities here (eg. $y_j=y'_j$).
 %         One could alternatively use a visible relation $\kw{Eq}=\{(0,0),(1,1)\}$.}
%michael: explained that this is a shorthand
$$
\begin{aligned}
  Q_{\kw{adj},i}(\bar y,\bar y') ~=
  \bigwedge_{1\le j<i} (y_j = y'_j) ~\et~ 
  \kw{Zero}(y_i) ~\et~ \kw{One}(y'_i) ~\et
  \bigwedge_{i<j\le n} \kw{One}(y_j) ~\et
  \bigwedge_{i<j\le n} \kw{Zero}(y'_j) \ .
\end{aligned}
$$
Here for convenience of description we allow equalities in a CQ, but
they can be replaced in favor of an explicit substitution.
It is not difficult to see that the UCQ $\bigvee_{1\le i\le n}Q_{\kw{adj},i}$ defines 
precisely those pairs of tuples that are consecutive in the lexicographic order.
Moreover, we will need to easily identify the first and the last cell of the tape.
For this we introduce two visible relations $\kw{First}$ and $\kw{Last}$, both of arity $n$, 
and instantiate them with the singletons $\{(0,\dots,0)\}$ and $\{(1,\dots,1)\}$, respectively.

\smallskip
Now that we know how to represent exponentially many cells in the tape and check 
their adjacency, we proceed as in the proof of Theorem \ref{th:psb-instance-hard}.
We begin by encoding configurations of $M$. Intuitively, the goal is to create
a copy $C$ of the relation $T_n$, expanded with configuration identifiers and cell values,
in such a way that a fact of the form $C(x,y_1,\dots,y_n,z)$ denotes the existence of 
a configuration identified by $x$, where the tape cell represented by $\bar y=(y_1,\dots,y_n)$ 
carries the value $z$.
As usual (cf.~proof of Theorem \ref{th:psb-instance-hard}), we define cell values as 
elements from a visible unary relation
$V = \Sigma_Q \uplus \Sigma_\lhd \uplus \Sigma_\rhd$,
where $\Sigma$ is the alphabet of the Turing machine, 
$\Sigma_Q = \Sigma\times Q$.
$\Sigma_\lhd = \Sigma\times\{\lhd\}$, 
$\Sigma_\rhd = \Sigma\times\{\rhd\}$, 
$Q$ is the set of its control states, and
$\lhd,\rhd$ are fresh symbols.
To correctly instantiate the relation $C$, we create also copies of the relations 
$T_i,T_{i,\kw{zero}},T_{i,\kw{one}}$, expanded with configuration identifiers, and 
enforce constraints analogous to the ones introduced in the sentences above.
More precisely, we have the following hidden relations:
$C$ of arity $n+2$,
$T^C_i$ of arity $i+1$, for all $i=0,\dots,n$, 
$T^C_{i,\kw{zero}}$ and $T^C_{i,\kw{one}}$ of arity $i+1$, for all $i=1,\dots,n$.
We have the following sentences  for all $i=1,\dots,n$:
$$
\!\!
\begin{array}{l}
  (\forall j\le i) \qquad T^C_i(x,y_1,\dots,y_i) ~\then~ \kw{Bit}(y_j)
\\[2ex]
\begin{array}{rclrcl}
  T^C_n(x,y_1,\dots,y_n) &\!\!\!\!\then&\!\!\!\! \exists z ~ C(x,y_1,\dots,y_n,z)
  &~
  T^C_{i,\kw{zero}}(x,y_1,\dots,y_i) &\!\!\!\!\then&\!\!\!\! \kw{Zero}(y_i) 
  \\[0.5ex]
  C(x,y_1,\dots,y_n,z)   &\!\!\!\!\then&\!\!\!\! V(z)
  &~
  T^C_{i,\kw{one}}(x,y_1,\dots,y_i) &\!\!\!\!\then&\!\!\!\! \kw{One}(y_i) 
  \\[2ex]
  T^C_{i-1}(x,y_1,\dots,y_{i-1}) &\!\!\!\!\then&\!\!\!\! \exists y_i ~ T^C_{i,\kw{zero}}(x,y_1,\dots,y_i)
  &~
  T^C_{i,\kw{zero}}(x,y_1,\dots,y_i) &\!\!\!\!\then&\!\!\!\! T^C_i(x,y_1,\dots,y_i) 
  \\[0.5ex]
  T^C_{i-1}(x,y_1,\dots,y_{i-1}) &\!\!\!\!\then&\!\!\!\! \exists y_i ~ T^C_{i,\kw{one}}(x,y_1,\dots,y_i)
  &~ 
  T^C_{i,\kw{one}}(x,y_1,\dots,y_i) &\!\!\!\!\then&\!\!\!\! T^C_i(x,y_1,\dots,y_i).
\end{array}
\end{array}
$$
Note that the analog of the sentence $\kw{true} \then T_0()$ is missing here.
This will be given later, when we will explain how new configurations are created
to simulate a computation tree of $M$. For the moment it suffices to observe that,
in every instance that satisfies the above sentences, as soon as $T^C_0$ contains
a configuration identifier $x$, then $T^C_n$ contains all tuples of the form
$(x,y_1,\dots,y_n)$, with $(y_1,\dots,y_n)\in\kw{Bit}^n$,
and $C$ specifies at least one value $z$ for each configuration identifier $x$
and each cell $(y_1,\dots,y_n)$.

We now turn towards the encoding of the computation tree of $M$. 
This is almost the same as in the proof of Theorem \ref{th:psb-instance-hard}.
We introduce a visible unary relation $I$, which contains the identifier $x_0$ of 
the initial existential configuration, and three hidden binary relations 
$S^\exists$, $S^\forall_1$, and $S^\forall_2$. A fact of the form $S^\exists(x,x')$ 
(resp., $S^\forall_1(x,x_1)$, $S^\forall_2(x,x_1)$) represents a transition from an 
existential (resp., universal) configuration $x$ to a universal (resp., existential) 
configuration $x'$ (resp., $x_1$, $x_2$). We then include the following
sentences in the background theory:
$$
\begin{array}{rcl}
  I(x)               &\then& \exists ~ x' ~ S^\exists(x,x') \\[0.5ex]
  S^\exists(x,x')    &\then& \exists ~ x_1 ~ S^\forall_1(x',x_1) \\[0.5ex]
  S^\exists(x,x')    &\then& \exists ~ x_2 ~ S^\forall_2(x',x_2) \\[0.5ex]
  S^\forall_1(x,x_1) &\then& \exists ~ x' ~ S^\exists(x_1,x') \\[0.5ex]
  S^\forall_2(x,x_2) &\then& \exists ~ x' ~ S^\exists(x_2,x')
\end{array}
\qquad
\begin{array}{rcl}
  S^\exists(x,x')    &\then& T^C_0(x) \\[2.5ex]
  S^\forall_1(x,x_1) &\then& T^C_0(x) \\[2.5ex]
  S^\forall_2(x,x_2) &\then& T^C_0(x) \ .
\end{array}
$$
Intuitively, the rules on the left enforce the existence of a transition graph
where $x_0\in I$ is the initial node and every node has one or two outgoing edges,
depending on whether it is existential or universal.
The rules on the right trigger the instantiation of the tables $T^C_n$ and $C$, 
with the intended goal of representing the content of the tape associated with
each node/configuration. As usual, the unfolding of the transition graph from 
the initial node yields a tree, which should represent a computation of $M$.

\smallskip
It remains to describe how we detect badly-formed encodings of computations 
of $M$. For this, we introduce new visible relations 
$\err_C$, $\err_{I,\kw{first}}$, $\err_{I,\kw{last}}$, $\err_{I,\kw{adj}}$,
$\err_{C,\kw{adj}}$, $\err_{S^\exists}$, $\err_{S^\forall_1}$, and $\err_{S^\forall_2}$, 
whose instances are defined exactly as in the proof of Theorem \ref{th:psb-instance-hard}.
\begin{itemize}
  \item The relation $\err_C$ is binary and contains 
        all pairs of distinct values from $V\times V$.
        This is used to detect multiple values associated with the same cell:
        $$
          Q_C ~=~ \exists ~ x ~ \bar y ~ z ~ z' ~~
                  C(x,\bar y,z) ~\et~ C(x,\bar y,z') ~\et~ \err_C(z,z') \ .
        $$
  \item The relation $\err_{I,\kw{first}}$ contains all pairs in $V\times V$
        but $(z_0,z_1)$, where $z_0=(\vdash,q_0)$ and $z_1=(\sqcup,\rhd)$.
        This is used to detect wrong values associated with the first two cells 
        of the initial configuration:
        $$
        \begin{aligned}
          Q_{I,\kw{first}} ~= 
          & ~ \exists ~ x ~ \bar y ~ \bar y' ~ z ~ z' ~~ \\
          & ~ I(x) ~\et~ \kw{First}(\bar y) ~\et~ 
              \bigvee\nolimits_{1\le i\le n} Q_{\kw{adj},i}(\bar y,\bar y') ~\et~ \\
          & ~ C(x,\bar y,z) ~\et~ C(x,\bar y',z') ~\et~ \err_{I,\kw{first}}(z,z') \ .
        \end{aligned}
        $$        
        Note that, strictly speaking, the above query is not a UCQ, but can be easily
        normalized into a UCQ of polynomial size. The same remark applies to all 
        remaining queries.
  \item Similar visible relations $\err_{I,\kw{last}}$, $\err_{I,\kw{adj}}$, $\err_{C,\kw{adj}}$ 
        and UCQs $Q_{I,\kw{last}}$, $Q_{I,\kw{adj}}$, $Q_{C,\kw{adj}}$ are used
        to detect wrong values, respectively, for the last two cells of 
        the initial configuration, for any two adjacent cells of the initial configuration, 
        and for any two adjacent cells of an arbitrary configuration.
  \item To detect the violations that involve values associated with the same 
        position of the tape but in two consecutive configurations, 
        we use the following UCQs:
        $$
        ~~\qquad
        \begin{aligned}
          Q_{S^\exists} ~=
          & ~ \exists ~ x ~ x' ~ \bar y ~ \bar y' ~ \bar y'' ~ z ~ z' ~ z'' ~ z''' \\
          & ~ S^\exists(x,x') ~\et~ 
              \bigvee\nolimits_{1\le i\le n} Q_{\kw{adj},i}(\bar y,\bar y') ~\et~ 
              \bigvee\nolimits_{1\le i\le n} Q_{\kw{adj},i}(\bar y',\bar y'') ~\et~ \\
          & ~ C(x,\bar y,z) ~\et~ C(x,\bar y',z') ~\et~ C(x,\bar y'',z'') ~\et~ C(x',\bar y',z''') ~\et~ 
              \err_{S^\exists}(z,z',z'',z''') \\[2ex]
          Q_{S^\forall_1} ~=
          & ~ \exists ~ x ~ x_1 ~ \bar y ~ \bar y' ~ \bar y'' ~ z ~ z' ~ z'' ~ z''' \\
          & ~ S^\forall_1(x,x_1) ~\et~ 
              \bigvee\nolimits_{1\le i\le n} Q_{\kw{adj},i}(\bar y,\bar y') ~\et~ 
              \bigvee\nolimits_{1\le i\le n} Q_{\kw{adj},i}(\bar y',\bar y'') ~\et~ \\
          & ~ C(x,\bar y,z) ~\et~ C(x,\bar y',z') ~\et~ C(x,\bar y'',z'') ~\et~ C(x_1,\bar y',z''') ~\et~ 
              \err_{S^\forall_1}(z,z',z'',z''') \\[2ex]
          Q_{S^\forall_2} ~=
          & ~ \exists ~ x ~ x_2 ~ \bar y ~ \bar y' ~ \bar y'' ~ z ~ z' ~ z'' ~ z''' \\
          & ~ S^\forall_2(x,x_2) ~\et~ 
              \bigvee\nolimits_{1\le i\le n} Q_{\kw{adj},i}(\bar y,\bar y') ~\et~ 
              \bigvee\nolimits_{1\le i\le n} Q_{\kw{adj},i}(\bar y',\bar y'') ~\et~ \\
          & ~ C(x,\bar y,z) ~\et~ C(x,\bar y',z') ~\et~ C(x,\bar y'',z'') ~\et~ C(x_2,\bar y',z''') ~\et~ 
              \err_{S^\forall_2}(z,z',z'',z''')
        \end{aligned}
        $$
        where $\err_{S^\exists}$, $\err_{S^\forall_1}$, and $\err_{S^\forall_2}$ 
        are defined exactly as in the proof of Theorem \ref{th:psb-instance-hard}.
\end{itemize}
In addition, we check whether the Turing machine $M$ reaches the rejecting state 
$q_{\kw{rej}}$ along some path in its computation tree. This is done with the CQ
$$
  Q_{\kw{rej}} ~=~ \exists ~ x ~ \bar y ~ z ~~ C(x,\bar y,z) ~\et~ V_{\kw{rej}}(z)
$$
where $V_{\kw{rej}}$ is the visible relation that contains all cell values 
of the form $(a,q_{\kw{rej}})$, for some $a\in\Sigma$. 

Let $Q$ be the disjunction of all the previous UCQs and let $\visinst$ 
be the instance that captures the intended semantics of the visible relations 
$\kw{Zero}$, $\kw{One}$, $\kw{Bit}$, $V$, $\err_C$, $\err_{I,\kw{first}}$, $\err_{I,\kw{last}}$, 
$\err_{I,\kw{adj}}$, $\err_{C,\kw{adj}}$, $\err_{S^\exists}$, $\err_{S^\forall_1}$, and $\err_{S^\forall_2}$.
We can argue along the same lines of the proof of Theorem \ref{th:psb-instance-hard} 
that $M$ has a successful computation tree iff $\PSB(Q,\calC,\bfS,\visinst)=\false$.
\end{proof}

\subsection{Schema-level problem}\label{sec:existsPSB}

In this section we focus on the schema-level problem $\exists \PSB$, 
namely, the problem of deciding the existence of a instance $\visinst$
such that $\PSB(Q,\calC, \bfS,\visinst)=\true$.

Let $a$ be an arbitrary domain element.
Further let $\visinst_{\{a\}}$ be a fixed instance for the visible part of 
a schema $\bfS$ whose domain contains the single value $a$ and whose 
visible relations are singleton relations of the form $\{(a,\ldots,a)\}$.
We will show that, for certain languages for the background theories, if $\exists\PSB(Q, \calC, \bfS)=\true$,
then the witnessing instance can be taken to be $\visinst_{\{a\}}$.
This can be viewed as an extension of the ``critical instance'' method
which has been applied previously to chase termination problems:
Proposition 3.7 of Marnette and Geerts \cite{mgicdt} states a related
result for disjunctive TGDs in isolation; 
Gogacz and Marcincowski \cite{criticalinst1} call such an instance a ``well of positivity''. 
The following result shows that the technique applies to TGDs and EGDs without constants.

\begin{theorem}\label{thm:psbexistcollapselin}
For every Boolean UCQ $Q$ without constants, and every set $\calC$ of TGDs and EGDs without constants,
$\exists \PSB(Q,\calC, \bfS)=\true$ iff $\PSB(Q,\calC, \bfS, \visinst_{\{a\}})=\true$.
\end{theorem}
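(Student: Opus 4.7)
The plan is to prove both directions separately, relying on a direct-product construction for the nontrivial direction.

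The easy direction $(\Leftarrow)$ uses that $\calC$ is constant-free: the one-element $\bfS$-instance $\fullinst_{\{a\}}$ in which every relation (visible and hidden) contains exactly the tuple $(a,\dots,a)$ satisfies every constraint in $\calC$, since the body of any TGD or EGD becomes true under the constant mapping $\bar x \mapsto (a,\dots,a)$, any existential head is witnessed by $a$, and any equality $a=a$ holds trivially. This exhibits $\visinst_{\{a\}}$ as realizable, so $\PSB(Q,\calC,\bfS,\visinst_{\{a\}})=\true$ immediately yields $\exists\PSB(Q,\calC,\bfS)=\true$.

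For the other direction I would proceed contrapositively. Assume $\PSB(Q,\calC,\bfS,\visinst_{\{a\}})=\false$, witnessed by some $\fullinst^*\models\calC$ with $\visible(\fullinst^*)=\visinst_{\{a\}}$ and $Q(\fullinst^*)=\false$. Given any realizable $\visinst$ with realization $\fullinst_0\models\calC$, the goal is to produce a new realization of $\visinst$ on which $Q$ still fails; since $\visinst$ is arbitrary, this forces $\PSB(Q,\calC,\bfS,\visinst)=\false$ for every realizable $\visinst$, and hence $\exists\PSB(Q,\calC,\bfS)=\false$. The candidate realization is obtained from the direct product $\fullinst_1:=\fullinst_0\times\fullinst^*$, whose domain consists of the pairs $(u,v)\in\adomain(\fullinst_0)\times\adomain(\fullinst^*)$ and in which a tuple of pairs lies in a relation $R$ iff the two coordinate projections lie in $R$ in the respective factor.

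Three properties of $\fullinst_1$ then need to be checked. First, $\fullinst_1\models\calC$: this is the standard product-of-Horn-models argument, in which a satisfaction of the body of a TGD in $\fullinst_1$ projects to satisfactions in both factors, the componentwise pairing of the resulting existential witnesses yields a witness in $\fullinst_1$, and EGDs are preserved because equality of pairs reduces to equality of coordinates. Constant-freeness is used crucially here, because otherwise a constant $c$ would have to be interpreted in the product as $(c,c)$, which is only meaningful when $c$ lies in both factors. Second, since $\visible(\fullinst^*)=\visinst_{\{a\}}$ puts only the tuple $(a,\dots,a)$ into each visible relation, each visible fact of $\fullinst_1$ has the form $((u_1,a),\dots,(u_k,a))$ with $R(\bar u)\in R^{\fullinst_0}$, so $(u,a)\mapsto u$ is an isomorphism between $\visible(\fullinst_1)$ and $\visinst$; renaming the remaining pairs $(u,v)$ with $v\neq a$ to fresh values produces an isomorphic instance $\fullinst_1'$ with $\visible(\fullinst_1')=\visinst$ on the nose. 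Third, the second-coordinate projection (post-composed with the inverse renaming) is a homomorphism $\fullinst_1'\to\fullinst^*$, so $Q$, being a UCQ and therefore preserved by homomorphisms, satisfies $Q(\fullinst_1')=\false$ because $Q(\fullinst^*)=\false$.

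The main obstacle is the verification that the product construction preserves $\calC$, and in particular seeing where constant-freeness is essential. Once this is in place, $\fullinst_1'$ is a realization of $\visinst$ on which $Q$ fails, contradicting $\PSB(Q,\calC,\bfS,\visinst)=\true$ and completing the contrapositive.
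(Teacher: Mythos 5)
Your proof is correct, but it takes a genuinely different route from the paper. The paper proves the hard direction by building a ``visible-part-preserving'' chase tree: it establishes a universal property of this chase (Lemma~\ref{lem:disjunctive-chase-universal}), characterizes $\PSB$ in terms of it (Proposition~\ref{prop:PSB-characterizion}), and then shows that every instance arising in the chase of any realizable $\visinst$ maps homomorphically into $\chase(\calC,\bfS,\visinst_{\{a\}})$ (Lemma~\ref{lem:well-of-positivity}), so that satisfaction of the UCQ transfers to the critical instance. You instead run the contrapositive through a direct product $\fullinst_0\times\fullinst^*$: TGDs and EGDs without constants are (prenex) Horn and hence preserved under direct products, the visible part of the product collapses to (an isomorphic copy of) $\visinst$ because $\visible(\fullinst^*)$ is the all-$a$ instance, and the second projection is a homomorphism into $\fullinst^*$, which pulls back the failure of $Q$. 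Both arguments ultimately hinge on the same two preservation facts (constraints preserved by the respective construction; UCQs preserved by homomorphisms), but yours is more self-contained and treats TGDs and EGDs uniformly, whereas the paper has to retrofit EGDs into its chase afterwards; it also makes the finite-controllability claim immediate, since products of finite instances are finite. What the paper's heavier machinery buys is reuse: Proposition~\ref{prop:PSB-characterizion} and the chase tree are load-bearing elsewhere (e.g., in the $\exptime$-hardness proof of Theorem~\ref{th:psb-instance-hard} and in the $\NSB$ analysis), so the authors get this theorem as a corollary of infrastructure they need anyway. The only nitpick in your write-up is the renaming step: besides the pairs $(u,v)$ with $v\neq a$, you should also say what happens to pairs $(u,a)$ with $u$ outside $\adomain(\visinst)$ (mapping every $(u,a)$ to $u$ and the rest to fresh values works), but this is cosmetic.
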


First, we prove the theorem for background theories consisting only of TGDs without constants.
Then we will show how to generalize the proof in the additional presence of EGDs 
without constants.

We recall that the visible instance $\visinst_{\{a\}}$ is constructed
over a singleton active domain and the sentences in the background theory $\calC$ have no constants. 
This implies that there are no disjunctive choices to perform while 
chasing with the dependencies starting from the initial instance $\visinst_{\{a\}}$. 
Moreover, it is easy to see that this chase always succeeds.
That is, it returns a collection $\Chase(\calC,\bfS,\visinst_{\{a\}})$ 
with exactly one instance --- in particular, $\visinst_{\{a\}}$ 
is a realizable instance. By a slight abuse of notation, we denote by 
$\chase(\calC,\bfS,\visinst_{\{a\}})$
the unique instance in the collection 
$\Chase(\calC,\bfS,\visinst_{\{a\}})$. 
% \balder{The ``always succeeds'' claim is not true when EGDs are added. So, 
% in anticipation, it would be better to avoid relying on this here?}

\begin{lemma}\label{lem:well-of-positivity}
If $\calC$ is a set of TGDs without constants over a schema $\bfS$ and $\visinst$ 
is an instance of the visible part of $\bfS$, then every instance 
$K \in \Chase(\calC,\bfS,\visinst)$ maps homomorphically to 
$\chase(\bfS,\calC,\visinst_{\{a\}})$, 
that is, 
$h(K) \subseteq \chase(\bfS,\calC,\visinst_{\{a\}})$
for some homomorphism $h$.
\end{lemma}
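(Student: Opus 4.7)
The plan is to show that the homomorphism that collapses every value to $a$ can be lifted inductively along any chase path, using the fact that the constraints contain no constants.

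First, I would fix an instance $K \in \Chase(\calC,\bfS,\visinst)$ and pick a root-to-$K$ path $K_0, K_1, K_2, \ldots$ in the chase tree for $\Chase(\calC,\bfS,\visinst)$, where $K_0 = \visinst$. Either $K = K_n$ for some $n$, or $K = \lim_{n\in\bbN} K_n$. Let $\kappa = \chase(\calC,\bfS,\visinst_{\{a\}})$; recall this is a single instance because the chase from $\visinst_{\{a\}}$ has only one value available for any visible collapsing step, so no disjunctive branching occurs. I would then construct by induction a compatible family of homomorphisms $h_n \colon K_n \to \kappa$, i.e., with $h_{n+1}$ extending $h_n$ on the domain of $K_n$, and then take $h = \lim_{n\in\bbN} h_n$ (which is just $h_n$ in the finite case).

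For the base case, define $h_0 \colon \visinst \to \visinst_{\{a\}}$ to map every value to $a$. Every visible fact $R(\bar c) \in \visinst$ is sent to $R(a,\ldots,a)$, which is in $\visinst_{\{a\}} \subseteq \kappa$ by construction of $\visinst_{\{a\}}$, so $h_0$ is a homomorphism. For the inductive step, suppose $h_n$ exists and the step from $K_n$ to $K_{n+1}$ fires a TGD $R_1(\bar x_1) \et \ldots \et R_m(\bar x_m) \then \exists \bar y \, S(\bar z)$ via a witness $f$ on facts of $K_n$. Because $h_n$ is a homomorphism, the images $R_i(h_n(f(\bar x_i)))$ are in $\kappa$, and since $\kappa \models \calC$, some fact $S(\bar w)$ lies in $\kappa$ whose $\bar z$-projection extends $h_n\circ f$. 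I would split into two cases: if $S$ is hidden, $K_{n+1} = K_n \cup \{S(f'(\bar z))\}$ where $f'$ assigns fresh nulls to $\bar y$, and I extend $h_n$ by mapping each fresh null $f'(y)$ to the corresponding coordinate of $\bar w$. If $S$ is visible, the chase step further applies some $g$ so that $\visible(g(K_n \cup \{S(f'(\bar z))\})) = \visinst$; the only new visible fact has values in $\adomain(\visinst)$, and since $h_n$ already sends every visible value to $a$, the extension $h_{n+1}$ still sends the new visible fact to $S(a,\ldots,a)$, which is in $\kappa$ because in $\kappa$ the analogous step is forced (there being only one visible value $a$) to produce exactly that fact.

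For the limit step, compatibility of the $h_n$'s guarantees that $h = \lim_n h_n$ is well-defined on the domain of $\lim_n K_n = K$, and each fact of $K$ appears in some $K_n$, so $h(K) \subseteq \kappa$. The main obstacle, and the reason the no-constants hypothesis is essential, is the visible case: one must rule out the possibility that the visible chase step in $K$ collapses toward some value that conflicts with forcing the corresponding value in $\kappa$ to be $a$. Since no constants occur in the heads or bodies of the TGDs, the TGD that fires in $K_n$ also fires in $\kappa$ under $h_n\circ f$, and the disjunctive choice in $\kappa$ is uniquely $a$, so maintaining the invariant ``every visible value is sent to $a$'' costs nothing and the inductive construction goes through.
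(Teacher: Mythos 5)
Your proposal is correct and follows essentially the same route as the paper's proof: an induction along the chase tree (base case collapsing $\visinst$ onto $a$, inductive step matching the fired TGD against the critical-instance chase, with the visible case handled by the fact that the only available visible value there is $a$), followed by a limit argument using compatibility of the homomorphisms. The paper phrases the visible case as a composition with the map sending $\adomain(\visinst)$ to $a$ rather than via your invariant, but this is the same idea.
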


\begin{proof}
Recall that the instances in $\Chase(\calC,\bfS,\visinst)$ 
are either leaves or limits of infinite paths of the chase tree. 
Below, we prove that every instance $K$ in the chase tree 
for $\Chase(\bfS,\calC,\visinst)$ 
maps to 
$\chase(\bfS,\calC,\visinst_{\{a\}})$ via 
some homomorphism $h$. In addition, we ensure that, 
if $K'$ is a descendant of $K$ in the same chase tree, 
then the corresponding homomorphism $h'$ is obtained
by composing some homomorphism with an extension of $h$.
This way of constructing homomorphisms is compatible with 
limits in the following sense: if $h_0,h_1,\ldots$ are
homomorphisms mapping instances $K_0,K_1,\ldots$ 
along an infinite path of the chase tree, then there is
a homomorphism $\lim_{n\in\bbN}h_n$ that maps the limit 
instance $\lim_{n\in\bbN}K_n$ to $\fullinst$. 

For the base case of the induction, we consider the initial
instance $\visinst$ at the root of the chase tree, which
clearly maps homomorphically to $\visinst_{\{a\}}$
(recall that there are no constants in the query or sentences of the background
theory, and homomorphisms
are free to map all domain elements to $a$). 
For the inductive case, we consider an instance $K$ in the 
chase tree 
%for $\Chase(\bfS,\calC,\visinst)$ 
and suppose that it maps to
$\chase(\bfS,\calC,\visinst_{\{a\}})$ 
via a homomorphism $h$.
We also consider an instance $K'$ that is a child of $K$ and 
is obtained by chasing some dependency 
$R_1(\bar x_1) \et\ldots\et R_m(\bar x_m) \then \exists \bar y~S(\bar z)$,
where $\bar z$ is a sequence of variables from $\bar x_1,\ldots,\bar x_m,\bar y$.
This means that there exist two homomorphisms $f$ and $g$ such that
\begin{enumerate}
  \item $f$ maps the variables $\bar x_1,\ldots,\bar x_m$ to some values in $K$
        and maps injectively the variables $\bar y$ to fresh values;
%  \item $g$ is the identity on $f(\bar x_1),\ldots,f(\bar x_m)$ and 
%        maps the fresh values $f(\bar y)$ to values of the instance
%        $\chase(\bfS,\calC,\visinst_{\{a\}})$,
  \item $g$ either maps $f(\bar z)$ to values in the active domain
        of $\visinst$ or is the identity on $f(\bar z)$,
        depending on whether $S$ is visible or not;
  \item $R_j\big(f(\bar x_j)\big)\in K$ for all $1\le j\le m$; 
  \item $K' = g\big( K \cup \big\{ S(f(\bar z)) \big\} \big)$.
\end{enumerate}
Note that $h$ maps each fact $R_j\big(f(\bar x_j)\big)$ 
in $K$ to $R_j\big(h(f(\bar x_j))\big)$ in 
$\chase(\bfS,\calC,\visinst_{\{a\}})$.
Since 
$\chase(\bfS,\calC,\visinst_{\{a\}})$ 
satisfies the chased dependency, it must also contain a fact 
of the form $S\big(h'(f(\bar z))\big)$, where $h'$ is a homomorphism 
that extends $h$ on the fresh values $f(\bar y)$.
Moreover, if $S$ is visible, then $h'$ maps all values $f(\bar z)$ 
to the same value $a$, which is the only element of the active domain of $\visinst_{\{a\}}$.

We can now define a homomorphism that maps the instance
$K'=g\big( K \cup \big\{ S(f(\bar z)) \big\} \big)$ to 
$\chase(\bfS,\calC,\visinst_{\{a\}})$.
%In particular, the desired homomorphism needs to be defined 
%on the values of the form $g(f(z))$, for some variable $z$ 
%among $\bar x_1,\ldots,\bar x_m,\bar y$.
%Recall that $g$ is the identity on $f(\bar x_1),\ldots,f(\bar x_m)$
%and is injective on $f(\bar y)$. This means that $g^{-1}$ is 
%well-defined on the values $g(f(\bar x_1)),\ldots,g(f(\bar x_m)),g(f(\bar y))$.
%We can thus define $h''=h'\circ g^{-1}$ and conclude that 
%$h''(K') = h'\big( K \cup \big\{ S(f(\bar z)) \big\} \big) 
%         \subseteq \chase(\bfS,\calC,\visinst_{\{a\}})$.
If $S$ is not visible, then we recall that $g$ is 
the identity on $f(\bar z)$, and hence $h'$ already maps 
$K'=g\big( K \cup \big\{ S(f(\bar z)) \big\} \big) = K \cup \big\{ S(f(\bar z)) \big\}$
to $\chase(\bfS,\calC,\visinst_{\{a\}})$.
Otherwise, if $S$ is visible, then we recall that
$g$ maps $f(\bar z)$ to values in the active domain
of $\visinst$, we let $g'$ be the function that maps 
all values of the active domain of $\visinst$ to $a$, 
and finally we define $h''=h' \circ g'$. In this  way 
$h''$ maps $K'=g\big( K \cup \big\{ S(f(\bar z)) \big\} \big)$
to $\chase(\bfS,\calC,\visinst_{\{a\}})$.
\end{proof}

\smallskip
Now that we established the key lemmas, we can easily reduce the existence problem 
to an instance-based problem (recall that for the moment we assume that the 
background theory consists only of TGDs):

\smallskip
\begin{proof}[Proof of Theorem \ref{thm:psbexistcollapselin} (with TGDs only)]
One of the two directions is trivial: if $\PSB(Q,\calC,\bfS,\visinst_{\{a\}})=\true$, 
then clearly $\exists\PSB(Q,\calC,\bfS)=\true$.

For the converse direction, suppose that $\exists\PSB(Q, \calC, \bfS)=\true$.
This implies the existence of a realizable instance $\visinst$ such that 
$\PSB(Q,\calC,\bfS,\visinst)=\true$.
By Proposition \ref{prop:PSB-characterizion}, every instance in 
$\Chase(\bfS,\calC,\visinst)$ 
satisfies the query $Q$. 
Moreover, by Lemma \ref{lem:well-of-positivity}, every instance in 
$\Chase(\bfS,\calC,\visinst)$ 
maps homomorphically to 
$\chase(\bfS,\calC,\visinst_{\{a\}})$.
Hence the unique instance in $\Chase(\bfS,\calC,\visinst_{\{a\}})$,
i.e.~$\chase(\bfS,\calC,\visinst_{\{a\}})$,
also satisfies $Q$.
By applying Proposition \ref{prop:PSB-characterizion} again,
we conclude that $\PSB(Q,\calC,\bfS,\visinst_{\{a\}})=\true$.

Finally, the second statement of the theorem follows from the fact that 
the previous proofs are independent of the assumption that relational 
instances are finite.
\end{proof}

Now, we explain how to generalize the proof of Theorem \ref{thm:psbexistcollapselin} 
to combinations of TGDs and EGDs (still without constants). 
Recall that we can modify the  procedure 
for $\Chase(\bfS,\calC,\visinst)$ so as to also take into account  the EGDs 
in $\calC$ that can be triggered on the instances that emerge in the chase tree. 
Using this extended definition of 
$\Chase(\bfS,\calC,\visinst)$ 
at hand, the proof
of Lemma \ref{lem:well-of-positivity} does not pose particular problems,
as one just needs to handle the standard case of an EGD dependency. 
Finally, the proof of Theorem \ref{thm:psbexistcollapselin}
 directly uses Proposition \ref{prop:PSB-characterizion} 
and Lemma \ref{lem:well-of-positivity} as black boxes, and so 
carries over without any modification.

\medskip

It is worth remarking that, by pairing Theorem \ref{thm:psbexistcollapselin} 
with the upper bound and the finite controllability for instance-level problems
(Theorem \ref{thm:gnfoposinstancedecid}), one immediately obtains the following:

\begin{corollary}\label{cor:decidexistpsbgtgd}
$\exists\PSB(Q,\calC,\bfS)$  with $Q$ ranging
over  Boolean UCQs and  $\calC$ over sets of frontier-guarded TGDs %($\fgtgd$s)
without constants, 
is decidable in 
$\twoexp$, and is finitely controllable.
\end{corollary}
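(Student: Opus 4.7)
The plan is to derive this as a direct combination of the two main results already established earlier in the section. First, I would invoke Theorem~\ref{thm:psbexistcollapselin}, which applies since frontier-guarded TGDs are in particular TGDs, and our hypothesis is that they contain no constants. This reduces the schema-level question $\exists\PSB(Q,\calC,\bfS)$ to the single instance-level question $\PSB(Q,\calC,\bfS,\visinst_{\{a\}})$ on the canonical singleton visible instance. The reduction itself is just a pointer (no computation is required to produce $\visinst_{\{a\}}$ once $\bfS$ is given), so this step preserves the complexity class.

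Next, I would invoke Theorem~\ref{thm:gnfoposinstancedecid} to decide $\PSB(Q,\calC,\bfS,\visinst_{\{a\}})$. To do so, I need to observe that frontier-guarded TGDs are expressible in $\gnfo$: a frontier-guarded TGD $\phi(\bar{x}) \to \exists \bar{y} ~ \rho(\bar{x},\bar{y})$ is equivalent to $\neg\bigl(G(\bar{x}) \wedge \neg(\phi(\bar{x}) \to \exists \bar{y}~\rho(\bar{x},\bar{y}))\bigr)$ where $G$ is the frontier guard, which is a legal $\gnfo$ sentence as noted in the discussion after the definition of $\gnfo$. Hence our instance of $\PSB$ falls within the hypotheses of Theorem~\ref{thm:gnfoposinstancedecid}, yielding a $\twoexp$ decision procedure. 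Composing the two reductions gives the claimed $\twoexp$ upper bound for $\exists\PSB$.

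For finite controllability, I would note that both ingredient results are themselves finitely controllable. Theorem~\ref{thm:psbexistcollapselin} is stated independently of whether instances are restricted to be finite (its proof goes via the chase and the universal property of $\Chase(\calC,\bfS,\visinst_{\{a\}})$, and the final paragraph explicitly remarks that the arguments do not rely on finiteness). Theorem~\ref{thm:gnfoposinstancedecid} asserts finite controllability directly via the finite model property of $\gnfo$. Chaining: if there exists any (possibly infinite) realizable $\visinst$ witnessing $\exists\PSB$, then by the unrestricted version of Theorem~\ref{thm:psbexistcollapselin} we have $\PSB(Q,\calC,\bfS,\visinst_{\{a\}})=\true$ in the unrestricted sense, which by Theorem~\ref{thm:gnfoposinstancedecid} implies it holds in the finite sense as well, whence $\visinst_{\{a\}}$ itself (a finite instance) is a finite witness for $\exists\PSB$.

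There is no real obstacle here beyond bookkeeping: the content of the corollary is the conjunction of a reduction theorem and an instance-level decidability theorem, and the only point requiring a sentence of justification is the inclusion of frontier-guarded TGDs (without constants) into $\gnfo$, which the paper has already discussed. No new combinatorial or automata-theoretic work is needed.
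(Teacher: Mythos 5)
Your proposal is correct and is exactly the argument the paper intends: the corollary is obtained by pairing the critical-instance collapse of Theorem~\ref{thm:psbexistcollapselin} with the $\twoexp$ upper bound and finite controllability of Theorem~\ref{thm:gnfoposinstancedecid}, using the already-noted fact that frontier-guarded TGDs are expressible in $\gnfo$. The paper states this combination in a single sentence; your write-up just spells out the same bookkeeping.
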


In contrast, we show that allowing disjunctions or constants in the background
theory sentences
leads to undecidability. We first prove this in the case
where the sentences include  disjunctions. This shows that the
interaction of disjunctive linear TGDs and linear EGDs (implicit
in the requirement that in a possible world for an instance $\fullinst$,
each fact of a visible relation $R$ world must be one of the $R$-facts of $\fullinst$)
causes the ``critical instance''
reduction to fail.

\begin{theorem}\label{thm:undecidexistpsbdisjunction}
The problem $\exists\PSB(Q,\calC,\bfS)$ is undecidable 
as $Q$ ranges over Boolean UCQs and $\calC$ over sets 
of disjunctive linear TGDs. 
\end{theorem}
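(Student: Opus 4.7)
The plan is to prove undecidability by reducing from the halting problem for deterministic Turing machines on the empty input. The reduction exploits the mechanism highlighted just before the theorem statement: restricting a visible relation $R$ to have the finite content given in $\visinst$ amounts to imposing the disjunctive EGD-with-constants $R(\bar x)\then\bigvee_{R(\bar a)\in\visinst}\bar x=\bar a$. Combined with disjunctive linear TGDs, this recaptures enough of the expressive strength of linear TGDs plus EGDs to force the hidden part of any full instance to encode a computation of the machine, so that deciding $\exists\PSB$ becomes as hard as halting.

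The key steps I would carry out are as follows. First, I fix a schema whose hidden relations represent the computational structure---a sequence of configurations, a chain of cells inside each configuration, and a labelling predicate assigning to each cell either a tape symbol or a state-plus-symbol marker---and whose visible relations store only the finite rule tables of the machine $M$ (alphabet, transition function, initial and accepting states). The constraint set $\calC$ uses linear TGDs of the form $\mathrm{Cell}(x)\then\exists y~\mathrm{Succ}(x,y)$ and $\mathrm{Config}(x)\then\exists y~\mathrm{Next}(x,y)$ to generate the tape-time lattice, and disjunctive linear TGDs of the form $\mathrm{Cell}(x)\then\bigvee_a \mathrm{Label}_a(x)$ to nondeterministically assign labels. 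Each labelled cell is then pushed into an auxiliary visible relation that in $\visinst$ stores exactly the intended rule-table facts, so that the implicit disjunctive EGDs coming from visibility force the chosen labels and successor pairs to respect those rule tables.

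Second, I design a Boolean UCQ $Q$ that detects every local defect: mismatched transition between two successive configurations, missing or duplicate head, wrong initial condition, and failure to reach the accepting state by the end of the encoded timeline. Finally, I verify correctness. If $M$ halts in $n$ steps, I take $\visinst_n$ to encode the rule tables together with a length-$n$ timeline; then the disjunctive chase from $\visinst_n$ produces only candidate length-$n$ computations, and since $M$ is deterministic and halts in $n$ steps every such candidate either deviates from $M$ (triggering $Q$) or faithfully accepts (also triggering $Q$), so $\PSB$ holds on $\visinst_n$. Conversely, if $\PSB$ holds on some realizable $\visinst$, extracting a non-$Q$ model of $\calC$ would give a faithful non-halting computation of the appropriate length, so no such model can exist and $M$ must halt.

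The main obstacle is realising this construction using only single-atom-bodied TGDs. Linearity prevents the standard two-atom grid-closure rule and the two-atom ``transition-consistency'' rule that would normally say ``if cell $x$ has label $a$ at time $t$ and label $b$ at time $t+1$, then $(a,b)$ is a legal transition''. The fix is to route all such join conditions through visible relations of appropriately high arity so that the implicit disjunctive EGDs from visibility perform the joining work that a two-atom TGD would normally do. Balancing the construction so that the chase has, as its non-$Q$ models, exactly the faithful length-$n$ computations of $M$, with neither spurious solutions nor spurious defects introduced by the disjunctive branching, is the delicate part of the argument.
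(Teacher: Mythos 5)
Your construction has the architecture of the paper's \emph{instance-level} hardness proofs (Theorems \ref{th:psb-instance-hard} and \ref{thm:psbidtwoexphardcomb}): the undecidable object (the run of $M$) lives in the hidden relations, while the visible instance supplies auxiliary data --- rule tables and a timeline. That division of labour works when the reduction itself fixes the visible instance, but for $\exists\PSB$ the visible instance is existentially quantified, and your converse direction does not survive this. You argue: ``a non-$Q$ model of $\calC$ over the witnessing $\visinst$ would be a faithful non-halting computation; $\PSB$ says no such model exists; hence $M$ halts.'' The middle step is a non sequitur, because the witnessing $\visinst$ need not be any of your intended instances $\visinst_n$. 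A degenerate witness --- say one whose error table contains all pairs (so that a single forced labelled cell already fires a defect disjunct of the UCQ), or whose timeline is empty, cyclic, or branching --- can make every extension satisfy $Q$ for reasons that have nothing to do with $M$. To close the argument you would have to show that \emph{every} realizable $\visinst$ admitting a $\PSB$ certifies halting, i.e.\ either characterize all such witnesses or force them to be well-formed, and nothing in your machinery (disjunctive linear TGDs plus a monotone UCQ) pins down the \emph{exact} contents of the visible rule tables. Hard-coding the tables as constants removes part of the problem but still leaves the adversarially chosen timeline unhandled.

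The paper avoids this trap by inverting your design: the undecidable object (a candidate tiling of the grid) is placed \emph{in the visible instance}, so the existential quantifier over $\visinst$ is precisely the search for it, while the hidden relations carry only ``challenges'' (functionality, confluence, label-consistency) that are forced to be nonempty by the constraints and whose query disjuncts fire exactly when the challenge cannot expose an error. The converse direction then \emph{reads the tiling off the witnessing visible instance}, using the challenges to prove it correct; no analysis of degenerate witnesses beyond ``$\kw{Init}$ must be present'' is needed. (A secondary point: your target problem, halting of a deterministic machine, is $\Sigma^0_1$-complete and thus matches the finite-instance version of $\exists\PSB$, whereas the paper reduces from infinite-grid tiling for the unrestricted version and handles finiteness via periodicity; this choice is defensible, but it does not repair the gap above.)
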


The proof uses a technique that will be exploited for many of our schema-level
undecidability arguments. We will reduce the existence of a tiling to the $\exists \PSB$ problem.
The tiling itself will correspond to the visible instance that has a $\PSB$. The invisible relations
will store ``challenges'' to the correctness of the tiling.  The UCQ $Q$ will have disjuncts
that return $\true$ exactly when the challenge to correctness is passed. There will be challenges
to the labelling of adjacent cells, challenges to the correctness of the initial tile, and challenges
to the correct shape of the adjacency relationship -- that is, challenges that the tiling is really
grid-like. A correct tiling corresponds to every challenge being passed, 
and thus corresponds to a visible instance where every extension satisfies $Q$.
The undecidability argument also applies to the ``unrestricted version''  of $\exists \PSB$, in which
both quantifications over instances consider arbitrary instances. This will also be true for all other
undecidability results in this work, which always concern the schema-level problems.

\begin{proof}[Proof of Theorem \ref{thm:undecidexistpsbdisjunction}]
For simplicity, we deal with the ``unrestricted variant'' of the problem, which
asks if there is an arbitrary instance of the visible schema such that every superinstance
satisfying the sentences in $\calC$ also satisfies $Q$. Later we will show to modify the proof for 
dealing with finite instances.

We reduce the problem of tiling the infinite grid, which is known to be 
undecidable, to the problem $\exists\PSB$.
Recall that an instance of the tiling problem consists of a finite 
set $T$ of available tiles, some horizontal and vertical constraints,
given by two relations $H,V\subseteq T\times T$, and an initial tile $t_\bot\in T$ 
for the lower-left corner. The problem consists of deciding whether there is 
a tiling function $f:\bbN\times\bbN\then T$ such that 
\begin{enumerate}
  \item $f(0,0) = t_\bot$,
  \item $( f(i,j), f(i+1,j) ) \in H$ for all $i,j\in\bbN$,
  \item $( f(i,j), f(i,j+1) ) \in V$ for all $i,j\in\bbN$.
\end{enumerate}
Given an instance $(T,H,V,t_\bot)$ of the tiling problem,
we show how to construct a schema $\bfS$, a query $Q$, and a set of 
disjunctive linear TGDs %disjunctive IDs 
over $\bfS$ such that $\exists\PSB(Q,\calC,\bfS)=\true$
if and only if there is a tiling function for $(T,H,V,t_\bot)$.

The basic idea is that the visible instance that witnesses $\exists\PSB$ should 
represent a candidate tiling, and the invisible instances represent challenges 
to the correctness of the tiling.
Every cell of the grid is identified with some value, and we
use two visible binary relations $E_H, E_V$ to represent the 
horizontal and vertical edges of the grid. We also introduce 
a unary visible relation $U_t$, for each tile $t\in T$, to represent 
a candidate tiling function on the grid. 

We begin by enforcing the existence of an initial node with the associated tile
$t_\bot$. For this, we introduce another visible relation $\kw{Init}$, of arity 
$0$, and linear TGD
$$
  \kw{Init} ~\then~ \exists x ~ U_{t_\bot}(x) \ .
$$
It is also easy to guarantee that every node is connected to at least another 
node in the relation $E_H$ (resp., $E_V$), and that this latter node has an
associated tile that satisfies the horizontal constraints $H$ (resp., the
vertical constraints $V$). To do so we use the following disjunctive linear TGDs:
\begin{align*}
  U_t(x)        &~\then~ \exists y ~ E_H(x,y) ~\et~ \bigvee\nolimits_{(t,t')\in H} U_{t'}(y) 
                      \tag{for all tiles $t\in T$} \\[1ex]
  U_t(x)        &~\then~ \exists z ~ E_V(x,z) ~\et~ \bigvee\nolimits_{(t,t')\in V} U_{t'}(z)
                      \tag{for all tiles $t\in T$}
\end{align*}
We now explain how to enforce a grid structure on the relations
$E_H$ and $E_V$, and how to guarantee that each node has exactly one
tile associated with it.
Of course, we cannot directly use disjunctive TGDs in order to guarantee
that $E_H$ and $E_V$ correctly represent the horizontal and vertical edges 
of the grid.
However, we can introduce additional hidden relations that make it 
possible to mark certain nodes so as to expose the possible violations.
We first show how to expose violations to the fact that the horizontal
edge relation is a function.
The idea is to select nodes in $E_H$ in order to challenge functionality.
Formally, the horizontal challenge is captured by a hidden ternary relation
$\kw{HChallenge}_{\kw{funct}}$, by the linear TGDs
$$
\begin{array}{rclrcl}
  \kw{Init}                &\then& \exists ~x~y~y'~ 
                                   \kw{HChallenge}(x,y,y')   \\[1ex]
  \kw{HChallenge}(x,y,y')  &\then& E_H(x,y) ~\et~ E_H(x,y')
\end{array}
$$
and by the CQ
$$
  Q_H ~=~ \exists ~x~y~ \kw{HChallenge}_{\kw{funct}}(x,y,y) \ .
$$
Note that if the visible fact $\kw{Init}$ is present and the relation $E_H$ 
correctly describes the horizontal edges of the grid, then the above query 
$Q_H$ is necessarily satisfied by any instance of $\kw{HChallenge}_{\kw{funct}}$ 
that satisfies the above sentences: the only way to give a non-empty 
instance for $\kw{HChallenge}_{\kw{funct}}$ is to use triples of the 
form $(x,y,y)$. 
Conversely, if the relation $E_H$ is not a function, namely, if there exist 
nodes $x,y,y'$ such that $(x,y),(x,y')\in E_H$ and $y\neq y'$, then the singleton 
instance $\{ (x,y,y') \}$ for the hidden relation $\kw{HChallenge}_{\kw{funct}}$ 
will satisfy the associated setennces of the background theory and violate the query $Q_H$.
Note that we do not require that the relation $E_H$ is injective 
(this could be still done, but is not necessary for the reduction).
Similarly, we can use a hidden relation $\kw{VChallenge}$ and analogous 
background theory sentences and query $Q_V$ in order to challenge the functionality of 
$E_V$.

In the same way, we can challenge the confluence of the relations $E_H$ and $E_V$.
For this, we introduce a hidden relation $\kw{CChallenge}$ of arity $5$, which is
associated with the background theory sentences
$$
\begin{array}{rclrcl}
  \kw{Init}                             
    &\!\!\!\then&\!\!\! \exists ~x~y~z~w~w'~ 
                        \kw{CChallenge}(x,y,z,w,w')   \\[1ex]
  \kw{CChallenge}(x,y,z,w,w') 
    &\!\!\!\then&\!\!\! E_H(x,y) ~\et~ E_V(x,z) ~\et~ E_V(y,w) ~\et~ E_H(z,w')
\end{array}
$$
and the CQ
$$
  Q_C ~=~ \exists ~x~y~z~w~ \kw{CChallenge}(x,y,z,w,w) \ .
$$
As before, we can argue that there is a positive query implication for $Q_C$ iff
the horizontal and vertical edge relations are confluent, that is,
$(x,w)\in E_H \circ E_V$ and $(x,w') \in E_V \circ E_H$ imply $w=w'$.

We need now to ensure that every node is labeled with at most one tile, 
or equally that there are no relations $U_{t}$ and $U_{t'}$, for distinct tiles $t\neq t'\in T$,
that have non-empty intersection.
%such that $n$ is not in $U_t(\visinst)$ and $U_{t'}(\visinst)$.
For that we add the two following sentences, where $A$ and $B$ are hidden relations
\[
\begin{aligned}
  \kw{Init} ~&\then~ \exists ~x~ A(x) \vee B(x) \\
  B(x) ~&\then~  \bigvee\nolimits_{\!\!\!t \neq t'~~}  (~ U_t(x)  \wedge U_{t'}(x))
\end{aligned}
\]
Finally, we add the CQ 
\[
  Q_A ~=~ \exists ~x~ A(x)
\]

%We can also easily check that each node has at most one associated tile
%using the UCQ 
%$$
 % Q_T ~=~ \bigvee_{t\neq t' \,\in\, T} ~\exists ~x~ \big( U_t(x) ~\et~ U_{t'}(x) \big) \ .
%$$

Now that we described all the visible and hidden relations of the schema $\bfS$,
and the associated sentences $\calC$, we define the query for the $\exists\PSB$
problem as the conjunction of the atom $\kw{Init}$ and all previous UCQs 
(for this we distribute the disjunctions and existential quantifications 
over the conjunctions):
\[
  Q ~=~ \kw{Init} ~\et~ Q_A ~\et~  Q_H ~\et~ Q_V ~\et~ Q_C \ . %~\et~ Q_T \ .
\]
It remains to show that $\exists\PSB(Q,\calC,\bfS)=\true$ iff there is
a correct tiling of the infinite grid, namely, a function $f:\bbN\times\bbN \then T$ 
that satisfies the conditions 1), 2), and 3) above.

Suppose there is a correct tiling $f:\bbN\times\bbN \then T$. 
We construct the visible instance $\visinst$ that contains the fact
$\kw{Init}$ and the relations $E_H$, $E_V$, and $U_t$ with the intended semantics: 
$E_H = \big\{ \big((i,j),(i+1,j)\big) ~\big|~ i,j\in\bbN \big\}$,
$E_V = \big\{ \big((i,j),(i,j+1)\big) ~\big|~ i,j\in\bbN \big\}$, and
$U_t = \big\{ (i,j) ~\big|~ f(i,j) = t \big\}$ for all $t\in T$. 
Since no error can be exposed on the relations $E_H$, $E_V$, and $U_t$,
no matter how we construct a full instance $\fullinst$ that agrees with 
$\visinst$ on the visible part and satisfies the sentences in $\calC$,
we will have that $\fullinst$ satisfies all the components of the query $Q$, other than $Q_A$.
In addition, in any such $\fullinst$, $B$ must be empty, since otherwise tiling predicates for distinct
tiles would overlap, which is not the case. Since $\kw{Init}$ holds, we can conclude
via the first sentence above that $Q_A$ must hold.

Conversely, suppose that $\exists\PSB(Q,\calC,\bfS) = \true$ and let $\visinst$ be the
witnessing visible instance. Clearly, $\visinst$ contains the fact $\kw{Init}$ 
(otherwise, the query would be immediately violated).
%and it does not contain $W$, otherwise the query $Q_A$ would be violated. 
We can use the content of $\visinst$ and the knowledge that 
$\exists\PSB(Q,\calC,\bfS) = \true$ to inductively construct 
a correct tiling of the infinite grid. 
More precisely, by the first sentence in $\calC$, we know that 
$\visinst$ contains the fact $U_{t_\bot}(x)$, for some node $x$. 
Accordingly, we define $i_x = 0$, $j_x = 0$, and $f(i_x,j_x) = t_\bot$.
For the induction step, suppose that $f(i_x,j_x)$ is defined for a node $x$ 
with the associated coordinates $i_x$ and $j_x$. The sentences in $\calC$ 
enforce the existence of two cells $y$ and $z$ and two tiles $t$ and $t'$ 
for which the following facts are in the visible instance:
$E_H(x,y)$, $E_V(x,z)$, $U_t(y)$, and $U_{t'}(z)$.
Accordingly, we let $i_y = i_x + 1$, $j_y = j_x$, $i_z = i_x$, $j_z = j_y + 1$,
$f(i_y,j_y) = t$, and $f(i_z,j_z) = t'$. 
By the initial sentences in $\calC$, we know that the tiles associated
with the new cells $(i_y,j_y)$ and $(i_z,j_z)$ are consistent with the 
tile in $(i_x,j_x)$ and with the horizontal and vertical constraints $H$ and $V$.
We now argue that there is a unique choice for the nodes $y$ and $z$.
Indeed, suppose this is not the case; for instance, suppose that
there exist two distinct nodes $y,y'$ that are connected to $x$
via $E_H$. Then, we could construct a full instance in which
the relation $\kw{HChallenge}$ contains the single triple $(x,y,y')$. 
This will immediately violate the CQ $Q_H$, and hence $Q$. 
Similar arguments apply to the vertical successor $z$. 

We now argue that  there are unique choices for the tile $t$ 
associated with a node $y$. Suppose not. Then
we can let $A$ be empty and $B$ the set of all nodes with multiple tiles. 
All the sentences in $\calC$ are satisfied, but the query $Q_A$ is not. 
This contradicts the assumption that we have a $\PSB$.

Finally, we can argue along the same lines that, during the next steps of the 
induction, the $E_V$-successor of $y$ and the $E_H$-successor of $z$ coincide.
The above properties are sufficient to conclude that the constructed function 
$f$ is a correct tiling of the infinite grid.

The variant for finite instances is done by observing that the same reduction produces
a periodic grid, which can be represented as a finite instance.
\end{proof}

Perhaps even more surprisingly, 
we show that \emph{disjunction can be simulated using constants (under UNA)}. 
The proof
works by applying the technique of ``coding Boolean operations and truth values in the schema'' 
which has been used to eliminate the need for disjunction in hardness proofs in several past works
(e.g. \cite{georgchristos}). It is also similar to the proof idea used in Lemma 
\ref{lem:ucqtocqpsbinstanceid} from earlier in this paper.

\begin{proposition}\label{prop:existsPSB-disjunctions-to-constants}
There is a polynomial time reduction from $\exists\PSB(Q,\calC,\bfS)$, where $Q$ 
ranges over Boolean UCQs and $\calC$ over sets of disjunctive linear TGDs, 
to $\exists\PSB(Q',\calC',\bfS')$, where $Q'$ ranges over Boolean UCQs 
and $\calC'$ over sets of linear TGDs (with constants).
\end{proposition}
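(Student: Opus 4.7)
The plan is to eliminate disjunctions by routing the disjunctive choice of each disjunctive linear TGD through a hidden ``choice'' attribute that is forced, by visibility, to range over a small pool of fresh constants, one per disjunct. The $k$-way disjunction is then simulated by $k$ non-disjunctive linear TGDs, each triggered by one of those constants. The visibility of the restricting relation is what gives this trick its teeth: since visible relations are fixed by the visible instance, the adversary cannot add new ``choice constants'' at will.

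Concretely, for each disjunctive linear TGD $\tau: R(\bar x) \then \exists \bar y ~ \bigvee_{i=1}^{k} \rho_i(\bar x, \bar y)$ — where I assume without loss of generality that each $\rho_i$ is a single atom — I will introduce $k$ fresh, distinct constants $c_{\tau,1},\dots,c_{\tau,k}$, a new hidden relation $\kw{Choice}_\tau$, a new visible unary relation $\kw{Vals}_\tau$, and for each $i$ a new visible nullary relation $\kw{Trigger}_{\tau,i}$. I will replace $\tau$ by the linear TGDs (with constants)
\begin{align*}
R(\bar x) &\then \exists \bar y, c ~ \kw{Choice}_\tau(\bar x, \bar y, c), \\
\kw{Choice}_\tau(\bar x, \bar y, c) &\then \kw{Vals}_\tau(c), \\
\kw{Choice}_\tau(\bar x, \bar y, c_{\tau,i}) &\then \rho_i(\bar x, \bar y) \qquad (1 \le i \le k), \\
\kw{Trigger}_{\tau,i} &\then \kw{Vals}_\tau(c_{\tau,i}) \qquad (1 \le i \le k).
\end{align*}
The new query will be $Q' := Q \wedge \bigwedge_{\tau,i} \kw{Trigger}_{\tau,i}$, which is still a Boolean UCQ and polynomial in the original input.

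For the forward direction, given a witness $\visinst$ of $\exists\PSB(Q,\calC,\bfS)$, I would extend $\visinst$ to $\visinst'$ by setting $\kw{Vals}_\tau := \{c_{\tau,1},\dots,c_{\tau,k}\}$ for each $\tau$ and marking every $\kw{Trigger}_{\tau,i}$ true. Realizability is routine, and the check that $\PSB(Q',\calC',\bfS',\visinst') = \true$ reduces to observing that, in any full $\fullinst'$ extending $\visinst'$ and satisfying $\calC'$, the constraint $\kw{Choice}_\tau(\cdot,\cdot,c) \then \kw{Vals}_\tau(c)$ together with visibility pin the last argument of every $\kw{Choice}_\tau$ fact to some $c_{\tau,i}$, which forces the corresponding disjunct $\rho_i$; the restriction of $\fullinst'$ to $\bfS$ therefore satisfies $\calC$, and so satisfies $Q$.

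The backward direction is the main obstacle, and requires handling the fact that an adversarial $\visinst'$ has some freedom over the visible content of $\kw{Vals}_\tau$. Given a witness $\visinst'$ of the new $\exists\PSB$, I will set $\visinst := \visinst'|_{\bfS_v}$ and argue it witnesses the original. The key first step is that $Q'$ contains every nullary $\kw{Trigger}_{\tau,i}$ as a conjunct, and since these are visible, they must already hold in $\visinst'$ — otherwise $Q'$ would fail in every extension, contradicting $\PSB(\visinst')$ on the realizable $\visinst'$; combined with the $\kw{Trigger}_{\tau,i} \then \kw{Vals}_\tau(c_{\tau,i})$ TGDs and realizability, this pins $\{c_{\tau,1},\dots,c_{\tau,k}\} \subseteq \kw{Vals}_\tau$ in $\visinst'$. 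The remaining subtlety is that $\kw{Vals}_\tau$ may have \emph{extra} elements, but extras only enlarge the set of legal $\kw{Choice}_\tau$ facts an adversary can introduce, so any $\fullinst$ extending $\visinst$, satisfying $\calC$, and falsifying $Q$ can be lifted to a full $\fullinst'$ extending $\visinst'$, satisfying $\calC'$, and falsifying $Q'$ by attaching, for each trigger of $\tau$ in $\fullinst$, a $\kw{Choice}_\tau$ fact stamped with the constant $c_{\tau,i}$ witnessing a disjunct that holds in $\fullinst$. This contrapositive yields $\PSB(\visinst') \Rightarrow \PSB(\visinst)$, completing the reduction.
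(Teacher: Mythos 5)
Your construction is in the same family as the paper's (both simulate the disjunctive choice through a visible lookup relation whose closed-world semantics pins a hidden choice value), but it is missing the one device that makes the paper's version work, and the omission is fatal. The gap is in the backward direction: you correctly show that every $\calC$-completion of $\visinst := \visinst'|_{\bfS_v}$ satisfies $Q$, but you never show that $\visinst$ has \emph{any} $\calC$-completion, and the definition of $\exists\PSB$ requires the witness to be realizable. This is not a formality. If some $\kw{Vals}_\tau$ contains a ``junk'' element $d \notin \{c_{\tau,1},\dots,c_{\tau,k}\}$, then every occurrence of the disjunctive constraint $\tau$ can be \emph{escaped} by stamping the corresponding $\kw{Choice}_\tau$ fact with $d$, so that no disjunct is forced at all. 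Hence $\visinst'$ can be realizable w.r.t.\ $\calC'$ while $\visinst$ is not realizable w.r.t.\ $\calC$, and nothing in your $Q'$ or $\calC'$ penalizes the junk: $\kw{Vals}_\tau$ is visible and entirely under the control of whoever picks $\visinst'$.

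This makes the reduction unsound, not merely under-justified. Take $\bfS_v = \{F_1, F_2\}$ (unary), $\bfS_h = \{U\}$ (binary), $Q = \exists x\, U(x,x)$, and $\calC$ consisting of the linear TGDs $F_1(x) \then \exists y\, U(x,y)$ and $U(x,y) \then F_2(y)$ together with the disjunctive linear TGD $F_1(x) \then \big(F_2(c_1) \wedge F_2(c_1')\big) \vee \big(F_2(c_2) \wedge F_2(c_2')\big)$ with four distinct constants. Any realizable witness must either have $F_1 = \emptyset$ (then $U=\emptyset$ is a completion violating $Q$) or $|F_2| \ge 2$ (then one can route each $F_1$-element to a different $F_2$-element, again violating $Q$); so $\exists\PSB(Q,\calC,\bfS) = \false$. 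But in your target problem, the instance $\visinst'$ with $F_1 = F_2 = \{a\}$, $\kw{Vals}_\tau = \{c_{\tau,1}, c_{\tau,2}, d\}$ and both triggers set is realizable (via $U(a,a)$ and $\kw{Choice}_\tau(a,d)$), and every $\calC'$-completion is forced to contain $U(a,a)$ because $F_2$ is closed and equal to $\{a\}$; so $\exists\PSB(Q',\calC',\bfS') = \true$. The paper's proof closes exactly this hole: it threads a \emph{hidden} relation $\kw{Check}$ through the query (via the conjunct $\kw{Check}(1)$ and the constraint $\kw{Init} \then \exists b_1\, b_2\ \kw{Or}(b_1,b_2) \wedge \kw{Check}(b_1) \wedge \kw{Check}(b_2)$), so that any junk row in the visible lookup table $\kw{Or}$ would permit a completion in which $\kw{Check}$ avoids $1$ and $Q'$ fails, contradicting $\PSB(Q',\calC',\bfS',\visinst')=\true$. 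You need an analogous gadget certifying that $\kw{Vals}_\tau$ contains \emph{only} the designated constants before the backward direction can go through.
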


\begin{proof}
We transform the schema $\bfS$ to a new schema $\bfS'$ as follows.
For every visible (resp., hidden) relation $R$ of $\bfS$ of arity $k$, 
we add to $\bfS'$ a corresponding visible (resp., hidden) relation $R'$ 
of arity $k+1$. The idea is that the additional attribute of $R'$ 
represents a truth value, i.e. either the constant $0$ or the constant $1$, 
which indicates the presence of a tuple in the original relation $R$. 
For example, the fact $R'(\bar a,1)$ indicates the presence of the tuple 
$\bar a$ in the relation $R$. We can then simulate the disjunctions in 
the sentences of $\calC$ by using conjunctions and an appropriate look-up 
table, which we denote by $\kw{Or}$. Formally, we introduce three additional 
relations $\kw{Or}$, $\kw{Check}$, and $\kw{Init}$, of arities $2$, $1$, and $0$, 
respectively, and we let $\kw{Or}$ and $\kw{Init}$ be visible and $\kw{Check}$ 
be hidden in $\bfS'$.
Then, for every disjunctive linear TGD in $\calC$ of the form 
$$
  R(\bar x) ~\then~ \exists \bar y ~ S(\bar z) \vel T(\bar z')
$$
we add to $\calC'$ the linear TGD with constants
$$
  R'(\bar x,1) ~\then~ 
  \exists \bar y ~ b_1 ~ b_2 ~ S'(\bar z,b_1) \et T'(\bar z',b_2) \et \kw{Or}(b_1,b_2) \ .
$$
We further add to $\calC'$ the following sentences:
$$
\begin{aligned}
% Gabriele: I know that the third conjunct is not needed, 
% but it makes the translation look more ``natural'' and the proofs slightly easier.
  \kw{Init} &~\then~ \kw{Or}(0,1) \et \kw{Or}(1,0) \et \kw{Or}(1,1) \\[1ex]
  \kw{Init} &~\then~ \exists b_1 ~ b_2 ~ \kw{Or}(b_1,b_2) \et \kw{Check}(b_1) \et \kw{Check}(b_2) \ .
\end{aligned}
$$
Finally, we transform every CQ of $Q$ of the form $\exists \bar y ~ S(\bar y)$
to a corresponding CQ of $Q'$ of the form 
$$
  \exists \bar y ~ S'(\bar y,1) \et \kw{Check}(1) \et \kw{Init}
$$
Note that if needed, we can even rewrite the CQ above so as to avoid constants: we introduce
another hidden unary relation $\kw{One}$ and the sentence $\kw{Init} \then \kw{One}(1)$, 
and we replace the conjunct $\kw{Check}(1)$ with $\exists b ~ \kw{Check}(b) \et \kw{One}(b)$.
Below, we prove that $\exists\PSB(Q,\calC,\bfS)=\true$ iff
$\exists\PSB(Q',\calC',\bfS')=\true$.

For the easier direction, we consider a realizable $\bfS_v$-instance $\visinst$ 
such that $\PSB(Q,\calC,\bfS,\visinst)=\true$. We can easily transform $\visinst$
into a realizable $\bfS'_v$-instance $\visinst'$ that satisfies 
$\PSB(Q',\calC',\bfS',\visinst')=\true$.
For this it suffices to copy the content of the visible relations of $\visinst$ 
into $\visinst'$, by properly expanding the tuples with the constant $1$, and 
then adding the facts $\kw{Init}$, $\kw{Or}(0,1)$, $\kw{Or}(1,0)$, and $\kw{Or}(1,1)$.

As for the converse direction, we consider a realizable $\bfS'_v$-instance $\visinst'$
such that $\PSB(Q',\calC',\bfS',\visinst')=\true$. By the definition of $Q'$ it is 
clear that $\visinst'$ contains the fact $\kw{Init}$, and hence also the facts 
$\kw{Or}(0,1)$, $\kw{Or}(1,0)$, and $\kw{Or}(1,1)$.
We first claim that it suffices to 
show that 
%the relation $\kw{Or}$ contains no other tuples besides 
%$(0,1)$, $(1,0)$, and $(1,1)$ -- that is, 
for every fact $\kw{Or}(b_1,b_2)$ in 
$\visinst'$, we have $b_1=1$ or $b_2=1$. If this were the case, then we could easily transform $\visinst'$ 
into a realizable $\bfS_v$-instance $\visinst$ that satisfies $\PSB(Q,\calC,\bfS,\visinst)=\true$.
For this we simply select the facts $R'(\bar a,1)$ in $\visinst'$, where $R$ 
is a visible relation of $\bfS$, and project away the constant $1$. 

Thus it  remains to show that for every fact $\kw{Or}(b_1,b_2)$ in
$\visinst'$, we have $b_1=1$ or $b_2=1$. 
For the sake of contradiction,
suppose that $\visinst'$ contains a fact of the form $\kw{Or}(b_1,b_2)$, with 
$b_1\neq 1$ and $b_2\neq 1$. 
Since $\visinst'$ is realizable, there is a full $\bfS'$-instance $\fullinst'$ 
such that $\fullinst'\sat\calC'$ and $\visible(\fullinst') = \visinst'$. 
Note that $\fullinst'$ may satisfy $Q'$ and, in particular, the conjunct $\kw{Check}(1)$. 
However, removing the single fact $\kw{Check}(1)$ from $\fullinst'$ gives a new 
instance $\fullinst''$ that still satisfies the sentences in $\calC'$, 
agrees with $\fullinst'$ on the visible part, and violates the query $Q'$. 
This contradicts the fact that $\PSB(Q',\calC',\bfS',\visinst')=\true$.
\end{proof}

From the previous two results we immediately see that the addition of (distinct) constants  leads to undecidability:

\begin{corollary}\label{cor:existPSB-constants-undecidable}
The problem $\exists\PSB(Q,\calC,\bfS)$ is undecidable 
as $Q$ ranges over Boolean CQs and $\calC$ over sets of 
linear TGDs (with constants). 
\end{corollary}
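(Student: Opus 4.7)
The plan is to chain together Theorem~\ref{thm:undecidexistpsbdisjunction} and Proposition~\ref{prop:existsPSB-disjunctions-to-constants}, then strip the remaining disjunction from the query. Composing the theorem with the proposition immediately yields undecidability of $\exists\PSB(Q,\calC,\bfS)$ with $Q$ a Boolean UCQ and $\calC$ a set of linear TGDs with constants. The only gap left toward the corollary is the step from a UCQ $Q$ to a single CQ $Q'$, while staying inside linear TGDs with constants.

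To close this gap I would lift the construction of Lemma~\ref{lem:ucqtocqpsbinstanceid} from the instance level to the schema level. Concretely, I would introduce a truth-value attribute on every relation $R$, rewrite each disjunct $Q_i$ as $Q'_i(b_i)$ with the tag $b_i$ propagated through all atoms, and then chain the rewritten disjuncts into a single CQ $Q'$ that begins with $\kw{Zero}(c_0)$, ends with $\kw{One}(c_n)$, and uses a visible ternary $\kw{Or}$-lookup to force the outcome to be $1$ whenever at least one $b_i$ equals $1$. Because we no longer control the visible instance, the singleton lookup facts $\kw{Or}(0,1,1)$, $\kw{Or}(1,0,1)$, $\kw{Or}(1,1,1)$, $\kw{Zero}(0)$, $\kw{One}(1)$, and the padding facts $R'(\bot,\dots,\bot,0)$ cannot be dropped into $\visinst'$ by hand as in the lemma; instead I would force them to appear in any realizable witness by adding linear TGDs with constants such as $\kw{Init} \then \kw{Or}(1,1,1)$, $\kw{Init} \then \kw{Zero}(0)$, and so on, and by putting $\kw{Init}$ as a conjunct of $Q'$, which forces $\kw{Init}$ into the witness.

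The forward direction of correctness is routine: from a realizable witness $\visinst$ of $\exists\PSB(Q,\calC,\bfS)$, build $\visinst'$ by tagging each fact with $1$ and adjoining $\kw{Init}$ together with the auxiliary facts; realizability and $\PSB$ both transfer by the instance-level argument of Lemma~\ref{lem:ucqtocqpsbinstanceid}. The hard part will be the backward direction, since a realizable $\visinst'$ may contain spurious tuples in $\kw{Or}$ (or spurious values in $\kw{Zero}$ or $\kw{One}$) that allow $Q'$ to be satisfied without any real disjunct holding. I would rule this out by reusing the $\kw{Check}$ trick from Proposition~\ref{prop:existsPSB-disjunctions-to-constants}: introduce a hidden $\kw{Check}$ relation and arrange the constraints so that any unwanted $\kw{Or}$-tuple can be realized by a full instance that does not contain $\kw{Check}(1)$; then dropping $\kw{Check}(1)$ from the realizer preserves $\calC'$ and visibility but falsifies $Q'$, contradicting $\PSB(Q',\calC',\bfS',\visinst') = \true$. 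Once $\visinst'$ is shown to have the intended canonical shape, extracting the $\visinst$ witnessing $\exists\PSB(Q,\calC,\bfS)$ by projecting out the tag is the same move as in Lemma~\ref{lem:ucqtocqpsbinstanceid}.
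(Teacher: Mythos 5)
Your proposal matches the paper's own proof, which consists precisely of composing Theorem~\ref{thm:undecidexistpsbdisjunction} with Proposition~\ref{prop:existsPSB-disjunctions-to-constants} and declaring the corollary immediate. The paper never spells out the remaining UCQ-to-CQ step (Lemma~\ref{lem:ucqtocqpsbinstanceid} is only stated and proved at the instance level, where the auxiliary $\kw{Or}$, $\kw{Zero}$, $\kw{One}$ and padding facts can simply be placed into $\visinst'$ by hand), so your schema-level lifting of that lemma --- forcing the lookup facts via $\kw{Init}$-triggered linear TGDs with constants and policing spurious $\kw{Or}$-tuples with the hidden-$\kw{Check}$ device from Proposition~\ref{prop:existsPSB-disjunctions-to-constants} --- is a correct identification of, and a sensible plan for closing, a gap the paper glosses over rather than a departure from its argument.
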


We now turn to analysing how the complexity scales with less powerful
background theories, e.g. linear TGDs without constants. As before, we reduce 
$\exists\PSB(Q,\calC,\bfS)$ to $\PSB(Q,\calC,\bfS,\visinst_{\{a\}})$. 
We can then reuse some ideas from \cite{johnsonklug} to solve the 
latter problem in polynomial space:

\begin{theorem}\label{thm:linearPSB-in-pspace}
The problem $\PSB(Q,\calC,\bfS,\visinst_{\{a\}})$ 
as $Q$ ranges over Boolean UCQs and $\calC$ over sets of linear TGDs without constants,
is in $\pspace$,  and the same is true for
$\exists\PSB(Q,\calC,\bfS)$.
\end{theorem}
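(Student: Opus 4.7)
The plan is a two-step argument. First, apply Theorem~\ref{thm:psbexistcollapselin} to collapse the schema-level problem to the instance-level one: $\exists\PSB(Q,\calC,\bfS)=\true$ iff $\PSB(Q,\calC,\bfS,\visinst_{\{a\}})=\true$. Hence it suffices to prove a PSPACE bound for the instance-level problem on the fixed visible instance $\visinst_{\{a\}}$. Since $\visinst_{\{a\}}$ has the singleton active domain $\{a\}$ and the TGDs in $\calC$ are constant-free, each chase step producing a visible fact admits the single homomorphism sending every value to $a$; no disjunctive choice arises, and $\Chase(\calC,\bfS,\visinst_{\{a\}})$ collapses to a single (possibly infinite) instance $\chase(\calC,\bfS,\visinst_{\{a\}})$. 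By Proposition~\ref{prop:PSB-characterizion} the question reduces to testing $\chase(\calC,\bfS,\visinst_{\{a\}}) \yields Q$.

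Next, I would nondeterministically guess a disjunct $Q_i$ of $Q$ and check whether $Q_i$ has a homomorphism into $\chase(\calC,\bfS,\visinst_{\{a\}})$. This is exactly the certain-answer problem for a CQ under constant-free linear TGDs on the singleton initial instance $\visinst_{\{a\}}$, so the Johnson--Klug technique applies almost verbatim. The key structural property, ensured by linearity of the TGDs, is that the chase has a tree shape: every derived fact has a unique parent fact produced by a single TGD firing, and any two facts sharing a null must have a common ancestor at the step introducing that null. A homomorphism from $Q_i$ into the chase is therefore described by a finite witness subtree whose leaves are the facts hit by the atoms of $Q_i$, together with the unifications these atoms impose.

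The PSPACE procedure traverses this witness subtree in depth-first order, guessing one TGD firing at a time and keeping in memory only the current root-to-leaf path together with, at each earlier branching point on that path, the null tuples already committed to by previously-visited branches. The essential combinatorial fact is that atom types up to renaming (over a fixed schema of bounded arity) are polynomially bounded in the input, so each branch can be shortened to polynomial length while preserving derivability, and the whole bookkeeping fits in polynomial space. The main obstacle is exactly this coordination of shared nulls across distinct branches without materialising the subtree; the depth-first traversal, combined with the observation that every null is uniquely identified by the position of its introducing step on the currently-stored path, is what makes the bookkeeping possible and yields the claimed PSPACE bound for both $\PSB(Q,\calC,\bfS,\visinst_{\{a\}})$ and, via the first step, $\exists\PSB(Q,\calC,\bfS)$.
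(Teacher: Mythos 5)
Your high-level skeleton matches the paper's: collapse $\exists\PSB$ to $\PSB(Q,\calC,\bfS,\visinst_{\{a\}})$ via Theorem~\ref{thm:psbexistcollapselin}, invoke Proposition~\ref{prop:PSB-characterizion} to reduce to a homomorphism test into $\chase(\calC,\bfS,\visinst_{\{a\}})$, and then run a Johnson--Klug-style space-bounded exploration of the linear chase. But there is a genuine gap at the heart of the algorithm: you assert that the remaining task ``is exactly the certain-answer problem for a CQ under constant-free linear TGDs on the singleton initial instance,'' and that the chase is tree-shaped with each null uniquely tied to its introducing step. That is false here, because the visible relations are \emph{closed-world}: whenever a chase step produces a fact in a visible relation, every value occurring in that fact must be identified with $a$, and this identification propagates to \emph{previously derived hidden facts} that share those nulls. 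So the object you must query is not the ordinary open-world chase, and a null is not simply ``a fresh value introduced at one node of the tree'' --- it may later collapse to $a$ for reasons discovered arbitrarily far away in the derivation. This is precisely the phenomenon that makes $\PSB$ differ from $\OWQ$ (cf.\ Example~\ref{ex:psbvscertain}): with constraints $F(x)\then\exists y\,U(x,y)$ and $U(x,y)\then G(y)$, $F,G$ visible and $U$ hidden, on $\visinst_{\{a\}}$ the null $y$ is forced to equal $a$, so $\exists x\,U(x,x)$ has a positive implication even though it is not certain in the open-world sense. Your procedure, which treats nulls as inert tree-local objects, would answer $\false$ here. The paper's proof spends most of its effort on exactly this point: when a fact $R(\bar b)$ is derived, it decides \emph{in advance} whether position $i$ must be set to $a$ by testing whether $\calC$ entails a dependency $R(\bar x)\then\exists\bar y\,S(\bar z)$ with $S$ visible and $\bar x(i)$ exported --- itself a $\pspace$ containment test under linear TGDs --- and iterates this until the equality type of the fact stabilizes (the stabilization loop is needed because identifying values with $a$ can change the equality type and trigger TGDs with repeated left-hand-side variables). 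Only after this normalization does the chase become forward-only and amenable to a level-by-level space-bounded exploration.

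A secondary, repairable issue: your claim that each branch ``can be shortened to polynomial length'' because atom types up to renaming are polynomially bounded does not hold for combined complexity, where the arity is part of the input and the number of equality types of a fact is exponential. The paper instead allows branches of exponential length and walks them with a polynomial-space counter, keeping at most $|I|$ facts per level; your depth-first variant could be made to work the same way, but the polynomial-length shortcut as stated is not sound.
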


\begin{proof}
By Proposition \ref{prop:PSB-characterizion}, 
$\PSB(Q,\calC,\bfS,\visinst_{\{a\}})=\true$ is equivalent to checking that 
there is a homomorphism $h$ from $\canondb(Q_i)$
of some CQ $Q_i$ of $Q$ to the instance 
$\chase(\calC,\bfS,\visinst_{\{a\}})$.
We can easily guess in $\np$
a CQ $Q_i$ of $Q$, some homomorphism 
$h$ from  $\canondb(Q_i)$, and the corresponding image 
$I$ of  $\canondb(Q_i)$ under $h$. Then, it remains to decide 
whether $I$ is contained in 
$\chase(\calC,\bfS,\visinst_{\{a\}})$.
Below, we explain how to decide this in polynomial space.

Recall that the instance 
$\chase(\calC,\bfS,\visinst_{\{a\}})$
is obtained as the limit of a series of operations that consist of alternatively 
adding new facts according to the TGDs in $\calC$ and identifying the values that 
appear in some visible relation with the constant $a$.
Note that the second type of operation may also affect tuples that belong
to hidden relations (this happens when the values are shared with facts
in the visible instance). Also note that the affected tuples could 
have been inferred during previous steps of the chase. 
Nonetheless, at the exact moment when a new fact $R(b_1,\ldots,b_k)$ is 
inferred by chasing a linear TGD, we can detect whether a certain value 
$b_i$ needs to be eventually identified with the constant $a$, and in this 
case we can safely replace the fact $R(b_1,\ldots,b_k)$ with 
$R(b_1,\ldots,b_{i-1},a,b_{i+1},\ldots,b_k)$. More precisely, to decide 
whether the $i$-th attribute of $R(\bar b)$ needs to be instantiated with 
the constant $a$, we test whether $\calC$ entails a dependency of the form
$R(\bar x) \:\rightarrow\: \exists \bar y ~ S(\bar z)$,
where $\bar x$ is a sequence of (possibly repeated) variables that has the same 
equality type as $\bar b$ (i.e. $\bar x(j) = \bar x(j')$ iff $\bar b(j) = \bar b(j')$), 
$S$ is a visible relation, $\bar z$ is a sequence of variables among $\bar x,\bar y$, 
and $\bar x(i) = \bar z(j)$ for some $1\le j\le |\bar z|$.
% -- in this case we know that 
%the $i$-th attribute needs to be instantiated with the constant $a$. 
Note that the above entailment can be rephrased as a containment problem between 
two CQs -- i.e. $R(\bar x)$ and $\exists \bar y ~ S(\bar z)$ -- under a given 
set of linear TGDs $\calC$, and we know from \cite{johnsonklug} that the 
latter problem is in $\pspace$. 
We also observe that, in order to discover all the values in $R(\bar b)$ 
that need to be identified with the constant $a$, it is not sufficient
to execute the above analysis only once on each position $1\le i\le\arity(R)$, 
as identifying some values with the constant $a$ may change the equality 
type of the fact and thus trigger new dependencies from $\calC$ 
(notably, this may happen when the linear TGDs are not IDs). 
We thus repeat the above analysis on all positions of $R$ and until the 
corresponding equality type stabilizes -- this can be still be done in 
polynomial space. After this, we add the resulting fact to the chase.

What we have just described is an alternative construction of 
$\chase(\calC,\bfS,\visinst_{\{a\}})$ 
%\gabriele{Change this: put a dot, then say each chase step can be done in pspace}
in which every chase step can be done using a $\pspace$ sub-procedure.
% to detect ``on-the-fly'' which 
%values need to be instantiated with the constant $a$, so as to
%avoid chasing the special EGDs induced by the visible instance.
We omit the routine details showing that this alternative construction
gives the same result, in the limit, as the version of the chase that 
we introduced at the beginning of Section \ref{sec:existsPSB}
(the arguments are similar to the proof of Lemma \ref{lem:disjunctive-chase-universal}).

Below, we explain how to adapt the techniques from \cite{johnsonklug}
to this alternative variant of the chase, in order to decide 
whether the homomorphic image $I$ of some CQ of $Q$ is contained in 
$\chase(\calC,\bfS,\visinst_{\{a\}})$.
For this, it is convenient to think of 
$\chase(\calC,\bfS,\visinst_{\{a\}})$ 
as a directed graph, where the nodes represent the facts in 
$\chase(\calC,\bfS,\visinst_{\{a\}})$ 
and the edges describe the inference steps that derive new facts 
from existing facts and sentences in $\calC$.
Note that, because the background theory sentences are linear TGDs, 
each inference step depends on at most one fact.
In particular, the nodes of this graph that have no incoming edge 
(we call them \emph{roots}) are precisely the facts from the instance 
$\visinst_{\{a\}}$, and all the other nodes are reachable from some root.
Moreover, by the previous arguments, one can check in polynomial 
space whether an edge exists between two given nodes.

Now, we focus on the minimal set of edges that connects all the facts of $I$
to some roots in the graph. The graph restricted to this set of edges
is a forest, namely, every node in it has at most one incoming edge.
Moreover, the height of this forest is at most exponential in $|I|$, 
and each level in it contains at most $|I|$ nodes. 
Thus, the restricted graph can be explored by a non-deterministic 
polynomial-space algorithm that guesses the nodes at a level on 
the basis of the nodes at the previous level and the linear TGDs in $\calC$.
The algorithm terminates successfully once it has visited all 
the facts in $I$, witnessing that $I$ is contained in 
$\chase(\calC,\bfS,\visinst_{\{a\}})$.
Otherwise, the computation is rejected after seeing 
exponentially many levels.
\end{proof}

\smallskip
We can derive matching lower 
bounds by reducing Open-World Query Answering (OWQ) to $\exists\PSB$:

\begin{proposition} \label{prop:OWtoexistPSB} 
For any class of sentences containing 
%IDs 
linear TGDs,
%(thus all classes considered here),
$\OWQ$ reduces to $\exists\PSB$.
\end{proposition}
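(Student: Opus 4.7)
The plan is to encode the $\OWQ$ input instance $\fullinst$ into the constraints using a single $0$-ary visible ``trigger'', so that choosing the appropriate visible content in $\exists\PSB$ amounts to asserting the presence of $\fullinst$ in the hidden part. Given an input $(Q,\calC,\fullinst)$ of $\OWQ$ over schema $\bfS$, I would form $\bfS' = \bfS \cup \{\kw{Go}\}$ with $\kw{Go}$ a fresh $0$-ary visible relation and every relation of $\bfS$ kept hidden; the new constraint set is $\calC' = \calC \cup \{\,\kw{Go} \then R(\bar a) \mid R(\bar a)\in\fullinst\,\}$, and the target query is $Q' = Q$. Each added sentence is a linear TGD (with constants drawn from $\fullinst$), so $\calC'$ lies in any constraint class already closed under linear TGDs.

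For the forward direction, assuming $\OWQ(Q,\calC,\fullinst)=\true$, I would exhibit $\visinst = \{\kw{Go}\}$ as the $\exists\PSB$-witness. Any full $\bfS'$-instance $\fullinst^\star$ with $\visible(\fullinst^\star) = \{\kw{Go}\}$ satisfying $\calC'$ necessarily contains every fact of $\fullinst$ among its hidden relations (the added TGDs fire on $\kw{Go}$) and also satisfies $\calC$; by the OWQ hypothesis, $\fullinst^\star$ satisfies $Q$, so $\PSB$ holds on $\visinst$. Realizability of $\visinst$ follows from any model of $\calC$ extending $\fullinst$, which exists in any non-degenerate OWQ instance.

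For the converse, assuming $\OWQ$ fails, fix a counterexample $\fullinst' \supseteq \fullinst$ satisfying $\calC$ but violating $Q$. Because $\kw{Go}$ is the only visible relation and has arity $0$, the only candidates for $\visinst$ are $\emptyset$ and $\{\kw{Go}\}$. For either one, pairing the chosen visible part with hidden instance $\fullinst'$ yields a full instance that satisfies $\calC'$ (the added TGDs are vacuous when $\visinst = \emptyset$, and hold because $\fullinst'\supseteq\fullinst$ when $\visinst = \{\kw{Go}\}$) and violates $Q$. Hence $\PSB$ fails on every realizable visible instance, so $\exists\PSB$ is $\false$.

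The only real subtlety, rather than a serious obstacle, is the degenerate case in which $\fullinst$ has no extension satisfying $\calC$: here $\OWQ$ is vacuously $\true$, yet $\{\kw{Go}\}$ may not be realizable and $\emptyset$ can admit models of $\calC'$ that violate $Q$, which would spuriously force $\exists\PSB = \false$. I would dispose of this by prefacing the reduction with a satisfiability check on $(\calC,\fullinst)$, outputting a tautologically $\true$ $\exists\PSB$ instance in that case; this is a standard maneuver and preserves the polynomial-time nature of the reduction.
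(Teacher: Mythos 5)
Your reduction is sound as stated, and its skeleton---a $0$-ary visible trigger, all of $\bfS$ made hidden, the trigger forcing a copy of $\fullinst$ into the hidden part---is the paper's. Both directions are argued correctly; in particular, your treatment of the empty visible instance via the $\OWQ$ counterexample is a legitimate alternative to the paper's device of conjoining the trigger to the query ($Q' = Q \et \good$), which is how the paper rules out $\emptyset$ as a potential witness. The degenerate-unsatisfiability worry in your last paragraph is moot for the classes the proposition is actually applied to (pure TGD classes, where the chase always produces an extension of $\fullinst$ satisfying $\calC$), though the proposed ``satisfiability check'' would not in general be polynomial-time for richer classes.

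The substantive divergence is how $\fullinst$ is forced. You add one linear TGD $\kw{Go} \then R(\bar a)$ per fact, with the values of $\fullinst$ appearing as \emph{constants} in the heads; the paper instead adds the single constant-free linear TGD $\good \then \exists \bar y ~ Q_\fullinst$, whose head is the canonical query of $\fullinst$ (every value replaced by an existentially quantified variable), and then recovers certainty of $Q$ by a homomorphism argument through the chase of $\fullinst$. This difference matters for what the proposition is used for: it feeds the $\pspace$-hardness of $\exists\PSB$ matching the upper bound of Theorem \ref{thm:linearPSB-in-pspace} (stated for linear TGDs \emph{without} constants) and the $\twoexp$-hardness for $\cf$ $\fgtgd$s (Corollary \ref{cor:fgtgdsrPSB-2exptime-hard}). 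Your reduction lands in the class of linear TGDs \emph{with} constants, for which $\exists\PSB$ is undecidable (Corollary \ref{cor:existPSB-constants-undecidable}), so it cannot establish tightness of those decidable upper bounds. The missing ingredient is precisely the constant-free encoding of the input instance as a conjunctive head over fresh existential variables; adopting it requires the one extra step your version avoids, namely arguing that a model containing only a homomorphic image of $\fullinst$ (rather than $\fullinst$ itself) still satisfies $Q$ when $Q$ is certain---which holds because the chase of $\fullinst$ maps homomorphically into any such model and (U)CQs are preserved under homomorphisms.
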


\begin{proof}
Let $Q$ be a query, $\calC$ a set of sentences over a schema $\bfS$, 
and $\fullinst$ an instance of the schema $\bfS$. We show how to reduce 
the Open-World Query Answering problem for $Q$, $\calC$, $\bfS$, and $\fullinst$ 
to a problem $\exists\PSB(Q',\calC',\bfS')$.
The idea is to create a copy of the instance $\fullinst$ in the hidden 
part of the schema, which can  then be  extended arbitrarily. 

Formally, we let the transformed schema $\bfS'$ consist of all the relations in 
$\bfS$, which are assumed to be hidden, plus an additional visible relation 
$\good$ of arity $0$. We then introduce a variable $y_b$ for each value 
in the active domain of $\fullinst$, and we let $\calC'$ contain all the 
sentences from $\calC$, plus the sentence 
$\good \:\rightarrow\: \exists \bar y ~ Q_\fullinst$,
where $\bar y$ contains one variable $y_b$ for each value $b$ in the active domain
of $\fullinst$ and $Q_\fullinst$ is the conjunction of the atoms of the form 
$A(y_{b_1},\ldots,y_{b_k})$, for all facts $A(b_1,\ldots,b_k)$ in $\fullinst$. 
Note that the visible instance $\visinst_\good$ that contains the atom $\good$ is realizable, 
since it can be completed (using the chase) to an $\bfS'$-instance $\fullinst'$
that satisfies the sentences in $\calC'$.
Let $Q'= Q \et \good$. We claim that $\exists\PSB(Q',\calC',\bfS')=\true$ if and only
if $Q$ is certain with respect to $\calC$ on $\fullinst$. In one direction, suppose $\exists \PSB(Q',\calC',\bfS')=\true$  holds.
The witness visible instance having $\PSB$ can only be the instance $\visinst_\good$.
Consider an instance $\fullinst'$ containing all facts of $\fullinst$ and satisfying
the original sentences $\calC$. By setting $\good$ to true in $\fullinst'$, we have an instance
satisfying $\calC'$, and since $\visinst_\good$ has a $\PSB$ then
we know that this instance must satisfy $Q'$ and hence $Q$.  Thus $Q$ is certain with respect to
$\calC$ on $\fullinst$ as required.
Conversely, suppose $Q$ is certain with respect to
$\calC$ on $\fullinst$. Letting $C_\fullinst$ be the chase of $\fullinst$ with respect to $\calC$,
we see that $C_\fullinst$ satisfies $Q$.  We will show there is a $\PSB$ for $Q',\calC',\bfS$ on $\visinst_\good$.
Thus fix an instance $\fullinst'$ where $\good$ and $\calC'$ holds. The additional sentence
implies that $\fullinst'$ contains the
image of $\fullinst$ under some homomorphism $h$. But $h$ extends to a homomorphism of $C_\fullinst$ into
$\fullinst'$. Thus $\fullinst'$ satisfies $Q$, and therefore satisfies $Q'$.
Thus there is a $\PSB$ on $\visinst_\good$ as required.

Thus we have reduced the Open-World Query Answering problem for $Q$, $\calC$, and $\bfS$ 
to the problem $\exists\PSB(Q',\calC',\bfS')$.
\end{proof}

\smallskip
From this and existing lower bounds on the Open-World Query Answering 
(\cite{casanova} coupled with a reduction from implication to $\OWQ$ for 
%IDs,
linear TGDs,  
\cite{taming} for $\fgtgd$s), we see that the prior upper bounds from
Theorem \ref{thm:linearPSB-in-pspace} and Corollary \ref{cor:decidexistpsbgtgd} 
are tight:

\begin{corollary}\label{cor:linearExistPSB-pspace-hard}
The problem $\exists\PSB(Q,\calC,\bfS)$, where $Q$ ranges over 
CQs and $\calC$ over sets of 
%IDs, 
linear TGDs,
is $\pspace$-hard.
\end{corollary}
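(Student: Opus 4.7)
The plan is to chain together Proposition \ref{prop:OWtoexistPSB} with the known $\pspace$ lower bound on Open-World Query Answering for linear TGDs. So the first step is to recall that the implication problem for inclusion dependencies, shown $\pspace$-hard by Casanova et al., can be reduced to $\OWQ$ with linear TGDs (this is the standard ``implication-to-$\OWQ$'' move: turn the antecedent of the implication being tested into the instance and turn the consequent into the Boolean CQ, using the linear TGDs as constraints). This gives $\pspace$-hardness of $\OWQ(Q,\calC,\fullinst)$ already when $\calC$ consists of linear TGDs (in fact, even IDs) and $Q$ is a single CQ.

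The second step is to apply Proposition \ref{prop:OWtoexistPSB}, which maps an $\OWQ$ instance $(Q,\calC,\fullinst)$ on schema $\bfS$ to an $\exists\PSB$ instance $(Q',\calC',\bfS')$. The only thing to check is that this reduction stays within linear TGDs. The transformed set $\calC'$ consists of the original constraints from $\calC$ (linear by assumption) together with the single new constraint $\good \then \exists \bar y ~ Q_{\fullinst}$, whose left-hand side is the single nullary atom $\good$, hence linear. The right-hand side is a conjunction, but as noted in the preliminaries we may always split it into single-atom-RHS rules by introducing auxiliary hidden predicates, and each of these split rules remains linear since the left-hand side is always a single atom (either $\good$ or an auxiliary predicate). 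Putting this together with the first step yields $\pspace$-hardness of $\exists\PSB(Q',\calC',\bfS')$, proving the corollary.

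The only subtlety, and thus the ``hard part'' of the write-up, is to check that Proposition~\ref{prop:OWtoexistPSB} really preserves the constraint class. As argued above this works cleanly for linear TGDs because the new constraint is triggered by a nullary atom and any RHS blow-up can be absorbed via auxiliary single-atom rules; it is worth stating this explicitly so that the reader sees the reduction is not leaving the class. Once that is in place, the corollary is a direct combination of an earlier-stated reduction and a well-known classical lower bound, and matches the upper bound of Theorem \ref{thm:linearPSB-in-pspace} exactly.
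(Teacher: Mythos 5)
Your proposal is correct and follows exactly the route the paper takes: it obtains $\pspace$-hardness of $\OWQ$ for linear TGDs from the Casanova et al.\ implication result and then applies Proposition~\ref{prop:OWtoexistPSB}, observing that the added constraint $\good \then \exists\bar y\,Q_{\fullinst}$ is itself linear (and that any right-hand-side conjunction can be split without leaving the class). No gaps.
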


\begin{corollary}\label{cor:fgtgdsrPSB-2exptime-hard}
The problem $\exists\PSB(Q,\calC,\bfS)$, where $Q$ ranges over
CQs and $\calC$ over sets of $\fgtgd$s without constants, is $2\exptime$-hard.
\end{corollary}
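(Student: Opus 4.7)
The plan is to reduce Open-World Query Answering for CQs under $\fgtgd$s without constants, known to be $\twoexp$-hard (the cited reference ``taming''), to $\exists\PSB$ for the same class. The mechanism is Proposition~\ref{prop:OWtoexistPSB}: since $\fgtgd$s contain linear TGDs, that proposition already provides an $\OWQ \to \exists\PSB$ reduction. What remains is to check that its output stays inside the $\fgtgd$-without-constants fragment.

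Recall that on input $(Q,\calC,\fullinst)$ the reduction forms $\bfS'$ by hiding all of $\bfS$ and adding a fresh visible $0$-ary relation $\good$, sets $Q' := Q \wedge \good$, and takes $\calC' := \calC \cup \{\, \good \then \exists\bar y ~ Q_\fullinst(\bar y) \,\}$, where each value $b \in \adomain(\fullinst)$ is encoded as a fresh existential variable $y_b$. The constraints inherited from $\calC$ are $\fgtgd$s without constants by hypothesis. The added constraint uses no constants (values of $\fullinst$ are encoded as variables) and has a $0$-ary left-hand side, so the frontier is empty and frontier-guardedness holds vacuously.

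The only subtlety is that the added TGD has a conjunctive right-hand side rather than a single atom. I would handle this by the standard trick: introduce a fresh hidden relation $R$ of arity $|\bar y|$ and replace the TGD by $\good \then \exists \bar y ~ R(\bar y)$ together with one TGD $R(\bar y) \then A(\bar y_A)$ for each atom $A(\bar y_A)$ of $Q_\fullinst$. Each resulting TGD is frontier-guarded — the first has no universal variables, and each $R(\bar y) \then A(\bar y_A)$ is guarded by $R(\bar y)$, which contains every universal variable — and none introduce constants, so the whole transformation stays within $\fgtgd$-without-constants while remaining polynomial. The only real obstacle is verifying precisely this preservation; once it is in place, the $\twoexp$ lower bound on $\OWQ$ transfers directly to $\exists\PSB$, matching the upper bound of Corollary~\ref{cor:decidexistpsbgtgd}.
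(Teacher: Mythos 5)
Your proposal is correct and matches the paper's argument: the paper obtains this corollary exactly by combining Proposition~\ref{prop:OWtoexistPSB} with the $\twoexp$-hardness of open-world query answering for $\fgtgd$s from \cite{taming}. Your explicit check that the added rule $\good \then \exists\bar y\, Q_\fullinst$ stays within constant-free $\fgtgd$s after splitting the conjunctive head via an auxiliary hidden relation is precisely the detail the paper leaves implicit (it assumes single-atom heads w.l.o.g. from the outset).
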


\subsection{Summary for Positive Query Implication}

The main results on positive query implication are highlighted in the table below.

\begin{center}
\scriptsize
\begin{tabular}{@{}l|lll@{}}
 Background Theory $\Sigma$ & $\PSB$ data complexity & $\PSB$ combined complexity & $\exists\PSB$ \\[1ex]
  \hline\\
  $\cf$        & $\exptime$-cmp    & $\twoexp$-cmp & $\pspace$-cmp\\
  ~~Linear TGD & Thm.~\ref{thm:expdatacomplexitypsb}~/~Thm \ref{th:psb-instance-hard} 
               & Thm.~\ref{thm:gnfoposinstancedecid}~/~Thm \ref{thm:psbidtwoexphardcomb} 
               & Thm.~\ref{thm:linearPSB-in-pspace}~/~Cor \ref{cor:linearExistPSB-pspace-hard}\\[1ex]
  $\cf$        & $\exptime$-cmp    & $\twoexp$-cmp & $\twoexp$-cmp\\
  ~~$\fgtgd$   & Thm.~\ref{thm:expdatacomplexitypsb}~/~Thm \ref{th:psb-instance-hard}
               & Thm.~\ref{thm:gnfoposinstancedecid}~/~Thm \ref{thm:psbidtwoexphardcomb} 
               & Cor.~\ref{cor:decidexistpsbgtgd}/Cor \ref{cor:fgtgdsrPSB-2exptime-hard}\\[1ex]
  $\cf$ Disj.  & $\exptime$-cmp    & $\twoexp$-cmp & undecidable\\
  ~~Linear TGD & Thm.~\ref{thm:expdatacomplexitypsb}~/~Thm \ref{th:psb-instance-hard} 
               & Thm.~\ref{thm:gnfoposinstancedecid}~/~Thm \ref{thm:psbidtwoexphardcomb} 
               & Thm.~\ref{thm:undecidexistpsbdisjunction}\\[1ex]
  \hline\\
  %Disj.        & $\exptime-cmp$    & $\twoexp$ & undecidable\\ 
 % ~~$\fgtgd$   & Thm \ref{thm:expdatacomplexitypsb}/Thm \ref{th:psb-instance-hard}  
%               & Thm \ref{thm:gnfoposinstancedecid}/Thm \ref{thm:psbidtwoexphardcomb}  
%               & Thm \ref{thm:undecidexistpsbdisjunction}\\[0.5ex]
%  \hline\hlineskip
%michael: I think the table is a bit too large
  Linear TGD    & $\exptime$-cmp    & $\twoexp$-cmp & undecidable\\
  \& $\fgtgd$   & Thm.~\ref{thm:expdatacomplexitypsb}~/~Thm.~\ref{th:psb-instance-hard} 
                & Thm.~\ref{thm:gnfoposinstancedecid}~/~Thm.~\ref{thm:psbidtwoexphardcomb}  
                & Cor.~\ref{cor:existPSB-constants-undecidable}\\
  \& GNFO       & & &          \\[1ex]
  %$\fgtgd$     & $\exptime$-cmp    & $\twoexp$ & undecidable\\
 %              & Thm \ref{thm:expdatacomplexitypsb}/Thm \ref{th:psb-instance-hard} 
 %              & Thm \ref{thm:gnfoposinstancedecid}/Thm \ref{thm:psbidtwoexphardcomb}
 %              & Cor \ref{cor:existPSB-constants-undecidable}\\[0.5ex]
  % \hline\hlineskip
  % $\gnfo$      &  $\exptime$-cmp   & $\twoexp$-cmp & undecidable\\
  %              & Thm \ref{thm:expdatacomplexitypsb}/Thm \ref{th:psb-instance-hard} 
  %              & Thm \ref{thm:gnfoposinstancedecid}/Thm \ref{thm:psbidtwoexphardcomb}  
  %              & Thm \ref{thm:undecidexistpsbdisjunction}\\[0.5ex]
  \hline
\end{tabular} \label{fig:possum}
\end{center}

\section{Negative Query Implication} \label{sec:negative}

\subsection{Instance-level problems} \label{subsec:neginstance}

Here we analyze the complexity of the problem $\NSB(Q,\calC, \bfS, \visinst)$.
As in the positive case, we begin with an upper bound that holds for a very rich 
class of background theories, which go far beyond referential constraints (and  $\fgtgd$s).

\begin{theorem} \label{thm:nsbgnf}
The problem $\NSB(Q,\calC, \bfS, \visinst)$,
as $Q$ ranges over Boolean UCQs and $\calC$ over sets  of $\gnfo$ sentences,
has $\twoexp$ combined complexity, $\exptime$ data complexity, and it is finitely controllable. 
\end{theorem}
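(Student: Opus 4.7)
The plan is to mirror the GNFO-satisfiability reduction used for $\PSB$ in Theorems \ref{thm:gnfoposinstancedecid} and \ref{thm:expdatacomplexitypsb}, and then read off all three claims from Theorem \ref{thm:gnfsat} and Proposition \ref{prop:auttrans}. First I would observe that $\NSB(Q, \calC, \bfS, \visinst) = \true$ iff no $\bfS$-instance $\fullinst$ satisfies $\calC$ with $\visible(\fullinst) = \visinst$ and $Q(\fullinst) = \true$; equivalently, iff the formula
$$
\phi^{\nsbof}_{Q, \calC, \bfS, \visinst} ~=~ Q ~\wedge~ \calC ~\wedge~ \bigwedge_{R \in \bfS_v}\Big(\!\!\bigwedge_{R(\bar a) \in \visinst}\!\! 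R(\bar a) ~\wedge~ \forall \bar x \big(R(\bar x) \then \!\!\bigvee_{R(\bar a) \in \visinst}\!\! \bar x = \bar a\big)\Big)
$$
is unsatisfiable. This formula differs from $\phi^\psbof_{Q, \calC, \bfS, \visinst}$ only in one sign: $Q$ appears positively rather than negated. Because $Q$ is a Boolean UCQ, it is a positive existential formula and thus already in $\gnfo$ (requiring no guarded negation at all), so the entire formula lies in $\gnfo$ by the same argument given in the positive case.

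From here the three claims fall out mechanically. For combined complexity I would invoke Theorem \ref{thm:gnfsat}, giving $\twoexp$ satisfiability testing for $\gnfo$. For finite controllability I would use the finite model property in Theorem \ref{thm:gnfsat}: if an infinite $\fullinst$ witnessing $\NSB = \false$ exists, then a finite one does too, so the problem is invariant under restricting to finite instances. For data complexity I would reuse the rank-and-width-sensitive analysis in the proof of Theorem \ref{thm:expdatacomplexitypsb}: when $Q, \calC, \bfS$ are fixed, both $\width{\phi^{\nsbof}}$ and $\rankcq{\phi^{\nsbof}}$ remain bounded by constants independent of $\visinst$, so Proposition \ref{prop:auttrans} produces an alternating two-way parity automaton of size polynomial in $|\visinst|$, whose emptiness is checkable in $\exptime$.

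The only nontrivial bookkeeping is the equality normalization of the visibility conjuncts $\bigvee_{R(\bar a) \in \visinst} \bar x = \bar a$, but the polynomial-time transformation used in Theorem \ref{thm:expdatacomplexitypsb} --- grouping disjuncts by the repetition pattern of $\bar a$, inserting inequalities to separate the groups, and conjoining the resulting equalities with the guard $R(\bar x)$ --- applies verbatim. There is really no new obstacle here: since $Q$ enters positively rather than negated, the translation is if anything strictly simpler than in the $\PSB$ case, and the entire argument piggybacks on the automaton machinery already developed for $\PSB$.
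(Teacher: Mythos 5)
Your proposal is correct and follows exactly the paper's own argument: the paper likewise reduces $\NSB$ to unsatisfiability of the same formula $\phi^{\nsbof}_{Q,\calC,\bfS,\visinst}$ with $Q$ occurring positively, and notes that the data complexity analysis of Theorem \ref{thm:expdatacomplexitypsb} carries over because the formulas agree on the part that varies with the instance. Nothing further is needed.
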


\begin{proof}
As in the positive case, we reduce to unsatisfiability of a $\gnfo$ formula.
We use a  variation of the same formula, where $\neg Q$ is now replaced by $Q$:
%$\nsbof(Q, \calC, \bfS, \visinst)$ defined as:
$$
\begin{aligned}
  \phi^\nsbof_{Q,\calC,\bfS,\visinst} ~=~
  & Q ~\et~ \calC ~\et~ \\[-1ex]
  & \!\!\!\!\!
    \bigwedge_{R\in \bfS_v} \!\! \Big( \!\!\bigwedge_{R(\bar{a})\in\visinst}\!\!\!\!\! R(\bar{a}) 
                                  \:\et\: 
                                  \forall \bar{x} ~ \big( R(\bar{x}) \then \!\!\!
                                                          \bigvee_{R(\bar{a})\in\visinst} \!\!\!\!
                                                          \bar{x}=\bar{a} \big) \Big)
\end{aligned}
$$
The data complexity analysis is as in Theorem \ref{thm:expdatacomplexitypsb}, since
the formulas agree on the part that varies with the instance.
\end{proof}

We can show that this bound is tight if the class of background theories is rich enough. 
This will follow from our lower bounds for positive query implication problems, since we can show
that $\NSB$ is at least as difficult as $\PSB$ for sentences in a powerful logical language.

\begin{theorem}\label{th:PSBtoNSB}
For any class of sentences that include connected $\fgtgd$s and for any UCQ $Q$,
$\PSB(Q,\calC,\bfS,\visinst)$ reduces in polynomial time to $\NSB(Q',\calC',\bfS',\visinst')$.
When $Q, \calC, \bfS$ are fixed in the input to this reduction, then $Q', \calC', \bfS'$ are fixed
in the output.
\par\noindent
Thus, for these background theories, the lower bounds for combined and data complexity given in
Theorems \ref{th:psb-instance-hard} and \ref{thm:psbidtwoexphardcomb} 
apply to negative query implications as well.
\end{theorem}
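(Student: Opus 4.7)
The plan is to turn a $\PSB$ instance into an $\NSB$ instance by introducing a fresh hidden ``flag'' atom $P$ and a fresh visible ``trap'' relation $R$: the flag $P$ will encode ``$Q$ failed in this completion,'' and the constraints are arranged so that the trap $R$ is sprung whenever $P$ and some disjunct of $Q$ coexist. Concretely, starting from $(Q, \calC, \bfS, \visinst)$ with $Q = \bigvee_i Q_i$ a Boolean UCQ and $Q_i = \exists \bar{x}_i\,\phi_i(\bar{x}_i)$, I would output $\bfS' = \bfS \cup \{P, R\}$ with $P$ hidden $0$-ary and $R$ visible unary, $\visinst'$ equal to $\visinst$ together with an empty interpretation for $R$, $Q' = P$, and $\calC'$ equal to $\calC$ together with one constraint per disjunct,
\[
  \phi_i(\bar{x}_i) \,\wedge\, P ~\rightarrow~ R(c),
\]
where $c$ is a fixed constant. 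Because the right-hand side is ground the frontier is empty, so each new constraint is trivially frontier-guarded; its co-occurrence graph coincides with that of $\phi_i$, so it is a connected FGTGD whenever $\phi_i$ is connected.

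For correctness I would argue both directions as follows. Assume $\PSB$ holds and let $\fullinst'$ be any completion of $\visinst'$ that satisfies $\calC'$; its restriction $\fullinst$ to $\bfS$ satisfies $\calC$ with $\visible(\fullinst) = \visinst$, so by the $\PSB$ hypothesis some $\phi_i$ has a match in $\fullinst$, hence also in $\fullinst'$. Because $R$ is visible and empty in $\visinst'$, the fact $R(c)$ cannot appear in $\fullinst'$, so the new constraint forces $P \notin \fullinst'$ and therefore $Q'(\fullinst') = \false$; this yields $\NSB$. Conversely, assume $\NSB$ and suppose for contradiction that $\PSB$ fails: pick a completion $\fullinst$ of $\visinst$ satisfying $\calC$ in which no $\phi_i$ matches. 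Extending $\fullinst$ by adding only the $0$-ary fact $P$ (and no tuples to $R$) yields $\fullinst'$ whose visible part equals $\visinst'$ and which still satisfies $\calC'$ (the new constraints are vacuous since no $\phi_i$ matches), but $Q'(\fullinst') = \true$, contradicting $\NSB$.

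The main obstacle is keeping each new constraint a \emph{connected} FGTGD in the corner cases where $Q$ has multiple disjuncts or a disjunct with a disconnected body. For the multi-disjunct case I would first invoke Lemma~\ref{lem:ucqtocqpsbinstanceid} to reduce the input UCQ problem to a CQ problem, after which only a single constraint is added. For a CQ with disconnected body, I would add a single fresh guard atom whose variables cover all of $\bar{x}_i$, tack it onto the left-hand side so that the co-occurrence graph becomes connected, and arrange constraints so that its presence never removes a match of $\phi_i$; alternatively the few non-connected constraints can be absorbed into the richer class already admitted by the hypothesis that it ``contains'' connected FGTGDs. Finally, observe that the transformation leaves $Q'$, $\calC'$ and $\bfS'$ fixed once $Q$, $\calC$ and $\bfS$ are fixed, with only $\visinst'$ depending linearly on $\visinst$, so the combined-complexity lower bound of Theorem~\ref{thm:psbidtwoexphardcomb} and the data-complexity lower bound of Theorem~\ref{th:psb-instance-hard} transfer verbatim to the corresponding $\NSB$ problems.
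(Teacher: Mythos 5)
Your core reduction is essentially the one in the paper: the paper adds a hidden $0$-ary flag $\kw{Good}$ (your $P$) as the new query, a visible $0$-ary trap $\kw{Error}$ (your $R(c)$), and one constraint $Q_i(\bar y) \wedge \kw{Good} \rightarrow \kw{Error}$ per disjunct, with $\visinst' = \visinst$; both correctness directions are argued exactly as you do. So the first part of your proposal is fine.

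The genuine gap is in your repair for connectedness, which is not a corner case but the whole point of the theorem's phrasing: the reduction must output \emph{connected} $\fgtgd$s so that it stays inside any class containing them (this is what Corollary \ref{cor:nsb-cb-hard} relies on). Your constraint $\phi_i(\bar x_i) \wedge P \rightarrow R(c)$ has the $0$-ary atom $P$ sharing no variable with $\phi_i$, and this is precisely the disconnectedness the paper flags in its own first-pass construction (semantically, such a constraint is not preserved under disjoint unions: a component containing $P$ unioned with a component containing a match of $\phi_i$ violates it even though each component satisfies it). Your proposed fix --- a fresh guard atom over $\bar x_i$ --- connects the variables of $\phi_i$ to each other but does nothing to attach $P$, and adding an extra atom to the left-hand side weakens the constraint, which threatens the direction of the correctness proof where the constraint must fire; ``arrange constraints so that its presence never removes a match'' is exactly the part that needs an argument. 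The fallback of ``absorbing the non-connected constraints into the richer class'' contradicts the statement being proved. The paper's actual repair is more invasive: it adds one extra attribute to \emph{every} relation of $\bfS$ and to the flag (making it $\kw{Good}(w)$), threads a single dummy variable $w$ through every atom of every constraint so that $Q_i(\bar y, w) \wedge \kw{Good}(w) \rightarrow \kw{Error}(w)$ is connected, and introduces a visible relation $\kw{Check}$ together with constraints $R(\bar x, w) \rightarrow \kw{Check}(w)$ and a visible instance pinning $\kw{Check}$ to a single fresh value $a$, which keeps the transformed problem equivalent to the original. Without something like this, your reduction only establishes the lower bounds for classes containing all $\fgtgd$s, not for those containing only the connected ones.
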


\begin{proof}
We first provide a reduction that works with any class of  TGDs
allowing arbitrary conjunctions in the left-hand sides (e.g. frontier-guarded TGDs).
Subsequently, we show how to modify the constructions in order to preserve 
connectedness.

The schema $\bfS'$ is obtained by copying both the visible and the hidden relations 
from $\bfS$ and by adding the following relations: a visible relation 
$\kw{Error}$ of arity $0$ and a hidden relation $\kw{Good}$ of arity $0$. 
The sentences $\calC'$ will contain the  sentences from $\calC$, 
plus one frontier-guarded TGD of the form
$$
  Q_i(\bar y) \et \kw{Good} ~\then~ \kw{Error}
$$
for each disjunct $\exists \bar y ~ Q_i(\bar y)$ of the UCQ $Q$.
Finally, the query and the visible instance for $\NSB$ are defined as follows:
$Q'=\kw{Good}$ and $\visinst' = \visinst$ (in particular, we initialize 
the visible relation $\kw{Error}$ with the empty set).

We now verify that $\PSB(Q,\calC,\bfS,\visinst)=\false$ iff $\NSB(Q',\calC',\bfS',\visinst')=\false$.
Suppose that $\PSB(Q,\calC,\bfS,\visinst)=\false$, namely, that there 
is an $\bfS$-instance $\fullinst$ such that $\fullinst \nsat Q$, 
$\fullinst \sat \calC$, and $\visible(\fullinst) = \visinst$.
Let $\fullinst'$ be the $\bfS'$-instance obtained from $\fullinst$ by adding
the single hidden fact $\kw{Good}$. Clearly, $\fullinst'$ satisfies the query $Q'$
and also the sentences in $\calC'$. In particular, it satisfies every 
sentence $Q_i(\bar y) \et \kw{Good} \then \kw{Error}$ because $\fullinst$
violates every disjunct $\exists \bar y ~ Q_i$ of $Q$. 
Hence, we have $\NSB(Q',\calC',\bfS',\visinst')=\false$.
Conversely, suppose that $\NSB(Q',\calC',\bfS',\visinst')=\false$, namely,
that there is an $\bfS'$-instance $\fullinst'$ such that
$\fullinst'\sat Q'$, $\fullinst'\sat\calC'$, and $\visible(\fullinst') = \visinst'$.
By copying the content of $\fullinst'$ for those relations belong to the schema
$\bfS$, we obtain an $\bfS$-instance $\fullinst$ that satisfies the sentences $\calC$.
Moreover, because $\fullinst'$ contains the fact $\kw{Good}$ but not the fact 
$\kw{Error}$, $\fullinst'$ violates every conjunct $\exists \bar y ~ Q_i(\bar y)$ 
of $Q$, and so $\fullinst$ does. 
This shows that $\PSB(Q,\calC,\bfS,\visinst)=\false$.

We observe that the sentences in the above reduction use left-hand sides
that are not connected. In order to preserve connectedness, it is sufficient 
to modify the above constructions by adding a dummy variable that is shared 
among all atoms. 
More precisely, we expand the relations of the schema $\bfS$ and the relation
$\kw{Good}$ with a new attribute, and we introduce a new visible relation $\kw{Check}$ 
of arity $1$. 
The dummy variable will be used to enforce connectedness in the left-hand sides,
and the relation $\kw{Check}$ will gather all the values associated with the
dummy attribute. Using the visible instance, we can also check that the relation
$\kw{Check}$ contains exactly one value.
The sentences in the background theory are thus modified as follows. 
Every sentence
$R_1(\bar x_1) \et \ldots \et R_m(\bar x_m) \then \exists \bar y ~ S(\bar z)$ 
in $\calC'$ is transformed into 
$R_1(\bar x_1,w) \et \ldots \et R_m(\bar x_m,w) \then \exists \bar y ~ S(\bar z,w)$.
In particular, note that the sentence
$Q_i(\bar y) \et \kw{Good} \then \kw{Error}$ becomes
$Q_i(\bar y,w) \et \kw{Good}(w) \then \kw{Error}(w)$,
which is now a connected frontier-guarded TGD.
Furthermore, for every relation $R(\bar x)$ in $\bfS$, we add the sentence
$$
  R(\bar x,w) ~\then~ \kw{Check}(w)
$$
and we do the same for the relation $\kw{Good}$:
$$
  \kw{Good}(w) ~\then~ \kw{Check}(w) \ .
$$
Finally, the query is transformed into $Q' = \exists w ~ \kw{Good}(w)$ and
the visible instance $\visinst'$ is expanded with a fresh dummy value $a$
on the additional attribute and with the visible fact $\kw{Check}(a)$.
\end{proof}

%As a matter of fact, this also shows that, in the presence of frontier-guarded
%TGDs, we cannot hope to reduce the instance-based $\NSB$ problem to the 
%evaluation of a Datalog program, as this would give a polynomial-time
%upper bound for data complexity. On the other hand, we will see later 
%(Theorem \ref{th:definability-active-domain-controllable}) that an approach 
%based on Datalog is possible when we restrict to constraints given by linear TGDs.

\smallskip
As mentioned above,
combining the above reduction with Theorems \ref{th:psb-instance-hard} and
\ref{thm:psbidtwoexphardcomb}, we get the following hardness results for
instance-based $\NSB$.

\begin{corollary}\label{cor:nsb-instance-hard}
There are a Boolean UCQ $Q$ and a set $\calC$ of IDs over a schema $\bfS$
for which the problem $\NSB(Q,\calC,\bfS,\visinst)$ is $\exptime$-hard
in data complexity (that is, as $\visinst$ varies over instances).
\end{corollary}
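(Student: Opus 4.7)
The statement asserts \exptime-hardness for \NSB data complexity when the constraints are restricted to inclusion dependencies. The seemingly natural path---composing the \exptime-hardness of \PSB with IDs (Theorem~\ref{th:psb-instance-hard}) with the \PSB-to-\NSB reduction (Theorem~\ref{th:PSBtoNSB})---does not suffice directly, because the latter reduction introduces rules of the form $Q_i(\bar y)\wedge\kw{Good}(w)\to\kw{Error}(w)$ whose multi-atom bodies fall outside the ID fragment. The plan is therefore to engineer a custom \PSB-to-\NSB reduction tailored to the particular shape of the \PSB hardness instance of Theorem~\ref{th:psb-instance-hard}, taking advantage of the fact that its error-detection CQs have a very regular join structure.

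Starting from $(Q_0,\calC_0,\bfS_0,\visinst_0)$ of Theorem~\ref{th:psb-instance-hard}---and, via Lemma~\ref{lem:ucqtocqpsbinstanceid}, assuming $Q_0$ is a single Boolean CQ $\exists \bar y~A_1(\bar y_1)\wedge\cdots\wedge A_k(\bar y_k)$---I plan to replace the forbidden multi-atom FGTGD $Q_0\wedge\kw{Good}\to\kw{Error}$ by a cascade of inclusion dependencies. The cascade introduces hidden auxiliary relations $M_1,\ldots,M_k$ of progressively widening arity; at step $j$, an ID $M_{j-1}(\bar z_{j-1})\to \exists \bar u_j~M_j(\bar z_{j-1},\bar u_j)$ appends fresh variables, and an additional ID $M_j(\bar z_j)\to A_j(\bar z_j')$ binds the relevant coordinates of $M_j$'s tuple to an actual atom of the original CQ $Q_0$ in the completion. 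A final ID $M_k(\bar z_k)\to\kw{Error}$ closes the chain, with $\kw{Error}$ visible-but-absent in $\visinst'$. The \NSB query is $Q'=\kw{Good}$ and $\visinst'=\visinst_0$, augmented only with a handful of polynomial-size lookup facts needed for the bindings.

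The main obstacle is guaranteeing that this cascade truly captures the semantics of the multi-atom body: IDs cannot enforce equality between variables coming from different source atoms, so the chase must be engineered so that the fresh existential witnesses of the $M_j$'s unify in precisely the way dictated by the shared variables among the $A_j$'s. The workaround exploits the regular structure of the CQs $Q_{0,i}$ arising from Theorem~\ref{th:psb-instance-hard}: each joins a small, fixed collection of hidden relations (for cells and transitions) with a single visible error table, and the shared variables always appear at specific ``join'' positions. By threading these joined positions carefully through the widening tuples $\bar z_j$, and relying on visible-head IDs that restrict fresh witnesses to the visible active domain (introducing the only kind of disjunctive behaviour permitted by the chase), the cascade can be set up so that $\kw{Error}$ is forced on a completion exactly when $Q_0$ is satisfied. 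Once this correctness lemma is in place, the argument of Theorem~\ref{th:PSBtoNSB} transfers verbatim: a completion admits $\kw{Good}$ iff it avoids $Q_0$, so $\NSB(Q',\calC',\bfS',\visinst')=\true$ iff $\PSB(Q_0,\calC_0,\bfS_0,\visinst_0)=\true$. Combining with the \exptime data-complexity lower bound of Theorem~\ref{th:psb-instance-hard} (together with Lemma~\ref{lem:ucqtocqpsbinstanceid}) yields the corollary.
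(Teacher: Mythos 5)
Your instinct that directly composing Theorem~\ref{th:psb-instance-hard} with Theorem~\ref{th:PSBtoNSB} does not stay within the class of inclusion dependencies is correct, but the repair you propose cannot work, and the goal you are pursuing is in fact unattainable. The paper's own proof of this corollary \emph{is} the direct composition: it applies the reduction of Theorem~\ref{th:PSBtoNSB} to the ID-based $\PSB$ instance of Theorem~\ref{th:psb-instance-hard}, and the resulting constraint set consists of connected frontier-guarded TGDs (the rules $Q_i(\bar y)\et\kw{Good}\then\kw{Error}$ have multi-atom bodies), not IDs. The occurrence of ``IDs'' in the statement is a slip in the paper: the summary table attributes the $\exptime$-hard data complexity of $\NSB$ to connected $\fgtgd$s, whereas for linear TGDs --- and a fortiori for IDs --- Corollary~\ref{cor:complnsbinstlin} places the data complexity of $\NSB$ in $\ptime$ (with a matching $\ptime$-hardness bound in Proposition~\ref{prop:graph-reachability}). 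An $\exptime$-hardness result for $\NSB$ under IDs in data complexity would therefore imply $\ptime=\exptime$.

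The concrete failure point of your cascade is the direction of the implications. An inclusion dependency $M_j(\bar z_j)\then A_j(\bar z_j')$ \emph{generates} an $A_j$-fact from an $M_j$-fact; it does not \emph{detect} that an $A_j$-fact is already present. A counterexample to $\NSB$ is an instance chosen adversarially, subject only to satisfying $\calC'$, agreeing with $\visinst'$ on the visible part, and satisfying $Q'=\kw{Good}$. The adversary simply takes any completion of $\visinst'$ satisfying the original constraints, sets every auxiliary relation $M_0,\dots,M_k$ to be empty, and asserts $\kw{Good}$: all of your cascade IDs are then vacuously satisfied, $\kw{Error}$ is never forced, and $\NSB$ fails regardless of whether $Q_0$ holds. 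No amount of careful threading of join positions can prevent this, because the adversary is never obliged to populate the hidden $M_j$ relations at all. The multi-atom body of $Q_i(\bar y)\et\kw{Good}\then\kw{Error}$ is precisely what makes the detection of a match of $Q_i$ unavoidable, and this is the feature that single-atom bodies cannot simulate --- as Corollary~\ref{cor:complnsbinstlin} certifies. The correct reading of the corollary, and the one the paper actually proves, replaces ``IDs'' by ``connected $\fgtgd$s'' (compare Corollary~\ref{cor:nsb-cb-hard}, which is phrased that way).
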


\begin{corollary}\label{cor:nsb-cb-hard}
The problem $\NSB(Q,\calC,\bfS,\visinst)$, as $\calC$ ranges
over  sets  of connected frontier-guarded TGDs,  $\bfS$  over schemas, $Q$  over conjunctive queries
and $\visinst$ over instances, is $\twoexp$-hard.
\end{corollary}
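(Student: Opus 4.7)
The plan is to chain together two results already established in the paper: the $\twoexp$-hardness of $\PSB$ for IDs and CQs (Theorem~\ref{thm:psbidtwoexphardcomb}) and the polynomial-time reduction from $\PSB$ to $\NSB$ that preserves connected $\fgtgd$s (Theorem~\ref{th:PSBtoNSB}). The observation that makes the chain close is that every inclusion dependency is, by definition, a linear TGD whose single body atom trivially has a connected co-occurrence graph; hence IDs form a subclass of connected $\fgtgd$s.

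First I would invoke Theorem~\ref{thm:psbidtwoexphardcomb} to obtain a $\twoexp$-hard instance of $\PSB(Q,\calC,\bfS,\visinst)$, where $Q$ is a conjunctive query (the theorem is stated for UCQs but combined with Lemma~\ref{lem:ucqtocqpsbinstanceid} we may take $Q$ to be a CQ, and the lemma preserves the ID constraint class) and $\calC$ consists only of IDs. Then I would feed this input into the reduction constructed in the proof of Theorem~\ref{th:PSBtoNSB}, in its connectedness-preserving variant (the second part of that proof), obtaining $Q'$, $\calC'$, $\bfS'$, $\visinst'$ such that $\PSB(Q,\calC,\bfS,\visinst) = \false$ iff $\NSB(Q',\calC',\bfS',\visinst') = \false$.

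I would then check that the output of the reduction lands in the required class: the produced query $Q' = \exists w~\kw{Good}(w)$ is a CQ (a single atom); the added constraints are each of the shape $Q_i(\bar y,w) \et \kw{Good}(w) \then \kw{Error}(w)$, $R(\bar x,w) \then \kw{Check}(w)$, or $\kw{Good}(w) \then \kw{Check}(w)$, and the shared dummy variable $w$ makes every left-hand side connected; and the original IDs, after the uniform expansion by $w$, remain connected linear TGDs, in particular connected $\fgtgd$s. Since the reduction runs in polynomial time, hardness transfers.

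There is no real obstacle here: the only point requiring care is verifying that the connectedness repair in the second half of the proof of Theorem~\ref{th:PSBtoNSB} indeed applies uniformly to the ID inputs coming from Theorem~\ref{thm:psbidtwoexphardcomb}, and that the dummy-variable trick does not break the semantic equivalence (the additional attribute is pinned to the single value $a$ via the visible relation $\kw{Check}$, so the expanded instance corresponds bijectively to the original). Once this is observed, the $\twoexp$ lower bound for $\NSB$ over connected $\fgtgd$s and CQs follows immediately.
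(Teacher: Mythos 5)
Your proposal is correct and follows essentially the same route as the paper, which obtains this corollary precisely by composing the $\twoexp$-hardness of $\PSB$ for IDs and CQs (Theorem~\ref{thm:psbidtwoexphardcomb}) with the connectedness-preserving reduction of Theorem~\ref{th:PSBtoNSB}. Your added verification that the reduction's output (the single-atom query $\exists w~\kw{Good}(w)$ and the $w$-expanded constraints) lands in the class of connected $\fgtgd$s is exactly the check the paper leaves implicit.
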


\medskip
Thus far, the negative query implication 
results have been similar to the positive ones.  We  will now show
a strong contrast in the case of IDs and linear TGDs.  Recall that the $\PSB$
problems were highly intractable even for fixed schema, query, and background theory.
We begin by showing that  $\NSB(Q,\calC,\bfS,\visinst)$ 
can be solved easily by looking only at full instances that agree with $\visinst$ 
on the visible part and whose active domains are almost the same as that of $\visinst$:

\begin{definition}\label{def:active-domain-controllable}
The problem $\NSB(Q,\calC,\bfS,\visinst)$ is said to be \emph{active domain controllable}
if it is equivalent to asking that for every instance $\fullinst$ 
{\sl over the active domain of $\visinst$}, 
if $\fullinst$ satisfies $\calC$ and $\visinst=\visible(\fullinst)$,
then $Q(\fullinst)=\false$.
\end{definition}

It is clear that the the problem $\NSB(Q,\calC,\bfS,\visinst)$ is simpler
when it is active domain controllable, as in this case we could guess 
a full instance $\fullinst$ over the active domain of $\visinst$ and
then reduce the problem to checking whether $Q$ holds on $\fullinst$.

We give a simple argument that $\NSB$ under IDs is active domain controllable.
Let $\calC$ be a set of IDs over a schema $\bfS$, $Q$ be a UCQ, and $\visinst$ 
be a visible instance such that $\NSB(Q,\calC,\bfS,\visinst)=\false$.
Without loss of generality --- that is, by adding a dummy visible fact over
a visible relation that does not occur in the sentences of the background theory ---
we can assume that the active domain $\adomain(\visinst)$ of $\visinst$
contains at least one element.
The fact that $\NSB(Q,\calC,\bfS,\visinst)=\false$ implies the existence 
of a full instance $\fullinst$ such that $\fullinst\sat\calC$, $
\visible(\fullinst)=\visinst$, and $\fullinst\sat Q$.
Now take any element $a \in \adomain(\visinst)$ and let 
$h$ be the homomorphism that is the identity over $\adomain(\visinst)$ and 
maps any other value from $\adomain(\fullinst)\setminus\adomain(\visinst)$ to $a$.
Since the sentences $\calC$ are IDs (in particular, since the left-hand side atoms
do not have 
%repeated occurrences of the same variable),
constants or repeated occurrences of the same variable), 
we know that $h(J)\sat\calC$. 
Similarly, we have $h(J)\sat Q$. Hence, $h(J)$ is an instance over the active domain 
of $\visinst$ that equally witnesses $\NSB(Q,\calC,\bfS,\visinst)=\false$.

Note that our hardness results for $\PSB$ (in particular, Theorem \ref{thm:psbidtwoexphardcomb}),
imply that $\PSB$ is \emph{not} active domain controllable even for IDs, since 
such a result would easily give membership in $\conp$.

The following example shows that linear TGDs are not always active domain controllable.

\begin{example}\label{ex:active-domain-controllability}
Let $\bfS$ be the schema with a hidden relation $R$ of arity $2$, with two visible 
relations $S,T$ of arities $1$, $0$, respectively, and with the sentences:
$$
  R(x,y) ~\then~ S(x) 
  \qquad\qquad
  R(x,x) ~\then~ T \ .
$$
Note that the sentences are linear TGDs and they are even \emph{full} 
-- no existential quantifiers on the right.
The conjunctive query is $Q = \exists x~y ~ R(x,y)$. 
Further let the visible instance $\visinst$ consist of the single fact $S(a)$.
Clearly, every full instance $\fullinst$ over the active domain $\{a\}$ that 
satisfies both $\calC$ and $Q$ must also contain the facts $R(a,a)$ and $T$, 
and so such an instance cannot agree with $\visinst$ in the visible part.
On the other hand, the instance that contains the facts $S(a)$ and $R(a,b)$, 
for a fresh value $b$, satisfies both $\calC$ and $Q$ and moreover agrees with 
$\visinst$. This shows that $\NSB(Q,\calC,\bfS,\visinst)$ is not active domain 
controllable.
\end{example}

The example shows that we need to weaken the notion of active domain controllability to allow
some elements outside of the active domain. The following definition allows a fixed number of exceptions.
\begin{definition}\label{def:almost active-domain-controllable}
For a number $k$, the problem $\NSB(Q,\calC,\bfS,\visinst)$ is said to be \emph{active domain controllable modulo $k$}
if it is equivalent to asking that for every instance $\fullinst$ whose active domain contains at most
$k$ elements outside of the active domain of $\visinst$,
if $\fullinst$ satisfies $\calC$ and $\visinst=\visible(\fullinst)$,
then $Q(\fullinst)=\false$.
\end{definition}

\begin{theorem} \label{thm:almostadomcontrollable}
For any collection $\calC$ of Linear TGDs, the problem
$\NSB(Q,\calC,\bfS,\visinst)$ is active domain controllable modulo $k$, where
$k$ is the maximal arity of any relation in the schema.
\end{theorem}

\begin{proof}
Let $k$ be the maximal arity of any relation in the schema.
The main idea is to compress an arbitrary counterexample instance to $\NSB$ by
one with at most $k$ elements outside the active domain, by taking $k$ ``representative elements'' outside the active domain and
replacing arbitrary tuples outside the active domain with these $k$ elements.
In doing this replacement, we should take into account equalities within each tuple.

Formally, we say that two tuples $\vec t$ and $\vec t'$ of the same length are \emph{equality equivalent} if:
$t_i=t_j$ if and only if $t'_i=t'_j$ and for every schema constant
$c$, $t_i=c$ if and only if $t'_i=c$.

Suppose that $\NSB(Q,\calC,\bfS,\visinst)=\false$, namely, that 
there is an $\bfS$-instance $\fullinst$ such that $\fullinst\sat\calC$,
$\fullinst\sat Q$, and $\visible(\fullinst) = \visinst$. We need to give a  instance
$\fullinst'$ whose active domain has only $k$ elements outside the active domain of $\visinst$
that witnesses $\NSB(Q,\calC,\bfS,\visinst)=\false$.

We fix an extension $\bbD$ of the active domain of $\visinst$
that contains $k$ additional fresh values.
For each fact $R(\bar a)$  in $\fullinst$
and each tuple $\bar b\in\bbD^{\arity(R)}$, if $\bar b$ and $\bar a$ are equality-equivalent
and agree on each position whose value is in the active domain of $\visinst$,
we add the fact $R(\bar b)$ to $\fullinst'$.
By definition, the instance $\fullinst'$ 
agrees with $\fullinst$ on the visible part, and has only $k$ elements outside the active domain
of $\visinst$.

Below we show that $\fullinst'$ satisfies the sentences of $\calC$ and the query $Q$. 
Consider any linear TGD $\tau$ of $\calC$ of the form 
$$
  R(\bar x) ~\then~ \exists \bar y ~ S(\bar z)
$$
and any fact $R(\bar a)$ that is the image under some homomorphism $h$ 
of the left-hand side atom $R(\bar x)$. 
Let $I$ be the set of positions $i\in\{1,\ldots,\arity(R)\}$ such that 
$\bar a(i)\in\adomain(\visinst)$. We know that there is $\bar u$ such that $R(\bar u)$ 
holds in $\fullinst$ such that $\bar a|I = \bar u|I$ and $\bar a$ is equality-equivalent
to $\bar u$. Since $\fullinst$ satisfies $\tau$, and $\bar u$ is equality-equivalent
to $\bar a$, we know that there is a fact $S(\bar v)$ in $\fullinst$
agreeing with $\bar u$ on the positions corresponding to exported variables of $\tau$. Let $\bar b$ be any tuple in
$\bbD^{\arity(R)}$ equality-equivalent
to $\bar b$ and  agreeing with $\bar v$ on all the positions corresponding to exported variables of $\tau$.
Since $k$ is at least the arity of $R$, such a $\bar b$ must exist.
Then $\bar b$ witnesses that $\tau$ holds for $\bar a$. This completes the proof that the sentences of $\calC$ hold.

A similar argument shows that $Q$ holds in $\fullinst'$.  Thus $\fullinst'$ witnesses that
$\NSB(Q,\calC,\bfS,\visinst)$ is active domain controllable modulo $k$.
\end{proof}

\begin{example} \label{ex:kadomproof}
As an example of the prior argument, consider a TGD $\tau$
\[
R(x,y,y) \rightarrow \exists z ~ S(y,z, z)
\] 
and suppose the instance $\fullinst$ has a tuple
$R(a,b,b)$ where $a_1$ is in the active domain of the visible instance and $b$ is outside of the
active domain of the visible instance. 
Thus  there is a  homomorphism from the left side
of $\tau$ to $R(a,b,b)$.
Since $\fullinst$ satisfies $\tau$, it must contain $S(b,c,c)$ for some  value $c$.

The instance $\fullinst'$ produced by the prior argument will replace
$R(a,b,b)$ by
$R(a,c_1, c_1)$, where $c_1$ is one of the $k$ additional constants.
We explain why this replacement will not break the satisfaction
of $\tau$. There is a homomorphism $h'$ of the left hand side of $\tau$
to $R(a, c_1, c_1)$.
If the witness $c$ of $S(b,c,c)$ is in the active domain of the visible instance, then
 $\fullinst'$ has $S(c_1, c, c)$, and thus we have the witness we need for 
$\tau$ with respect to $h'$.
If $c$ is not in the active domain of the visible instance, then
$\fullinst'$ will also have $S(c_1, c_2, c_2)$, for $c_2$ another of the additional constants.
Either way the required value is present.
\end{example}

%\gabriele{Should we anticipate a bit the discussion that we have
%          later for adom controllability modulo $k$?}
Now we show how to exploit active domain controllability 
to prove that $\NSB$ problems can be solved not only efficiently,
but ``definably'' using well-behaved query languages.
For this, we introduce a variant of Datalog programs, called 
\emph{GFP-Datalog} programs, whose semantics is given by greatest 
fixpoints. 
GFP-Datalog programs are defined syntactically in 
the same way as Datalog programs \cite{AHV}, that is, as finite 
sets of rules of the form $U(\bar x) \:\leftarrow\: Q(\bar x)$
where the variables in $\bar x$ are implicitly universally quantified 
and $Q$ is a conjunctive query whose free variables are exactly $\bar x$.
As for Datalog programs, we distinguish between \emph{extensional} 
(i.e., input) predicates and \emph{intensional} (i.e., output) predicates.
In the above rules we restrict the left-hand sides
to contain only intensional predicates. 
Given a GFP-Datalog program $P$, the \emph{immediate consequence operator} 
for $P$ is the function that, given an instance $M$ consisting of 
both extensional and intensional relations, returns the 
instance $M'$ where the extensional relations are as in $M$
and the tuples of each intensional relation $U$ are 
those satisfying $Q(M)$, where $Q$ is any query appearing  on the right of a rule with $U$.
The immediate consequence operator is monotone, 
and the  semantics of the GFP-Datalog program on instance $I$ for the extensional
relations
is defined as the
greatest fixpoint of this operator starting at  the instance $I^+$ that extends
$I$ by setting each intensional relation ``maximally'' --- that is, to
the tuples of values from the active domain of $I$ plus the constants 
appearing in the GFP-Datalog program. 
A program may also include a distinguished intensional predicate, the \emph{goal predicate} $G$,
% OLD VERSION
%and then the result is taken to be the projection of the greatest fixpoint onto $G$.
% BALDER'S SUGGESTION
%in which case it defines the query that maps every instance 
%to the set of tuples belonging to $G$ in the greatest fixpoint.
% A MIX OF THE TWO THAT I PREFER
in which case it defines the query that maps every instance 
to the set of tuples satisfying $G$ in the greatest fixpoint.
We now show that under active domain controllability, we can 
use GFP-Datalog to decide $\NSB(Q,\calC,\bfS,\visinst)$:
   
\begin{theorem}\label{th:definability-active-domain-controllable}
If $Q$ is a Boolean UCQ, $\calC$ a set of linear TGDs (with constants), 
and $\NSB(Q,\calC,\bfS,\visinst)$ is active domain controllable, then 
$\neg\NSB(Q,\calC,\bfS,\visinst)$, viewed as a Boolean query over the 
visible part $\visinst$, is definable by a GFP-Datalog program that 
can be constructed in \ptime from $Q$, $\calC$, and $\bfS$.
\end{theorem}

\begin{proof}
First observe that $\NSB(Q,\calC,\bfS,-)$ can be seen as a Boolean function
that takes as input an instance $\visinst$ for the visible relations of $\bfS$
and returns $\true$ iff the query $Q$ does {\sl not} hold on {\sl every} instance
$\fullinst$ that satisfies the sentences $\calC$ and such that $\visible(\fullinst)=\visinst$.
Accordingly, $\neg\NSB(Q,\calC,\bfS,-)$ is the negation of the function $\NSB(Q,\calC,\bfS,-)$,
and thus maps an instance $\visinst$ to $\true$ when $Q$ {\sl does} hold on {\sl some}
instance $\fullinst$ that satisfies $\calC$ and agrees with $\visinst$ on the visible relations.

Below, we implement the function $\neg\NSB(Q,\calC,\bfS,-)$ by means of a GFP-Datalog program.
Thanks to active domain controllability, it is sufficient to consider only 
full instances constructed over the active domain of $\visinst$.
More precisely, it is sufficient to show that a witnessing instance $\fullinst$ 
can be obtained as a greatest fixpoint starting from the values in the 
active domain of $\visinst$. 
Below, we describe the GFP-Datalog program that computes $\fullinst$ starting from $\visinst$. 

%\gabriele{I tried to improve and give more intuition about the program. Please check!}
The extensional relations are the ones in the visible part $\visinst$, while
the intensional relations are the ones in the hidden part of the schema $\bfS$,
plus an extra intensional relation $A$ that collects the values in the active 
domain of $\visinst$. 
For each extensional (i.e.~visible) relation $R$ and each position $i\in\{1,\ldots,\arity(R)\}$,
we add the rule $A(x_i) \:\leftarrow\: R(\bar x)$,
which collects all the values of the active domain into the relation $A$.
In addition, for each intensional (i.e.~hidden) relation $R$, we have the rule
$$
  R(\bar x) ~~\leftarrow~~
  \bigwedge_i A(x_i) ~~\et 
  \bigwedge_{\substack{\text{linear TGD in  $\calC$ of the} \\ 
                       \text{form } R(\bar x) \:\then\: \exists\bar y\: S(\bar z)}} 
            S(\bar z) \ .
$$
Intuitively, the above rule permits the existence of a fact $R(\bar a)$ 
only when $\bar a$ consists of values from the active domain 
and every linear TGD $R(\bar x) \:\then\: \exists\bar y\: S(\bar z)$
of $\calC$ is satisfied by some fact $S(\bar b)$ when substituting $\bar x$ for $\bar a$. 
This semantics is consistent with the goal of finding the biggest instance $\fullinst$ 
over the active domain of $\visinst$ that satisfies the UCQ $Q$ --- so as to have 
$\NSB(Q,\calC,\bfS,\visinst)=\false$ --- while guaranteeing that the linear TGDs 
remain valid. 

We finally add the rule
\[
  \goal ~~\leftarrow~~ S_1(\bar z_1) \et \ldots \et S_n(\bar z_n)
\] 
for each CQ $\exists \bar y ~ S_1(\bar z_1) \et \ldots \et S_n(\bar z_n)$ of $Q$,
and take $\goal$ to be the final output of our program.

Let us now prove that the Datalog program does compute the function $\neg\NSB(Q,\calC,\bfS,-)$
under the greatest fixpoint semantics.
Consider an instance $\fullinst$ computed by the GFP-Datalog program starting 
from input $\visinst$.
Clearly, the extensional (visible) part of $\fullinst$ agrees with $\visinst$.
We claim that $\fullinst$ also satisfies the sentences in $\calC$.  Indeed,
if $R(\bar x) \:\then\: \exists\bar y\: S(\bar z)$ is a linear TGD in $\calC$
and $R(\bar a)$ is a fact of $\fullinst$, with $R(\bar a)$ image of $R(\bar x)$
via some homomorphism $h$, then $\fullinst$ contains a fact of 
the form $S(\bar b)$, where $\bar b$ is the image of $S(\bar z)$ via some
homomorphism $h'$ that extends $h$.
To conclude, we observe that the predicate $\goal$ holds iff $\fullinst$ 
satisfies some disjunct $S_1(\bar z_1) \et \ldots \et S_n(\bar z_n)$ of the UCQ $Q$, 
namely, iff $\NSB(Q,\calC,\bfS,\visinst)=\false$.
\end{proof}

In the case of Linear TGDs that are active domain controllable modulo $k$,
 we can similarly use a GFP Datalog program, but first pre-processing the active domain
to contain the $k$ additional constants. The extension of 
Theorem \ref{th:definability-active-domain-controllable} clearly holds:
\begin{theorem}\label{th:definability-active-domain-controllable-mod}
If $Q$ is a Boolean UCQ, $\calC$ a set of linear TGDs (with constants), 
and $\NSB(Q,\calC,\bfS,\visinst)$ is active domain controllable modulo $k$, then
$\neg\NSB(Q,\calC,\bfS,\visinst)$, viewed as a Boolean query over the
visible part $\visinst$, is definable by a GFP-Datalog program that
can be constructed in \ptime from $Q$, $\calC$, and $\bfS$.
\end{theorem}

Recall that the na\"ive fixpoint algorithm for a GFP-Datalog program 
takes exponential time in the maximum arity of the intensional relations, 
but only polynomial time in the size of the extensional relations and 
the number of rules. This is true even if one extends the active domain by $k$ elements, where
$k$ is the maximal arity.
Thus we can get bounds on the $\NSB$ problem for IDs
using the simple argument for active domain controllability for IDs given above
along with Theorem \ref{th:definability-active-domain-controllable}.
We can likewise get bounds for linear TGDs 
using Theorem \ref{thm:almostadomcontrollable}  and
 Theorem \ref{th:definability-active-domain-controllable-mod}.

\begin{corollary}\label{cor:complnsbinstlin}
%If   $\calC$ is restricted to range over sets of linear TGDs, 
%then $\NSB(Q,\calC,\bfS,\visinst)$ has data complexity in $\ptime$ and 
%combined complexity in $\exptime$.
When $\calC$ ranges over sets of linear TGDs and $Q$ over Boolean UCQs,
$\NSB(Q,\calC,\bfS,\visinst)$ has data complexity in $\ptime$ 
and combined complexity in $\exptime$.
\end{corollary}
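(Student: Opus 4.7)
The plan is to chain the two preceding theorems. Given input $Q,\calC,\bfS,\visinst$ with $\calC$ a set of linear TGDs, I would first invoke Theorem \ref{thm:enforce-adom-controllability} to replace $Q,\calC,\bfS$ by an equivalent triple $Q',\calC',\bfS'$ with the same visible schema for which $\NSB$ is active domain controllable and for which $\NSB(Q',\calC',\bfS',\visinst) = \NSB(Q,\calC,\bfS,\visinst)$ for every $\visinst$. Then I would apply Theorem \ref{th:definability-active-domain-controllable} to produce, in time polynomial in $|Q'|+|\calC'|+|\bfS'|$, a GFP-Datalog program $P$ that defines $\neg\NSB(Q',\calC',\bfS',\cdot)$. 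The answer to the original instance is then read off by evaluating $P$ on $\visinst$ and negating.

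For the $\ptime$ data complexity bound, observe that fixing $Q,\calC,\bfS$ also fixes $Q',\calC',\bfS'$, and hence fixes $P$. Since the naive greatest-fixpoint evaluation of a fixed GFP-Datalog program runs in polynomial time in the size of the extensional input, the overall procedure runs in time polynomial in $|\visinst|$.

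For the $\exptime$ combined complexity bound, careful accounting is required because the construction in Theorem \ref{thm:enforce-adom-controllability} introduces an exponential blow-up: the sizes of $\calC'$, $Q'$, $\bfS'$, and hence of $P$, can be exponential in the input size $n$. The key observation that keeps the overall cost singly exponential is that, by inspection of the construction, every hidden relation $R_{I,\bar u}$ of $\bfS'$ has arity $|I|\le\arity(R)$, so the maximum arity of intensional predicates of $P$ remains bounded by the maximum arity in $\bfS$, i.e.\ by $n$. Greatest-fixpoint evaluation then takes time exponential in this max arity but polynomial in the remaining parameters ($|P|$ and $|\visinst|$, both at most $2^{O(n)}$), yielding a total bound of $2^{O(n\log n)}$, which lies in $\exptime$. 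The main (mild) obstacle is precisely this arity accounting: had Theorem \ref{thm:enforce-adom-controllability} blown up the arities of intensional predicates along with their number, greatest-fixpoint evaluation would cost doubly exponential time; since it does not, the rest is routine complexity arithmetic.
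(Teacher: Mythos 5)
Your proposal is correct and follows exactly the paper's route: chain Theorem \ref{thm:enforce-adom-controllability} with Theorem \ref{th:definability-active-domain-controllable}, then evaluate the resulting GFP-Datalog program, noting that na\"ive greatest-fixpoint evaluation is exponential only in the maximum intensional arity (which the transformation does not increase, since each $R_{I,\bar u}$ has arity $|I|\le\arity(R)$) and polynomial in the number of rules and the extensional data. Your explicit arity accounting for the combined-complexity bound is in fact slightly more careful than the paper's one-line justification, but it is the same argument.
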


\begin{example} \label{ex:return}
%\gabriele{I rephrased, trying to correct and improve the explanation}
Returning to the medical example from the introduction, Example \ref{ex:one}, we see that the GFP-Datalog program
is quite intuitive: since we have a referential constraint from $\appoint$ into $\patient$
and the visible instance does not contain the fact $\patient(\text{Smith})$,
%is empty in the instance and 
all tuples of the form $(\text{Smith},a,d)$ are removed from the relation $\appoint$.
The program then simply evaluates the query on the resulting instance, which returns false, 
indicating that an $\NSB$ does hold on the original visible instance.
\end{example}

%michael: I gave up trying to figure out what this was suppose to say
%originally
%It was not clear if it showing that we can express reachability with NSB or just reduce to it.
%if the latter, what kind of reduction?
%was the statement about inexpressibility modulo complexity-theoretic assumptions?
\myeat{
We do not know whether the use of GFP-Datalog can be replaced
by other logics, such as Datalog.
However we can show that the data complexity of the problem can be $\ptime$-hard, and
also that 
it is necessary to go beyond first-order queries.

\begin{proposition}\label{prop:graph-reachability}
There is a Boolean CQ $Q$ schema $\bfS$ and a set of IDs $\calC$ such that
$\NSB(Q,\calC,\bfS,\visinst)$ can not be described by a first-order query over   $\visinst$.
In addition, there are $Q, \bfS, \calC$ such that
$\NSB(Q,\calC,\bfS,\visinst)$ is $\ptime$-hard in data complexity
(that is, as $\visinst$ varies over instances).
%michael: NOT MORE GENERALLY!
\end{proposition}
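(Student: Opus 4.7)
My plan is to leverage the GFP-Datalog characterization of $\NSB$ under IDs (Theorem~\ref{th:definability-active-domain-controllable}): $\neg\NSB(Q,\calC,\bfS,\visinst)$ is equivalent to the Boolean query obtained by negating $Q$ and evaluating it on the GFP-Datalog program induced by $\calC$. It therefore suffices to design IDs whose induced GFP-Datalog computes, first, a non-first-order property, and then a PTIME-hard property.

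For the non-first-order part, I take the schema with a visible binary $E$ and a visible unary $S$, a hidden binary $P$, and the IDs
\[
  P(x,y) \:\rightarrow\: E(x,y), \qquad P(x,y) \:\rightarrow\: \exists z.\ P(y,z),
\]
together with the query $Q = \exists x, y.\ S(x) \wedge P(x,y)$. By Theorem~\ref{th:definability-active-domain-controllable}, the rule for $P$ is $P(x,y) \leftarrow A(x) \wedge A(y) \wedge E(x,y) \wedge \exists z.\ P(y,z)$, whose greatest fixed point consists of the $E$-edges $(x,y)$ whose target $y$ lies on an infinite $E$-walk, i.e.\ can reach an $E$-cycle. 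Thus $\neg\NSB$ holds on $\visinst$ iff some $s \in S$ has an outgoing $E$-edge into a vertex that reaches a cycle. Given an instance $(G, s, t)$ of directed reachability in DAGs, I let $\visinst$ have $E = E(G) \cup \{(t,s)\}$ and $S = \{s\}$: since $G$ is acyclic, $s$ reaches a cycle in $E$ iff $s$ reaches $t$ in $G$. Reachability in DAGs is not first-order definable by a standard Ehrenfeucht--Fra\"iss\'e argument (a single long $s$-to-$t$ path and two disjoint shorter paths agree on all FO sentences of bounded rank), proving the first assertion.

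For the PTIME-hardness claim, I extend the construction to encode a PTIME-complete fixed-point problem such as Path System Accessibility or Alternating Graph Accessibility. The GFP-Datalog program induced by IDs is syntactically restricted: each hidden predicate has a single rule whose body is a conjunction of existentially quantified atoms, with no rule-level disjunction or universal quantification. To simulate the disjunctive/universal features of the source problem, I introduce several hidden ``witness'' predicates, one per derivation case of the recursive rule, and conjoin their existential witness requirements by using several IDs that share the same LHS. Visible ``type'' predicates, linked to hidden ones via upper-bound IDs of the form $H(\bar x) \rightarrow V(\bar x)$, bound each hidden relation from above and prevent the greatest fixed point from being spuriously inflated. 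The visible instance is then used to encode both the input to the PTIME-complete problem and, if needed, an auxiliary linear order so that standard Datalog-captures-PTIME techniques apply.

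The main obstacle is reconciling greatest-fixed-point maximization of hidden predicates with the one-sidedness of IDs, which propagate existential witnesses but cannot directly enforce universal closure or disjunctive alternatives at the rule-body level. A correct encoding must carefully balance lower-bound IDs (hidden on the LHS, forcing hidden tuples to have certain witnesses) against upper-bound IDs (hidden on the LHS with visible RHS, constraining hidden relations via visible types), so that the resulting GFP precisely evaluates the target PTIME-complete computation on $\visinst$ while $Q$, $\calC$ and $\bfS$ remain fixed.
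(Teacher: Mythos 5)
Your construction for the first claim is correct, but it takes a genuinely different route from the paper's. The paper encodes well-founded derivations: a visible arity-5 relation $P$ stores the legal ``proof steps'' for path existence together with an explicit length index, and the single ID $T(x,z,j) \rightarrow \exists y\, i\; (T(x,y,i) \wedge P(x,y,i,z,j))$ forces every justification chain in the hidden relation $T$ to terminate at the base case, so that $T$ can contain only genuinely derivable reachability facts. You instead exploit non-well-foundedness: with $P(x,y) \rightarrow E(x,y)$ and $P(x,y) \rightarrow \exists z\, P(y,z)$, a hidden edge must have an infinite continuation, which over a finite instance means reaching a cycle, and planting the back-edge $(t,s)$ turns DAG reachability into cycle reachability. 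This is a clean and arguably slicker argument for non-FO-definability; the only presentational nit is that the Hanf/EF indistinguishability must be checked on the visible instances \emph{after} adding $(t,s)$ (a cycle plus a path versus a single long path), not on the original DAGs, though this works out for the standard families.

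For the $\ptime$-hardness claim there is a genuine gap: you name the obstacle (reconciling the greatest-fixed-point semantics of the induced program with a least-fixed-point computation such as path systems or alternating reachability) but you do not resolve it, and your first construction points in the wrong direction, since it deliberately \emph{exploits} cyclic self-justification, which is exactly what makes a naive encoding of path-system accessibility fail (a fact cyclically justified by itself survives in the greatest fixed point without being derivable). The missing idea is the paper's level-index trick: choose a $\ptime$-complete problem whose rules have bounded fan-in (monotone circuit value / path systems, so that the conjunctive premises fit in the single existential right-hand side of one ID), store the rule instances in a \emph{visible} relation annotated with a well-founded level so that every justification chain must strictly descend, and let one ID assert that each hidden ``derived'' fact is justified by a rule and lower-level derived facts; then the greatest and least fixed points coincide and $\neg\NSB$ computes the circuit value. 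Your alternative suggestion of adding a linear order and invoking Datalog-captures-$\ptime$ does not apply here, because the programs induced by IDs have a single conjunctive body per hidden predicate with no disjunction and are evaluated under GFP rather than LFP semantics.
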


\begin{proof}
%To prove that $\NSB(Q, \calC, \bfS,\visinst)$ is not first-order definable from
%$\visinst$ 
We will show that there is $Q,\calC,\bfS$ such
that  $\neg\NSB(Q, \calC, \bfS,\visinst)$ holds if and only 
The idea is to let some visible relations represent an input graph with two distinguished
nodes playing the role of a source and a target. A proof of the existence 
of a path from the source to the target can be then exposed in the hidden
relations. Formally, the nodes and the edges of the input graph are encoded by two 
visible relations $N$ and $E$ of arity $1$ and $2$, respectively. 
The source and target nodes are encoded by two singleton visible relations $A$ and $B$, 
respectively, of arity $1$. 
A visible relation $P$ of arity $5$ is also introduced in order to record intermediate
steps of the induction underlying the proof of existence of paths between pairs 
of nodes. 
%This relation will contain the basic proof steps that can be used 
%to witness reachability between two nodes.  
%R\michael{Previous sentences sound kooky to me.
%Do relations know how to drive? Do they contain proof steps?}
Formally, the  content of visible relation $P$ are tuples of
the form $(x,y,i,z,j)$, where $x,y,z$ are nodes in $N$ and $0\le i,j\le |N|$,
%Michael: no idea what the above means
such that:
\begin{compactenum}
  \item either $(y,z)$ is an edge and $j=i+1$, meaning that 
        if $x$ is connected to $y$ by a path of length $i$
        and $(y,z)$ is an edge, then $x$ is connected to $z$ by a path of length $j=i+1$,
  \item or $x=y=z$ and $i=j=0$, meaning that every node $x$ is connected to itself
        by a path of length $0$.
\end{compactenum}
We fix $\visinst$ to be our visible instance, which contains the relations
$N$, $E$, and $P$.
In addition, we introduce a hidden relation $T$ of arity $3$, that will be
constrained by the background theory so as to contain only those triples $(x,z,j)$ for which one can
witness, using the basic proof steps in $P$, that $x$ is connected to $z$ 
by a path of length $j$. For this it suffices to enforce the following IDs:
%\gabriele{Please check! Before we used a linear TGD from 
%          $T(x,z,j)$ to $T(x,y,i)$ $\et$ $P(x,y,i,z,j)$, 
%          which strictly speaking was not an ID (due to the double occurrence of $y$).}
%linear TGD (actually, an ID):
\[
\begin{array}{rcl}
%  T(x,z,j) &\then& \exists y ~ i ~~ T(x,y,i) \et P(x,y,i,z,j) \ .
  T(x,z,j)     &\then& \exists y ~ i ~~ P(x,y,i,z,j) \\[1ex]
  P(x,y,i,z,j) &\then& T(x,y,i)  \ .
\end{array}
\]
It is easy to see that, for every full instance $\fullinst$ that satisfies the 
above sentence $\calC$ and agrees with $\visinst$ in its visible part, 
if $(x,z,j)$ is a tuple in $T$, then there is $y,i$ such that $(x,y,i)$
is a tuple in $T$ and $(x,y,i,z,j)$ is a tuple of $P$. In particular,
from the definition of $P$ (see conditions (1) and (2) above),
either $(y,z)$ is an edge of the graph and $j=i+1$, or
$x=y=z$ and $i=j=0$. A simple induction on $j$ proves that in both
cases $x$ is connected to $z$ by a path of length $j$.
Conversely, if a node $x$ is connected to a node $z$ by a path of length $j$, 
then there is a way to extend the instance $\visinst$ with a relation
$T$ that satisfies $\calC$ and contains the tuple $(x,z,j)$. 
Thus, if we let $Q$ be the CQ $\exists x ~ z ~ j ~ A(x) \et B(z) \et T(x,z,j)$,
then we have 
%\gabriele{Check also this: before we said we reduce reachability to $\NSB$, but we actually reduce to $\neg\NSB$}
$\NSB(Q, \calC, \bfS,\visinst)=\false$ iff the $A$-labelled node
is connected to the $B$-labelled node. This property is clearly not definable
in first-order logic.

A similar technique can be used to prove that $\NSB(Q,\calC,\bfS,\calC,\visinst)$
is $\ptime$-hard for data complexity. The idea is to reduce the problem of 
evaluating a Boolean circuit to $\NSB(Q,\calC,\bfS,\calC,\visinst)$.
The input and the structure of the Boolean circuit can be easily encoded 
in some visible relations. In addition, one introduces a visible relation 
$P$ that contains all the valid rules that can be used during an evaluation.
Finally, a hidden relation $T$ can be used to expose a proof that the 
Boolean circuit evaluates to $\true$.
\end{proof}
}%myeat

We give a tight $\exptime$ lower bound for the combined complexity of $\NSB$ with linear TGDs (and even IDs):

\begin{theorem}\label{thm:nsbinstcombexphard}
The combined complexity of $\NSB(Q,\calC, \bfS, \visinst)$, where 
$Q$ ranges over UCQs and $\calC$ ranges over IDs, is $\exptime$-hard.
\end{theorem}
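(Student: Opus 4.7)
The plan is to reduce the acceptance problem for alternating polynomial-space Turing machines --- an $\exptime$-complete problem --- to $\neg\NSB(Q,\calC,\bfS,\visinst)$ where $\calC$ consists of inclusion dependencies. I emphasize at the outset that the $\PSB$-to-$\NSB$ reduction of Theorem~\ref{th:PSBtoNSB} cannot be invoked here, since that reduction produces constraints that are connected frontier-guarded TGDs but not IDs; we therefore need a direct construction, in the same spirit as the proof of Theorem~\ref{th:psb-instance-hard} but dualized to $\NSB$. The upper bound from Corollary~\ref{cor:complnsbinstlin} will then guarantee tightness.

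Following the structure of that proof, I would construct, in polynomial time from a given alternating $\pspace$ machine $M$, a schema $\bfS$ whose visible part encodes (with size polynomial in $M$) the ordered tape cells, the alphabet of cell values (bundling state and head information as in Theorem~\ref{th:psb-instance-hard}), the initial and accepting configurations, the transition function, and --- critically --- a family of precomputed auxiliary relations (for initial cells, legal pairs of adjacent cells, and legal transitions) that serve as single-atom surrogates for the multi-atom conjunctions that would otherwise be needed on the left-hand side of constraints. The hidden part consists of a ternary relation $C(x,y,z)$ for the content of cell $y$ in configuration $x$ together with binary relations $S^\exists, S^\forall_1, S^\forall_2$ for successor configurations. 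The constraint set $\calC$ is a set of IDs that, fired via the chase, build a tree-shaped structure of fresh configurations in which each cell is labelled by a value from the visible domain.

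The query $Q$ is a Boolean UCQ whose disjuncts positively witness the existence of a valid accepting subtree of $M$: each disjunct is a conjunctive pattern that traces from the initial configuration through a sequence of existential/universal successor steps, checking at every step against the visible ``legal transition'' and ``legal adjacency'' tables, and ultimately reaching a configuration containing an accepting cell value. Because the schema depends on $M$, the number and size of these disjuncts are both polynomial in $M$. One then argues that $M$ accepts iff there is an extension $\fullinst$ of $\visinst$ satisfying both $\calC$ and $Q$, i.e., iff $\NSB(Q,\calC,\bfS,\visinst) = \false$.

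The main obstacle is encoding the branching and validity checks of the alternating computation purely by IDs, which allow neither conjunctive left-hand sides nor constants nor repeated variables. This is addressed by pushing all the joins and uniqueness tests into the visible auxiliary relations: whenever a constraint in the $\PSB$ encoding of Theorem~\ref{th:psb-instance-hard} used a conjunction of the form $R_1(\bar x_1) \wedge \ldots \wedge R_m(\bar x_m)$ on the left-hand side, we replace it by a single-atom check against a precomputed visible relation whose content is the join of $R_1,\ldots,R_m$ over the polynomially many actually possible tuples. Since we work in the combined-complexity regime, this blowup is only a polynomial cost, and the resulting dependencies are genuine IDs. Together with the matching $\exptime$ upper bound from Corollary~\ref{cor:complnsbinstlin}, this establishes the theorem.
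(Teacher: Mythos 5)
Your overall plan --- a direct reduction from alternating \pspace acceptance, bypassing Theorem~\ref{th:PSBtoNSB} because that reduction does not yield IDs --- matches the paper's, and you correctly identify the central obstacle (IDs admit neither conjunctive left-hand sides nor repeated variables). But both concrete devices you propose to overcome it fail, and the paper resolves the difficulty differently. The first gap is the query. You keep the ternary encoding $C(x,y,z)$ and ask $Q$ to ``positively witness the existence of a valid accepting subtree,'' tracing from the initial configuration to an accepting one, so that $M$ accepts iff $\NSB=\false$. A Boolean UCQ is existential and monotone, and your disjuncts have size polynomial in $M$; such a query can neither follow a computation path of exponential length nor express the universal condition that \emph{all} branches of the alternating computation accept. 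The paper inverts the roles: the IDs force every extension of $\visinst$ to contain an entire legal computation tree rooted at the initial configuration, and the query merely detects the \emph{rejecting} state, $Q=\bigvee_{1<i<n}\exists x\,\bar y\,\big(C^\exists(x,\bar y)\et V_{\kw{rej}}(y_i)\big)$, which is an existential, local property; then $\NSB=\true$ iff no configuration rejects iff $M$ accepts (under the paper's non-halting acceptance convention, where every path eventually freezes in $q_{\kw{acc}}$ or $q_{\kw{rej}}$).

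The second gap is the join elimination. Tabulating a join as a visible relation only works when every join attribute ranges over a polynomial-size \emph{visible} domain. The joins you must eliminate --- e.g.\ relating cell $y$ of configuration $x$ to cell $y$ of its successor $x'$ --- are on the configuration identifiers $x,x'$ and on the repeated cell variable $y$; the identifiers are fresh nulls produced in the hidden part by the chase, so there is no polynomial set of ``actually possible tuples'' to precompute. (Note also that in the Theorem~\ref{th:psb-instance-hard} encoding these joins live in the error-detecting \emph{query}, not in constraint left-hand sides --- that construction already uses only IDs --- and in the $\NSB$ setting they cannot remain in the query because the query's role is inverted.) The paper's actual fix is a change of data layout: configurations become single wide tuples $C^\exists(x,y_1,\ldots,y_n)$ and $C^\forall(x,y_1,\ldots,y_n)$ of arity $n+1$, and successor facts $S^\exists(x,\bar y,x',\bar y')$ of arity $2n+2$ carry both configurations at once, so that a transition is validated by single-atom IDs $S^\exists(x,\bar y,x',\bar y')\then N^\exists(y_{i-1},y_i,y_{i+1},y'_i)$ projecting into a polynomial-size visible table of legal local windows, with no join and no repeated variable. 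Without this re-encoding your construction cannot be completed with inclusion dependencies.
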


\begin{proof}
We reduce the acceptance problem for an alternating $\pspace$ 
Turing machine $M$ to $\NSB(Q,\calC,\bfS,\visinst)$.
As in the proof of Theorem \ref{th:psb-instance-hard}, we assume that 
the transition function of $M$ maps each universal configuration to a 
set of exactly $2$ target configurations. Moreover, we assume that
there is at least one target configuration for each existential configuration.
In particular, $M$ never halts.
The computation begins with the head on the second
position and never visits the first and last position of the tape.
The acceptance condition of $M$ is defined by distinguishing two 
special control states, $q_{\kw{acc}}$ and $q_{\kw{rej}}$, that once
reached will `freeze' $M$ in its current configuration. 
We say that $M$ accepts (the empty input) if for all paths 
in the computation tree, the state $q_{\kw{acc}}$ is 
eventually reached; otherwise, we say that $M$ rejects.

Differently from the proofs of Theorem \ref{th:psb-instance-hard}
and Theorem \ref{thm:psbidtwoexphardcomb}, the configurations of
$M$ can be described by simply specifying the label of each cell
of the tape, the position of the head, and the control state of 
the Turing machine $M$.
We thus define \emph{cell values} as elements of  
$V = (\Sigma \times Q) \uplus \Sigma$,
where $\Sigma$ is the alphabet of $M$ and $Q$ is the set of its control states.
If a cell has value $(a,q)$, this means that the associated letter is $a$, the 
control state of $M$ is $q$, and the head is on this cell. 
Otherwise, if a cell has value $a$, this means that the associated letter is 
$a$ and the head of $M$ is not on this cell.

Now, let $n$ be the size of the tape of $M$.
We begin by describing the initial configuration of $M$.
This is encoded by a visible relation $C_0$ of arity $n+1$, where the first 
attribute gives the identifier of the initial configuration and the remaining
$n$ attributes give the values of the tape cells. As the relation $C_0$
is visible, we can immediately fix its content to be a singleton consisting
of the tuple $(x_0,y_1,y_2,y_3,\ldots,y_n)$, where
$x_0$ is the identifier of the initial configuration, 
$y_1 = \bot$, $y_2 = (\bot,q_0)$, $y_3 = \ldots = y_n = \bot$.
As for the other configurations of $M$, we store them into two distinct hidden
relations $C^\exists$ and $C^\forall$, depending on whether the control states
are existential or universal. Each fact in one of these two relation consists 
of $n+1$ attributes, where the first attribute specifies an identifier and the 
remaining $n$ attributes specify the cell values.
We can immediately give the first sentence, which requires the initial 
configuration to be existential and stored also in the relation $C^\exists$:
$$
  C_0(x,y_1,\ldots,y_n) ~\then~ C^\exists(x,y_1,\ldots,y_n) \ .
$$

To represent the computation tree of $M$, we encode pairs of subsequent 
configurations. In doing so, we not only store the identifiers of the configurations,
but also their contents, in such a way that we can later check the correctness of the transitions using inclusion dependencies. We use different relations to record %michael: not recall!
 whether the current 
configuration is existential or universal and, in the latter case, 
whether the successor configuration is the first or the second one in the transition
set (recall that the transition rules of $M$ define exactly two successor configurations 
from each universal configuration). Formally, we introduce three hidden relations
$S^\exists$, $S^\forall_1$, and $S^\forall_2$, all of arity $2n+2$.
We can easily enforce that the first $n+1$ and the last $n+1$ attributes 
in every tuple of $S^\exists$, $S^\forall_1$, and $S^\forall_2$ describe
configurations in $C^\exists$ and $C^\forall$:
$$
\begin{array}{rclrcl}
  S^\exists(x,\bar y,x',\bar y') &\then& C^\exists(x,\bar y) &\qquad\qquad
  S^\exists(x,\bar y,x',\bar y') &\then& C^\exists(x',\bar y') \\[1ex]
  S^\forall_1(x,\bar y,x',\bar y') &\then& C^\forall(x,\bar y) &\qquad\qquad
  S^\forall_1(x,\bar y,x',\bar y') &\then& C^\forall(x',\bar y') \\[1ex]
  S^\forall_2(x,\bar y,x',\bar y') &\then& C^\forall(x,\bar y) &\qquad\qquad
  S^\forall_2(x,\bar y,x',\bar y') &\then& C^\forall(x',\bar y') \ .
\end{array}
$$
Similarly, we guarantee that every existential (resp., universal) configuration
has one (resp., two) successor configuration(s) in $S^\exists$ (resp., $S^\forall_1$
and $S^\forall_2$):
$$
\begin{array}{rcl}
  C^\exists(x,\bar y) &\then& \exists ~x'~\bar y'~ S^\exists(x,\bar y,x',\bar y')   \\[1ex]
  C^\forall(x,\bar y) &\then& \exists ~x'~\bar y'~ S^\forall_1(x,\bar y,x',\bar y') \\[1ex]
  C^\forall(x,\bar y) &\then& \exists ~x'~\bar y'~ S^\forall_2(x,\bar y,x',\bar y') \ .
\end{array}
$$

We now turn to explaining how we can enforce the correctness of the transitions
represented in the relations $S^\exists$, $S^\forall_1$, and $S^\forall_2$.
Compared to the proof of Theorem \ref{th:psb-instance-hard}, the goal is simpler
in this setting, as we can simply compare the values $z_{-1},z_0,z_{+1}$ for the 
cells at positions $i-1,i,i+1$ in a configuration with the value $z'$ for the cell 
at position $i$ in the successor configuration. We thus introduce new visible relations
$N^\exists$, $N^\forall_1$, and $N^\forall_2$ of arity $4$. Each of these
relations is initialized with the possible quadruples of cell values 
$z_{-1},z_0,z_{+1},z'$ that are allowed by the transition function of $M$.
Consider, for example, the case where the transition function specifies that, when $M$ is in
the universal control state $q$ and reads the letter $a$, then the first of the
two subcomputations spawned by $M$
  begins by rewriting $a$ with $a'$, moving 
the head to the left, and switching to control state $q'$. In this case we add
to $N^\forall_1$ all the tuples of the form $\big( a_{-1}, (a,q), a_{+1}, a' \big)$
or $\big( a_{-2}, a_{-1}, (a,q), (a_{-1},q') \big)$, with $a_{-2},a_{-1},a_{+1}\in\Sigma$.
Accordingly, we introduce the following IDs, for all $1<i<n$:
$$
\begin{array}{rcl}
  S^\exists(x,\bar y,x',\bar y')   &\then& N^\exists(y_{i-1},y_i,y_{i+1},y'_i) \\[1ex]
  S^\forall_1(x,\bar y,x',\bar y') &\then& N^\forall_1(y_{i-1},y_i,y_{i+1},y'_i) \\[1ex]
  S^\forall_2(x,\bar y,x',\bar y') &\then& N^\forall_2(y_{i-1},y_i,y_{i+1},y'_i) \ .
\end{array}
$$
Furthermore, we constrain the values of the extremal cells to never change:
$$
\begin{array}{rclrcl}
  S^\exists(x,\bar y,x',\bar y')   &\then& E(y_1,y'_1) &\qquad\quad
  S^\exists(x,\bar y,x',\bar y')   &\then& E(y_n,y'_n) \\[1ex]
  S^\forall_1(x,\bar y,x',\bar y') &\then& E(y_1,y'_1) &\qquad\quad
  S^\forall_1(x,\bar y,x',\bar y') &\then& E(y_n,y'_n) \\[1ex]
  S^\forall_2(x,\bar y,x',\bar y') &\then& E(y_1,y'_1) &\qquad\quad
  S^\forall_2(x,\bar y,x',\bar y') &\then& E(y_n,y'_n)
\end{array}
$$
where $E$ is another visible binary relation interpreted by the
singleton instance $\{(\bot,\bot)\}$.

It remains to specify the query that checks that the Turing machine $M$ 
reaches the rejecting state $q_{\kw{rej}}$ along some path of its computation tree.
For this, we introduce a last visible relation $V_{\kw{rej}}$ that contains all cell 
values of the form $(a,q_{\kw{rej}})$, with $a\in\Sigma$. 
The query that checks this property is
$$
  Q ~=~ \bigvee_{1<i<n} \exists ~ x ~ \bar y ~
        \big(~ C^\exists(x,\bar y) ~\et~ V_{\kw{rej}}(y_i) ~\big) \ .
$$

\medskip
Let $\visinst$ be the instance that captures the intended semantics of the visible 
relations $V$, $C_0$, $N^\exists$, $N^\forall_1$, $N^\forall_2$, $E$, and $V_{\kw{rej}}$,
The proof that $\NSB(Q,\calC,\bfS,\visinst)=\true$ iff $M$ accepts (namely, has a 
computation tree where all paths visit the control state $q_{\kw{acc}}$) goes
along the same lines of the proof of Theorem \ref{th:psb-instance-hard}.
\end{proof}

\subsection{Existence problems} \label{subsec:negexists}

Here we consider the complexity of the schema-level question,
$\exists \NSB(Q, \calC, \bfS)$.
We first show that when the background theories are preserved under disjoint unions 
(e.g., connected frontier guarded TGDs), the existence of an $\NSB$ can be
checked by considering a single ``negative critical instance'',
namely the empty visible instance $\emptyset$.
This instance is easily seen to be realizable: the variant of the chase procedure that
we introduced in Section \ref{sec:existsPSB} terminates immediately when initialized
with the empty instance $\fullinst_0=\emptyset$ and returns the singleton collection 
$\Chase(\calC,\bfS,\emptyset)$ consisting of the empty $\bfS$-instance satisfying
$\calC$.

\begin{theorem}\label{thm:nsbexistscollapse}
If the query $Q$ is monotone and the background theory $\calC$ is preserved under 
disjoint unions of instances,
then $\exists\NSB(Q, \calC, \bfS)=\true$ iff 
$\NSB(Q,\calC, \bfS, \emptyset)=\true$.
\end{theorem}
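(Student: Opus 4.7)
The plan is to prove both directions separately, with the backward direction being immediate and the forward direction requiring a disjoint-union argument.

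For the backward direction, I would simply note that the empty visible instance $\emptyset$ is realizable (as the paper explicitly remarks, by running the disjoint-union/chase variant on the empty initial instance). Hence if $\NSB(Q,\calC,\bfS,\emptyset)=\true$, then $\emptyset$ itself witnesses $\exists\NSB(Q,\calC,\bfS)=\true$.

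For the forward direction, I would argue by contrapositive. Suppose $\NSB(Q,\calC,\bfS,\emptyset)=\false$; the goal is to show that $\NSB(Q,\calC,\bfS,\visinst)=\false$ for every realizable visible instance $\visinst$, which immediately yields $\exists\NSB(Q,\calC,\bfS)=\false$. By assumption there is a witnessing $\bfS$-instance $\fullinst_0$ with $\fullinst_0\sat\calC$, $\visible(\fullinst_0)=\emptyset$, and $Q(\fullinst_0)=\true$. Now fix an arbitrary realizable $\visinst$ and let $\fullinst_1$ be any $\bfS$-instance with $\fullinst_1\sat\calC$ and $\visible(\fullinst_1)=\visinst$. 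Up to renaming of values, I may assume $\fullinst_0$ and $\fullinst_1$ have disjoint active domains, so that their disjoint union $\fullinst_0\uplus\fullinst_1$ is well defined.

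The key step is to verify the three properties of $\fullinst_0\uplus\fullinst_1$ needed to contradict $\NSB(Q,\calC,\bfS,\visinst)=\true$. First, $\fullinst_0\uplus\fullinst_1\sat\calC$ follows directly from the hypothesis that $\calC$ is preserved under disjoint unions. Second, because $\visible(\fullinst_0)=\emptyset$, no visible facts come from $\fullinst_0$, so $\visible(\fullinst_0\uplus\fullinst_1)=\visible(\fullinst_1)=\visinst$. Third, UCQs are preserved under homomorphisms, hence under extensions to superinstances; since $\fullinst_0\uplus\fullinst_1$ contains $\fullinst_0$ as a subinstance and $Q(\fullinst_0)=\true$, we obtain $Q(\fullinst_0\uplus\fullinst_1)=\true$. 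Thus $\fullinst_0\uplus\fullinst_1$ witnesses $\NSB(Q,\calC,\bfS,\visinst)=\false$, as required.

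The only mildly delicate point is ensuring the disjoint-union construction behaves correctly on the visible part; this is handled by the fact that $\visible(\fullinst_0)=\emptyset$, which avoids any potential clash of visible tuples. No other obstacles arise: preservation of $\calC$ comes from the hypothesis of the theorem, and preservation of $Q$ comes from the syntactic form of UCQs.
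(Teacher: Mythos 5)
Your proof is correct and follows essentially the same route as the paper's: the backward direction via realizability of the empty visible instance, and the forward direction by contraposition, taking the disjoint union of a counterexample over the empty visible instance with an arbitrary realization of $\visinst$, then using preservation of $\calC$ under disjoint unions and monotonicity of $Q$. The extra care you take about renaming domains and about the visible part of the union matches the (implicit) content of the paper's argument.
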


\begin{proof}
It is immediate to see that $\NSB(Q,\calC, \bfS, \emptyset)=\true$ implies 
$\exists\NSB(Q, \calC, \bfS)=\true$.
We prove the converse implication by contraposition.

Suppose that $\NSB(Q,\calC,\bfS,\emptyset)=\false$, namely, that there is an
$\bfS$-instance $\fullinst$ satisfying $\calC$ and $Q$ and such that 
$\visible(\fullinst)=\emptyset$. 
We aim at proving that $\NSB(Q,\calC, \bfS, \visinst)=\false$ for all realizable 
visible instances $\visinst$.
Let $\visinst$ be such a realizable instance and let $\fullinst'$ be an $\bfS$-instance 
that satisfies $\calC$ and such that $\visible(\fullinst')=\visinst$. 
We define the new instance $\fullinst''$ as a disjoint union of $\fullinst$ 
and $\fullinst'$. Since the background theory $\calC$ is preserved under disjoint unions, 
$\fullinst''$ satisfies $\calC$. Moreover, $\fullinst''$ satisfies the query $Q$, 
by monotonicity. Since $\visinst=\visible(\fullinst')=\visible(\fullinst'')$, 
we have $\NSB(Q,\calC,\bfS,\visinst)=\false$.
Finally, since $\visinst$ was chosen in an arbitrary way, this proves that 
$\exists\NSB(Q, \calC, \bfS)=\false$.
\end{proof}

Using the ``negative critical instance'' result above and Theorem \ref{thm:nsbgnf}, 
we immediately see that $\exists \NSB(Q,\calC, \bfS)$ is decidable in $\twoexp$ 
for $\gnfo$ sentences that are closed under disjoint unions, and in particular 
for connected frontier-guarded TGDs.
Combining with Corollary \ref{cor:complnsbinstlin} also gives an $\exptime$ bound 
for linear TGDs.
In fact, we can improve this upper bound by observing that the $\NSB$ problem over the
empty visible instance reduces to classical Open-World Query answering:

\begin{proposition}\label{prop:nsbemptyandowq} 
For any  Boolean CQ $Q$, $\NSB(Q,\calC,\bfS,\emptyset)$  holds
% here we need to restrict to CQ to talk of the frozen body and have a polynomial-time reduction...
iff $\OWQ(Q', \calC, \canondb(Q))$  holds, where
$$
  Q' ~=~ \bigvee\nolimits_{\!\!\!R\in \bfS_v} \exists\bar{x} ~ R(\bar{x})
$$
and $\canondb(Q)$ is the canonical instance of the CQ $Q$.
% ...but we need UCQ in the reduced NSB problem, even if we start with CQ!
\end{proposition}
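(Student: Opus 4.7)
The plan is to prove both directions of the biconditional via contrapositive, showing that $\NSB(Q,\calC,\bfS,\emptyset) = \false$ iff $\OWQ(Q',\calC,\canondb(Q)) = \false$. The key observation is that $\neg Q'$ says every visible relation is empty, so in both cases the problem amounts to exhibiting an instance satisfying $\calC$ with empty visible part on which $Q$ holds.

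For the first direction, suppose $\OWQ(Q',\calC,\canondb(Q)) = \false$, with witness $\fullinst'$ containing $\canondb(Q)$, satisfying $\calC$, and falsifying $Q'$. The falsity of $Q'$ is exactly $\visible(\fullinst') = \emptyset$. Since $\canondb(Q) \subseteq \fullinst'$, the identity on $\text{dom}(\canondb(Q))$ is a homomorphism $\canondb(Q) \to \fullinst'$, so the CQ $Q$ holds on $\fullinst'$. Thus $\fullinst'$ itself witnesses $\NSB(Q,\calC,\bfS,\emptyset) = \false$. This direction needs no structural assumption on $\calC$.

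For the reverse direction, suppose $\NSB(Q,\calC,\bfS,\emptyset) = \false$, with witness $\fullinst$ satisfying $\calC$, $\visible(\fullinst) = \emptyset$, and $Q(\fullinst) = \true$. Fix a homomorphism $h : \canondb(Q) \to \fullinst$. I construct the required $\fullinst'$ as a pullback of $\fullinst$ along $h$: take its domain to be $D' = \text{dom}(\canondb(Q)) \sqcup \bigl(\text{dom}(\fullinst) \setminus h(\text{dom}(\canondb(Q)))\bigr)$, with the natural surjection $\pi : D' \to \text{dom}(\fullinst)$ that agrees with $h$ on the first summand and with the identity on the second, and declare $R(\bar{t}) \in \fullinst'$ iff $R(\pi(\bar{t})) \in \fullinst$. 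By construction, $\canondb(Q) \subseteq \fullinst'$ because $h$ maps each atom of $\canondb(Q)$ to a fact of $\fullinst$; and $\visible(\fullinst') = \emptyset$, because any visible fact of $\fullinst'$ would push forward along $\pi$ to a visible fact of $\fullinst$, contradicting the hypothesis.

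The main obstacle is verifying that $\fullinst' \models \calC$. For each TGD $\phi(\bar{x}) \to \exists \bar{y}\, \rho(\bar{x},\bar{y})$ in $\calC$ with trigger $\bar{a} \in D'$ in $\fullinst'$, the homomorphism $\pi$ sends this to a trigger $\pi(\bar{a})$ in $\fullinst$, which by $\fullinst \models \calC$ admits a witness tuple $\bar{b}$; lifting each $b_i$ to any $\pi$-preimage $b_i' \in \pi^{-1}(b_i)$ (possible since $\pi$ is surjective) yields $\bar{b}'$ in $D'$ with $\rho(\bar{a},\bar{b}') \in \fullinst'$ by the definition of facts in $\fullinst'$. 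This establishes constraint preservation in the TGD setting relevant to the proposition (in particular, for the linear TGDs whose $\exists\NSB$ complexity this result is sharpening). Finiteness is preserved since $|D'| \leq |\text{dom}(\fullinst)| + |\text{dom}(\canondb(Q))|$; for richer constraint classes involving EGDs or guarded negation one would need to refine the pullback or invoke chase-based / finite controllability arguments to keep the constraints intact.
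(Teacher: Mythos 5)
Your proof is correct in the setting where the proposition is actually used (TGD constraints), and in one direction it is substantially more careful than the paper's own argument. The direction $\OWQ=\false \Rightarrow \NSB=\false$ is the same in both: a witness $\fullinst'\supseteq\canondb(Q)$ with $\fullinst'\not\models Q'$ has empty visible part and satisfies $Q$ via the identity homomorphism, so it directly refutes $\NSB$. For the converse, the paper merely writes that it ``follows symmetric arguments,'' but the literal symmetric contrapositive only yields that an instance with empty visible part cannot \emph{contain} $\canondb(Q)$, whereas refuting $\NSB$ only gives a \emph{homomorphic image} of $\canondb(Q)$ in $\fullinst$. You correctly identify this as the real obstacle and close it with the pullback of $\fullinst$ along $h$, which yields an instance that literally contains $\canondb(Q)$, still has empty visible part, and still satisfies the TGDs (your lifting argument for the existential witnesses is right, since $\pi$ is surjective and facts of $\fullinst'$ are exactly the $\pi$-preimages of facts of $\fullinst$). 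The paper would instead have to invoke the standard fact that certain answers under TGDs are computed on a chase that maps homomorphically into $\fullinst$; your construction is self-contained and arguably cleaner. You are also right to flag that the equivalence is not unconditional in $\calC$ --- indeed it can fail for EGDs --- which the paper's statement leaves implicit.

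One small refinement: as written, your domain $D'=\domain(\canondb(Q))\sqcup\bigl(\domain(\fullinst)\setminus h(\domain(\canondb(Q)))\bigr)$ can lose a constant $c$ that occurs in $\calC$ but not in $Q$, in the case where $h$ maps some variable of $Q$ onto $c$: then $c$ lies in $h(\domain(\canondb(Q)))$ but not in $\domain(\canondb(Q))$, so it has no element of $D'$ mapping to it \emph{named} $c$, and a TGD head mentioning $c$ literally may become unsatisfiable in $\fullinst'$. This is repaired by always keeping every constant of $\domain(\fullinst)$ as its own $\pi$-fixed point in $D'$ (identifying the constants of $Q$ with these rather than duplicating them), and it is moot for constant-free TGDs and IDs, which is the regime in which the surrounding results (Proposition~\ref{prop:NSB-characterization}, Corollary~\ref{cor:nsbemptyandowq}) live.
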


\begin{proof}
Suppose that $\NSB(Q,\calC,\bfS,\emptyset) = \true$.
This means that every $\bfS$-instance that satisfies the 
sentences in $\calC$ and has empty visible part, must violate
the query $Q$. By contraposition, every $\bfS$-instance that 
satisfies the sentences $\calC$ and contains $\canondb(Q)$
(i.e., satisfies $Q$), must contain some visible facts, and hence 
satisfy the UCQ $Q'$. This implies that $\OWQ(Q',\calC,\canondb(Q)) = \true$.

The proof that $\OWQ(Q',\calC,\canondb(Q)) = \true$ implies
$\exists\NSB(Q,\calC,\bfS,\emptyset) = \true$ follows symmetric arguments.
\end{proof}

We know from previous results \cite{bgo} that $\OWQ$ for Boolean UCQs and linear TGDs 
is in $\pspace$. From the above reduction, we immediately get that the problem 
$\NSB(Q,\calC, \bfS, \emptyset)$, and hence (by Theorem \ref{thm:nsbexistscollapse}) 
the problem $\exists\NSB(Q,\calC,\bfS)$, for a set of linear TGDs is also in $\pspace$.

\begin{corollary}\label{cor:nsbemptyandowq}
The problem $\exists\NSB(Q,\calC,\bfS)$, as $Q$ ranges over  
Boolean UCQ and $\calC$ over sets of linear TGDs, is in $\pspace$.
\end{corollary}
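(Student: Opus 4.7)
The plan is to chain together the three ingredients that have just been established: the collapse of the existence problem to the empty visible instance (Theorem~\ref{thm:nsbexistscollapse}), the reduction of the resulting instance-level problem to open-world query answering (Proposition~\ref{prop:nsbemptyandowq}), and the known $\pspace$ upper bound for $\OWQ$ with linear TGDs from \cite{bgo}.

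First I would verify that Theorem~\ref{thm:nsbexistscollapse} applies. A linear TGD has a single atom on the left-hand side, which is trivially connected, and such a TGD is clearly preserved under disjoint unions of instances: if the guard atom is made true by a disjoint union, it is already made true by one of the two components, and any witness for the right-hand side exists in that same component. Thus Theorem~\ref{thm:nsbexistscollapse} gives $\exists\NSB(Q,\calC,\bfS)=\true$ iff $\NSB(Q,\calC,\bfS,\emptyset)=\true$, and it suffices to bound the latter problem in $\pspace$.

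Next I would extend Proposition~\ref{prop:nsbemptyandowq} from Boolean CQs to Boolean UCQs. If $Q=\bigvee_{i=1}^{n} Q_i$, then $\NSB(Q,\calC,\bfS,\emptyset)=\true$ iff every $\bfS$-instance $\fullinst$ satisfying $\calC$ with $\visible(\fullinst)=\emptyset$ violates every disjunct $Q_i$; that is, $\NSB(Q,\calC,\bfS,\emptyset)=\true$ iff $\NSB(Q_i,\calC,\bfS,\emptyset)=\true$ for every $i$. Applying Proposition~\ref{prop:nsbemptyandowq} to each $Q_i$, this is equivalent to
\[
  \bigwedge\nolimits_{i=1}^{n} \OWQ\big(Q',\,\calC,\,\canondb(Q_i)\big)=\true,
\]
where $Q' = \bigvee_{R\in\bfS_v} \exists\bar{x}\, R(\bar{x})$.

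Finally, each of the $n$ instances $\OWQ(Q',\calC,\canondb(Q_i))$ is an $\OWQ$ problem for a Boolean UCQ under linear TGDs, which by \cite{bgo} lies in $\pspace$; the overall algorithm runs the $\pspace$ subroutine once for each disjunct $Q_i$ and accepts iff all subroutines accept, staying in $\pspace$ since $\pspace$ is closed under conjunctions of polynomially many $\pspace$ queries. I do not anticipate a real obstacle here; the only mild subtlety is the UCQ-to-CQ step for Proposition~\ref{prop:nsbemptyandowq}, which is handled by the pointwise decomposition above.
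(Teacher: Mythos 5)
Your proposal is correct and follows essentially the same route as the paper: collapse to the empty visible instance via Theorem~\ref{thm:nsbexistscollapse} (linear TGDs being preserved under disjoint unions), reduce to $\OWQ$ via Proposition~\ref{prop:nsbemptyandowq}, and invoke the $\pspace$ bound from \cite{bgo}. Your explicit pointwise decomposition of the UCQ into its CQ disjuncts is a small point of added care that the paper leaves implicit, but it is not a different argument.
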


Matching lower bounds for $\exists\NSB$ come by a converse reduction 
from Open-World Query answering. 

%\gabriele{Here is the Chase characterization for NSB. 
%          It might be moved somewhere else, as you prefer,
%          but keep in mind that this is just for the empty instance,
%          so it might be better to introduce it in this section}%
%michael: I agree
To prove this reduction, we first provide a characterization 
of the $\NSB$ problem over the empty visible instance, which is based,
like Proposition \ref{prop:PSB-characterizion}, on our chase
procedure:

\begin{proposition}\label{prop:NSB-characterization}
If $Q$ is a Boolean CQ and $\calC$ is a set of TGDs and EGDs without constants over a 
schema $\bfS$, then $\NSB(Q,\calC,\bfS,\emptyset)=\true$ 
iff either $Q$ contains a visible atom, or it does not and in this case
$\Chase(\calC,\bfS,\canondb(Q))=\emptyset$.
\end{proposition}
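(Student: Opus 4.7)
The plan is to split both directions of the iff by case distinction on whether $Q$ contains a visible atom. If it does, then any $\fullinst$ with $\fullinst \models Q$ must contain a visible fact, so no $\fullinst$ with $\visible(\fullinst) = \emptyset$ can satisfy $Q$; thus $\NSB(Q,\calC,\bfS,\emptyset) = \true$ holds vacuously and both sides are true. The remaining case is $Q$ with no visible atom, so that $\visible(\canondb(Q)) = \emptyset$, and I must establish $\NSB = \true$ iff $\Chase(\calC,\bfS,\canondb(Q)) = \emptyset$.

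For the ($\Leftarrow$) direction, I would pick any $K$ in the non-empty $\Chase(\calC,\bfS,\canondb(Q))$. By the construction of $\Chase$, such a $K$ satisfies $\calC$ and agrees with $\canondb(Q)$ on the visible part, which is empty. Moreover, since the visible active domain of $\canondb(Q)$ is empty, any chase step that would produce a visible fact is immediately replaced by the dummy $\bot$, so every surviving $K$ arises from a branch using only hidden-relation steps and EGD identifications; no visibility-renaming $g$ is ever applied, and $K$ therefore contains $\canondb(Q)$ unchanged. Hence $K \models Q$, contradicting $\NSB = \true$.

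For the ($\Rightarrow$) direction, I assume $\NSB = \false$ witnessed by an $\fullinst$ satisfying $\calC$, with empty visible part, and with $h : \canondb(Q) \to \fullinst$, and I construct a surviving chase branch under any fixed fair strategy, thereby showing $\Chase(\calC,\bfS,\canondb(Q)) \neq \emptyset$. Inductively I would maintain $h_n : K_n \to \fullinst$ with $K_0 = \canondb(Q)$, $h_0 = h$, and at each TGD step $\phi \to \exists \bar y\; S(\bar z)$ use $\fullinst \models \calC$ to pick a witness in $\fullinst$ and extend $h_n$ accordingly. The crucial observation is that $S$ can never be visible: if it were, the extended witness would be a visible fact of $\fullinst$, contradicting $\visible(\fullinst) = \emptyset$. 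EGD steps are likewise safe, because $\fullinst$ already satisfies the EGD, so no attempt is made to identify distinct values from the empty visible active domain. Thus no step along this branch is killed, and its leaf or limit lies in $\Chase(\calC,\bfS,\canondb(Q))$. The main subtlety is confirming that the guided construction is a legitimate fair chase branch; this is fine since fairness only constrains which TGDs are eventually fired, and the $\fullinst$-guided witness choices are always available. An alternative approach would apply Lemma \ref{lem:disjunctive-chase-universal} to the sub-instance $h(\canondb(Q)) \subseteq \fullinst$, but that introduces an extra propagation step from $\Chase(\calC,\bfS,h(\canondb(Q)))$ back to $\Chase(\calC,\bfS,\canondb(Q))$ that the direct simulation avoids.
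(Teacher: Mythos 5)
Your proof is correct, and the overall strategy --- dispose of the case where $Q$ has a visible atom, then characterize $\NSB(Q,\calC,\bfS,\emptyset)$ via survival of the chase tree rooted at $\canondb(Q)$ --- matches the paper's. The one genuine divergence is in the direction from $\NSB=\false$ to $\Chase(\calC,\bfS,\canondb(Q))\neq\emptyset$: the paper invokes Lemma~\ref{lem:disjunctive-chase-universal} as a black box with $\fullinst_0=\canondb(Q)$, even though that lemma is stated for a superinstance $\fullinst\supseteq\fullinst_0$, whereas here one only has a homomorphism from $\canondb(Q)$ into $\fullinst$. Your direct simulation, which threads homomorphisms $h_n:K_n\to\fullinst$ along a single branch starting from $h_0=h$ rather than from the identity, is exactly the adaptation needed to make that step airtight; it also makes explicit why no node on the branch is replaced by $\bot$ (a visible-headed TGD can never even be triggered, since $h_n$ would transport the match into $\fullinst$, which has no visible facts, and EGD failures require distinct values in the empty visible active domain). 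Two small remarks. First, in the other direction, with EGDs present the surviving leaf or limit $K$ contains a homomorphic image of $\canondb(Q)$ rather than $\canondb(Q)$ ``unchanged'' --- EGD steps may identify values of $\canondb(Q)$ with one another --- but since CQs are preserved under homomorphisms this still yields $K\models Q$, so the conclusion stands. Second, your observation that every visible-producing chase step leads to $\bot$ when the visible active domain is empty is the precise justification for the paper's assertion that each $K\in\Chase(\calC,\bfS,\canondb(Q))$ has empty visible part, which the paper states with only a pointer to the lemma; spelling it out is a net improvement.
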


\begin{proof}
Suppose that $Q$ does not contain visible atoms and 
$\Chase(\calC,\bfS,\canondb(Q))$ is non-empty.
Let $K$ be some instance in $\Chase(\calC,\bfS,\canondb(Q))$
and observe that, by construction, $K$ satisfies the sentences
in $\calC$ and the query $Q$, and has the same visible part
as $\canondb(Q)$, which is empty.
This means that $K$ is a witness of the fact that 
$\NSB(Q,\calC,\bfS,\emptyset)=\false$.

Conversely, suppose that $\NSB(Q,\calC,\bfS,\emptyset)=\false$.
This means that there is an $\bfS$-instance $\fullinst$ with
no visible facts that satisfies the sentences in $\calC$ and 
the query $Q$.
Since $\fullinst\sat Q$, there is a homomorphism $g$ from $\canondb(Q)$
to $\fullinst$. Moreover, since $Q$ contains no visible atoms,
the two instances $\fullinst$ and $\canondb(Q)$ agree on the visible
part.
%\gabriele{I don't understand this ``Note that...'', neither the argument nor its use}
%Note that $g$ acts only on hidden facts and hence,
%without loss of generality, we can assume that $g$ is identity.
By Lemma \ref{lem:disjunctive-chase-universal}, letting 
$\fullinst_0 = \canondb(Q)$, we get the existence of an 
instance $K$ in $\Chase(\calC,\bfS,\canondb(Q))$.
\end{proof}

As in the positive case, the upper bounds are  tight:

\begin{theorem}\label{thm:eNSB-twoexp}
$\exists \NSB(Q, \calC, \bfS)$ is $\twoexp$-hard
as $Q$ ranges over Boolean CQs and $\calC$ over sets of 
connected $\fgtgd$s.
\end{theorem}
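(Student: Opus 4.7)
The plan is to reduce Open-World Query Answering ($\OWQ$) for frontier-guarded TGDs, known to be $\twoexp$-hard by \cite{taming}, to $\exists\NSB$ for connected $\fgtgd$s in polynomial time. The reduction passes via $\NSB$ on the empty visible instance, chaining Proposition~\ref{prop:nsbemptyandowq} with Theorem~\ref{thm:nsbexistscollapse} (whose hypothesis of closure under disjoint unions is satisfied by connected $\fgtgd$s, since any homomorphism from a connected left-hand side into a disjoint union lands in a single component).

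Given an $\OWQ$ instance $(Q^{\OWQ}, \calC, \fullinst)$ with $Q^{\OWQ}=\exists \bar y\,\phi(\bar y)$, I first apply the standard ``dummy variable'' trick from the proof of Theorem~\ref{th:PSBtoNSB}: inflate every relation by one extra position; in each atom of $\calC$ and of $Q^{\OWQ}$, put a single shared fresh variable $w$ in this new position; and instantiate that position with a fixed fresh constant $d$ in every fact of $\fullinst$. This yields an equivalent $\OWQ$ instance with connected left-hand sides and a connected query body $\phi_c(\bar y,w)$. Next, construct the $\exists\NSB$ input $(Q^*, \calC^*, \bfS^*)$ as follows: $\bfS^*$ consists of all these connectified relations, declared hidden, plus a fresh $0$-ary visible predicate $G$; $\calC^* = \calC_c \cup \{\,\phi_c(\bar y, w)\to G\,\}$, whose new TGD is trivially frontier-guarded (since $G$ is $0$-ary) and connected (since $w$ appears in every atom of $\phi_c$); and $Q^*$ is the existential closure of the atoms of $\fullinst_c$, treating each active-domain element as a distinct variable, so that $\canondb(Q^*) \cong \fullinst_c$ and $Q^*$ contains no visible atom.

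For correctness, since $G$ is the only visible predicate of $\bfS^*$, the UCQ $Q' = \bigvee_{R\in\bfS^*_v} \exists\bar x\, R(\bar x)$ of Proposition~\ref{prop:nsbemptyandowq} reduces to $G$, giving $\NSB(Q^*,\calC^*,\bfS^*,\emptyset) \Leftrightarrow \OWQ(G,\calC^*,\fullinst_c)$. A direct check shows $\OWQ(G,\calC^*,\fullinst_c) \Leftrightarrow \OWQ(Q^{\OWQ}_c,\calC_c,\fullinst_c)$: the only rule deriving $G$ is $\phi_c \to G$, and any model of $\calC_c$ extending $\fullinst_c$ in which $\phi_c$ has no match is already a model of $\calC^*$ without $G$, witnessing the negation. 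Combining with the equivalence between the original and the connectified $\OWQ$ problem, and with Theorem~\ref{thm:nsbexistscollapse} giving $\exists\NSB(Q^*,\calC^*,\bfS^*) \Leftrightarrow \NSB(Q^*,\calC^*,\bfS^*,\emptyset)$, the whole chain establishes the $\twoexp$ lower bound. The main delicate point is that the added ``trigger'' constraint $\phi \to G$ must be connected even when $\phi$ is not; the dummy-variable preprocessing sidesteps this uniformly at polynomial cost and preserves the $\twoexp$ hardness source.
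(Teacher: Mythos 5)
Your proposal is correct, and its core is the same reduction the paper uses: turn an $\OWQ$ instance into an $\exists\NSB$ instance by making all original relations hidden, adding a single $0$-ary visible predicate triggered by a rule whose body is the $\OWQ$ query, taking the canonical query of the input instance as the $\NSB$ query, and then collapsing $\exists\NSB$ to $\NSB$ on the empty visible instance via closure under disjoint unions (this is exactly Proposition~\ref{prop:ENSBtoOP} combined with Theorem~\ref{thm:nsbexistscollapse}). The one place you genuinely diverge is in how connectedness is secured: the paper simply asserts, by inspection of Theorem 6.2 of \cite{taming}, that the $\twoexp$-hard $\OWQ$ instances already use connected $\fgtgd$s (and, implicitly, a connected CQ, which Proposition~\ref{prop:ENSBtoOP} also needs), whereas you preprocess an arbitrary hard instance with the shared-dummy-variable inflation, which connectifies the constraints, the query body (hence the trigger rule $\phi_c\to G$), and the instance uniformly in polynomial time while preserving frontier-guardedness and the $\OWQ$ answer. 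Your route is more self-contained and robust, since it does not depend on properties of the external hardness construction beyond its statement; the paper's route is shorter but leans on an unverified-in-the-text inspection of \cite{taming}. Both are valid, and your verification of the intermediate equivalences ($\NSB$ on $\emptyset$ versus $\OWQ$ of $G$, and $\OWQ$ of $G$ versus $\OWQ$ of $Q^{\OWQ}_c$) matches the paper's use of Propositions~\ref{prop:nsbemptyandowq} and~\ref{prop:NSB-characterization}.
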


\begin{theorem}\label{thm:complexistsnsblin}
$\exists \NSB(Q, \calC, \bfS)$ is $\pspace$-hard 
as $Q$ ranges over Boolean CQs and $\calC$ over sets of linear $\tgd$s.
\end{theorem}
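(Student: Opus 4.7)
I plan to reduce from the implication problem for linear TGDs, $\calC^* \models \sigma$ with $\sigma = R(\bar x) \then \exists \bar y ~ S(\bar z)$, which is $\pspace$-hard since implication for inclusion dependencies is $\pspace$-complete (Casanova et al.) and IDs are a subclass of linear TGDs. Given such an instance, I pick fresh distinct constants $\bar c$ that reuse the same equality pattern as the frontier variables $\bar x$, and form the tuple $\bar z^c$ by substituting each $c_i$ for $x_i$ in $\bar z$ while leaving the variables $\bar y$ untouched. The reduced schema $\bfS$ has every relation of the original schema as hidden, plus one new $0$-ary visible relation $V$. The constraints are $\calC = \calC^* \cup \{S(\bar z^c) \then V\}$, where the added rule is a linear TGD with constants (single-atom LHS). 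The query is the Boolean CQ $Q = R(\bar c)$, whose canonical database is the singleton $\{R(\bar c)\}$.

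For correctness I chain two previously established facts. First, linear TGDs (even with constants) are preserved under disjoint unions, because any witness to a single-atom LHS necessarily lies in one component of the union; Theorem \ref{thm:nsbexistscollapse} therefore gives $\exists\NSB(Q,\calC,\bfS) \iff \NSB(Q,\calC,\bfS,\emptyset)$. Second, Proposition \ref{prop:nsbemptyandowq} with $\bfS_v = \{V\}$ rewrites $\NSB(Q,\calC,\bfS,\emptyset)$ as $\OWQ(V, \calC, \{R(\bar c)\})$. Because the auxiliary TGD is the only way to derive the fresh relation $V$, this open-world problem holds iff $\calC^* \cup \{R(\bar c)\} \models \exists \bar y ~ S(\bar z^c)$, which by renaming the constants $\bar c$ back to universally quantified variables $\bar x$ is precisely $\calC^* \models \sigma$.

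The only delicate point is the converse direction of the equivalence between $\OWQ(V, \calC, \{R(\bar c)\})$ and the original implication: from a counter-model of $\calC^* \models \sigma$ containing $R(\bar c)$ but no $S$-tuple matching $\bar z^c$, I must produce a model of $\calC$ with empty visible part in which $Q$ holds. This is immediate, since the auxiliary TGD $S(\bar z^c) \then V$ is vacuously satisfied in such a counter-model, so leaving $V$ out of the extension gives the required witness. The remaining bookkeeping, in particular reusing the same fresh constant for each repeated occurrence of a frontier variable in $\bar z$ and handling any constants already present in $\sigma$, is routine.
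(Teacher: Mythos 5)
Your proof is correct, and it shares the paper's overall skeleton---reduce from the implication problem for inclusion dependencies (Casanova et al.), add a single nullary visible ``sentinel'' relation that is derivable exactly from the head of the target dependency, and collapse $\exists\NSB$ to $\NSB$ over the empty visible instance via Theorem~\ref{thm:nsbexistscollapse}---but the core gadget is genuinely different. The paper must match the frontier of the target dependency $\delta$ against the tuple that seeded the derivation \emph{without} using constants, and it does so by doubling arities: each relation $R$ of arity $k$ gets a companion $R'$ of arity $2k$ whose second half carries the seeding $R$-tuple unchanged through the chase (initialized by $R(\bar x)\then R'(\bar x,\bar x)$), so that the trigger $T'_\star(\bar z_\star,\bar x)\then\good$ can test the frontier pattern purely by variable repetition; its query is $\exists\bar x\, S_\star(\bar x)$. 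You instead name the seed tuple with fresh distinct constants under the UNA, place them in the query $R(\bar c)$ and in the trigger $S(\bar z^c)\then V$, and discharge correctness by chaining Proposition~\ref{prop:nsbemptyandowq} with the standard frozen-instance characterization of dependency implication; your handling of the converse direction (the countermodel vacuously satisfies the trigger, so $V$ can be left empty) is the right observation. Your route is shorter and reuses more of the paper's machinery, but it is not constant-free: the paper's reduction stays within constant-free linear TGDs (its only non-ID feature is a repeated variable, whereas yours introduces constants into both a constraint body and the query), so the paper's argument also yields hardness for the constant-free fragment, which yours does not. Since the theorem as stated permits constants in linear TGDs and in CQs, both arguments establish it.
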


The first theorem will be proven by  reducing the open-world query answering problem to
$\exists \NSB$, and then applying a prior $\twoexp$-hardness result from 
Cal{\`{\i}} et al. \cite{taming}.
The $\pspace$ lower bound will be shown by a reduction from the implication problem for IDs,
shown  $\pspace$-hard by Casanova et al. \cite{casanova}. 

%To prove lower bounds for $\exists \NSB$, 
We begin with the reduction from Open-World Query answering:

\begin{proposition}\label{prop:ENSBtoOP} 
There is a polynomial time reduction from the Open-World Query answering problem
over a set of connected $\fgtgd$s without constants and a connected Boolean CQ to 
an $\exists \NSB$ problem over a set of connected $\fgtgd$s without constants and
a Boolean CQ.
\end{proposition}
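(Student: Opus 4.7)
The plan is to combine the ``negative critical instance'' result (Theorem~\ref{thm:nsbexistscollapse}) with the characterization of $\NSB$ on the empty visible instance as an instance of $\OWQ$ (Proposition~\ref{prop:nsbemptyandowq}). Given an $\OWQ$ input $(Q,\calC,\fullinst)$, where $Q$ is a connected Boolean CQ and $\calC$ is a set of connected $\fgtgd$s without constants, I would construct an $\exists\NSB$ instance $(Q^*,\calC^*,\bfS^*)$ as follows: the schema $\bfS^*$ keeps every relation of $\bfS$ as hidden and adds a single fresh $0$-ary visible relation $V$; the constraints are $\calC^* = \calC \cup \{\phi_Q \then V\}$, where $\phi_Q$ is the quantifier-free body of $Q$; and $Q^*$ is the Boolean CQ obtained by freezing $\fullinst$ (its atoms are those of $\fullinst$ with the domain values viewed as fresh variables).

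The added TGD is connected (because $\phi_Q$ is) and frontier-guarded (its head has no variables, so the frontier is empty), hence $\calC^*$ is a set of connected $\fgtgd$s, and this class is closed under disjoint unions. Theorem~\ref{thm:nsbexistscollapse} therefore reduces $\exists\NSB(Q^*,\calC^*,\bfS^*)$ to $\NSB(Q^*,\calC^*,\bfS^*,\emptyset)$, and Proposition~\ref{prop:nsbemptyandowq} further rewrites the latter as $\OWQ(V,\calC^*,\fullinst)$, using that $V$ is the only visible relation of $\bfS^*$ and that $\canondb(Q^*)$ is isomorphic to $\fullinst$. It remains to verify the equivalence $\OWQ(V,\calC^*,\fullinst) \Leftrightarrow \OWQ(Q,\calC,\fullinst)$: the forward direction is immediate from the new TGD $\phi_Q \then V$; for the converse, any super-instance of $\fullinst$ that satisfies $\calC$ but falsifies $Q$ also vacuously satisfies $\phi_Q \then V$, and since $V$ is a fresh predicate absent from $\fullinst$, we may take it to be absent from the super-instance, witnessing the failure of $\OWQ(V,\calC^*,\fullinst)$.

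The main obstacle is the ``no constants'' requirement on $\calC^*$: if $Q$ carries constants then so does the TGD $\phi_Q \then V$. I would handle this by a standard preprocessing step that replaces the $\OWQ$ input by an equivalent one in which $Q$ is constant-free. For each constant $c$ appearing in $Q$, introduce a fresh hidden unary predicate $P_c$, replace every occurrence of $c$ in $Q$ by a new variable $x_c$, conjoin $P_c(x_c)$, and augment $\fullinst$ with the fact $P_c(c)$. Since $P_c$ is fresh and does not appear in $\calC$, the only value forced to satisfy $P_c$ in any super-instance is $c$ itself, so $x_c$ must map to $c$; thus the rewritten $\OWQ$ instance is equivalent to the original, and the rewritten $Q$ remains connected because each added atom shares a variable with an atom of the original body. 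Applying the construction above then yields, in polynomial time, an $\exists\NSB$ instance over a set of connected $\fgtgd$s without constants, completing the reduction.
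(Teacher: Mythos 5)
Your construction is essentially the paper's own proof: the same fresh $0$-ary visible predicate ($\kw{Good}$ in the paper, your $V$), the same added connected frontier-guarded TGD from the body of $Q$ to it, the same frozen-instance query $Q^*$, and the same collapse to the empty visible instance via Theorem~\ref{thm:nsbexistscollapse}; the only cosmetic difference is that you route the final equivalence through Proposition~\ref{prop:nsbemptyandowq} and an $\OWQ$-to-$\OWQ$ equivalence, where the paper argues directly via the chase characterization of Proposition~\ref{prop:NSB-characterization}. Your constant-elimination preprocessing is a sound additional step that the paper's proof silently omits, so the argument is complete.
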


\begin{proof}
Consider the Open-World Query answering problem over a schema $\bfS$,
a set $\calC$ of sentences without constants and closed under disjoint union,
a Boolean CQ $Q$, and an $\bfS$-instance $\fullinst$.
We reduce this problem to an $\exists\NSB$ problem over a new schema $\bfS'$,
a new set of sentences $\calC'$, and a new Boolean CQ $Q'$. 
The schema $\bfS'$ is obtained from $\bfS$ by adding a relation $\kw{Good}$ 
of arity $0$, which is assumed to be the only visible relation in $\bfS'$.
The set of sentences $\calC'$ is equal to $\calC$ unioned with the sentence
$$
  S_1(\bar x_1) \et \ldots \et S_m(\bar x_m) ~\then~ \kw{Good}
$$
where $S_1(\bar x_1)$, \dots, $S_m(\bar x_m)$ are the atoms in the CQ $Q$.
The query $Q'$ is defined as the \emph{canonical query} of the instance $\fullinst$, 
obtained by replacing each value $v$ with a variable $y_v$ and by quantifying 
existentially over all these variables. 
Note that $\canondb(Q')$ is isomorphic to the input instance $\fullinst$.

Now, assume that the original sentences in $\calC$ were connected $\fgtgd$s
and the CQ $Q$ was also connected. By construction, the sentences in $\calC'$ 
turn out to be also connected $\fgtgd$s. In particular, the satisfiability of 
these sentences are preserved under disjoint unions, and hence from Theorem 
\ref{thm:nsbexistscollapse}, 
$\exists\NSB(Q',\calC',\bfS') = \true$ iff $\NSB(Q',\calC',\bfS',\emptyset) = \true$.
Thus, it remains to show that $\NSB(Q',\calC',\bfS',\emptyset) = \true$ iff 
$\OWQ(Q,\calC, \fullinst) = \true$.

By contraposition, suppose that $\OWQ(Q,\calC,\fullinst) = \false$.
This means that there is a $\bfS$-instance $\fullinst'$ that contains $\fullinst$,
satisfies the sentences in $\calC$, and violates the query $Q$.
In particular, $\fullinst'$, seen as an instance of the new schema $\bfS'$, without
the visible fact $\kw{Good}$, satisfies the query $Q'$ and the sentences in $\calC'$
(including the sentence that derives $\kw{Good}$ from the satisfiability of $Q$). 
The $\bfS'$-instance $\fullinst'$ thus witnesses the fact that 
$\NSB(Q',\calC',\bfS',\emptyset) = \false$.

Conversely, suppose that $\NSB(Q',\calC',\bfS',\emptyset) = \false$.
Recall that the sentences in $\calC'$ do not use constants and $Q'$
contains no visible facts. We can thus apply Proposition \ref{prop:NSB-characterization}
and derive $\Chase(\calC',\bfS',\canondb(Q'))\neq\emptyset$. Note that
$\canondb(Q')$
is clearly isomorphic to the original instance $\fullinst$.
In particular, there is an instance $K$ in $\Chase(\calC',\bfS',\canondb(Q'))$ 
that contains the original instance $\fullinst$, satisfies the sentences in 
$\calC'$, and does not contain the visible fact $\kw{Good}$.
From the latter property, we derive that $K$ violates the query $Q$.
Thus $K$, seen as an instance of the schema $\bfS$, witnesses the fact
that $\OWQ(Q,\calC,\fullinst) = \false$.
\end{proof}

We note that  there are two variants of $\OWQ$, corresponding to finite and infinite instances.
However, by finite-controllability of  $\fgtgd$s, inherited from
the finite model property of $\gnfo$ (see Theorem \ref{thm:gnfsat})
 these two variants agree. Hence we do not distinguish them. Similar remarks
hold for other uses of $\OWQ$ within proofs in the paper.

\medskip
We are now ready to prove Theorem~\ref{thm:eNSB-twoexp}, namely, the 
$\twoexp$-hardness of the problem $\exists\NSB(Q,\calC,\bfS)$, 
where $Q$ ranges over Boolean CQs and $\calC$ ranges over sets of connected FGTGDs. 
%We recall the statement:
%
%\begin{statement}
%The problem $\exists \NSB(Q, \calC, \bfS)$ is $\twoexp$-hard
%as $Q$ ranges over Boolean CQs and $\calC$ ranges over sets of 
%connected $\fgtgd$s.
%\end{statement}

\begin{proof}[Proof of Theorem~\ref{thm:eNSB-twoexp}]
Theorem 6.2 of Cal{\`{\i}} et al.~\cite{taming} shows $\twoexp$-hardness of
open-world query answering for $\fgtgd$s. An inspection of the proof shows
that only connected $\fgtgd$s are required. Thus, the theorem follows immediately 
from Proposition~\ref{prop:ENSBtoOP}.
\end{proof}

\medskip
We now turn towards proving Theorem~\ref{thm:complexistsnsblin}, namely,
the $\pspace$ lower bound for $\exists \NSB$ under linear $\tgd$s. 
Recall that the reduction in Proposition \ref{prop:ENSBtoOP} 
does not preserve smaller classes of sentences, such as linear $\tgd$s.
We thus prove the theorem using a separate reduction. 
%Recall the statement of Theorem \ref{thm:complexistsnsblin}:
%
%\begin{statement}
%The problem $\exists \NSB(Q, \calC, \bfS)$ is $\pspace$-hard as 
%$Q$ ranges over Boolean CQs and $\calC$ ranges over sets of linear $\tgd$s.
%\end{statement}

\begin{proof}[Proof of Theorem~\ref{thm:complexistsnsblin}]
We reduce from the implication problem for inclusion dependencies (IDs), 
which is known to be $\pspace$-hard from Casanova et al. \cite{casanova}.
Consider a set of IDs $\calC$ and an additional ID 
$\delta = S_\star(\bar x_\star) ~\rightarrow~ \exists \bar y ~ T_\star(\bar z_\star)$, 
where $\bar x_\star,\bar y$ are sequences of pairwise distinct variables and 
$\bar z_\star$ is a sequence of variables from $\bar x_\star$ and $\bar y$. 
We denote by $F(\delta)$ the sequence of variables shared between $\bar x_\star$ and $\bar z_\star$ and $m$ the length of this vector.
Note that we annotated relations and variables in $\delta$
with the subscript $_\star$ 
in order to make it clear when refer later to these particular objects.

We create a new schema $\bfS'$ that contains, for each relation $R$ of arity $k$
in the original schema $\bfS$, a relation $R'$ of arity $k+m$. 
We also add to $\bfS'$ a copy of each relation $R$ from $\bfS$, without changing
the arity. Furthermore, we add a $0$-ary relation $\good$, which is the only 
visible relation of $\bfS'$.
For each ID in $\calC$ of the form
$$
  R(\bar x) ~\then~ \exists \bar y ~ S(\bar z)
$$
we introduce a corresponding ID in $\calC'$ of the form 
%\gabriele{This definition may violate the previous assumption that the arity of $S'$ is double of the arity of $S$...
 %         What are we doing here??
%          Maybe the arity increases by just $|x_\star|$}
%\pierre{I changed the proof}
$$ 
  R'(\bar x, \bar x') ~\then~ \exists \bar y ~ S'(\bar z, \bar x')
$$
where the variables in $\bar x'$ are distinct from the variables in $\bar x$. 
We also add the sentences
$$
\begin{array}{rcl}
  S_\star(\bar x_\star)                       &\rightarrow& S'_\star(\bar x_\star, F(\delta)) \\[1ex]
  T'_\star(\bar z_\star, F(\delta)) &\rightarrow& \good
\end{array}
$$
where the elements of $\bar z_\star$ are arranged as in the atom $T_\star(\bar z_\star)$ 
that appears on the right-hand side of the ID $\delta$.
Note that the sentence that copies the content from $R$ to $R'$ 
and duplicates the attributes is not an ID, but is still a linear TGD.
The query of our $\exists\NSB$ problem is defined as 
$$
  Q' ~=~ \exists \bar x ~ S_\star(\bar x) \ .
$$

The sentences that we just defined are preserved under disjoint unions. 
Thus, by Theorem \ref{thm:nsbexistscollapse}, we know that
$\exists\NSB(Q',\calC',\bfS') = \true$ iff $\NSB(Q',\calC',\bfS',\emptyset) = \true$.
Below, we prove that the latter holds iff the ID $\delta$ is implied by the set of
IDs in $\calC$.

In one direction, suppose that the implication holds.
From this, we can easily infer that in the schema $\bfS'$ the following dependency holds:
$$
  S'_\star(\bar x_\star, F(\delta)) ~\then~ \exists \bar y ~ T'_\star(\bar z_\star, F(\delta))
$$
Consider now a full $\bfS'$-instance $\fullinst'$ with empty visible part.
We show that the query $Q'$ is not satisfied, namely, $\fullinst'$ cannot 
satisfy  $\exists \bar x_\star ~ S_\star(\bar x_\star)$.
If it did, then, by the copy of the sentences on the primed relations, this would 
yield  $\exists \bar x_\star ~  S'_\star(\bar c, F(\delta))$. Hence, by the sentences
in the background theory,
we infer  that $\exists \bar z_\star ~ T'_\star(\bar z_\star, F(\delta))$ holds, and thus that
$\good$ holds.
This however would contradict the hypothesis that $\fullinst'$ has empty visible part.

In the other direction, suppose that the implication fails and consider a 
witness $\bfS$-instance $\fullinst$ that contains the fact $S_\star(\bar x_\star)$ 
but not the corresponding $T_\star$ fact.
We create a full $\bfS'$-instance $\fullinst'$ with empty visible part 
where $Q'$ holds, thus showing that $\exists\NSB(Q',\calC',\bfS',\emptyset)=\false$.
We first copy in $\fullinst'$ the content of all relations $R$ from $\fullinst$.
In particular, $\fullinst'$ contains the fact $S_\star(\bar x_\star)$, but no 
$T_\star$ fact.
The primed relations $R'$ in $\fullinst'$ are set to contain all and only the facts 
of the form $R'(\bar x, F(\delta))$, where $R(\bar x)$ is a fact in $\fullinst$.
Finally, we set $\good$ to be the empty relation in $\fullinst'$.
Clearly, $Q'$ holds in $\fullinst'$ and the visible part is the empty instance.
It is also easy to verify that all the sentences in $\calC'$ are satisfied by
$\fullinst'$, and this completes the proof.
\end{proof}

Note that the reduction above does not create a schema with  IDs, but rather
with general linear TGDs (variables can be repeated on the right).
We do not know whether $\exists \NSB(Q, \calC, \bfS)$
is  $\pspace$-hard even for background theories consisting of IDs.

\smallskip
We can show that the connectedness requirement is critical for decidability:

\begin{theorem}\label{th:eNSB-und}
%\michael{Should check for a connection with the results of Nash, Segoufin Vianu 
%         on cases where determinacy of UCQ views and queries is undecidable -- 
%         the examples (see Proposition 4.2 on page 19 of their journal paper)
%         also revolve around non-connected queries).}%
The problem $\exists\NSB(Q,\calC,\bfS)$ is undecidable 
as $Q$ ranges over Boolean CQs and $\calC$ over sets of $\fgtgd$s. 
\end{theorem}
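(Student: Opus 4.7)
The plan is to reduce from the problem $\exists\PSB(Q,\calC,\bfS)$ for linear TGDs with constants, which is undecidable by Corollary~\ref{cor:existPSB-constants-undecidable}. I would adapt the instance-level $\PSB$-to-$\NSB$ reduction from the proof of Theorem~\ref{th:PSBtoNSB} and lift it to the schema level. The point of the reduction is that the extra constraint we add is a frontier-guarded TGD but need not be connected, which is precisely what places us outside the scope of Theorem~\ref{thm:nsbexistscollapse} (the ``negative critical instance'' collapse, which requires the constraints to be preserved under disjoint unions).

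Concretely, given $(Q,\calC,\bfS)$ with $\calC$ a set of linear TGDs (possibly with constants) and $Q$ a Boolean CQ $\exists\bar y ~ R_1(\bar z_1)\et\ldots\et R_m(\bar z_m)$, I would form $\bfS'$ by adjoining a hidden $0$-ary relation $\kw{Good}$ and a visible $0$-ary relation $\kw{Error}$, let $\calC' = \calC\cup\{R_1(\bar z_1)\et\ldots\et R_m(\bar z_m)\et\kw{Good}\then\kw{Error}\}$, and set $Q'=\kw{Good}$. Every rule of $\calC'$ is a frontier-guarded TGD: the original linear TGDs already are, and the new rule has no variables in its right-hand side, so its frontier is empty and is vacuously guarded by any atom on the left.

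For correctness I would show that $\exists\PSB(Q,\calC,\bfS) = \true$ iff $\exists\NSB(Q',\calC',\bfS') = \true$ by transferring witnessing visible instances. In the forward direction, if $\visinst$ witnesses $\exists\PSB$, I take $\visinst'$ to be $\visinst$ (interpreted over $\bfS'_v$, with $\kw{Error}$ absent); this is realizable via any realizer of $\visinst$ extended by setting $\kw{Good}$ and $\kw{Error}$ to false. Then for any $\fullinst'\sat\calC'$ with $\visible(\fullinst')=\visinst'$, the $\bfS$-reduct satisfies $\calC$ and visibly agrees with $\visinst$, so it satisfies $Q$ by the $\PSB$ assumption; since $\kw{Error}$ is false in $\visinst'$, the added rule forces $\kw{Good}$ false in $\fullinst'$, so $Q'$ fails, i.e. $\NSB$ holds. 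In the reverse direction, given $\visinst'$ witnessing $\exists\NSB$, I first argue that $\kw{Error}\notin\visinst'$ (otherwise extending the realizer by setting $\kw{Good}$ true still satisfies $\calC'$ and $Q'$, contradicting $\NSB$), and then show that $\visinst=\visinst'|_{\bfS_v}$ witnesses $\exists\PSB$: if $\PSB$ failed, a failing $\fullinst$ extended by $\kw{Good}$ true and $\kw{Error}$ false would satisfy $\calC'$ and $Q'$ while visibly agreeing with $\visinst'$, a contradiction.

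The main subtlety is the bookkeeping around realizability and the agreement of visible parts on the new predicates $\kw{Good}$ and $\kw{Error}$; once this is tracked carefully, undecidability transfers directly. The whole argument is insensitive to whether instances are required to be finite, so the result holds in both the finite and unrestricted variants of $\exists\NSB$, in line with the authors' remarks after Theorem~\ref{thm:undecidexistpsbdisjunction}.
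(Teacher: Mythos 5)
Your reduction is correct, but it takes a genuinely different route from the paper. The paper proves Theorem~\ref{th:eNSB-und} by reducing from the \emph{model conservativity problem} for $\el$ TBoxes (undecidable by \cite{conservextdl}): it sets $\calC=\{\phi_1,\ \good\then\phi_2\}$ and $Q=\good$, so that $\exists\NSB$ holds iff some model of $\phi_1$ has no $\phi_2$-expansion. You instead lift the instance-level $\PSB$-to-$\NSB$ reduction of Theorem~\ref{th:PSBtoNSB} to the schema level and compose it with Corollary~\ref{cor:existPSB-constants-undecidable}; the case analysis you sketch (showing $\kw{Error}$ must be absent from any $\NSB$ witness, and transferring witnesses in both directions while tracking realizability) goes through, and the added rule is indeed a vacuously frontier-guarded but unconnected TGD. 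The trade-offs: your argument is self-contained within the paper's own results and, as a by-product, establishes a schema-level analogue of Theorem~\ref{th:PSBtoNSB}; the paper's argument imports an external undecidability result but yields a formally stronger conclusion, since the $\el$-based constraints are constant-free (and over binary relations), whereas your source problem is undecidable only \emph{with} constants, so your constructed $\calC'$ necessarily contains them. Under the paper's convention that $\fgtgd$s admit constants by default, this still proves the theorem as stated, but it isolates connectedness as the culprit less cleanly than the paper's proof, which shows that dropping connectedness alone (with no constants) already destroys decidability. One small point to make explicit in a final write-up: in the reverse direction, the realizability of $\visinst'|_{\bfS_v}$ with respect to $\calC$ (needed for it to count as an $\exists\PSB$ witness) follows by taking the $\bfS$-reduct of any realizer of $\visinst'$.
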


\begin{proof}
We give a reduction from the \emph{model conservativity problem} for 
$\el$ TBoxes, which is shown undecidable in \cite{conservextdl}.  
Intuitively, $\el$ is a logic that defines $\fgtgd$s over relations 
of arity $2$, called ``TBoxes''.
Given some TBoxes $\phi_1$ and $\phi_2$ over two schemas $\bfS_1$ and $\bfS_2$, 
respectively, with $\bfS_1\subseteq\bfS_2$, we say that $\phi_2$ is a 
\emph{model conservative extension} of $\phi_1$ if every $\bfS_1$-instance 
$\visinst$ that satisfies $\phi_1$ can be extended to an $\bfS_2$-instance 
that satisfies $\phi_2$ without changing the interpretation of the predicates 
in $\bfS_1$, that is, by only adding an interpretation for the relations that 
are in $\bfS_2$ but not in $\bfS_1$.
The model conservativity problem consists of deciding whether $\phi_2$ 
is a model conservative extension of $\phi_1$. The proof in \cite{conservextdl} shows
that this problem is undecidable for both finite instances and arbitrary instances.

We reduce the above problem to the complement of $\exists\NSB(Q,\calC,\bfS)$, 
for suitable $Q$, $\calC$, and $\bfS$, as follows.
Given some TBoxes $\phi_1$ and $\phi_2$ over the schemas $\bfS_1\subseteq\bfS_2$,
let $\bfS$ be the schema obtained from $\bfS_2$ by adding a new predicate $\good$ 
of arity $0$ and by letting the visible part be $\bfS_1$ (in particular,
the relation $\good$ is hidden).
Further let $\calC=\{\phi_1, \good \then \phi_2\}$, where $\good\then \phi_2$ 
is shorthand for the collection of $\fgtgd$s obtained by adding $\good$ as a 
conjunct to the left-hand side of each dependency of $\phi_2$
(note that this makes the dependency unconnected).
Finally, consider the query $Q=\good$. We have that $\exists\NSB(Q,\calC, \bfS)=\true$ 
iff there is an $\bfS_1$-instance $\visinst$ satisfying $\phi_1$, none of whose 
$\bfS_2$-expansions satisfies $\phi_2$.
\end{proof}

\subsection{Summary for Negative Query Implication}

A summary of results on negative implication is below. We notice
that the decidable cases are orthogonal to those for positive implications. 
Note also that unlike in the positive cases, we have tractable cases for data complexity.

\begin{center}
\scriptsize
\begin{tabular}{@{}l|lll@{}}
 background theory $\Sigma$                     & $\NSB$ data complexity     &  $\NSB$ combined complexity & $\exists\NSB$ \\[1ex]
  \hline\\
%  ID                 & $\ptime$   & $\exptime$ & $\pspace$\\
%                     & Cor \ref{cor:complnsbinstlin}/Prop \ref{prop:graph-reachability}   
%                     & Cor \ref{cor:complnsbinstlin}/Thm \ref{thm:nsbinstcombexphard}  
%                     & Cor \ref{cor:complexistsnsblin}\\[0.5ex]
%  \hline\hlineskip
  Linear             &  In $\ptime$     & $\exptime$-cmp   & $\pspace$-cmp\\
  ~~ TGD             & Cor.~\ref{cor:complnsbinstlin}~%/~Prop.~\ref{prop:graph-reachability}
                     & Cor.~\ref{cor:complnsbinstlin}~/~Thm.~\ref{thm:nsbinstcombexphard}
                     & Cor.~\ref{cor:nsbemptyandowq}~/~Thm.~\ref{thm:complexistsnsblin}\\[1ex]
  \hline\\
  Conn.~Disj.        & $\exptime$-cmp & $\twoexp$-cmp  & $\twoexp$-cmp\\
  ~~ $\fgtgd$        & Thm.~\ref{thm:nsbgnf}~/~Thm.~\ref{th:PSBtoNSB}
                     & Thm.~\ref{thm:nsbgnf}~/~Thm.~\ref{th:PSBtoNSB}
                     & Thm.~\ref{thm:nsbexistscollapse}/Thm.~\ref{thm:eNSB-twoexp}\\[1ex]
  \hline\\
  $\fgtgd$           & $\exptime$-cmp & $\twoexp$-cmp    & undecidable\\
  \& GNFO            & Thm.~\ref{thm:nsbgnf}~/~Thm.~\ref{th:PSBtoNSB}
                     & Thm.~\ref{thm:nsbgnf}~/~Thm.~\ref{th:PSBtoNSB}
                     & Thm.~\ref{th:eNSB-und}\\[1ex]
  % \hline\hlineskip
  % GNFO               & $\exptime$ & $\twoexp$  & undecidable
  %                    & Thm \ref{thm:nsbgnf}/Thm \ref{th:PSBtoNSB}
  %                    & Thm \ref{thm:nsbgnf}/Thm \ref{th:PSBtoNSB}
  %                    & Thm \ref{th:eNSB-und}\\[0.5ex]
  \hline
\end{tabular}
\end{center}

\section{Extensions and special cases} \label{sec:extspec}
We present some results concerning natural extensions of the framework.

\smallskip
\myparagraph{Non-Boolean queries}
Throughout this work we have restricted to queries to be given as sentence.
The natural extension of the notion of query implication for non-Boolean queries is
to consider inference of information concerning membership of any visible
tuple in the query output. E.g.~$\PSB(Q, \calC, \bfS, \visinst)$ would hold if 
there is a tuple $\bar t$ over the active domain of $\visinst$ such that
$\bar t \in Q(\fullinst)$ for all instances $\fullinst$ of $\bfS$ satisfying 
the background theory $\calC$ and having visible part $\visinst$. 
As usual, the schema-level problem $\exists\PSB(Q, \calC, \bfS)$ 
(resp.~$\exists\NSB(Q, \calC, \bfS)$) for a non-Boolean query $Q$ 
amounts at deciding whether there is a realizable visible instance $\visinst$ 
witnessing $\PSB(Q, \calC, \bfS, \visinst)$ (resp.~$\NSB(Q, \calC, \bfS, \visinst)$).

We show that \emph{all of our results carry over to the non-Boolean case}.
Since the lower-bounds for Boolean problems are clearly inherited by the 
non-Boolean ones, we focus on arguing that the upper bounds carry over.

All the complexity upper bounds for the instance-level problem carry over 
in a rather simple way.
%using the simple approach of substituting in each potential output a tuple from $\visinst$
%and utilizing the prior algorithms on the resulting Boolean queries.
%The complexity for each substitution preserves the upper bounds since they hold in the presence of constants,
%and the iteration over tuples can be absorbed in the complexity classes given
%in our upper bounds: for data complexity the iteration is polynomial,
%while for combined complexity the number of tuples can be exponential, but our bounds are at least exponential.
%Moreover, the iteration over the tuples $\bar t$ can be absorbed in the 
%complexity classes given in our upper bounds: for data complexity the iteration is polynomial,
%while for combined complexity the number of tuples can be exponential, but our bounds are at least exponential.
For example, given $\bfS$, $\calC$, $\visinst$ as usual, and given a non-Boolean query $Q$
and a visible tuple $\bar t$, the problem of deciding whether $\bar t$ appears in every potential 
output $Q(\fullinst)$, for any instance $\fullinst$ satisfying $\calC$ and having visible 
part $\visinst$, reduces to the problem $\PSB(Q_{\bar t}, \calC, \bfS, \visinst)$, 
where $Q_{\bar t}$ is the Boolean query obtained by substituting the $i$-th free
variable of $Q$ with the $i$-th constant in $\bar t$, for all $i$'s.
A similar reduction holds for negative implication.
Thus the instance-level problem in the non-Boolean case reduces to 
a series of instance-level problems in the Boolean case, one for 
each choice of a tuple $\bar t$ over the active domain of $\visinst$.
Our upper bounds can be applied to the latter problems, since they hold 
in the presence of constants in the query. 
Moreover, the iteration over the tuples $\bar t$ can be absorbed in the 
complexity classes of our upper bounds: for data complexity the iteration 
is polynomial, while for combined complexity the number of tuples can be exponential, 
but our bounds are at least exponential.
Further, GFP-Datalog definability for negative implications also extends straightforwardly
to the non-Boolean case:
Theorem \ref{thm:almostadomcontrollable}
%thm:enforce-adom-controllability} 
extends with the same statement and proof,
while the argument in Theorem \ref{th:definability-active-domain-controllable}  is easily extended to  show
that there is a GFP-Datalog program
that returns the complement of $\NSB(Q, \calC, \bfS)$ within the active domain.

The complexity results for $\exists \PSB$ also generalize to the non-Boolean case:
we can revise Theorem \ref{thm:psbexistcollapselin} to state $\exists \PSB(Q,\calC,\bfS)=\true$
iff there is a positive query implication for the tuple $(a, \ldots, a)$ and the instance $\visinst_{\{a\}}$.
For $\exists \NSB$, we can extend Theorem \ref{thm:nsbexistscollapse} to show that
for logical sentences preserved under disjoint union, if there is a positive query implication 
involving some visible instance $\visinst$ and a tuple $\bar t$, then there is one involving 
the empty instance and 
%some tuple $\bar t$
the same tuple $\bar t$. 
From this it follows that the complexity
bounds for $\exists \NSB$ carry over to the non-Boolean case.

\smallskip
\myparagraph{Beyond unions of conjunctive queries}
So far we have considered only the case where the query $Q$ does not contain negation or universal quantification.
It is natural to extend the query language  even further, to Boolean combinations 
of Boolean conjunctive queries ({BCCQs}).
We note that the problem $\PSB(Q, \calC, \bfS, \visinst)$, as $Q$ ranges over BCCQs,
subsumes both $\PSB(Q, \calC, \bfS, \visinst)$ and $\NSB(Q, \calC, \bfS, \visinst)$  
for $Q$ a UCQ. Thus all lower bounds for either of these two problems are inherited 
by the BCCQ problem. The corresponding instance level problems are still decidable.
Indeed, this holds even when $Q$ is a $\gnfo$ sentence, since  we can  
%just apply using 
use the same translation to $\gnfo$ satisfiability applied in 
Theorems \ref{thm:gnfoposinstancedecid} and \ref{thm:nsbgnf}.
However, for the schema-level problems $\exists\PSB$ and $\exists\NSB$ 
we immediately run into problems:

\begin{theorem}\label{thm:booleancombund}
The problem $\exists\PSB(Q, \calC, \bfS)$ for a Boolean combination $Q$ 
of Boolean CQs is undecidable, even when the sentences in the background theory are 
IDs. The same holds for $\exists\NSB(Q, \calC, \bfS)$.
\end{theorem}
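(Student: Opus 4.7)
The plan is to prove undecidability of $\exists\PSB$ for BCCQs and IDs by reducing from the undecidable tiling problem of the infinite grid, in the spirit of Theorem~\ref{thm:undecidexistpsbdisjunction}, but moving the disjunction from the constraint RHS into Boolean combinations inside the query. The result for $\exists\NSB$ follows for free, since BCCQs are closed under negation and $\NSB(Q,\calC,\bfS,\visinst) = \PSB(\neg Q,\calC,\bfS,\visinst)$, hence $\exists\NSB(Q,\calC,\bfS) = \exists\PSB(\neg Q,\calC,\bfS)$; so it suffices to treat $\exists\PSB$.

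I would re-use the schema from Theorem~\ref{thm:undecidexistpsbdisjunction}: visible relations $\kw{Init}$, $E_H$, $E_V$, and $U_t$ for each tile $t\in T$, together with the hidden challenge relations $\kw{HChallenge}$, $\kw{VChallenge}$, $\kw{CChallenge}$ whose roles (testing functionality of the edge relations and confluence of the grid) are unchanged. The set $\calC$ of IDs contains: the challenge-triggering IDs from Theorem~\ref{thm:undecidexistpsbdisjunction}, reformulated to respect the single-atom-RHS shape by splitting conjunctions; the IDs $U_t(x)\to\exists y\, E_H(x,y)$ and $U_t(x)\to\exists y\, E_V(x,y)$ for each $t \in T$, forcing grid successors to exist in the visible part; and $\kw{Init}\to \exists x\, U_{t_\bot}(x)$ for the initial tile. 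The BCCQ $Q$ is the conjunction of the positive CQs $\kw{Init}$, $Q_H$, $Q_V$, $Q_C$ of Theorem~\ref{thm:undecidexistpsbdisjunction} (forcing functionality and confluence of the visible edges), a positive CQ $\exists x\, \kw{Init} \wedge U_{t_\bot}(x)$ requiring the initial tile, and the negated CQs $\neg\exists x,y\, U_t(x)\wedge E_H(x,y)\wedge U_u(y)$ for each $(t,u)\notin H$ (and analogously for $V$), together with $\neg\exists x\, U_t(x)\wedge U_u(x)$ for each $t\neq u$. These negated CQs jointly replace the role of the disjunctive TGDs of Theorem~\ref{thm:undecidexistpsbdisjunction} with respect to tile-adjacency and uniqueness.

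The main obstacle is that, unlike disjunctive TGDs, plain IDs do not force a successor cell to carry \emph{some} tile: a visible instance could contain cells with edges but no tile label at all, vacuously satisfying the negated BCCQ components and spuriously making $\PSB$ true. To close this gap I would introduce an auxiliary hidden unary relation $P$ together with IDs $U_t(x)\to\exists y\, P(y)$ marking each successor (linked to $E_H$, $E_V$ via shared variables and additional IDs), and further IDs $P(y)\to\exists z\, \kw{TChall}(y,z)$ and $\kw{TChall}(y,z)\to U_{t_i}(z)$ (one per tile $t_i$) that, together with a positive CQ $\exists y\, \kw{TChall}(y,y)$ added to $Q$, enforce by the same challenge trick used for $E_H$-functionality in Theorem~\ref{thm:undecidexistpsbdisjunction} that every $P$-marked cell must appear in some $U_{t_i}$. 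The technical heart of the argument is checking that this coverage-challenge works uniformly across tiles without disjunction in the rules. Routine verification, combined with the periodic-tiling argument at the end of Theorem~\ref{thm:undecidexistpsbdisjunction} for finite instances, then yields $\exists\PSB(Q,\calC,\bfS) = \true$ iff $(T, H, V, t_\bot)$ admits a tiling, giving the claimed undecidability.
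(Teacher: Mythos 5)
Your overall strategy is the same as the paper's: reduce from infinite-grid tiling, represent the candidate tiling in the visible relations, use hidden ``challenge'' relations constrained by IDs to probe functionality/confluence of $E_H,E_V$, and push all disjunctive reasoning into the Boolean combination in the query. The reduction of $\exists\NSB$ to $\exists\PSB$ via $\NSB(Q,\ldots)=\PSB(\neg Q,\ldots)$ and closure of BCCQs under negation is correct, and your handling of functionality, confluence, $H$/$V$-consistency and tile-uniqueness (all via purely visible negated CQs or the degenerate-tuple challenge trick) is sound.

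However, the gadget you propose for the step you yourself identify as the technical heart --- forcing every grid node to carry \emph{some} tile --- does not work. The IDs $\kw{TChall}(y,z)\then U_{t_i}(z)$, ``one per tile $t_i$,'' are universally quantified implications whose conclusions are \emph{conjoined}, not disjoined: any $z$ occurring in the second position of $\kw{TChall}$ is forced into \emph{every} $U_{t_i}$ simultaneously. Combined with $P(y)\then\exists z\,\kw{TChall}(y,z)$ and $U_t(x)\then\exists y\,P(y)$, realizability then requires the visible instance to contain an element belonging to all $U_{t_i}$ as soon as any tile is used; that element violates your uniqueness conjuncts $\neg\exists x\,(U_t(x)\et U_u(x))$ in every extension, so no tiling-encoding instance can witness a $\PSB$, and the forward direction of the reduction collapses. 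The repair is to put the disjunction over tiles where BCCQs allow it, namely in the query rather than in the constraints: keep a hidden unary challenge $\kw{TChall}$ that the constraints force to be nonempty and to select only grid nodes (e.g.\ $\kw{TChall}(y)\then N(y)$ together with a nonemptiness ID triggered by $\kw{Init}$), and add to $Q$ the UCQ conjunct $\bigvee_{t\in T}\exists y\,\big(\kw{TChall}(y)\et U_t(y)\big)$. Then an adversary can falsify this conjunct exactly by placing an untiled node into $\kw{TChall}$, so the conjunct holds in all extensions iff every node is tiled. With that replacement (and the analogous care in linking $P$ to the successors actually created by the $U_t(x)\then\exists y\,E_H(x,y)$ IDs), your argument goes through and matches the paper's construction in spirit.
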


\begin{proof}
As in the previous undecidability results,
we reduce a tiling problem with tiles $T$, initial tile $t_\bot \in T$
and horizontal and vertical
constraints $H,V \subseteq T \times T$ 
to the 
problem  $\exists \PSB(Q,\calC,\bfS)$.  Again, for convenience we deal with the infinite
variant of the problem. The idea will be that the visible instance
witnessing $\exists \PSB$ represents the tiling, and invisible instances represent
challenges to the correctness of the tiling.

We model the infinite grid to be tiled by visible relations $E_H$ and $E_V$, and 
the tiling function by a collection of unary visible relations $U_t$, for all tiles
$t \in T$.

The invisible relations represent markings of the grid for possible errors.
There are several kinds of challenges.
We focus on the horizontal consistency challenge, which selects two nodes in 
the $E_H$ relation, to challenge whether the nodes satisfy the horizontal 
constraint.
Formally, the challenge is captured by a binary invisible predicate
$\horchallenge(x,y)$, with an associated sentence in the background theory
$$
  \horchallenge(x,y) ~\then~ E_H(x,y) \ .
$$
The query $Q$ will be satisfied only when the following 
\emph{negated} CQs hold, for all pairs $(t,t')\nin H$: 
$$
  \neg \exists ~ x ~ y ~~ \horchallenge(x,y) ~\et~ U_t(x) ~\et~ U_{t'}(y) \ .
$$
Note that this can only happen if the relation $\horchallenge$ has selected
two horizontally adjacent nodes whose tiles violate the horizontal constraints.
The vertical constraints are enforced in a similar way using an invisible 
relation $\vertchallenge$ and another negated CQ.

Recall that in the infinite grid, we have unique vertical and horizontal 
successors of each node, and the horizontal and vertical successor functions 
commute. Thus far we have not enforced that $E_V$ and $E_H$ have this property.
We will use additional hidden relations and IDs to enforce that every element is 
related to at least one other via $E_H$ and $E_V$.

We first show how to enforce that every element has at most one horizontal 
successor (``functionality challenge'').
We introduce a hidden relation $\hfuncchallenge(x,y,y')$ 
and a background theory sentence
$$
\begin{array}{rcl}
  \hfuncchallenge(x,y,y') &\then& E_H(x,y) \\[1ex]
  \hfuncchallenge(x,y,y') &\then& E_H(x,y') \ .
\end{array}
$$
We also add to the query $Q$ the conjunct:
$$
  \big( ~ \neg \exists~ x ~ y ~ y' ~ \hfuncchallenge(x,y,y') ~ \big)
  ~~\vee~~ 
  \big( ~ \exists ~ x ~ y ~ \hfuncchallenge(x,y,y) ~ \big) \ .
$$
We claim that if there is a visible instance witnessing $\exists \PSB$, 
then $E_H$ is functional. Indeed, if $E_H$ were not functional in the
visible instance, then we could choose a node $x$ with two distinct 
$E_H$-successors $y$ and $y'$, add only the tuple $(x,y,y')$ to 
$\hfuncchallenge$, and obtain a full instance that satisfies the 
sentences of the background theory but not the query $Q$.
Conversely, suppose that $E_H$ is functional in a visible instance $\visinst$,
and consider any full instance $\fullinst$ that satisfies the background theory and
agrees with $\visinst$ on the visible part.
If there are no tuples in $\hfuncchallenge$, the conjunct above is 
clearly satisfied by its first disjunct.
If there is some tuple $(x,y,y')$ in $\hfuncchallenge$, then by the 
background theory, we must have $E_H(x,y)$ and $E_H(x,y')$, and hence, by
functionality, $y=y'$. In this case, the conjunct above holds via 
the second disjunct.
The functionality of the vertical relation $E_V$ is enforced in an 
analogous way. 

Commutativity of $E_H$ and $E_V$ can be also enforced using a similar technique.
We add a hidden relation $\confchall(x,y,z,u,v)$
with the following sentences in the background theory:
$$
\begin{array}{rcl}
  \confchall(x,y,z,u,v) &\then& E_H(x,y) \\[1ex]
  \confchall(x,y,z,u,v) &\then& E_V(y,u) \\[1ex]
  \confchall(x,y,z,u,v) &\then& E_V(x,z) \\[1ex]
  \confchall(x,y,z,u,v) &\then& E_H(z,v) \ .
\end{array}
$$
A potential tuple in $\confchall(x,y,z,u,v)$ represents the join 
of a triple of nodes moving first horizontally and then vertically 
from $x$ (i.e., $x,y,u$) and a triple going first vertically and 
then horizontally from $x$ (i.e., $x,z,v$). 
For the relations to commute, we must satisfy the query 
$$
  \big(~ \neg \exists ~ x ~ y ~ z ~ u~ v ~ \confchall(x,y,z,u,v) ~\big) 
  ~~\vee~~ 
  \big(~ \exists ~ x ~ y ~ z ~ u ~ \confchall(x,y,z,u,u) ~\big)
$$
in the full instance.
Thus, we add the above conjunct to $Q$.

Putting the various components of $Q$ for different challenges 
together as a Boolean combination of CQ, completes the proof of 
the theorem.
\end{proof}

\myparagraph{The case of conjunctive query views}
As mentioned earlier, the database community has studied the $\PSB$ problem in 
the case where the background theory consist exactly of CQ-view definitions that
determine each visible relation in terms of invisible relations.
Formally, a CQ-view based scenario consists
of a schema $\bfS=\bfS_v \cup \bfS_h$, namely, the
union of a schema for the visible relations and a schema
for the hidden relations, and a set of sentences $\calC$ 
between visible and hidden relations that must be of a particular form.
For each visible relation $R \in \bfS_v$, $\calC$ must contain two
dependencies of the form 
$$
\begin{array}{rcl}
  R(\bar x)              &\then& \exists \bar y ~ \phi_R(\bar x, \bar y) \\[1ex]
  \phi_R(\bar x, \bar y) &\then& R(\bar x)
\end{array}
$$
where $\phi_R$ is a conjunction of atoms over the hidden schema $\bfS_h$,
Furthermore, all sentences in $\calC$ must be of the above forms.
Note that this CQ-view scenario is incomparable in expressiveness to $\gnf$ sentences.

%\michael{I don't see what is wrong with my explanation below.}
The instance-level problems are still well-behaved, because given a visible instance $\visinst$, 
the sentences can be rewritten as $\calC_1 \wedge \calC_2$, where $\calC_1$ 
consists of TGDs from the view relations to the base relations, and $\calC_2$ 
consists of sentences of the form
$V(\vec x) \rightarrow \bigvee_{\vec a \in V(\visinst)} \vec x= \vec a$.
%\todo{Balder says that this is strange, and I agree. I think here we meant to 
%      say that the chase terminates in polynomially many rounds, but only along
 %     one disjunctive branch. But still I don't understand the last sentence about
 %     polynomial size.}
%\michael{I propose to eliminate the polynomial stuff and just claim decidability}
Thus the ``disjunctive chase'' of $\visinst$ with these dependencies will terminate,
since after the first round (where we fire $\calC_1$ dependencies), no new
elements will be created.
%From this we can directly argue that a counterexample
%superinstance for either
%$\PSB$ and $\NSB$ must be of polynomial size.
%michael: could say if we have room one is in NP the other is in conp

The decidability of the $\exists\PSB$ problem follows immediately from these observations 
and Theorem \ref{thm:psbexistcollapselin}, which applies to background theories capturing CQ-view 
definitions. In contrast, for the $\exists\NSB$ problem we prove that

\begin{theorem} \label{thm:undecidexistscqview} 
The $\exists \NSB$ problem under background knowledge given as CQ-view definitions is undecidable.
\end{theorem}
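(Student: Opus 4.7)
The plan is to reduce from the infinite-grid tiling problem, following the strategy used in Theorems~\ref{thm:undecidexistpsbdisjunction}, \ref{th:eNSB-und}, and~\ref{thm:booleancombund}. Given a tiling instance $(T, H, V, t_\bot)$, I will construct a hidden schema containing edge relations $E_H, E_V$, unary tile predicates $\kw{Tile}_t$ for each $t\in T$, an $\kw{Init}$ marker, and auxiliary challenge relations, together with CQ view definitions projecting selected atoms to a visible schema, and a Boolean UCQ $Q$ detecting local errors in the encoding. The witnessing visible instance $\visinst$ for $\exists\NSB$ will encode the tiling itself.

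Central to the construction are views such as $W_\kw{Init}(\,) \leftrightarrow \exists x\,(\kw{Init}(x) \wedge \kw{Tile}_{t_\bot}(x))$, whose presence in $\visinst$ uses the soundness direction to force every completion to contain an initial cell tiled by $t_\bot$, and $W_H^{t,t'}(x,y) \leftrightarrow E_H(x,y) \wedge \kw{Tile}_t(x) \wedge \kw{Tile}_{t'}(y)$ for each $(t,t') \notin H$, whose emptiness in $\visinst$ uses the completeness direction to forbid any horizontal adjacency with incompatible tiles. Symmetric views will handle vertical adjacency. Grid functionality and the commutativity of $E_H$ with $E_V$ will be enforced by additional views paired with hidden challenge relations in the style of $\hfuncchallenge$ and $\confchall$ from Theorem~\ref{thm:booleancombund}, and $Q$ will be a disjunction of CQs reporting positive local errors such as two distinct tiles assigned to the same cell, or a firing functionality or confluence challenge. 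The correctness split is then clean: a valid tiling yields a $\visinst$ whose completions are constrained by the views to respect all adjacencies and pass all challenges, so $Q$ fails everywhere and $\NSB$ holds; if no tiling exists, then for any realizable $\visinst$ a chase-like construction -- driven by the soundness demands and by the alternatives triggered whenever a forbidden view is kept empty -- produces a completion in which some unavoidable violation triggers a disjunct of $Q$.

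The main obstacle is that CQ view definitions are strictly less expressive than both the disjunctive linear TGDs of Theorem~\ref{thm:undecidexistpsbdisjunction} and the Boolean combinations of CQs of Theorem~\ref{thm:booleancombund}, so the usual gadgets for forcing grid structure are not directly available. The key leverage I plan to exploit is the bidirectional nature of CQ views: each view simultaneously gives a soundness TGD $R(\bar x) \to \exists \bar y\, \phi_R$ that turns a visible fact into a demand for hidden witnesses of a conjunction, and a completeness TGD $\phi_R \to R$ that turns the absence of a visible tuple into a negative constraint forbidding certain hidden joins. Combining these two mechanisms with the auxiliary challenge relations yields the limited disjunctive reasoning needed to pin down the grid. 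A secondary subtlety is the monotonicity of $Q$ as a UCQ: it cannot directly assert absences, so all ``forbidding'' work is delegated to the completeness side of the views, reserving $Q$ for detecting only positive, locally visible errors in any hidden completion.
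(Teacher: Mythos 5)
Your high-level strategy (a tiling reduction with hidden challenge relations, using the completeness half of each view definition as the forbidding mechanism) is the same as the paper's, and your $W_H^{t,t'}$ views are a perfectly good way to enforce adjacency compatibility. But there is a concrete gap in your visibility assignment that breaks the forward direction of the reduction. You keep $E_H$, $E_V$ and the tile predicates \emph{hidden} and expose only derived views; none of the views you list pins these relations down, and with CQ views the only way to do so is an identity view $W(x,y)\leftrightarrow E_H(x,y)$, i.e.\ making them visible. This is exactly what the paper does: it declares $E_H$, $E_V$, $U_t$ visible and adds hidden copies $E'_H$, $E'_V$, $U'_t$ with trivial view definitions, so that the grid and the labelling are fixed by $\visinst$. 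Without this, the direction ``valid tiling $\Rightarrow \exists\NSB$'' fails: given your intended $\visinst$, an adversarial completion may add a fresh isolated element carrying two distinct tiles (no view containing an $E_H$-atom in its body fires, so no forbidden visible fact is forced), or add a spurious hidden $E_H$-edge out of an already-tiled node together with a matching challenge tuple. Either move satisfies a disjunct of your error-detecting UCQ, so $Q$ is not false in all completions and $\NSB$ does not hold on the instance that encodes the tiling.

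Second, even once the grid relations are made visible, your proposal is silent on the genuinely delicate point: how ``a firing functionality (or confluence) challenge'' can be a \emph{positive} CQ disjunct of $Q$ that is nonetheless false in every completion of the good instance. A disjunct $\exists x\,y\,y'\,\big(\hfuncchallenge(x,y,x,y')\wedge E_H(x,y)\wedge E_H(x,y')\big)$ is satisfied by planting the degenerate tuple $\hfuncchallenge(x,y,x,y)$, which no functionality of $E_H$ rules out. The paper's resolution is a dedicated view $\errhfun(x,y,x')\leftrightarrow\hfuncchallenge(x,y,x',y)$ kept empty in $\visinst$ (and itself added as a query disjunct so that any witnessing instance must keep it empty): its completeness direction makes exactly the degenerate challenges visible, so when $E_H$ is functional the challenge disjunct cannot be satisfied, while a genuine functionality violation still admits a non-degenerate challenge that fires $Q$. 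You gesture at challenge gadgets ``in the style of Theorem~\ref{thm:booleancombund}'', but that theorem expresses the same condition with a \emph{negated} CQ, and translating that negation into a view-completeness constraint on the degenerate tuples is the heart of the present proof; without it the challenge gadgets do not work in the $\NSB$ setting.
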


\begin{proof}
As in earlier undecidability results, such as Theorem \ref{thm:undecidexistpsbdisjunction},
we will give the proof for the unrestricted version of the problem, which asserts
the existence of an instance with a $\NSB$, finite or infinite.

We give a reduction from a tiling problem that is specified
by a set of tiles $T$, an initial tile $t_\bot \in T$, and horizontal 
and vertical constraints $H, V \subseteq T \times T$.
In order to match the unrestricted version of $\exists \NSB$, we will deal with the infinite tiling variant, 
thus considering the problem of tiling the infinite grid $\bbN\times\bbN$.

As before, we will have visible relations $E_H$ and $E_V$ representing the 
horizontal and vertical edges of the grid. 
Recall that every visible relation must be associated with a CQ-view definition
on a subset of hidden relations.
In particular, for the relations $E_H,E_V$ it is sufficient to introduce hidden 
copies $E'_H,E'_V$ and enforce the trivial dependencies:
$$
\begin{array}{rcl}
  E_H(x,y) &\iff& E'_H(x,y) \\[1ex]
  E_V(x,y) &\iff& E'_V(x,y) \ . 
\end{array}
$$
Similarly, each node of the grid has to be associated with a tile in $T$, 
and this will be represented by some visible unary relations $U_t$,
together with the corresponding hidden copies $U'_t$. We have  associated sentences in
the background theory:
%$U_t(x) ~\iff~ U'(t)$, for all $t \in T$.
$U_t(x) ~\iff~ U'_t(x)$, for all $t \in T$.

\medskip
As in earlier undecidability results, such as Theorem \ref{thm:booleancombund},  
the first goal is to ensure that for each node, there exists at most one predecessor 
and at most one successor for the relations $E_H$ and $E_V$.
We explain how to ensure this for the successor case and the relation $E_H$,
but similar constructions work for the other cases.
%\gabriele{Removed this, since I think I gave it already above}
%The relation $E'_H$ is associated with another CQ-view definition 
%$\phi_H = C_H(x,y)$, where $C_H$ is a copy of $E_H$ that can be used in view 
%definitions to generate other visible predicates. 
%In addition, 
We introduce a hidden relation $\hfuncchallenge$ of arity $4$, and a visible 
relation $\errhfun$ of arity $3$ with the associated CQ-view definition
$$
  \errhfun(x,y,x') ~\iff~ \hfuncchallenge(x,y,x',y) \ .
$$
Our query $Q$ will contain as a conjunct the following UCQ:
$$
\begin{array}{rcl}
  Q_{\hfuncchallenge} &=& 
  \big(\: \exists ~ x ~ y ~ y' ~ \errhfun(x,y,y') \:\big)
  ~~\vee~~ \\[0.5ex]
  & & \big(\: \exists ~ x ~ y ~ y' ~ 
              \hfuncchallenge(x,y,x,y') \:\et\: E_H(x,y) \:\et\: E_H(x,y') \:\big).
\end{array}
$$
We explain how the subquery $Q_{\hfuncchallenge}$ enforces that every 
element has at most one successor in the relation $E_H$.

Suppose that $\exists\NSB(Q_{\hfuncchallenge},\calC,\bfS)=\true$, 
namely, that there exists an $\bfS_v$-instance $\visinst$ 
such that $\NSB(Q_{\hfuncchallenge},\calC,\bfS,\visinst)=\true$.
The visible relation $\errhfun$ must be empty in $\visinst$,
as otherwise the query $Q_{\hfuncchallenge}$ would be satisfied in every
full instance that agrees with $\visinst$ on the visible part (note that
$\visinst$ is clearly realizable).
Moreover, as $\errhfun$ is empty in $\visinst$, every full instance that 
satisfies the background theory and agrees with $\visinst$ does not contain a 
fact of the form $\hfuncchallenge(x,y,x',y)$.
Now, suppose, by way of contradiction, that there is an element $x$ with 
two distinct $E_H$-successors $y$ and $y'$.
We can construct a full instance that extends $\visinst$ with the single fact 
$\hfuncchallenge(x,y,x,y')$. This full instance satisfies all the  sentences
in $\calC$ and also the query $Q_{\hfuncchallenge}$, thus contradicting
$\NSB(Q_{\hfuncchallenge},\calC,\bfS,\visinst)=\true$.

For the converse direction, we aim at proving that there is a negative query implication
 on $Q_{\hfuncchallenge}$ for those instances that encode valid tilings and
are realizable.
More precisely, we consider a visible instance $\visinst$ in which the relation 
$E_H$ is a function and the relation $\errhfun$ is empty (note that the latter 
condition on $\errhfun$ is safe, in the sense that the considered instance 
$\visinst$ could be obtained from a valid tiling and, being realizable, could 
be used to witness a negative query implication).
We claim that $\NSB(Q_{\hfuncchallenge},\calC,\bfS,\visinst)=\true$. 
Consider an arbitrary full instance $\fullinst$ that agrees with $\visinst$
on the visible part and satisfies the sentences in $\calC$,
and suppose by way of contradiction that $Q_{\hfuncchallenge}$ holds on $\fullinst$.
Then, $\fullinst$ would contain the following facts, for a triple of nodes $x,y,y'$:
$\hfuncchallenge(x,y,x,y')$, $E_H(x,y)$, $E_V(x,y')$.
On the other hand, $\fullinst$ cannot contain the fact $\hfuncchallenge(x,y,x',y)$,
as otherwise this would imply the presence of the visible fact $\errhfun(x,y,x')$.
From this we conclude that $y\neq y'$, which contradicts the functionality of $E_H$.

Very similar constructions and arguments can be used to enforce single successors
in $E_V$, single predecessors in $E_H$ and $E_V$, as well as confluence of $E_H$
and $E_V$.

\medskip
We now explain how we enforce the existential properties of the grid, 
such as $E_H$ being non-empty.
We introduce two nullary relations $\hemptyerror$ and $\hemptyhiddenerror$, 
where the former is visible and the latter is hidden, and we constrain them
via the CQ-view definition
$$
  \hemptyerror ~\iff~ \exists ~ x ~ y ~ \big(\: E_H(x,y) ~\et~ \hemptyhiddenerror \:\big) \ .
$$
We add as a conjunct of our query the following UCQ:
$$
  Q_{\hemptyerror} ~=~ \hemptyerror ~\vee~ \hemptyhiddenerror \ .
$$
Below, we show how this enforces non-emptiness of $E_H$. 

Suppose that $\visinst$ is an $\bfS_v$-instance such that 
$\NSB(Q_{\hemptyerror},\calC,\bfS,\visinst)=\true$. 
We show that in this case the relation $E_H$ is non-empty.
First, note that the fact $\hemptyerror$ must not appear in $\visinst$, since 
otherwise all full instances extending $\visinst$ would satisfy $Q_{\hemptyerror}$
(as $\visinst$ is realizable, there is at least one such full instance).
If $E_H$ were empty, we could set $\hemptyhiddenerror$ to non-empty and thus get 
a contradiction of $\NSB(Q_{\hemptyerror},\calC,\bfS,\visinst)=\true$.

For the converse direction, we consider a visible instance $\visinst$ 
in which the relation $E_H$ is non-empty and $\hemptyerror$ is empty 
(again, such an instance can be obtained from a valid tiling of the infinite grid
and thus can be used to witness a negative query implication).
In any full instance that agrees with $\visinst$ on the visible part, 
$\hemptyhiddenerror$ must agree with $\hemptyerror$, and hence must be empty.
This implies that the query $Q_{\hemptyerror}$ is violated, whence
$\NSB(Q_{\hemptyerror},\calC,\bfS,\visinst)=\true$.

\medskip
Besides requiring that $E_H$ and $E_V$ are non-empty, we must also guarantee 
that for every pair $(x,y)\in E_H$ (resp., $(x,y)\in E_V$), there is a pair 
$(y,z)\in E_V$ (resp., $(y,z)\in E_H$). 
Note that once we have performed this, functionality and confluence will 
ensure that $E_H$ and $E_V$ correctly encode the horizontal and vertical 
edges of the grid.
We explain how to enforce that every pair $(x,y)\in E_H$ has a successor
pair $(y,z)\in E_V$ -- a similar construction can be given for the symmetric property.
We add to our schema another visible relation $\herrorsucc$ of arity $0$, and 
a hidden relation $\herrorhiddensucc$ of arity $1$. 
The associated CQ-view definition is 
%The relation $\herrorsucc_{H,t}$ is defined by $\exists y~ \herrorhiddensucc(x)
%\wedge E_H(x,y) \wedge U_t(x) $.
$$
  \herrorsucc ~\then~ \exists ~x~y~z~~ E_H(x,y) ~\et~ \herrorhiddensucc(y) ~\et~ E_V(y,z) \ .
$$
Moreover, we add as a conjunct of our query the following UCQ:
$$
  Q_{\herrorsucc} ~=~ \herrorsucc ~~\vee~~ 
                      \big(\: \exists ~x~y~ E_H(x,y) ~\et~ \herrorhiddensucc(y) \:\big) \ .
%\big( \bigvee_{t \in T_H}
%(\exists x ~\herrorsucc_{H,t}(x))\big) ~~\vee~~ \bigvee_{t \in T_H} \big( \exists x ~
%\herrorhiddensucc(x) \wedge U_t(x) \big).
$$
We show how this enforces the desired property.

Suppose that there is a visible instance $\visinst$ 
such that $\NSB(Q_\herrorsucc,\calC,\bfS,\visinst)=\true$.
First, observe that the visible relation $\herrorsucc$ must be empty, 
as otherwise all extensions of $\visinst$ would satisfy $Q_\herrorsucc$.
Now, suppose, by way of contradiction, that there is a pair $(x,y)\in E_H$
that has no successor pair $(y,z)\in E_V$. 
In this case, we can construct a full instance that extends $\visinst$
with the hidden fact $\hlabelhiddenerror(y)$. This full instance has $\visinst$
as visible part and satisfies the sentences in the background theory and the query $Q_{\herrorsucc}$.
As this contradicts the hypothesis $\NSB(Q_\herrorsucc,\calC,\bfS,\visinst)=\true$, 
we conclude that for every pair $(x,y)\in E_H$, there is a successor pair $(y,z)\in E_V$.

Conversely, consider a visible instance $\visinst$ that represents a correct 
encoding of the infinite grid and where the visible relation $\herrorsucc$ is empty.
In any full instance that agrees with $\visinst$ on the visible part, 
$\herrorsucc$ must be the same as 
$\exists~x~y~z~ E_H(x,y) ~\et~ \herrorhiddensucc(y) ~\et~ E_V(y,z)$.
In particular, because every node has both a successor in $E_H$ and 
a successor in $E_V$, this implies that the hidden relation $\herrorhiddensucc$ 
cannot contain the node $y$, for any pair $(x,y)\in E_H$.
Hence the query $Q_\herrorsucc$ is necessarily violated, and this
proves that $\NSB(Q_\herrorsucc,\calC,\bfS,\visinst)=\true$.

\medskip
Now that we have enforced a grid-like structure on the relations $E_H$ and $E_V$, 
we consider the relations $U_t$ that encode a candidate tiling function.
Using similar techniques, we can ensure that every node of the grid has an associated tile.
More precisely, we enforce that, for every pair $(x,y)\in E_H$, the element $x$ 
must also appear in $U_t$, for some tile $t\in T$
% -- analogous constructions and arguments 
%can be given for the pairs $(x,y)\in E_V$.
%Gabriele: not needed
%The visible schema has a visible relation $C_t$ for each tile $t$.
We add a visible relation $\hlabelerror_t$ of arity $0$ for each tile $t\in T$ 
and a hidden relation $\hlabelhiddenerror$ of arity $1$.
The associated CQ-view definitions are of the form
$$
  \hlabelerror_t ~\iff~ \exists ~ x ~ y ~ E_H(x,y) ~\et~ \hlabelhiddenerror(x) ~\et~ U_t(x) \ .
$$
We add as conjunct of our query the following UCQ:
$$
  Q_\hlabelerror ~=~ \bigvee_{t \in T} \exists ~ x ~ y ~ 
                     \big(\: \hlabelerror_t(x,y) \:\big) ~\vee~ 
                     \big(\: E_H(x,y) \et \hlabelhiddenerror(x) \:\big) \ .
$$
We prove that the above definitions enforce that all nodes that appear
in the first column of the relation $E_H$ have at least one associated tile.

Consider a visible instance $\visinst$ such that $\NSB(Q_\hlabelerror,\calC,\bfS,\visinst)=\true$.
For each tile $t$, the visible relation $\hlabelerror_t$ must be empty, as otherwise all
extensions of $\visinst$ would satisfy $Q_\hlabelerror$.
Suppose, by way of contradiction, that there is a node $x$ that appears in the first 
column of the visible relation $E_H$, but does not appear in any relation $U_t$, with 
$t\in T$. We can construct a full instance where the relation $\hlabelhiddenerror$ 
contains the element $x$. This instance would then satisfy the query $Q_\hlabelerror$,
thus contradicting $\NSB(Q_\hlabelerror,\calC,\bfS,\visinst)=\true$.

For the converse, consider a visible instance $\visinst$ in which the relation
$E_H$ is non-empty (as enforced in the previous steps) and, for all pairs $(x,y)\in E_H$,
there is a tile $t\in T$ such that $x\in U_t$.
Furthermore, assume that all the relations $\hlabelerror_t$, with $t\in T$, 
in this visible instance are empty. Note that such an instance $\visinst$ is 
realizable and hence can be obtained from a valid tiling (if there is any)
and used as a witness of a negative query implication. 
In every full instance that agrees with $\visinst$ and satisfies the background theory, 
$\hlabelerror_t$ must be the same as 
$\exists ~x~y~ \hlabelhiddenerror(x) \et E_H(x,y) \et U_t(x)$.
In particular, because every node is associated with some tile,
this implies that the hidden relation $\hlabelhiddenerror$ cannot
contain the node $x$, for any pair $(x,y)\in E_H$.
Hence the query $Q_\hlabelerror$ is necessarily violated, and this
proves that $\NSB(Q_\hlabelerror,\calC,\bfS,\visinst)=\true$.

\medskip
We also need to guarantee that each node has at most one associated tile. 
This property can be easily enforced by the subquery 
$$
  Q_{\twolabel} ~=~ \bigvee_{t \neq t'} \exists x~ U_t(x) \et U_{t'}(x) \ .
$$
Finally, we enforce that the encoded tiling function respects the horizontal 
and vertical constraints using the following UCQ:
$$
\begin{array}{rcl}
  Q_{\cerror} &=& \displaystyle\bigvee\nolimits_{(t,t') \not \in H} ~
                  \big(\: \exists ~x ~ y ~ E_H(x,y) \et U_t(x) \et U_{t'}(y) \:\big)
                  ~~\vee~~ \\[0.5ex]
              & & \displaystyle\bigvee\nolimits_{(t,t') \not \in V} ~
                  \big(\: \exists ~x ~ y ~ E_V(x,y) \et U_t(x) \et U_{t'}(y) \:\big) \ .
\end{array}
$$

\medskip
Summing up, if we let $Q$ be the 
disjunction %conjunction  %michael: yes, thanks for catching this
of all previous queries, we know 
that $\exists\NSB(Q,\calC,\bfS)=\true$ if and only if there exists a valid tiling
of the infinite grid $\bbN\times\bbN$. 
%This shows that the former problem, where the 
%instances are allowed to be infinite, is undecidable. 
%\gabriele{This is an attempt to explain the reduction for the finite setting.}
%We could strengthen this undecidability result to the settings where instances 
%are required to be finite. This can be done by modifying the above constructions
%so as to encode tilings of portions of the grid that are infinite, but where every 
%row and every column is finite. For example, tiling can be restricted to cells of 
%the form $(i,j)$, with $i\le j\le 2i$.
\end{proof}

\section{Conclusions} \label{sec:conc}
This work gives a detailed examination of inference of information
from complete knowledge about a subset of the signature coupled with background
knowledge about the full signature. Both the information and
the background knowledge are expressed by  logical sentences.
In future work we will look at mechanisms for ``restricted access'' that are finer-grained
than just exposing the full contents of a subset of the schema relations.
One such mechanism consists 
language-based restrictions -- the ability to  evaluate open formulas over the schemas in a fragment of the logic.
Another mechanism consists of functional interfaces -- for example, the ``access method'' interfaces studied
in works such as \cite{thebook,ustods}.
% and  binding-pattern based restrictions (user can do lookups
%with values that have been acquired from prior lookups).

\bibliographystyle{alpha}
\bibliography{priv}

\appendix

\newcommand{\utuple}{\kw{GTuple}}
\newcommand{\treelogic}{\kw{Tree}}
\newcommand{\gnfpk}{\text{$\gnfp^k$}\xspace}
\newcommand{\sigoriginal}{\sigL}
\newcommand{\sigtarget}{\sigma'}
\renewcommand{\L}{\mathrm{L}}
\newcommand{\phiL}{\phi_{\L}}
\newcommand{\R}{\mathrm{R}}
\newcommand{\phiR}{\phi_{\R}}
\newcommand{\cform}[2]{\text{consistent}_{#1,#2}}
\newcommand{\consistent}{\text{consistent}}

\newcommand{\sigL}{\sigma_{\L}}
\newcommand{\sigR}{\sigma_{\R}}
\newcommand{\codepath}{\textit{Path}}
\newcommand{\sigcode}[2]{\Sigma^{\text{code}}_{#1,#2}}

\newcommand{\owq}{\kw{OWQ}}
\newcommand{\nfta}{\textup{NTA}_{\textup{fin}}}
\newcommand{\reachablestates}{\textup{ReachState}}
\newcommand{\gnfp}{\kw{GNFP}}
\newcommand{\gfp}{\kw{GFP}}
\newcommand{\unfp}{\kw{UNFP}}
\newcommand{\lfplogic}{\kw{LFP}}
\newcommand{\LFP}{\mu}
\newcommand{\GFP}{\nu}
\newcommand{\LFPA}[2]{\LFP_{#1,#2}}
\newcommand{\GFPA}[2]{\GFP_{#1,#2}}
\newcommand{\guarded}{\operatorname{gdd}}

\newcommand{\Card}{\kw{Card}}
\newcommand{\Eq}{\kw{Eq}}

\newcommand{\nf}{\text{normal form}\xspace}
\newcommand{\calA}{\autA}
\newcommand{\sigk}{\sigcode{\sigma}{k}}
\newcommand{\otree}{t_<}
\newcommand{\bagnames}[1]{\operatorname{names}(#1)}
\newcommand{\decode}[1]{\operatorname{decode}(#1)}
\newcommand{\trans}{+}
\newcommand{\sigmab}{\sigma_b}
\newcommand{\sigmad}{\sigma_d}
\newcommand{\mysize}[1]{|#1|}
\newcommand{\mysubparagraph}[1]{\subparagraph{#1.}}
\newcommand{\guardedb}{\operatorname{gdd}}
\newcommand{\calN}{\mathcal{N}}
\newcommand{\paramsk}{N_k}
\newcommand{\instance}{I}
\newcommand{\elems}[1]{\operatorname{elems}(#1)}
\newcommand{\twowayaltinf}{\kw{2TA}^\omega}
\newcommand{\btwowayaltinf}{\kw{2ABT}^\omega}
\newcommand{\ptwowayaltinf}{\kw{2APT}^\omega}
\newcommand{\ponewayndinf}{\kw{1NPT}^\omega}
\newcommand{\bonewayndinf}{\kw{1NBT}^\omega}

\newcommand{\fA}{\mathfrak{A}}
\newcommand{\fB}{\mathfrak{B}}
\newcommand{\fU}{\mathfrak{U}}
\newcommand{\fM}{\mathfrak{M}}
\newcommand{\strM}{M}%change to frak later
\newcommand{\N}{\mathbb{N}}

\newcommand{\gnkinvar}{\gnf^k}
\newcommand{\sgnkinvar}{B\gnf^k}
\newcommand{\sunkinvar}{\logic{BUN}^k}
\newcommand{\gnlinvar}{\logic{GN}^l}
\newcommand{\gnninvar}{\logic{GN}^n}
\newcommand{\gnminvar}{\logic{GN}^m}
\newcommand{\ginvar}{\logic{G}}

\newcommand{\bunravelk}[1]{\cU_{\textup{BGN}^k}( #1 )}

\newcommand{\unravelk}[1]{\cU_{\gnkunravel}( #1 )}
\newcommand{\unravelm}[2]{\cU_{\textup{GN}^m[#2]}( #1 )}
\newcommand{\unravell}[2]{\cU_{\textup{GN}^l[#2]}( #1 )}
\newcommand{\sunravelk}[2]{\cU_{\gnkunravel, #2}( #1 )}

\renewcommand{\restriction}{\mathord{\upharpoonright}}
\newcommand{\restrict}[2]{ #1\restriction_{#2} }
\newcommand{\gnfk}{\gnf^k}
%\newcommand{\restrict}[2]{ #1\restriction_{#2} }

%\newcolumntype{C}[1]{>{\centering\let\newline\\\arraybackslash\hspace{0pt}}m{#1}}
%\newcolumntype{L}[1]{>{\let\newline\\\arraybackslash\hspace{0pt}}m{#1}}

%\usepackage[thmmarks]{ntheorem}
%\theoremheaderfont{\bfseries}
%\theorembodyfont{\normalfont}
%\theoremseparator{:}
%\theoremsymbol{$\blacksquare$}

\newenvironment{myexmp}{\refstepcounter{myexmp}\par\medskip
\noindent\textbf{Example~\themyexmp.}}{\null\hfill$\triangleleft$\medskip}
%\newcounter{myexmp}[chapter]
%\renewcommand{\themyexmp}{\thechapter.\arabic{myexmp}}

\def\DIAMOND{\LTLdiamond}
\def\DIAMONDM{\LTLdiamondminus}
\def\BOX{\LTLsquare}
\def\BOXM{\LTLsquareminus}

%\newenvironment{myexmp}{\refstepcounter{myexmp}\par\medskip
%\noindent\textbf{Example~\thechapter.\themyexmp.}}{\null\hfill$\triangleleft$\medskip}
%\newcounter{myexmp}
%\makeatletter
%\@addtoreset{myexmp}{chapter}
%\makeatother

\newcommand{\munravel}{\kw{ModalUnravel}}
\newcommand{\gunravel}{\kw{GUnravel}}
\newcommand{\gunraveltree}{\kw{GUnravalTree}}
\newcommand{\gnkunravel}{\kw{GNUnravel}_k}
\newcommand{\gnkunraveltree}{\kw{GNUnravelTree}_k}
\newcommand{\mroot}{\kw{Root}}
\newcommand{\mcopy}{\kw{Copy}}

\newcommand{\mso}{\kw{MSO}}
\newcommand{\gso}{\kw{GSO}}
\newcommand{\so}{\kw{SO}}

\newcommand{\related}{\kw{Related}}
\newcommand{\evans}{\text{``$\kw{Evans}$''}}
\newcommand{\kennedy}{\text{``$\kw{Kennedy}$''}}
\newcommand{\thompson}{\text{``$\kw{Thompson}$''}}

\newcommand{\rqfo}{\kw{RQFO}}
\newcommand{\inducedaxs}{\Gamma_{\kw{Ind}}}
\newcommand{\atts}{\kw{atts}}
\newcommand{\joe}{\mbox{``joe''}}
\newcommand{\cseqrep}{\kw{ChaseSeqRep}}
\newcommand{\vecinf}{\overrightarrow{\kw{inf}}}
\newcommand{\dom}{\domain}
\newcommand{\adom}{\kw{adom}}
\newcommand{\push}{\kw{Push}}
\newcommand{\popswap}{\kw{PopSwap}}
\newcommand{\stackcontent}{\kw{StackCon}}
\newcommand{\return}{\kw{Return}}
\newcommand{\free}[1]{\kw{Free}(#1)}
\newcommand{\rulesof}{\kw{RulesOf}}
\newcommand{\touspjnegquery}{\kw{ToUSPJ^\neg}}
\newcommand{\seqconsts}{\kw{SeqConsts}}
\newcommand{\indomain}{\kw{InDomain}}
\newcommand{\inmap}{\kw{InMap}}
\newcommand{\outmap}{\kw{OutMap}}

\newcommand{\aschemaconst}{\kw{Const(\aschema)}}
\newcommand{\constsigma}{\aschemaconst}
\newcommand{\infacccon}{\kw{InfAccCopy}}
\newcommand{\methods}{\kw{Methods}}
\newcommand{\bagguard}{$A$}
\newcommand{\IS}{induced-subinstance}
\newcommand{\ISA}{induced-subinstance-access}

\newcommand{\accbind}{\kw{AccBind}}
\newcommand{\smith}{\text{``$\kw{Smith}$''}}
\newcommand{\jones}{\text{``$\kw{Jones}$''}}
\newcommand{\mathematics}{\kw{mathematics}}
\newcommand{\fintable}{\mathit{fin}}

\newcommand{\relquan}{\kw{RelQuan}}
\newcommand{\cqof}{\kw{CQOf}}
\newcommand{\planof}{\kw{PlanOf}}

\newcommand{\wrt}{w.r.t.~}
\newcommand{\ie}{i.e.,~}
\newcommand{\eg}{e.g.~}
\newcommand{\footurl}[1]{\footnote{\textsf{\small #1}}}

\newcommand{\includedin}{\subseteq}
\newcommand{\origconsts}{\kw{OrigConsts}}
\newcommand{\paramtab}{\kw{ParamTable}}
\newcommand{\inp}{\kw{input}}
\newcommand{\allatts}{\kw{allatts}}

\newcommand{\lookupname}{\kw{lookupname}}
\newcommand{\facultynames}{\kw{facultynames}}
\newcommand{\getnames}{\kw{getnames}}
\newcommand{\facultyname}{\kw{facultyname}}
\newcommand{\localnames}{\kw{localname}}
\newcommand{\homm}{\kw{hom}}
\newcommand{\latestnull}{\kw{latestnull}}
\newcommand{\commands}{\kw{Commands}}

\newcommand{\mapsfrom}{:=}
\newcommand{\chaseconfig}{\kw{ChaseConfig}}
\newcommand{\success}{\kw{Success}}

\newcommand{\subq}{\kw{SQ}}

\newcommand{\fortrans}{\forax}
\newcommand{\forwardax}{\forax}
%\kw{AcSch}_{Forward}}
\newcommand{\tnode}{\kw{TabNode}}
\newcommand{\orm}[1]{{\color{orange}#1}}
\newcommand{\grm}[1]{{\color{green}#1}}
\newcommand{\brm}[1]{{\color{brown}#1}}
\newcommand{\bnm}[1]{{\color{blue}#1}}
\newcommand{\redm}[1]{{\color{red}$#1$}}
\newcommand{\red}[1]{{\color{red}$#1$}}
\newcommand{\rednm}[1]{{\color{red}#1}}
\newcommand{\blue}[1]{\bnm{#1}}

\newcommand{\greenc}{\kw{Green}}
\newcommand{\orangec}{\kw{Orange}}
\newcommand{\bluec}{\kw{Blue}}

\newcommand{\asch}{\kw{Sig}}
\newcommand{\aschema}{\asch}
\newcommand{\abag}{\kw{b}}
\newcommand{\bags}{\kw{Bags}}
\newcommand{\rootbag}{\kw{RootB}}
\newcommand{\Ann}{\kw{Ann}}
\newcommand{\posexistsineq}{\exists^{+, \neq}}
\newcommand{\posexists}{\exists^+}
\newcommand{\existsineq}{\exists^{\neq}}
\newcommand{\existsfo}{\exists}% existential formula
\newcommand{\deacc}{\kw{DeAcc}}
\newcommand{\lostarski}{\L{}o\'s-Tarski}
\newcommand{\relc}{\kw{RelAcc}}
\newcommand{\bindpat}{\kw{BindPatt}}
\newcommand{\trule}[2]{\ensuremath{\begin{array}{c}#1 \\ \hline #2\end{array}}}
\newcommand{\interp}{\xrightarrow{~\text{int}~}}
\newcommand{\nneg}{\mathop{\sim}}
\newcommand{\Left}{\textsc{left}}
\newcommand{\Right}{\textsc{right}}
\newcommand{\T}{{\cal T}}
\newcommand{\B}{{\cal B}}
\renewcommand{\S}{{\cal S}}

% Uncomment either of the next 2 lines to turn on/off side notes
% \newcommand{\sidenote}[4]{ #4 }
\newcommand{\sidenote}[4]{\todo[fancyline,color=#2,linecolor=#2]{\textbf{[#1]} #3}\textcolor{#2}{#4}}

\newcommand{\backax}{\kw{Ax}_{\kw{Back}}}
\newcommand{\forax}{\kw{Ax}_{\kw{For}}}
\newcommand{\aseq}{\kw{Seq}}
\newcommand{\toplan}{\kw{ToPlan}}
\newcommand{\tuples}{\kw{Tuples}}
\newcommand{\select}{\sigma}
\newcommand{\project}{\pi}
\newcommand{\join}{\bowtie}
\newcommand{\union}{\cup}
\newcommand{\diff}{-}
\newcommand{\adiff}{-}
\newcommand{\rename}{\rho}
\newcommand{\renaming}{\leadsto}
\newcommand{\accpart}{\kw{AccPart}}
\newcommand{\accs}{\kw{AcSch}}
\newcommand{\accsb}{\kw{AcSch}^\leftrightarrow}
\newcommand{\accsbp}{\kw{AltAcSch}^\leftrightarrow}
\newcommand{\accsnegalt}{\kw{AltAcSch}^\neg}
\newcommand{\False}{\false}
\newcommand{\True}{\true}
\newcommand{\accsneg}{\kw{AcSch}^\neg}
\newcommand{\command}{\kw{Command}}
\newcommand{\comms}{\kw{Comms}}
\newcommand{\accout}{\kw{AccOut}}

\newcommand{\anacc}{\kw{Acc}}

\newcommand{\globdom}{\geq_{GD}}

\newcommand{\AD}{\ensuremath{AD}}
\newcommand{\spj}{\ensuremath{{SPJ}}^{\neq}\xspace}
\newcommand{\spjneg}{\ensuremath{{SPJAD}^{\neq}} \xspace}
\newcommand{\espjneg}{\ensuremath{{SPJAD}} \xspace}

\newcommand{\uspj}{\ensuremath{{USPJ}}^{\neq}\xspace}
\newcommand{\USPJ}{\uspj}
\newcommand{\uspjneg}{\ensuremath{{USPJAD}^{\neq}}\xspace}
\newcommand{\ESPJ}{\ensuremath{{SPJ}}\xspace}
\newcommand{\espj}{\ESPJ}
\newcommand{\EUSPJ}{\ensuremath{{USPJ}}\xspace}
\newcommand{\euspj}{\EUSPJ}
\newcommand{\EUSPJneg}{\ensuremath{{USPJAD}}\xspace}
\newcommand{\euspjneg}{\EUSPJneg}

\newcommand{\dept}{\kw{Dept}}
\newcommand{\emp}{\kw{Emp}}
\newcommand{\uemployee}{\kw{UEmployee}}
\newcommand{\employee}{\kw{Employee}}
\newcommand{\manager}{\kw{Manager}}

\newcommand{\deptid}{\kw{deptid}}
\newcommand{\mgrid}{\kw{mgrid}}
\newcommand{\ename}{\kw{ename}}
\newcommand{\did}{\deptid}

\newcommand{\profname}{\kw{profname}}
\newcommand{\studentid}{\kw{studid}}
\newcommand{\studid}{\studentid}
\newcommand{\dname}{\kw{dname}}
\newcommand{\profid}{\kw{profid}}
\newcommand{\lname}{\kw{lname}}
\newcommand{\student}{\kw{Student}}
\newcommand{\professor}{\kw{Professor}}

\newcommand{\profinfo}{\kw{Profinfo}}
\newcommand{\studinfo}{\kw{Studentinfo}}
\newcommand{\univdirectory}{\kw{Udirectory}}
\newcommand{\udirectory}{\univdirectory}
\newcommand{\studentinfo}{\studinfo}
\newcommand{\researcher}{\kw{Researcher}}
\newcommand{\lecturer}{\kw{Lecturer}}
\newcommand{\universitydirectory}{\univdirectory}
\newcommand{\eid}{\kw{eid}}
\newcommand{\employeeid}{\kw{employeeid}}
\newcommand{\studname}{\kw{studname}}
\newcommand{\lastname}{\kw{lastname}}
\newcommand{\onum}{\kw{onum}}
\newcommand{\accessible}{\kw{accessible}}
\newcommand{\accessconst}[1]{\accessible(\config_{#1})}
\newcommand{\acc}[1]{\kw{Accessed} #1}
\newcommand{\accq}[1]{\kw{InfAcc} #1}
\newcommand{\accfactq}{\kw{InfAccQuery}}
\newcommand{\accfacts}{\kw{AccFacts}}
\newcommand{\accqp}[1]{\kw{AltInfAcc} #1}
\newcommand{\infacc}[1]{\kw{InfAcc} #1}
\newcommand{\bestplan}{\kw{BestPlan}}
\newcommand{\bestcost}{\kw{BestCost}}
\newcommand{\oldbestcost}{\kw{OldBestCost}}
\newcommand{\candidates}{\kw{Candidates}}
\newcommand{\plan}{\kw{Plan}}
\newcommand{\plantree}{\kw{PlanTree}}
\newcommand{\prooftree}{\kw{ProofTree}}

\newcommand{\config}{\kw{config}}
\newcommand{\mt}{\kw{mt}}
\newcommand{\aplan}{\kw{PL}}
\newcommand{\outcome}[3][]{\ensuremath{\llbracket#2\ifstrempty{#1}{}{\mid#1}\rrbracket_{#3}}}
\newcommand{\interpret}[2]{\outcome{#1}{#2}}
\newcommand{\cost}{\kw{Cost}}
\newcommand{\pointer}{\kw{Pointer}}
\newcommand{\parent}{\kw{Parent}}
\newcommand{\children}{\kw{Children}}
\newcommand{\childof}{\kw{ChildOf}}
\newcommand{\descendantof}{\kw{DecendantOf}}
\newcommand{\oldcost}{\kw{OldCost}}
\newcommand{\newcost}{\kw{NewCost}}
\newcommand{\atomiccost}{\kw{AtomicCost}}
\newcommand{\haspointer}{\kw{HasPointer}}
\newcommand{\isroot}{\kw{IsRoot}}
\newcommand{\prune}{\kw{Prune}}
\newcommand{\goodplans}{\kw{GoodPlans}}
\newcommand{\oldplans}{\kw{OldPlans}}
\newcommand{\ids}{\kw{Ids}}
\newcommand{\names}[1]{\bagnames{#1}}
\newcommand{\uname}{\kw{uname}}
\newcommand{\addr}{\kw{addr}}
\newcommand{\uid}{\kw{uid}}
\newcommand{\plandag}{\kw{PlanDag}}
\newcommand{\planspace}{\kw{PlanSpace}}
\newcommand{\arule}{\kw{Ax}^{mt}_R}
\newcommand{\ruleapp}{\kw{AtomicProof}}
\newcommand{\gcompose}{\kw{CompProof}}
\newcommand{\configs}{\kw{Configs}}
\newcommand{\backup}{\kw{BackUpBestCost}}
\newcommand{\uplan}{\kw{ChasePlan}}
\newcommand{\initialize}{\kw{Initialize}}
\newcommand{\chooseax}{\kw{Choose}}
\newcommand{\reps}{\kw{Reps}}
\newcommand{\repof}{\kw{RepOf}}
\newcommand{\witnessmap}{\kw{WitnessMap}}
\newcommand{\candidateconfigs}{\kw{CandConf}}
\newcommand{\dominated}{\kw{dominated}}
\newcommand{\sucdom}{\kw{SucDom}}
\newcommand{\candidatepairs}{\kw{CandPairs}}
\newcommand{\reppairs}{\kw{RepPairs}}
\newcommand{\pequiv}{\equiv_{\kw{Fact}}}
\newcommand{\factdomh}{\preceq^h_\kw{Fact}}
\newcommand{\factdom}{\preceq_\kw{Fact}}
\newcommand{\costdom}{\preceq_\kw{Cost}}
\newcommand{\methdom}{\preceq_\kw{mt}}
\newcommand{\rtdom}{\preceq_\kw{RT}}

\renewcommand{\B}{\mathcal{B}}
\newcommand{\correct}[1]{\textcolor{red}{\textbf{Check:} \textbf{#1}}}
\renewcommand{\phi}{\varphi}

\newtheorem{op}{Open question}
%\newtheorem{lemma}{Lemma}
 %\newtheorem{algo}{Algorithm}
% \newenvironment{proof}{\medskip \noindent \bf Proof \rm}{$\Box$ \medskip}
%\newenvironment{example}                        % Example environment.
%  {%\refstepcounter{chapter}
%\trivlist\item       %   Acts just like a theorem
%  [\hskip\labelsep\bf Example \thechapter]}%    %   environment, except that
%  {\endtrivlist}

% \newcommand{\myparagraph}[1]{{\bf #1.}}

\newcommand{\totalit}{\kw{TotalIt}}
\newcommand{\firstmatch}{\kw{ItFirstMatch}}
\newcommand{\bestmatch}{\kw{ItBestPl}}
\newcommand{\totalfact}{\kw{TotalFacts}}
\newcommand{\dominance}{\kw{DomCh}}
\newcommand{\globalequiv}{\kw{GlobalEq}}
\newcommand{\costprune}{\kw{CostPrune}}
\newcommand{\totalfacts}{\kw{Facts}}
\newcommand{\lra}{\longrightarrow}
\newcommand{\factsof}{\kw{FactsOf}}

\newcommand{\ev}{\kw{EV}}
\newcommand{\aninst}{\kw{I}}
\newcommand{\aquery}{\kw{Q}}
\newcommand{\dtime}{\kw{DTIME}}

\newcommand{\pdq}{$\textsc{PDQ}$\xspace}

\newcommand{\accessop}{\kw{AccessOp}}
\newcommand{\depjoin}{\overrightarrow{\bowtie}}
\newcommand{\officeinfo}{\kw{OfficeInfo}}
\newcommand{\phone}{\kw{phone}}
\newcommand{\bname}{\kw{Bname}}
\newcommand{\ax}{\kw{Ax}}
\newcommand{\name}{\kw{Name}}
\newcommand{\pid}{\profid}
\newcommand{\offid}{\kw{Offid}}
\newcommand{\officein}{\kw{OfficeIn}}
\newcommand{\id}{\kw{Id}}
\newcommand{\office}{\kw{Office}}

\newcommand{\SELECT}{\kw{SELECT}}
\newcommand{\FROM}{\kw{FROM}}
\newcommand{\WHERE}{\kw{WHERE}}
\newcommand{\AND}{\kw{AND}}
\newcommand{\advisorname}{\kw{advisorname}}
\newcommand{\advisorid}{\kw{advisorid}}
\newcommand{\blank}{\_}
\newcommand{\act}{\kw{Direction}}
\newcommand{\stay}{\kw{Stay}}
\newcommand{\down}{\kw{Down}}
\newcommand{\up}{\kw{Up}}
\newcommand{\leftm}{\kw{Left}}
\newcommand{\rightm}{\kw{Right}}
\newcommand{\outputalph}{O}
\newcommand{\manyone}{\leq_{manyone}}
\newcommand{\ptimered}{\leq_{PTIME}}
\newcommand{\timetm}{\kw{Time}}
\newcommand{\propsat}{\kw{PropSAT}}
\newcommand{\threesat}{\kw{3SAT}}

\newcommand{\accept}{\kw{Accept}}
\newcommand{\noaccept}{\kw{NoAccept}}
\newcommand{\machtime}{\kw{Time}}
\newcommand{\machspace}{\kw{Space}}
\newcommand{\lttime}{<_{time}}
\newcommand{\ltspace}{<_{space}}
\newcommand{\hashead}{\kw{HasHead}}
\newcommand{\hassymbol}{\kw{HasSymbol}}
\newcommand{\fosat}{\kw{FOSat}}
\newcommand{\power}{{\cP}}
\newcommand{\V}{{\cal V}}
\newcommand{\code}{\kw{code}}
\newcommand{\treecode}{\code}

\newcommand{\autA}{\mathcal{A}}
\newcommand{\autB}{\mathcal{B}}
\newcommand{\autC}{\mathcal{C}}
\newcommand{\autM}{\mathcal{M}}
\newcommand{\autN}{\mathcal{N}}
\newcommand{\mvb}[1]{{\bf mvb:} \textcolor{red}{#1}}
\newcommand{\set}[1]{\left\{ #1\right\}}
\newcommand{\sset}[1]{\{ #1\}}
\newcommand{\QEve}{Q_E}
\newcommand{\Qand}{Q_\wedge}
\newcommand{\QAdam}{Q_A}
\newcommand{\Qor}{Q_\vee}
\newcommand{\ddown}{\downarrow}
\newcommand{\dup}{\uparrow}
\newcommand{\dstay}{0}
\newcommand{\directions}{\kw{Dir}}
\newcommand{\closure}[1]{\kw{cl}(#1)}
\newcommand{\dneighbor}{\updownarrow}
\newcommand{\qmovesibling}{\text{Eve-choose-sibling}}
\newcommand{\qmovesiblinguniv}{\text{Adam-choose-sibling}}
\newcommand{\qmoveparent}{\text{goto-parent}}
\newcommand{\qselect}{\text{check}}
\newcommand{\state}[1]{\langle #1 \rangle}
\newcommand{\Buchi}{B\"uchi\xspace}
\newcommand{\autsig}{\Sigma}
\newcommand{\namesk}{\paramsk}

\newcommand{\specializations}[1]{\kw{Spec}(#1)}

\newcommand{\clgnf}[1]{\kw{cl}_{\kw{GN}}(#1)}
\newcommand{\const}[1]{\kw{Const}(#1)}

\section{Proof of exponential time satisfiability for $\gnf$ with fixed width, fixed CQ-rank, and fixed arity of schema}
\label{sec:appendixgnfaut}
In this appendix, we give details of the following result:

\medskip

Satisfiability of $\gnf$ sentences 
is decidable
in exponential time if the arity of the relations
used in the sentence
is fixed, and further certain parameters of the sentence are fixed:
the \emph{width}, and the \emph{CQ-rank}.

\medskip

A doubly-exponential bound on satisfiability of $\gnf$ was proven in  the papers
where $\gnf$ was introduced \cite{gnficalp,gnfj}. However the argument
was by reduction to satisfiability of the guarded fragment.
Conversions of $\gnf$ formulas to automata, and comments about what controls their complexity, are implicit in a number
of other works \cite{uslics15bounded,uslics14,icalp17}. 
But the conversions are performed for richer logics than $\gnf$.
This means firstly that they introduce  many 
complications that are unnecessary for $\gnf$, and secondly that they
do not provide the precise statements for $\gnf$  that we require in
our analysis of  inference problems.

Here we give a direct reduction of satisfiability of  $\gnf$ 
to emptiness testing for a tree automaton. The translation allows us to track
the complexity of satisfiability in a more fine-grained way,
including the collapse to exponential time when the arity of the relations,
the width, and the CQ-rank is fixed.

We will start in Subsection \ref{subsec:treelike} explaining
the tree-like model property,  and in Subsection \ref{subsec:aut}
giving background on the automaton formalism we use.
In Subsection \ref{subsec:normal} we show
decidability for the case of $\gnf$ without equality and
constants, restricting to sentences of a special kind (``normal form'').
Finally in Subsection \ref{subsec:equality} we extend to full $\gnf$, with equalities
and constants.
We close in Subsection \ref{subsec:gnfminus} with some remarks relating the results here
with the bounds in the absence of any normal form restriction.

\subsection{Tree-like models and automata} \label{subsec:treelike}
%\section{Tree codes for $\gnf$ without equality}
The first step in showing decidability of $\gnf$ satisfiability
is to show that for any sentence $\phi$ there is a number
$k$, easily computed from $\phi$, such that:
 if $\phi$ is satisfiable, it is satisfiable
over structures that are ``$k$-tree-like'': that is
a structure that is coded by a tree, where each
vertex in the tree represents at most $k$ elements in the structure.

In this section, we will explain the tree-like model
property. In doing so we will \emph{restrict to $\gnf$
sentences that do not have equality or constants}. The extension
to equality and constants will be given in Subsection \ref{subsec:equality}.

We start by describing  what these tree codes look like in detail.

For a number $k$ we let $\paramsk= \set{1, \dots, 2 \cdot k}$.
This is a finite set of \emph{names} that will be used to describe the elements
represented in a given node in the tree.

Given a relational signature  $\sigma$ and a number
$k$,  the \emph{$k$-code signature}, $\sigk$
contains:
\begin{itemize}
\item a unary predicate $D_a$ for all $a \in \paramsk$
\item unary predicates
$R_{\vec{a}}$ for all $R \in \sigma$ of arity $j$ and all $\vec{a} \in \paramsk^j$
\end{itemize}
Informally, $D_a(v)$ indicates that $a$ is a name
in the node $v$ in the tree code, while
$R_{\vec{a}}(v)$ indicates that $R$ holds for
the elements represented by the names $\vec{a}$ at~$v$.

Neighboring nodes may describe
overlapping pieces of the structure.
This will be implicitly coded based on repeated use of names:
if some name appears in two neighboring nodes,
then the same element is being described in both nodes.
This is why $\paramsk$ has $2k$ names,
even though at most $k$ names are used in a single node.

For a vertex $v$ in a $\sigk$ tree $\tree$, let $\names{v} := \set{ a \in \paramsk : D_a \mbox{ holds of } v }$.
This denotes the
set of \emph{names} used for elements in node~$v$.

A \emph{consistent $\sigk$-tree} is a $\sigk$-tree such that
every node $v$ satisfies
\begin{compactitem}
\item $\size{\names{v}} \leq k$ %+ \size{\const{\sigma}}$; 
\item for all $R_{\vec{a}} \in \sigk$, if $R_{\vec{a}}(v)$ then $\vec{a} \subseteq \names{v}$;
\end{compactitem}
When $\sigma$ is clear from context,
such a tree will also be called a \emph{$k$-code}.

We now describe the structure coded by a $k$-code formally.
Given a consistent tree $\tree$ and a local name $a$,
we say nodes $u$ and $v$ are $a$-connected
if
there is a sequence of nodes $u = w_0, w_1, \dots, w_j = v$
such that $w_{i+1}$ is a parent or child of $w_i$,
and $a \in \names{w_{i}}$ for all $i \in \set{0,\dots,j}$.
We write $[v,a]$ for the equivalence class of $a$-connected nodes of $v$.
For $\vec{a} = a_1 \dots a_n$,
we often abuse notation and write $[v,\vec{a}]$ for the tuple 
$[v,a_1],\dots,[v,a_n]$

The \emph{decoding} of $\tree$ is the
$\sigma$-structure $\decode{\tree}$
with universe 
\[\set{ [v,a] : \text{$v \in \dom(\tree)$ and $a \in \names{v}$}}\]
such that
for each relation $R$, we have $R^{\decode{\tree}}([v_1,a_1],\dots,[v_j,a_j])$ iff
there is $w \in \dom(\tree)$ such that
$R_{\vec{a}}(w)$ holds and $[w,a_i] = [v_i,a_i]$ for all~$i$.

We are now ready to state the result that satisfiable $\gnf$ sentences 
have $k$-tree-like models. 
The original papers on $\gnf$  \cite{gnficalp,gnfj} show
that every satisfiable $\gnf$ sentence (even with equality and constants) has
a satisfying model with a \emph{tree decomposition} in which each vertex of the
tree is associated with $k$ elements of the model. We will not need the definition of
tree decomposition here, but it is easy to see (and explained in other works,
such as \cite{icalp17}) that structures with such a decomposition
have codes of the type given above. Hence we have:
\begin{proposition} \label{prop:treemod} 
\cite{gnfj} Suppose $\phi$ is a $\gnf$ sentence without equality and constants
having width $k$. If $\phi$  is satisfiable, then
it is satisfiable in a structure that is the decoding of
some  $k$-code. 
\end{proposition}

Tree codes like this
can generally have unbounded (possibly infinite) degree.
It is well-known that if a first-order sentence $\phi$ is satisfiable,
there is a structure $M$ that is countable such that $M \models \phi$ --
this follows from the Lowenheim-Skolem theorem \cite{ChangKeisler}.
Using this fact, one can refine
the proof of Proposition \ref{prop:treemod} to show that
$M$ is satisfiable in a countable model that has a $k$-tree code
where the branching degree is countable.

For technical reasons, it is more convenient to use full binary trees for our encodings.
Any tree code $T$ where each node has at most countably many children
can be converted to a binary tree code in the following way.
First, for each node $u$, we add infinitely many new children to~$u$, each child
being the root of an 
infinite full binary tree where each node has the same label as~$u$ in~$T$.
This ensures that each node of~$T$ now has infinitely many (but still countably many)
children.
Second, we convert~$T$
into a full binary tree:
starting from the root, each node $u$ with children $(v_i)_{i \in \mathbb{N}}$
is replaced by
the subtree consisting of $v_1,v_2,\dots$ and new nodes $u_1, u_2, \dots$
such that
the label at each $u_i$ is the same as the label at $u$,
the left child of $u_i$ is $v_i$ and the right child of $u_i$ is $u_{i+1}$.
In other words, instead of having a node $u$ with infinitely many children $(v_i)_{i \in \mathbb{N}}$,
we create an infinite spine of nodes with the same label as $u$,
and attach each $v_i$ to a different copy $u_i$ of $u$ on this spine.

\subsection{Automata background} \label{subsec:aut}
\subsubsection{Alternating \Buchi automata}
We will consider infinite full binary trees: that is infinite trees in 
which the outdegree of every vertex is two.
We assume a set of unary predicates $A_1 \ldots A_n$ for such trees, and
let $\Sigma$ be $\{A_1 \ldots A_n\}$.

We will look at automata that can move up and down in such trees.
Let $\act_2$ be the set of (movement) \emph{directions}:
$\stay$, $\down_1$, $\down_2$, and $\up$.

For any set $J$, let $B^+(J)$ be the set of positive Boolean combinations
of propositions in $J$. Given a set $I$ of elements from $J$ and a
formula $\phi \in B^+(J)$, the notion of $\phi$ holding in $I$
($I \models \phi$) is defined as usual in propositional logic:
a single element $j \in J$ holds in $I$ if $j \in I$,
a disjunction holds in $I$ if one of its disjuncts holds, while a conjunction
holds if all of its conjuncts hold.
We will be interested in positive Boolean combinations over $\act_2 \times Q$;
these formulas will be used to describe possible moves of the automaton.

We will translate $\gnf$ sentences to  a
\emph{two-way alternating automaton over infinite 
trees}. Such an automaton  is specified as $(Q, \Sigma, q_0, \delta, \Omega)$, where
\begin{itemize}
\item $Q$ is a finite set of states
\item $\Sigma$ is as above
\item $q_0 \in Q$ is the \emph{initial state}
\item $\delta \in  Q \times \powerset{\Sigma} \rightarrow B^+(\act_2 \times Q)$ is the \emph{transition relation}
\item $\Omega$ is an acceptance condition, which we discuss below.
\end{itemize}
A \emph{run} of the automaton starting at vertex $v$  of a tree $\tree$,
  is another tree
$t'$ whose labelling function $\lambda_{t'}$ labels vertices $n$  with a 
vertex of $\tree$ and
 a state $q \in Q$. We now describe further
properties that are required for the  run to be \emph{accepting}.

First we require that the root of $t'$ is assigned to $(v,q_0)$.
That is, the computation starts at the initial state with the specified
vertex $v$.

Second, we require that the relationship between parent and children
labels in $t'$ be consistent with the transition function $\delta$.
Suppose a vertex $n'$ of $t'$ is associated by $\lambda_{t'}$ to a vertex
$n$ of $\tree$ whose predicates correspond to subset
$S_{n}$, and also  to a state $q'$, and let $C_{n'}$ be the children of $n'$ in $t'$.
Then we require that $\lambda_{t'}$ associate each $c' \in C_{n'}$  with a vertex
of $\tree$ that is either $n$, a parent of $n$, or child of $n$.

Given the above requirement  we can associate
each child $c'_i \in C_{n'}$ with a direction $d'_i \in \act_2$ as well as a state $q'_i \in Q$.
Let $P_{n'}$ be the set of pairs $(d'_i,q'_i)$ associated with some child of $n'$.
We require that $P_{n'} \models \delta(q, S)$.

Finally, we require that every branch of $t'$ obeys the acceptance condition $\Omega$.
There are a number of different acceptance  conditions defined for automata
over infinite trees. 
We will make use of
the \emph{\Buchi acceptance condition}.
This is specified by a set $\Omega \subseteq Q$ of accepting states.
The requirement is that along each branch in $t'$,
there is a state in $\Omega$ that occurs infinitely often.
Let $\btwowayaltinf$ denote 2-way alternating tree automata equipped with this \Buchi acceptance condition.

Given an automaton $\cA$ the \emph{language of $\cA$},
denoted $L(\cA)$, is
the set of trees $\tree$ such $\cA$ has an accepting run
starting at the root of $\tree$. The \emph{non-emptiness problem} for
a class of automata is the analog of the satisfiability problem
for a logic: given an automaton $\cA$ in the class, determine
if $L(\cA) \neq \emptyset$.

Vardi \cite{Vardi98} showed that non-emptiness is decidable for $\btwowayaltinf$ in $\exptime$
(in fact, this was shown for parity automata, which includes \Buchi automata).

\begin{theorem}[\cite{Vardi98}]\label{thm:vardi}
If $\cA$ is a $\btwowayaltinf$ automaton $\cA$,
then it is decidable in $\exptime$ if $L(\cA) \neq \emptyset$.
More specifically, the running time is  $f(\size{\cA})^{f(s)}$ where $s$ is the number of states of $\cA$ and $f$ is a polynomial independent of $\cA$.
\end{theorem}
Thus if we can convert our satisfiability
problem into an emptiness check for a $\btwowayaltinf$  automaton
with size doubly exponential in the size of the formula and number of states
exponential in the size of the formula, we will obtain a doubly-exponential
bound on satisfiability.
Similarly, if we can construct  a $\btwowayaltinf$  automaton
with size exponential in the formula and number of states
polynomial in the formula, we will obtain a singly-exponential
bound on satisfiability.

\subsection{Decision procedure for normal-form $\gnf$ without
equality and constants} \label{subsec:normal}
In giving the automata constructions in this section,
we will start by working with $\gnf$ formulas $\varphi$ in a normal form,
 \emph{\gnnf}, similar to one
introduced in \cite{gnfj}.
Throughout this section, we also  assume  that the formulas 
\emph{do not use equality or constants}.

\subsubsection{Normal form for GNF}
We present the normal form that we use.

Formulas $\varphi$ in  \gnnf
can be generated using the following grammar:
\begin{align*}
\varphi ::= \
&{\textstyle \bigvee_i \exists \vec{x} ~ \bigwedge_j \psi_{ij}} \\
\psi ::= \,
&\alpha
\, \mid \,
\alpha \wedge \varphi 
\, \mid \,
\alpha \wedge \neg \varphi \, \mid \,
\\
&\varphi \mbox{ if } \varphi \mbox{ has at most one free variable} \, \mid \, \\
&\neg \varphi \mbox{ if } \varphi \mbox{ has at most one free variable}
\end{align*}
where $\alpha$ is an atomic formula,
and in the case of $\alpha \wedge \neg \varphi$ and $\alpha \wedge \varphi$,
$\free{\alpha} \supseteq \free{\varphi}$.
%As with $\gnf$, we also allow the guard $\alpha$ to be omitted
The $\varphi$ are referred to as \emph{UCQ-shaped formulas}, with
each of the disjuncts being a \emph{CQ-shaped formula}.

Note that if $\phi_i$ for $i=1 \ldots n$ are \emph{sentences} in normal form then
their conjunction $\bigwedge_i \phi_i$ is also in normal form.

A formula is  \emph{answer-guarded} if it has at most one free variable or is of the form
$\alpha \wedge \chi$ where $\alpha$ is an atom that contains all the free variables of $\chi$.
The idea of the normal form is that the grammar generates UCQ-shaped formulas
where each conjunct is an answer-guarded subformula. 

Later we will see that we can convert arbitrary $\gnf$ formulas to this normal form.

The \emph{width} of a  $\gnf$ formula $\phi$  in the normal form above
is the maximum number of free variables in any subformula.

The \emph{CQ-rank} of a formula $\phi$ in $\gnnf$,  
denoted $\rankcq{\phi}$, is the maximum number of conjuncts $\psi_i$
in any CQ-shaped subformula
$\exists \vec{x} ~ \bigwedge_i \psi_i$ of $\phi$
where $\vec{x}$ is non-empty.
Recall that the $\psi_i$ in such a CQ-shaped formula are of the form 
$\alpha$, $\alpha \wedge \neg \phi''$, or $\alpha \wedge \varphi''$,
but
for the purposes of counting conjuncts for the CQ-rank,
each $\psi_i$
is treated as a single conjunct.

%Note that  if the maximal arity of the schema is $a$, then for any
%$\phi$, $\width{\phi} \leq (\rankcq{\phi} \cdot a)+1$. Thus fixing
%the maximal arity and the CQ-rank implies fixing the width.

%(see Proposition~\ref{prop:normalformsize}).

\subsubsection{Automata for $\gnf$}

We now explain how to construct an automaton for a $\gnf$ sentence $\phi$ in normal form
without equality.

\myparagraph{Specializations}
The rough idea will be that an automaton has 
 states for all subformulas of $\phi$ -- the ``subformula closure of $\phi$''.
The automaton being in a state corresponding to 
subformula $\psi$  at a vertex $v$ of a tree $\tree$
will indicate that it is verifying that  $\psi$ holds at $v$ in $\tree$.
The statement above   is not precise because
in  $\gnf$, the notion of ``subformula'' needs to be more expansive
than the usual one  in order to be able to correctly verify
the CQ-shaped formulas.

Before we define the relevant closure, we need to think more carefully about CQ-shaped formulas,
and how they can be satisfied in a tree-like structure.
For this, we need to describe \emph{specializations}.

Consider a CQ-shaped \nf $\gnf$ formula
\[
\rho(\vec{x}) = \exists\vec{y} ~ \bigwedge_{j \in \set{1,\dots,r}} \psi_j(\vec{x},\vec{y}) .
\]
A \emph{specialization} of $\rho$
is a formula $\rho'$ obtained from $\rho$ by the following operations:
\begin{itemize}
\item select a subset $\vec{y}_0$ of $\vec{y}$
(call variables from $\vec{x} \cup \vec{y}_0$ the \emph{inside variables} and
variables from $\vec{y} \setminus \vec{y}_0$ the \emph{outside variables});
\item select a partition $\vec{y}_1,\dots,\vec{y}_s$ of the outside variables,
with the property that for every $\psi_j$, either $\psi_j$ has no
outside variables or all of its outside variables are
contained in the partition element $\vec{y}_j$;
\item let $\chi_0$ be the conjunction of the $\psi_j$ using only inside variables,
and let $\chi_i$ for $i \in \set{1,\dots,s}$
be the  conjunction of the $\psi_j$ using outside variables
and satisfying $\free{\psi_j} \subseteq \vec{x} \cup \vec{y}_0 \cup \vec{y}_i$;
\item set $\rho'(\vec{x},\vec{y}_0)$ to be
\[\chi_0(\vec{x},\vec{y}_0) \wedge
\bigwedge_{i \in \set{1,\dots,s}} \exists \vec{y}_i ~ \chi_i(\vec{x},\vec{y}_0,\vec{y}_i) .\]
\end{itemize}

Roughly speaking, each specialization of $\rho$ describes
a different way that a CQ-shaped formula could be satisfied
by elements $\vec x$ represented in
a  node of a tree code. The inside variables represent
witnesses for the existential quantifiers
 that are found in the node itself. The partition of the outside variables
represent the different  directions 
from the node where the additional non-local witnesses are
to be found: moving either to an ancestor or to one of the children.
Since each atom of the CQ shape formula must be realized in a single
node, the atoms must be ``homogeneous'' with respect to the partition,
as captured in the second item above.

It is easy to see that if a specialization is realized, then
so is the original formula, since the realization of the specialization
gives witnesses for all the existential quantifiers:

\begin{lemma}\label{lemma:specialization-implication}
Let $\rho(\vec{x}) \in \gnf$ be a CQ-shaped formula
$\exists \vec{y} ~ \bigwedge_j \psi_j(\vec{x},\vec{y})$.
For all structures $M$ and for all specializations $\rho'(\vec{x},\vec{y}_0)$ of $\rho$,
if $M \models \rho'(\vec{a},\vec{b})$,
then $M \models \rho(\vec{a})$.
\end{lemma}

Since a formula is vacuously a specialization of itself, the converse
 direction is vacuously true.
What is more useful is that whenever a formula is realized, it is
realized by a specialization that is ``simpler'' than the original
formula it specializes.
We say a specialization
is \emph{non-trivial} if either
$\chi_0$ is non-empty or
the partition of the outside
variables is non-trivial ($s > 1$).
The following result captures the idea that in realizing a formula
we need to realize some simpler specialization:

\begin{lemma}\label{lemma:non-trivial-specialization}
Let $\rho(\vec{x}) \in \gnf$ be a CQ-shaped formula
$\exists \vec{y} ~ \bigwedge_j \psi_j(\vec{x},\vec{y})$.
Given a structure $M$ and its tree code $\tree$,
if there exists a vertex $v \in \tree$ that includes names $\vec{a}$ and
$M \models \rho([v,\vec{a}])$, 
then there is a non-trivial specialization $\rho'(\vec{x},\vec{y}_0)$ of $\rho$
and a vertex $w \in \tree$ with $\vec{a}$ and additional names $\vec{b}_0$ in its domain
such that $[w,\vec{a}] = [v,\vec{a}]$ and $M \models \rho'([w,\vec{a}],[w,\vec{b}_0])$.
\end{lemma}

The idea behind the lemma is that if the formula
holds at a node with certain witnesses for the free variables, 
we can traverse the nodes of the tree codes preserving
all those witnesses, 
until we arrive at a node $w$ where either  some of the witnesses
are found locally in $w$
or the witnesses are found in different directions from $w$.
In the   first case we have realized a specialization  in which $\chi_0$
is non-empty, and in the second case we have realized a specialization
in which the partition of the outside variables is non-trivial.

Let $\exists \vec{y} ~ \eta(\vec{x}, \vec{y})$ be any CQ-shaped $\gnnf$ formula
and $\exists \vec{y} ~ \eta(\vec{a},\vec{y})$ be formed by substituting
elements $a_i$ from $\namesk$ for each free variable $x_i$
 in $\exists \vec{y} ~ \eta(\vec{x}, \vec{y})$.
We will write $\specializations{\exists \vec{y} ~ \eta(\vec{a},\vec{y}),\namesk}$ 
for the set of all specializations of $\exists \vec{y} ~ \eta(\vec{a},\vec{y})$
with elements from $\namesk$ substituted for any new inside variables.
For convenience in the construction below,
each specialization $S \in \specializations{\exists \vec{y} ~ \bigwedge_j \psi_j(\vec{a},\vec{y}),\namesk}$ will be represented as a set:
that is, the specialization $\chi_0(\vec{a},\vec{b}_0) \wedge
\bigwedge_{i \in \set{1,\dots,s}} \exists \vec{y}_i ~ \chi_i(\vec{a},\vec{b}_0,\vec{y}_i)$
of
$\exists\vec{y} ~ \bigwedge_{j \in \set{1,\dots,r}} \psi_j(\vec{a},\vec{y})$
is represented as the set:
\[
\sset{ \psi_j(\vec{a},\vec{b}_0) : j \in \set{1,\dots,r}, \psi_j(\vec{a},\vec{b}_0) \text{ in } \chi_0 }
\cup
\sset{ \exists \vec{y}_i ~ \chi_i(\vec{a},\vec{b}_0,\vec{y}_i) : i \in \set{1,\dots,s} } .
\]
Again,
each formula in the set describes how a piece of the CQ-shaped formula is satisfied.

We are now ready to define the notion of subformula we are interested
in. 
Fix some $\gnf$ sentence $\phi$ in $\nf$.
The closure $\clgnf{\phi,\namesk}$ that is relevant for the automaton construction to decide satisfiability of $\phi$
consists of the subformulas of $\phi$ along with formulas that are part of the specializations of the CQ-shaped formulas.
Formally, elements of $\clgnf{\phi,\namesk}$ will be
written in the form $\state{\psi,p}$ where $\psi$ is a formula and $p \in \set{+,-}$ is a polarity
to indicate whether $\psi$ comes from a positive or negative part of $\phi$
(that is, a part under an even or odd number of negations).

Let $\clgnf{\phi,\namesk}$ be the smallest set  of formulas containing
$\state{\phi,+}$, $\state{\true,+}$, $\state{\true,-}$, $\state{\false,+}$, $\state{\false,-}$
and satisfying the following closure conditions:
\begin{itemize}
\item if $\state{\alpha \wedge \neg \psi,+} \in \clgnf{\phi,\namesk}$, then
$\state{\alpha,+}, \state{\psi,-} \in \clgnf{\phi,\namesk}$;
\item if $\state{\alpha \wedge \neg \psi,-} \in \clgnf{\phi,\namesk}$, then
$\state{\alpha,-}, \state{\psi,+} \in \clgnf{\phi,\namesk}$;
\item if $\state{\neg \psi,+} \in \clgnf{\phi,\namesk}$, then
$\state{\psi,-} \in \clgnf{\phi,\namesk}$;
\item if $\state{\neg \psi,-} \in \clgnf{\phi,\namesk}$, then
$\state{\psi,+} \in \clgnf{\phi,\namesk}$;
\item if $\state{\alpha \wedge \psi,+} \in \clgnf{\phi,\namesk}$, then
$\state{\alpha,+}, \state{\psi,+} \in \clgnf{\phi,\namesk}$;
\item if $\state{\alpha \wedge \psi,-} \in \clgnf{\phi,\namesk}$, then
$\state{\alpha,-}, \state{\psi,-} \in \clgnf{\phi,\namesk}$;
\item if $\state{\bigvee_i \psi_i, +} \in \clgnf{\phi,\namesk}$, then
$\state{\psi_i,+} \in \clgnf{\phi,\namesk}$ for all $i$;
\item if $\state{\bigvee_i \psi_i, -} \in \clgnf{\phi,\namesk}$, then
$\state{\psi_i,-} \in \clgnf{\phi,\namesk}$ for all $i$;
\item if $\state{\exists \vec{y} ~ \eta(\vec{a},\vec{y}),+} \in \clgnf{\phi,\namesk}$, then
$\state{\psi',+} \in \clgnf{\phi,\namesk}$ for all $S \in \specializations{\exists \vec{y} ~ \eta(\vec{a},\vec{y}),\namesk}$ and $\psi' \in S$;
\item if $\state{\exists \vec{y} ~ \eta(\vec{a},\vec{y}),-} \in \clgnf{\phi,\namesk}$, then
$\state{\psi',-} \in \clgnf{\phi,\namesk}$ for all $S \in \specializations{\exists \vec{y} ~ \eta(\vec{a},\vec{y}),\namesk}$ and $\psi' \in S$.
\end{itemize}

We are now ready to give a translation of $\gnf$ sentences
into automata, and show that size is controlled by the size of the subformula closure.

\begin{proposition}\label{prop:auttrans}
For every $\gnf$ sentence $\phi$ in \gnnf,
signature $\sigma$ containing relations of $\phi$,
and $k \in \N$,
there is a $\btwowayaltinf$ $\cA_{\phi}$ on $\sigk$-trees
such that $\cA_{\phi}$ accepts a consistent $\sigk$-tree $\tree$ iff
the decoding $\decode{\tree}$ satisfies $\phi$.
Moreover, the number of states of the automaton
is bounded by the size of  $\clgnf{\phi,\namesk}$, while
the overall size and the time needed to construct the automaton is
at most $f(\size{\phi} \cdot \size{\powerset{\sigk}}) \cdot \size{\namesk}^{f(\width{\phi} \rankcq{\phi})}$
for some polynomial $f$ independent of $\phi$ and $k$.
\end{proposition}

The $\btwowayaltinf$ automaton $\autA_\phi$ for $\phi$ is defined as follows:
\begin{itemize}
\item The state set is $\clgnf{\phi,\namesk}$.
\item The initial state is $\state{\phi,+}$.
\item The transition function $\delta$ is defined below.
\item The set of accepting states
consists of all states of the form
$\state{\True,+}$,
$\state{\False,-}$,
$\state{R(\vec{a}),-}$, or
$\state{\exists \vec{y} ~ \eta(\vec{a},\vec{y}),-}$.
\end{itemize}

We now describe the transition function.
For $\tau$ a collection of symbols in $\sigk$ and $\vec a$ a collection
of names in $\paramsk$, we say that $\vec a$ is \emph{represented in $\tau$}
if $\tau$ includes $D_{a_i}$ for each $a_i$ in $\vec a$;
thus a vertex $v$ labelled with $\tau$ that represents $\vec a$ has each $a_i$ in $\vec a$ as one of
its local names.

\begin{align*}
\delta(\state{R(\vec{a}),+},\tau) &:=
	\begin{cases}
		(\stay,\state{\false,+}) &\text{if $\vec{a}$ not represented in $\tau$} \\		
		(\stay,\state{\true,+}) &\text{if $R_{\vec{a}} \in \tau$} \\
		\bigvee_{d \in \act_2} (d,\state{R(\vec{a}),+}) &\text{otherwise}
	\end{cases}
\\
\delta(\state{R(\vec{a}),-},\tau) &:=
	\begin{cases}
		(\stay,\state{\true,+}) &\text{if $\vec{a}$ not represented in $\tau$} \\		
		(\stay,\state{\false,+}) &\text{if $R_{\vec{a}} \in \tau$} \\
		\bigwedge_{d \in \act_2} (d,\state{R(\vec{a}),-}) &\text{otherwise}
	\end{cases}
\\
\myeat{
\delta(\state{a=b,+},\tau) &:=
	\begin{cases}
		(\stay,\state{\true,+}) &\text{if $a = b \in \tau$} \\		
		(\stay,\state{\false,+}) &\text{if $a = b \notin \tau$}
	\end{cases}
\\
\delta(\state{a=b,-},\tau) &:=
	\begin{cases}
		(\stay,\state{\false,+}) &\text{if $a = b \in \tau$} \\		
		(\stay,\state{\true,+}) &\text{if $a = b \notin \tau$}
	\end{cases}
\\
}
\delta(\state{\True,+},\tau) &:=  (\stay,\state{\True,+}) \\
\delta(\state{\False,-},\tau) &:=  (\stay,\state{\False,-}) \\
\delta(\state{\True,-},\tau) &:=  (\stay,\state{\True,-}) \\
\delta(\state{\False,+},\tau) &:=  (\stay,\state{\False,+}) \\
\delta(\state{\bigvee_{i} \psi_i,+},\tau) &:= \textstyle \bigvee_i (\stay, \state{\psi_i,+}) \\
\delta(\state{\bigvee_{i} \psi_i,-},\tau) &:= \textstyle \bigwedge_i (\stay, \state{\psi_i,-}) \\
\delta(\state{\alpha \wedge \neg \psi,+},\tau) &:= (\stay,\state{\alpha,+}) \wedge (\stay,\state{\psi,-}) \\
\delta(\state{\alpha \wedge \neg \psi,-},\tau) &:= (\stay,\state{\alpha,-}) \vee (\stay,\state{\psi,+})
\\
\delta(\state{\neg \psi,+},\tau) &:= (\stay,\state{\psi,-}) \\
\delta(\state{\neg \psi,-},\tau) &:=  (\stay,\state{\psi,+})\\
\delta(\state{\alpha \wedge  \psi,+},\tau) &:= (\stay,\state{\alpha,+}) \wedge (\stay,\state{\psi,+}) \\
\delta(\state{\alpha \wedge \psi,-},\tau) &:= (\stay,\state{\alpha,-}) \vee (\stay,\state{\psi,-})
\\
\delta(\state{\exists\vec{y} ~ \eta(\vec{a},\vec{y}), +},\tau) &:=
	\begin{cases}
		(\stay,\state{\false,+}) \quad \text{if $\vec{a}$ not represented in $\tau$ } \\
		\bigvee_{S \in \specializations{\exists \vec{y} ~ \eta(\vec{a},\vec{y}),\bagnames{\tau}}} \bigwedge_{\psi \in S} (\stay, \state{\psi,+}) \ \vee \\
		\quad \bigvee_{d \in \act_2} (d,\state{\exists\vec{y} ~ \eta(\vec{a},\vec{y}), +}) \quad \text{otherwise}
	\end{cases}
\\
\delta(\state{\exists\vec{y} ~ \eta(\vec{a},\vec{y}), -},\tau) &:=
	\begin{cases}
		(\stay,\state{\true,+}) \quad \text{if $\vec{a}$ not represented in $\tau$} \\
		\bigwedge_{S \in \specializations{\exists \vec{y} ~ \eta(\vec{a},\vec{y}),\bagnames{\tau}}} \bigvee_{\psi \in S} (\stay, \state{\psi,-}) \ \wedge \\
		\quad \bigwedge_{d \in \act_2} (d,\state{\exists\vec{y} ~ \eta(\vec{a},\vec{y}), -}) \quad \text{otherwise}
	\end{cases}
\end{align*}

The correctness of the automaton construction is captured in the following
result:
\begin{lemma} \label{lem:autcorrect}
For each $\state{\psi(\vec{a}),+} \in \clgnf{\phi,\namesk}$,
$\psi(\vec x)$ holds in $\decode{\tree}$ with valuation
$[v,\vec{a}]$ for $\vec{x}$
if and only if the automaton above accepts 
when launched in $\tree$ from vertex $v$ with initial state $\state{\psi(\vec a), +}$.

Likewise, for each $\state{\psi(\vec{a}),-} \in \clgnf{\phi,\namesk}$,
$\psi(\vec x)$ does not hold in $\decode{\tree}$ with valuation
$[v,\vec{a}]$ for $\vec{x}$
if and only if the automaton above accepts 
when launched in $\tree$ from vertex $v$ with initial state $\state{\psi(\vec a), -}$.
\end{lemma}

\begin{proof}
The lemma is proven by structural induction.
The base cases are simple to observe by construction.
Lemmas~\ref{lemma:specialization-implication}~and~\ref{lemma:non-trivial-specialization} are utilized
in the inductive case for CQ-shaped formulas.
\end{proof}

We now calculate the size of $\clgnf{\phi,\namesk}$.

\begin{lemma}\label{lemma:closuresize}
Let $\phi \in \gnf$ in normal form,
and let $k \in \N$.
Then $\size{\clgnf{\phi,\namesk}} \leq f(\size{\phi}) \cdot \size{\namesk}^{f(\width{\phi}\rankcq{\phi})}$
for some polynomial function $f$ independent of $\phi$ and $k$.
\end{lemma}

\begin{proof}
Let $w = \width{\phi}$
and $r = \rankcq{\phi}$.

Note that in the definition of the closure set,
the only formulas that appear are either actual subformulas of $\phi$ (with names from $\namesk$ substituted for free variables),
or are formulas that come from specializations of CQ-shaped formulas (again, with names from $\namesk$).

Specializations of CQ-shaped subformulas that do not begin with existential quantification
(i.e.~a CQ-shaped formula without projection) only contribute actual subformulas of $\phi$ to the closure set.
However, the specializations of a CQ-shaped subformula $\eta$ with existential quantification
contribute up to $2^r$ additional CQ-shaped formulas that are based on taking some subset of the (at most) $r$ conjuncts of~$\eta$.

Since each of these formulas has at most $w$ free variables taking names from~$\namesk$,
this means that the overall size of the closure set is at most $\size{\phi} \cdot 2^r \cdot \size{\namesk}^w$.
\end{proof}

Let $w = \width{\phi}$ and $r = \rankcq{\phi}$.
Since the width and CQ-rank are bound by the size of the formula,
this means that the size of the closure set $\clgnf{\phi,\namesk}$,
and hence the number of states of the automaton $\cA_\phi$,
is at most exponential in the size of the formula.
But it is polynomial when
the maximal arity, width, and CQ-rank are fixed.

The size of $\powerset{\sigk}$ is at most $2^{\size{\sigma} \cdot \size{\namesk}^{\arity{\sigma}}}$,
which is doubly exponential
in general, but singly exponential when the maximal arity is fixed.

The size of each transition function formula is at most linear in
$2^w \cdot \size{\namesk}^w \cdot w^w \cdot \size{\clgnf{\phi,\namesk}}$.
In particular, note that the transition function formula for a CQ-shaped formula $\psi$
respects this bound since $\size{\specializations{\psi,\namesk}}$ is at most $2^w \cdot \size{\namesk}^w \cdot w^w$
(the maximum number of ways to choose the inside variables,
names for these inside variables, and the partition of the outside variables),
and each $S \in \specializations{\psi,\namesk}$ is of size at most $\size{\clgnf{\phi,\namesk}}$.

This means that the size of the transition function is linear in
$\size{Q} \cdot \size{\powerset{\sigk}} \cdot 2^w \cdot \size{\namesk}^w \cdot w^w \cdot \size{Q}$.
This is doubly exponential in general, but singly exponential when the maximal arity,
width, and CQ-rank are fixed.

Therefore, the overall size of $\cA_\phi$ and the time taken to construct it
is of size at most doubly exponential in the size of $\phi$,
but singly exponential when the maximal arity, width, and CQ-rank is fixed.

\myparagraph{From an automaton to decidability}
We are now almost done with our satisfiability
procedure. Combining Proposition \ref{prop:auttrans}
with  Proposition \ref{prop:treemod}, we see that $\phi$ is satisfiable
if and only if there is a consistent $k$-tree code that
satisfies $\cA_\phi$, where $k = \width{\phi}$.

Recall that a consistent $\sigk$-tree is just an arbitrary
 $\sigk$-tree such that every node $v$ satisfies
$\size{\names{v}} \leq k$ and
 for all $R_{\vec{a}} \in \sigk$, if $R_{\vec{a}}(v)$ then $\vec{a} \subseteq \names{v}$.
It is straightforward to see that there is a $\btwowayaltinf$ automaton
$\cA_{\consistent}$ that accepts exactly the trees that are
consistent in the above sense. The size
of  $\cA_{\consistent}$ is doubly-exponential (due
to the size of the alphabet) and singly-exponential
if the maximal arity of each relation is fixed. The
running time needed to form the automaton is likewise
doubly-exponential in general and singly-exponential when
the arity of relations is fixed.
Further the number of states is just two --- an initial  state  and
a ``rejection'' state representing a violation of consistency.

By the closure properties of 
$\btwowayaltinf$, we know that we can form an automaton $\cA_{\phi, \consistent}$
that accepts the intersection $L(\cA_\phi) \cap L(\cA_{\consistent})$ 
in time proportional to the sum
of the sizes of
$\cA_\phi$ and $\cA_{\consistent}$. The number of states of this automata
is just the sum of the number states of $\cA_\phi$ and $\cA_{\consistent}$.
Hence, by applying Theorem~\ref{thm:vardi},
we can conclude:

\begin{theorem} \label{thm:normalnoeq} There is a $\twoexp$ satisfiability testing algorithm for $\gnf$
sentences in normal form without equality and constants. When the width, CQ-rank and maximal arity
of the relations
are fixed, it shrinks to $\exptime$.
\end{theorem}

%\subsection*{Dropping the normal form restriction} \label{sec:gnfminus}
\subsection{Handling equality and constants} \label{subsec:equality}
The extension to handle equalities, in the absence of constants, is not difficult.
We consider the same tree codes as before.

We claim again that if a sentence $\phi$ in $\gnf$ with width bounded by $k$ is satisfiable,  then it is satisfied in
a structure with a $k$-code.

The conversion to normal form is the same, treating equality like
any other relation.

In the automaton construction, we need additional cases for equality.
\begin{align*}
\delta(\state{a=b,+},\tau) &:=
        \begin{cases}
                (\stay,\state{\true,+}) &\text{if $a$ is the same as $b$} \\
                (\stay,\state{\false,+}) &\text{if $a$ is not the same as $b$}
        \end{cases}
\\
\delta(\state{a=b,-},\tau) &:=
        \begin{cases}
                (\stay,\state{\false,+}) &\text{if $a$ is the same as $b$} \\
                (\stay,\state{\true,+}) &\text{if $a$ is not the same as $b$}
        \end{cases}
\end{align*}

The size bounds and running time of the construction remain the same.

\myparagraph{Constants} To handle constants requires more effort, since constants may have non-trivial
equalities.  One route to decidability, taken in \cite{gnfj}, is to
reduce satisfiability of $\gnf$ with equality and constants to satisfiability without
constants. The idea of the reduction in \cite{gnfj} is to extend the signature
with additional predicates that hold the constants.
However, using such a  reduction as a black-box does
 not give us the fine-grained bounds we desire in terms
of parameters like CQ-rank. We thus provide a more direct argument.

We consider $k$-tree codes in which the constants $\const{\sigma}$ are represented in each node,
along with at most $k$ local names.
The codes will also now include some equality facts, but with the following restrictions:
\begin{itemize}
\item There are no equality facts relating non-constants to each
other, and no equality facts relating constants to non-constants.
\item The equality facts on constants are identical  across vertices of the tree.
They  satisfy transitivity and reflexivity, 
as well as \emph{congruence}:
if we have a fact $R(\ldots c \ldots)$ holding in a vertex, where $c$ is a constant,
and we also have an equality fact $c=d$ then we have the fact
 $R(\ldots d \ldots)$.
\end{itemize}

We can extend $\cA_{\consistent}$ to check whether a tree is a code satisfying these additional restrictions.

We must change the notion of decoding of a tree to account
for equalities. For a consistent tree $\tree$ using
local names and constants $\const{\sigma}$, we let $\const{\sigma}_{=,\tree}$ be the equivalence
classes of constants under the equality  relation in $\tree$.
The decoding $\decode{\tree}$ is now the
$\sigma$-structure 
with universe
\[\set{ [v,a] : \text{$v \in \dom(\tree)$ and $a \in \names{v}$}} \cup \const{\sigma}_{=\tree}\]
such that
for each relation $R$, we have $R^{\decode{\tree}}([v_1,a_1],\dots,[v_j,a_j], e_1 \ldots e_l)$, where $a_i$ are local names and $e_i$ are
equivalence classes of constants, iff
there is $w \in \dom(\tree)$ such that
$R_{\vec{a}, c_1 \ldots c_l}(w)$ holds,
 $[w,a_i] = [v_i,a_i]$ for all~$i \leq j$ and
$c_i$ is in class $e_i$ for each $i \leq l$.

We further claim the following extension of Proposition \ref{prop:treemod}
\begin{proposition} \label{prop:treemodequality}
If a $\gnf$ sentence $\phi$ of width $k$, possibly using equality and constants, is satisfiable, then
$\phi$ is satisfiable in a structure that is the decoding of
some  $k$-code.
\end{proposition}
\begin{proof}
Consider an expanded signature where for each relation $R$ of arity $n$
and partial function $h$ from the positions of $R$ into constants, we have  have a relation $R_h$ of arity $n-|dom(h)|$. We can rewrite $\phi$ to a $\phi'$ in this signature that does not
contain constants, replacing atoms $R(x_1 \ldots x_n)$ by a disjunction of atoms over $R_h$ where $h$ varies 
over every partial function, and replacing subformulas with  negation
guarded by an  $R$-atom with a disjunction of subformulas guarded by an $R_h$-atom.
Note that $\phi'$ will be larger than $\phi$, but its width will still be
$k$.
Thus applying Proposition \ref{prop:treemod} we see that $\phi$ has a model $M'$
with a $k$-code in the expanded signature. But then we can reverse this
process on $M$, replacing atoms $R_h$ in $M$ with an atom $R$ but using
the additional constants as arguments. We can similarly add the equality
facts to the codes. Since equality in $M'$ must satisfy congruence, reflexivity,
and transitivity, we will obtain a structure satisfying the additional properties.
%then it is satisfiable in a structure of width $k+|C|$, where $C$ are the constants
%mentioned in $\phi$. 
\end{proof}

\myeat{
 as before,  that if a sentence is satisfied, then it is satisfied
by a structure with a tree code of this sort. The fact that For the first item,
note that if we have a fact involving a constant, then there is no need
to have this represented in the tree by a non-constant plus an 
equality fact.
}

The closure is now defined as before, but based on $\namesk \cup \const{\sigma}$ rather than~$\namesk$.

In the automaton construction, we need a few modifications:

We need a base case for equality atoms.
\begin{itemize}
\item 
For a non-negated equality of a local name with a constant, the automaton 
should ensure  rejection:
it does this by
switching to state $\state{\false,+}$, since there are no accepting runs
from such states.
Similarly for a negated equality of a local name with a constant,
the automaton should ensure acceptance by switching to state $\state{\true,+}$.
\item for an equality between constants, the automaton simply checks whether the equality is present in the vertex; if this is true
the automaton should ensure acceptance.
It does this by switching to state $\state{\true,+}$. Otherwise it 
ensures rejection by switching to state $\state{\false,+}$.
\end{itemize}
That is, for a name $a \in \namesk$ and for constants $c,d \in \const{\sigma}$, we have transitions:
\begin{align*}
\delta(\state{a=c,+},\tau) &:= (\stay,\state{\false,+}) 
\\
\delta(\state{a=c,-},\tau) &:= (\stay,\state{\true,+}) 
\\
\delta(\state{c=d,+},\tau) &:=
        \begin{cases}
                (\stay,\state{\true,+}) &\text{if $c = d \in \tau$} \\
                (\stay,\state{\false,+}) &\text{if $c = d \notin \tau$}
        \end{cases}
\\
\delta(\state{c=d,-},\tau) &:=
		\begin{cases}
                (\stay,\state{\false,+}) &\text{if $c = d \in \tau$} \\
                (\stay,\state{\true,+}) &\text{if $c = d \notin \tau$}
        \end{cases}
\end{align*}

We also modify the CQ-shaped formula case, to allow the automaton to draw witnesses from the constants:

\begin{align*}
\delta(\state{\exists\vec{y} ~ \eta(\vec{a},\vec{y}), +},\tau) &:=
	\begin{cases}
		(\stay,\state{\false,+}) \quad \text{if $\vec{a}$ not represented in $\tau$ } \\
		\bigvee_{S \in \specializations{\exists \vec{y} ~ \eta(\vec{a},\vec{y}),\bagnames{\tau} \cup \const{\sigma}}} \bigwedge_{\psi \in S} (\stay, \state{\psi,+}) \ \vee \\
		\quad \bigvee_{d \in \act_2} (d,\state{\exists\vec{y} ~ \eta(\vec{a},\vec{y}), +}) \quad \text{otherwise}
	\end{cases}
\\
\delta(\state{\exists\vec{y} ~ \eta(\vec{a},\vec{y}), -},\tau) &:=
	\begin{cases}
		(\stay,\state{\true,+}) \quad \text{if $\vec{a}$ not represented in $\tau$} \\
		\bigwedge_{S \in \specializations{\exists \vec{y} ~ \eta(\vec{a},\vec{y}),\bagnames{\tau} \cup \const{\sigma}}} \bigvee_{\psi \in S} (\stay, \state{\psi,-}) \ \wedge \\
		\quad \bigwedge_{d \in \act_2} (d,\state{\exists\vec{y} ~ \eta(\vec{a},\vec{y}), -}) \quad \text{otherwise}
	\end{cases}
\end{align*}

Using these modifications, we can now extend
Lemma \ref{lem:autcorrect}:

\begin{lemma} \label{aut:correctequality}
For each $\state{\psi(\vec{a}, \vec{c}),+} \in \clgnf{\phi,\namesk}$,
$\psi(\vec x, \vec y)$ holds in $\decode{\tree}$ at vertex $v$ with valuation
$[v,\vec{a}]$ for $\vec{x}$ and constants $c_1 \ldots c_l$ for
$\vec y$
if and only if the automaton above accepts
when launched in $\tree$ from vertex $v$ with initial state 
$\state{\psi(\vec a, \vec c), +}$.

Likewise, for each $\state{\psi(\vec{a}, \vec{c}),-} \in \clgnf{\phi,\namesk}$,
$\psi(\vec x,\vec y)$ does not hold in $\decode{\tree}$ at vertex
$v$ with the valuation above
if and only if the automaton above accepts
when launched in $\tree$ from vertex $v$ with initial state $\state{\psi(\vec a, \vec c), -}$.
\end{lemma}

Recall that the proof of Lemma \ref{lem:autcorrect} worked
by induction on $\psi$.
In the proof we first need to consider  base cases for equality.
For example, suppose  $x_1=x_2$ holds in $\decode{\tree}$ with valuation
$x_1=[v,a_1]$ $x_2=[v,a_2]$ for local names $a_1,a_2$. The only way
the equality can hold is if $a_1$ is actually the same
name as $a_2$. Thus the automaton run from $\state{a=b,+}$
will transition to $\state{\true,+}$, and will accept.
The converse direction is similar.

On the other hand,   suppose  $x_1=x_2$ holds in $\decode{\tree}$ 
with valuation
$x_1=[c]_{=,\tree}$, $x_2=[d]_{=, \tree}$ for constants $c,d$.
This holds exactly when the equality fact $c=d$
is present in the label of $v$.
But then looking at the transition function for
$\state{c=d,+}$ we see that the automaton accepts.

We must also reconsider the base cases for atomic relations.
Suppose  $R(x_1, \ldots x_j, y_1 \ldots y_l )$ holds in $\decode{\tree}$
with valuation $x_i=[v,a_i], y_i=[c_i]_{=,\tree}\}]$ for 
local names $\vec a$ and constants $\vec c$.
By definition of our decoding, along with the congruence closure
of the codes, this means that
we must have a fact $R([v,\vec a], \vec c)$ holding
in some node $v'$ in the tree. We now argue
as in the case without constants that  iterating the transition
function for an atom, the automaton will accept from $v$.

\begin{proposition} $\phi$ is satisfiable if and only if
the modified automaton $\cA_\phi$ accepts   a consistent tree.
This in turn can be checked by taking the automaton $\cA_{\consistent}$ for checking
consistency, forming an automaton $\cA'_\phi$ accepting the  intersection
of $\cA_{\consistent}$ with $\cA_\phi$, and checking non-emptiness of $\cA'_\phi$.
\end{proposition}

Thus we obtain the main result, which immediately implies
Theorem \ref{thm:gnfsatrefined} 
 n the body of the paper:

\begin{theorem} \label{thm:finale} There is a $\twoexp$ algorithm
for deciding satisfiability of  sentences  in $\gnf$, even allowing
equality and constants. For a sentence in normal form with
fixed width, CQ-rank and fixed arity of relations, we get
an $\exptime$ algorithm for satisfiability.
\end{theorem}

\subsection{Additional remarks: relationship to bounds for general $\gnf$} \label{subsec:gnfminus}
%\section{Conversion to normal form}
\newcommand{\convertnf}[1]{\kw{convert}(#1)}
\newcommand{\namesm}{N_m}

We note that the previous result to allows us to re-prove
the bounds for  satisfiability of $\gnf$ sentences
that are not in normal form from \cite{gnficalp,gnfj}. We include this only
because it might be useful to have a self-contained presentation of the $\gnf$ to automate
translation.
The idea is that general $\gnf$ sentences can be converted to
normal form in such a way that we blow up the size of the formula, but the size of the closure set remains
at most exponential in the size of the original formula.

\begin{proposition}\label{prop:normalformsize}
Let $\psi$ be a $\gnf$ formula
with $m = \size{\psi}$.
We can construct  a sentence $\convertnf{\psi}$ in \gnnf
equivalent to $\psi$ such that 
\begin{compactitem}
\item $\size{\convertnf{\psi}} \leq 2^{f(m)}$ ,
\item $\width{\convertnf{\psi}} \leq m$,
\item $\rankcq{\convertnf{\psi}} \leq m$,
\item $\size{\clgnf{\convertnf{\psi},\namesm}} \leq 2^{f(m)}$.
\end{compactitem}
where $f$ is a polynomial function independent of $\psi$.
\end{proposition}

\begin{proof}
We proceed by induction on $\psi$.
The output $\convertnf{\psi}$ is a UCQ-shaped formula in $\nf$, with the same free variables as $\psi$.
\begin{itemize}
\item If $\psi$ is atomic, then $\convertnf{\psi} := \psi$.
\item Suppose $\psi = \alpha \wedge \neg \psi'$ where $\alpha$ is a guard for $\free{\psi'}$.
Then $\convertnf{\psi} := \alpha \wedge \neg \convertnf{\psi'}$.

Similarly for the case of $\neg \psi$ where $\psi$ has at most
one free variable.

\item Suppose $\psi = \exists y ~ \psi'$.
If $\convertnf{\psi'}$ is a UCQ-shaped formula $\bigvee_{i} \exists \vec{z}_i ~ \bigwedge_j \psi_{ij}$,
then $\convertnf{\psi} := \bigvee_i \exists y \vec{z}_i ~ \bigwedge_j \psi_{ij}$.

\item Suppose $\psi = \psi_1 \vee \psi_2$.
Then $\convertnf{\psi}$ is the UCQ-shaped formula $\convertnf{\psi_1} \vee \convertnf{\psi_2}$.

\item Suppose $\psi = \psi_1 \wedge \psi_2$.
If $\psi_1$ and $\psi_2$ are answer-guarded, e.g., $\psi_1 = \alpha_1 \wedge \psi'_1$ and $\psi_2 = \alpha_2 \wedge \psi'_2$
with $\free{\alpha_1} \supseteq \free{\psi'_1}$ and $\free{\alpha_2} \supseteq \free{\psi'_2}$,
then $\convertnf{\psi} = (\alpha_1 \wedge \convertnf{\psi'_1}) \wedge (\alpha_2 \wedge \convertnf{\psi'_2})$.
The other cases where $\psi_1$ or $\psi_2$ have at most one free variable are handled
similarly.

Otherwise,
let
$\convertnf{\psi} := \bigvee_{i,i'} \exists \vec{y}_i \vec{y}'_{i'} ~ (\chi_i[\vec{y}_i / \vec{x}_i] \wedge \chi'_{i'}[\vec{y}'_{i'} / \vec{x}'_{i'}])$
where
$\convertnf{\psi_1} = \bigvee_{i} \exists \vec{x}_i ~ \chi_i$,
$\convertnf{\psi_2} = \bigvee_{i'} \exists \vec{x}'_{i'} ~ \chi'_{i'}$,
and the variables in every $\vec{y}_i$ and $\vec{y}'_{i'}$ are fresh.
\end{itemize}

By Lemma~\ref{lemma:closuresize}, the size of $\clgnf{\convertnf{\psi},N_m}$ is exponential in the size $m$ of $\psi$.
\end{proof}

Now when we apply the automaton construction of Proposition \ref{prop:auttrans}
to the output, we will get an automaton with state
set $\clgnf{\convertnf{\psi},N_{\size{\psi}}}$. By the above, the size of
this is bounded by an  exponential in the size of the original formula~$\psi$.
The size of the automaton alphabet is unaffected by this transformation.
Thus again we can apply Theorem \ref{thm:vardi}
to get a doubly-exponential algorithm for testing satisfiability:

\begin{corollary} \label{thm:gnf} \cite{gnfj}
There is a $\twoexp$ satisfiability testing algorithm for $\gnf$
sentences without equality.
\end{corollary}

%%%%%%%%%%%%%%%%%%%%%%%%%%%%%%%%%%%%%%%%%%%%%%%%%%%%%%%%%%%%%%%%%%%
%%%%%%%%%%%%%%%%%%%%%%%%%%%%%%%%%%%%%%%%%%%%%%%%%%%%%%%%%%%%%%%%%%%

\end{document}